\newtheorem{theorem}{Theorem} [section]
\newtheorem{proposition}[theorem]{Proposition}	
\newtheorem{lemma}[theorem]{Lemma}
\newtheorem{remark}[theorem]{Remark}
\theoremstyle{definition}
\newtheorem{definition}[theorem]{Definition}
\DeclareMathOperator{\tr}{Tr}
\DeclareMathOperator{\dist}{dist}
\newcommand{\C}{\mathbb{C}}
\newcommand{\R}{\mathbb{R}}
\newcommand{\N}{\mathbb{N}}
\newcommand{\re}{\text{\upshape Re\,}}
\newcommand{\im}{\text{\upshape Im\,}}
\newcommand{\Ai}{{\rm Ai}}
\newcommand{\Done}{\mathbb{D}_\delta(b_1)}
\newcommand{\Dtwo}{\mathbb{D}_\delta(b_2)}
\def\XXint#1#2#3{{\setbox0=\hbox{$#1{#2#3}{\int}$}
\vcenter{\hbox{$#2#3$}}\kern-.5\wd0}}
\tikzset{->-/.style={decoration={
				markings,
				mark=at position #1 with {\arrow{latex}}},postaction={decorate}}}
	\tikzset{-<-/.style={decoration={
				markings,
				mark=at position #1 with {\arrowreversed{latex}}},postaction={decorate}}}
\tikzset{cross/.style={cross out, draw, 
         minimum size=2*(#1-\pgflinewidth), 
         inner sep=0pt, outer sep=0pt}}
\numberwithin{equation}{section}
\def\ds{\displaystyle}
\def\bigO{{\cal O}}
\begin{document}
\title{Higher order large gap asymptotics at the hard edge for Muttalib--Borodin ensembles}
\author{Christophe Charlier, Jonatan Lenells, Julian Mauersberger}

\maketitle

\begin{abstract}
We consider the limiting process that arises at the hard edge of Muttalib--Borodin ensembles. This point process depends on $\theta > 0$ and has a kernel built out of Wright's generalized Bessel functions. In a recent paper, Claeys, Girotti and Stivigny have established first and second order asymptotics for large gap probabilities in these ensembles. These asymptotics take the form
\begin{equation*}
\mathbb{P}(\mbox{gap on } [0,s]) = C \exp \left( -a s^{2\rho} + b s^{\rho} + c \ln s \right) (1 + o(1)) \qquad  \mbox{as }s \to + \infty,
\end{equation*}
where the constants $\rho$, $a$, and $b$ have been derived explicitly via a differential identity in $s$ and the analysis of a Riemann--Hilbert problem. Their method can be used to evaluate $c$ (with more efforts), but does not allow for the evaluation of $C$. In this work, we obtain expressions for the constants $c$ and $C$ by employing a differential identity in $\theta$. When $\theta$ is rational, we find that $C$ can be expressed in terms of Barnes' $G$-function.
We also show that the asymptotic formula can be extended to all orders in $s$.
\end{abstract}

\noindent
{\small{\sc AMS Subject Classification (2010)}: 41A60, 60B20, 33B15, 33E20, 35Q15.}

\noindent
{\small{\sc Keywords}: Muttalib--Borodin ensembles, Random matrix theory, asymptotic analysis, large gap probability, Riemann--Hilbert problems.}

\tableofcontents

\section{Introduction and main results}
The Muttalib--Borodin ensembles are joint probability density functions of the form
\begin{equation}\label{density function Muttalib Borodin}
\frac{1}{Z_{n}} \prod_{1 \leq j < k \leq n} (x_{k}-x_{j})(x_{k}^{\theta}-x_{j}^{\theta}) \prod_{j=1}^{n} w(x_{j})dx_{j},
\end{equation}
where the $n$ points $x_{1},\ldots,x_{n}$ belong to the interval $[0,+\infty)$, $\theta > 0$ is a parameter of the model, and $Z_{n}$ is a normalization constant. The positive weight function $w$ is defined on $[0,+\infty)$ and is assumed to have enough decay at $\infty$ to make \eqref{density function Muttalib Borodin} a well-defined density function. 

The  probability density function \eqref{density function Muttalib Borodin} exhibits so-called two-body interactions---in addition to the repulsion between the points $x_{1},\ldots,x_{n}$, there is also repulsion between the points $x_{1}^{\theta},\ldots,x_{n}^{\theta}$. The models defined by \eqref{density function Muttalib Borodin} were introduced by Muttalib in 1995 in the study of disordered conductors in the metallic regime \cite{Muttalib}. They have attracted a lot of attention recently in the random matrix community, partly due to the work of Cheliotis \cite{Cheliotis} who showed that the squared singular values of certain lower triangular random matrices have the same joint density as \eqref{density function Muttalib Borodin} in the case of the Laguerre weight $w(x) = x^{\alpha}e^{-x}$, $\alpha > -1$. Other matrix ensembles whose eigenvalues are distributed according to \eqref{density function Muttalib Borodin} for the Laguerre or Jacobi weight were obtained in \cite{ForWang}. 

As $n \to + \infty$ the macroscopic behavior of the points $x_{1},\ldots,x_{n}$ is well described by an equilibrium measure $\mu$ which depends on the weight $w$. Such measures have been studied in detail in \cite{ClaeysRomano} for general values of $\theta$. In particular, the authors of \cite{ClaeysRomano} found sufficient conditions on $w$ for $\mu$ to be supported on a single cut. If there is a hard edge (that is, if part of the points accumulate near the origin as $n \to +\infty$), the density of $\mu$ behaves as a constant times $x^{-\frac{1}{1+\theta}}$ as $x \to 0^{+}$. On the other hand, near a soft edge, this density vanishes to the order $1/2$ for any value of $\theta$; this is the usual square root behavior that is often encountered in random matrix theory. We also refer to \cite{BLVW, Butez, Kuij} for related results on the equilibrium measure.

The Muttalib--Borodin point process is determinantal for any $\theta >0$. This means that the density \eqref{density function Muttalib Borodin}, as well as all the associated correlation functions, can be expressed as determinants involving a function $\mathbb{K}_{n}$ (general definitions and properties of point processes can be found in \cite{Johansson, Soshnikov, BorodinPoint}). This function $\mathbb{K}_{n}$ is called the kernel and encodes all the probabilistic information about the point process. In the simplest case $\theta = 1$, the point process is a polynomial ensemble. This means that all the correlation functions can be expressed in terms of orthogonal polynomials (associated to $w$), and that there exists a Christoffel--Darboux formula which can be utilized to derive asymptotic formulas as $n \to + \infty$. For $\theta \neq 1$, the point process is still determinantal; however the aforementioned properties become more complicated for rational values of $\theta$, and are lost if $\theta$ is irrational. In fact, for $\theta \neq 1$, the kernel is instead expressed in terms of biorthogonal polynomials \cite{Borodin}, for which there is no simple analog of the Christoffel--Darboux formula (when $\theta$ is an integer, the Christoffel--Darboux formula contains $\theta$ terms, see \cite{yasov}).

As $n \to + \infty$, the local repulsion of the points leads to microscopic limit laws that depend on the location. The term microscopic refers to the fact that the correlation is measured in the unit of the mean level spacing. For $\theta = 1$, three different canonical limiting kernels arise: the sine kernel arises in the bulk, the Airy kernel near soft edges, and the Bessel kernel near (typical) hard edges. The three limiting kernels are independent of the fine details of the weight; this phenomenon is called universality in random matrix theory \cite{KuijUniversality}. Also, the kernels are all \textit{integrable} (of size $2$) in the sense of Its-Izergin-Korepin-Slavnov \cite{IIKS}, and there are $2 \times 2$ matrix Riemann--Hilbert (RH) problems available for the asymptotics analysis. Much less is known for $\theta \neq 1$. In the case of the Laguerre weight $w(x) = x^{\alpha}e^{-x}$, $\alpha > -1$, Borodin proved in his pioneering work \cite{Borodin} that
\begin{equation}\label{hard edge scaling limit}
\lim_{n\to + \infty} \frac{1}{n^{\frac{1}{\theta}}} \mathbb{K}_{n}\Big( \frac{x}{n^{\frac{1}{\theta}}},\frac{y}{n^{\frac{1}{\theta}}} \Big) = \mathbb{K}(x,y), \qquad x,y > 0,
\end{equation}
for any $\alpha > -1$ and $\theta > 0$, where the limiting kernel $\mathbb{K}(x,y)$ depends on $\alpha$ and $\theta$ and can be expressed in terms of Wright's generalized Bessel functions (see also \eqref{def kernel} below). If $\theta = p/q$ with $p,q$ relatively prime integers, then the kernel $\mathbb{K}$ is integrable, but of size $p+q$ \cite{Zhang}, which means that the associated RH problems involve matrices of size $(p+q)\times (p+q)$. In the Jacobi case (i.e., the weight is supported on $[0,1]$ and given by $w(x) = x^{\alpha}$, $\alpha > -1$), Borodin proved that the same limiting kernel $\mathbb{K}$ appears at the hard edge if a slightly different scaling limit is considered (the terms $n^{\frac{1}{\theta}}$ in \eqref{hard edge scaling limit} need to be replaced by $n^{1+\frac{1}{\theta}}$). It seems reasonable to expect some universality of this kernel, in the sense that $\mathbb{K}$ should appear in the hard edge scaling limit for a large class of weights. Moreover, from the behavior of $\mu$ described above, one expects the sine kernel in the bulk and the Airy kernel at the soft edge for a large class of weights. This has been proved in the special case $\theta = \frac{1}{2}$ only recently by Kuijlaars and Molag in \cite{KuijMolag} using a non-standard analysis of a $3 \times 3$ matrix RH problem. The case of general $\theta$ is still open. We also mention that, in case $\theta$ or $1/\theta$ is an integer, the kernel $\mathbb{K}$ can be expressed in terms of a Meijer function and coincides with the limiting kernel at the hard edge of certain product random matrices \cite[Theorem 5.1]{KuijSti2014}.


There are several expressions available in the literature for the kernel $\mathbb{K}(x,y)$ in \eqref{hard edge scaling limit}; in \cite{Borodin} it is written as a series, and also in terms of Wright's generalized Bessel functions. For us, the following double contour integral expression (from \cite{ClaeysGirSti}) will be important:
\begin{equation}\label{def kernel}
\mathbb{K}(x,y) = \frac{1}{4\pi^{2}} \int_{\gamma} du \int_{\tilde{\gamma}} dv \frac{F(u)}{F(v)} \frac{x^{-u}y^{v-1}}{u-v}, \qquad x,y >0,
\end{equation}
where the function $F$ is given by
\begin{equation}\label{def of F}
F(z) = \frac{\Gamma \big( z+\frac{\alpha}{2} \big)}{\Gamma \left( \frac{\frac{\alpha}{2}+1-z}{\theta} \right)}
\end{equation}
with $\Gamma$ denoting the Gamma function (see \cite[Chapter 5]{NIST}). 
The contours $\gamma$ and $\tilde{\gamma}$ are both oriented upward and do not intersect each other; the contour $\gamma$ intersects $\mathbb{R}$ to the right of the poles of $F$ and $\tilde{\gamma}$ intersects $\mathbb{R}$ to the left of the zeros of $F$, see Figure \ref{fig: contours gamma}. The contour $\gamma$ tends to infinity in sectors lying strictly in the left half-plane, and $\tilde{\gamma}$ tends to infinity in sectors lying strictly in the right half-plane. If $\theta = 1$, the kernel $\mathbb{K}$ reduces to
\begin{equation}\label{kernel theta=1}
\left. \mathbb{K}(x,y) \right|_{\theta = 1} = 4\mathbb{K}_{\mathrm{Be}}(4x,4y), \qquad x,y > 0,
\end{equation}
where $\mathbb{K}_{\mathrm{Be}}$ is the well-known Bessel kernel \cite{TraWidLUE} given by
\begin{equation*}
\mathbb{K}_{\mathrm{Be}}(x,y) = \frac{J_{\alpha}(\sqrt{x})\sqrt{y}J_{\alpha}^{\prime}(\sqrt{y})-\sqrt{x}J_{\alpha}^{\prime}(\sqrt{x})J_{\alpha}(\sqrt{y})}{2(x-y)},
\end{equation*}
with $J_\alpha$ the Bessel function of the first kind of order $\alpha$.

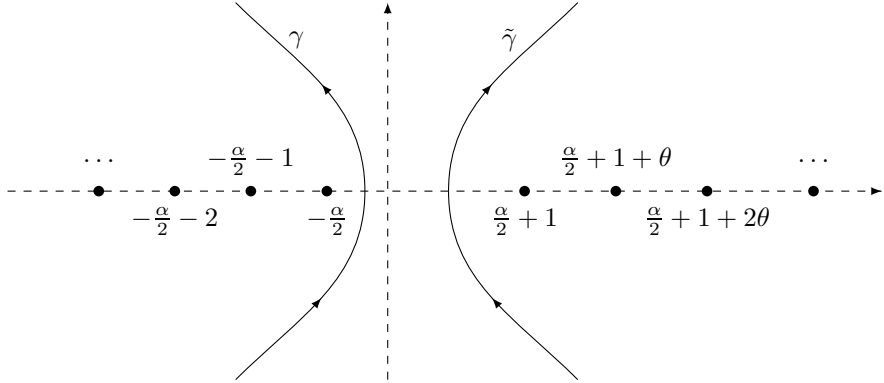
\begin{figure}
\begin{center}
\begin{tikzpicture}
\node at (0,0) {};
\draw[dashed,->-=1,black] (0,-2.5) to [out=90, in=-90] (0,2.5);
\draw[dashed,->-=1,black] (-5,0) to [out=0, in=-180] (6.5,0);

\node at (-1.2,2) {$\gamma$};
\draw[->-=0.5,black] (-2,-2.5) to [out=45, in=-90] (-0.3,0);
\draw[->-=0.5,black] (-0.3,0) to [out=90, in=-45] (-2,2.5);

\node at (1.6,2) {$\tilde{\gamma}$};
\draw[->-=0.5,black] (2.5,-2.5) to [out=135, in=-90] (0.8,0);
\draw[->-=0.5,black] (0.8,0) to [out=90, in=-135] (2.5,2.5);


\fill (-0.8,0) circle (0.07cm);
\node at (-0.8,-0.4) {$-\frac{\alpha}{2}$};
\fill (-1.8,0) circle (0.07cm);
\node at (-1.8,0.4) {$-\frac{\alpha}{2}-1$};
\fill (-2.8,0) circle (0.07cm);
\node at (-2.8,-0.4) {$-\frac{\alpha}{2}-2$};
\fill (-3.8,0) circle (0.07cm);
\node at (-3.8,0.4) {$\ldots$};

\fill (1.8,0) circle (0.07cm);
\node at (1.8,-0.4) {$\frac{\alpha}{2}+1$};
\fill (3,0) circle (0.07cm);
\node at (3,0.4) {$\frac{\alpha}{2}+1+\theta$};
\fill (4.2,0) circle (0.07cm);
\node at (4.2,-0.4) {$\frac{\alpha}{2}+1+2\theta$};
\fill (5.6,0) circle (0.07cm);
\node at (5.6,0.4) {$\ldots$};
\end{tikzpicture}
\end{center}
\caption{The contours $\gamma$ and $\tilde{\gamma}$ for $\alpha = 1.6$ and  $\theta = 1.2$. The dots are the zeros and poles of $F$.  \label{fig: contours gamma}}
\end{figure}
By \cite[equation (1.15)]{ClaeysGirSti}, the finite $n$ probability to observe a gap on $[0, n^{-\frac{1}{\theta}}s]$ converges as $n \to + \infty$ to the probability to observe a gap on $[0,s]$ in the limiting process with kernel $\mathbb{K}$. This is a slightly stronger result than the convergence of the kernel \eqref{hard edge scaling limit}. Let $x^{\star} := \min \{x_{1},\ldots,x_{n}\}$ denote the smallest point. Then the limiting distribution of $x^{\star}$ is given by
\begin{equation}\label{fredholm determinant}
\lim_{n\to + \infty} \mathbb{P}_{n}\big(n^{\frac{1}{\theta}}x^{\star} > s\big) = \det\big(1- \left. \mathbb{K} \right|_{[0,s]}\big), \qquad s > 0,
\end{equation}
where the right-hand side is the Fredholm determinant associated to $\mathbb{K}$ on the interval $[0,s]$.
For $\theta = 1$, Tracy and Widom have shown in \cite{TraWidLUE} that the log $s$-derivative of this Fredholm determinant solves a Painlev\'{e} V equation. In the case of $\theta$ rational, a more involved system of differential equations has been derived recently in \cite{Zhang}.

In the case $\theta = 1$, the large gap asymptotics (i.e., the asymptotics of \eqref{fredholm determinant} as $s \to + \infty$) are known from Deift, Krasovsky and Vasilevska \cite[Theorem 4]{DeiftKrasVasi} where it was shown that\footnote{Note that, due to the re-scaling \eqref{kernel theta=1}, $\left.\det \big(  1-\left.\mathbb{K}\right|_{[0,s]} \big)\right|_{\theta = 1} = \det \big( \left. 1-\mathbb{K}_{\mathrm{Be}} \right|_{[0,4s]} \big)$, and thus one should use \cite[Theorem 4]{DeiftKrasVasi} with $s$ replaced by $4s$ to obtain \eqref{asymp theta=1}.}
\begin{equation}\label{asymp theta=1}
\left.\det \Big(  1-\left.\mathbb{K}\right|_{[0,s]} \Big)\right|_{\theta = 1} = \frac{G(1+\alpha)}{(2\pi)^{\frac{\alpha}{2}}} \exp \Big( -s + 2 \alpha \sqrt{s} - \frac{\alpha^{2}}{4} \ln(4s) + \bigO(s^{-1/2}) \Big)  \quad \mbox{as } s \to +\infty,
\end{equation}
where $G$ is Barnes' $G$-function (see \cite[Chapter 5]{NIST}). The study of the general case $\theta > 0$ has been initiated by Claeys, Girotti and Stivigny in the recent paper \cite{ClaeysGirSti}. They obtained the asymptotic formula
\begin{equation}\label{known result fredholm general theta}
\det \Big( \left. 1-\mathbb{K} \right|_{[0,s]} \Big) = C \exp \Big(  -a s^{2\rho}+b s^{\rho}+c \ln s +\bigO(s^{-\rho}) \Big) \quad \mbox{as } s \to + \infty,
\end{equation}
where the real constants $\rho$, $a$, and $b$ are explicitly given by
\begin{equation}\label{coeff rho a b}
\rho = \frac{\theta}{1+\theta}, \qquad a = \frac{1}{4}(1+\theta)^{2} \theta^{\frac{1-3\theta}{1+\theta}}, \quad \mbox{ and } \quad  b = \frac{1}{2} (1+\theta)(1+2\alpha - \theta) \theta^{-\frac{2\theta}{\theta + 1}}.
\end{equation}
It is quite remarkable that, even though the kernel $\mathbb{K}$ is known to be integrable only for rational $\theta$, they managed to obtain an asymptotic formula valid \textit{for any} fixed $\theta > 0$ (we comment on their method below). 

\subsection{Main results}
The constants $c$ and $C$ in the large gap probability \eqref{known result fredholm general theta} are \textit{multiplicative} constants. Therefore, there is no accurate description of the large gap probability without their explicit expressions. Obtaining such expressions is precisely the purpose of this paper. Our main result is the following.

\begin{theorem}[Explicit expressions for $c$ and $C$]\label{thm:main results}
For any fixed $\theta > 0$ and $\alpha > -1$, the constants $c$ and $C$ that appear in the asymptotic formula \eqref{known result fredholm general theta} are given by
\begin{align}
c = & -\frac{6 \alpha^2-6\alpha(\theta-1) +(\theta-1)^2}{12(1+\theta)}, \label{little c in thm} 
	\\ \nonumber
C = &\; \frac{G ( 1+\alpha )}{(2\pi)^{\frac{\alpha}{2}}}   \exp \big( d(1,\alpha) - d(\theta,\alpha) \big)  \exp\left( \frac{24 \alpha (\alpha +2)+15+3\theta + 4 \theta^{2}}{24(1+\theta)} \ln \theta \right) 
	\\
& \times \exp \left( \frac{6\alpha \theta - 6 \alpha (1+\alpha)-(\theta-1)^{2}}{12 \theta} \ln(1+\theta) \right), \label{big C in thm} 
\end{align}
where $G$ is the Barnes $G$-function and the real quantity $d(\theta,\alpha)$ is defined by the limit
\begin{align}\nonumber
d(\theta,\alpha) = &\; \lim_{N\to + \infty} \Bigg[ \sum_{k=1}^{N} \ln \Gamma (1+\alpha + k \theta) - \Bigg\{\frac{\theta}{2} N^{2} \ln N +  \frac{\theta(2 \ln \theta - 3)}{4}N^{2} 
	 \\ \label{def of the constant d}
& +\left(1+\alpha + \frac{\theta -1}{2}\right) N \ln N + \left( \frac{\ln(2\pi)}{2}-(1+\alpha)+\frac{1-\theta}{2} + \left( \alpha + \frac{1+\theta}{2} \right) \ln \theta \right) N \nonumber \\
& +\frac{1+6\alpha^{2} + \theta (3+\theta) + 6\alpha(1+\theta)}{12 \theta} \ln N \Bigg\} \Bigg]. 
\end{align}
\end{theorem}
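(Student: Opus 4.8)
The plan is to supplement the differential identity in $s$ of \cite{ClaeysGirSti} --- which produced $\rho$, $a$, $b$ but is blind to the multiplicative constant --- with a differential identity in the parameter $\theta$, and to integrate the latter from the exactly solvable point $\theta=1$, where the complete asymptotics \eqref{asymp theta=1} are known, up to an arbitrary $\theta>0$. Since the kernel $\mathbb{K}$ is integrable (in the Its--Izergin--Korepin--Slavnov sense) only when $\theta=p/q$ is rational, I would carry out the analysis under this assumption and recover all $\theta>0$ at the end by a density argument: the right-hand sides of \eqref{little c in thm}--\eqref{big C in thm} are continuous in $\theta$ (for the term $d(\theta,\alpha)$ this follows from uniformity of the Euler--Maclaurin remainder), the left-hand sides are continuous in $\theta$ by uniformity, on compact subsets of $(0,\infty)$, of the estimates underlying \eqref{known result fredholm general theta}, and $\mathbb{Q}\cap(0,\infty)$ is dense.

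\textbf{The differential identity and its asymptotic evaluation.} Writing $\mathbb{K}_\theta$ for the kernel, one starts from
\[
\partial_\theta \ln \det\big(1 - \mathbb{K}_\theta|_{[0,s]}\big) = - \tr\Big[\big(1 - \mathbb{K}_\theta|_{[0,s]}\big)^{-1}\, \partial_\theta \mathbb{K}_\theta|_{[0,s]}\Big],
\]
where $\partial_\theta\mathbb{K}_\theta$ is obtained by differentiating the double contour integral \eqref{def kernel}--\eqref{def of F}; as the $\theta$-dependence sits only in $1/\Gamma\big((\tfrac{\alpha}{2}+1-z)/\theta\big)$, this produces digamma factors. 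For rational $\theta$ the kernel has the integrable form $\mathbb{K}_\theta(x,y)=\vec f(x)^{\!\top}\vec g(y)/(x-y)$ of size $p+q$, so the resolvent $(1-\mathbb{K}_\theta|_{[0,s]})^{-1}$ is expressed through the solution of the associated $(p+q)\times(p+q)$ RH problem and the trace becomes an explicit contour integral of that solution evaluated near the endpoints $0$ and $s$. One then feeds in the Deift--Zhou steepest descent analysis of this RH problem as $s\to+\infty$ --- essentially the one performed in \cite{ClaeysGirSti}, but carried to the accuracy needed to see the $O(1)$ term, with the global parametrix and the two local parametrices determined without unknown constants --- to obtain an expansion of $\partial_\theta \ln \det\big(1 - \mathbb{K}_\theta|_{[0,s]}\big)$ in $s$ that is uniform for $\theta$ in compacts. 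Consistency with \eqref{known result fredholm general theta}--\eqref{coeff rho a b} fixes the coefficients of $s^{2\rho}$, $s^{2\rho}\ln s$, $s^{\rho}$, $s^{\rho}\ln s$ to be the $\theta$-derivatives of $-as^{2\rho}+bs^\rho$ (a useful cross-check), the coefficient of $\ln s$ to be $\partial_\theta c(\theta,\alpha)$, and the $s$-independent term to be $\partial_\theta\ln C(\theta,\alpha)$.

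\textbf{Integration in $\theta$ and the constant $d$.} It then remains to evaluate these two quantities explicitly and to integrate,
\[
c(\theta,\alpha) = -\tfrac{\alpha^2}{4} + \int_1^\theta \partial_\tau c(\tau,\alpha)\,d\tau,\qquad \ln C(\theta,\alpha) = \ln\!\Big(\tfrac{G(1+\alpha)}{(2\pi)^{\alpha/2}}\,4^{-\alpha^2/4}\Big) + \int_1^\theta \partial_\tau\ln C(\tau,\alpha)\,d\tau,
\]
the initial values being read off from \eqref{asymp theta=1}. Since $\partial_\theta\mathbb{K}_\theta$ --- hence $\partial_\theta\ln C(\theta,\alpha)$ --- carries digamma factors at arguments built from the arithmetic progressions of zeros and poles of $F$ (Figure \ref{fig: contours gamma}), the $\theta$-integration produces expressions of the type $\sum_{k=1}^{N}\ln\Gamma(1+\alpha+k\theta)$; the part that grows with $s$ is absorbed into the already-known powers of $s$, and what survives is precisely the regularized constant $d(\theta,\alpha)$ of \eqref{def of the constant d}, whose subtracted polynomial is the large-$N$ expansion of that sum obtained from Euler--Maclaurin and Stirling. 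The $\ln\theta$ and $\ln(1+\theta)$ terms and the rational coefficients emerge from the remaining elementary pieces after simplification ($d(\theta,\alpha)$ reducing, for rational $\theta=p/q$, to a finite linear combination of logarithms of Barnes $G$-functions via the Gauss multiplication formula), and setting $\theta=1$ returns \eqref{asymp theta=1}, which is a reassuring consistency check.

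\textbf{Main obstacle.} The crux is the final stage of the Riemann--Hilbert analysis: isolating the $s$-independent contribution $\partial_\theta\ln C(\theta,\alpha)$. This is the familiar ``constant problem'' in the asymptotics of Toeplitz/Hankel/Fredholm determinants --- it requires the global and local parametrices with full precision, not merely up to a multiplicative constant, and a careful matching of their $\theta$-derivatives, since any bookkeeping slip feeds straight into $C$. A second, essential technical point is to keep every estimate uniform in $\theta$, so that both the $\theta$-integration and the concluding density argument for irrational $\theta$ are legitimate.
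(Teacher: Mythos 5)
Your high-level strategy coincides with the paper's: differentiate $\ln\det(1-\mathbb{K}|_{[0,s]})$ in $\theta$, evaluate the resulting identity asymptotically as $s\to+\infty$ uniformly in $\theta$, integrate from the exactly known point $\theta=1$ given by \eqref{asymp theta=1}, and recognize the regularized sum $\sum_{k}\ln\Gamma(1+\alpha+k\theta)$ (hence $d(\theta,\alpha)$ of \eqref{def of the constant d}) as the surviving $s$-independent contribution coming from the digamma factors. However, there is a genuine gap in the way you propose to implement the Riemann--Hilbert step. You restrict to rational $\theta=p/q$ in order to use the size-$(p+q)$ integrable structure of \cite{Zhang}, and you plan to recover irrational $\theta$ by density. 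This fails on two counts. First, the integration in $\theta$ and the final continuity argument both require the asymptotic expansion of $\partial_\theta\ln\det(1-\mathbb{K}|_{[0,s]})$ uniformly for $\theta$ in compact subsets of $(0,\infty)$ (or at least on sets of full measure in the integration interval), whereas your analysis would only produce it at rational points; and as a sequence of rationals approaches an irrational $\theta_0$, the size $p+q$ of the RH problem is unbounded, so no uniform control of the parametrices, error terms, or of the $O(1)$ constant can be extracted from that framework. Second, the steepest descent analysis you invoke as ``essentially the one performed in \cite{ClaeysGirSti}'' is not an analysis of the $(p+q)\times(p+q)$ problem at all: their key input (inspired by \cite{BertolaCafasso}) is the identity \eqref{det integrable} rewriting the Fredholm determinant in terms of the $2\times 2$ integrable kernel \eqref{integrable kernel}, valid for \emph{every} $\theta>0$, and their Deift--Zhou analysis is of the associated $2\times 2$ RH problem for $Y$.

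This is exactly how the gap is repaired in the paper: the $\theta$-differential identity is derived by the IIKS procedure for the $2\times 2$ RH problem for $Y$ (so no rationality assumption and no density argument is needed), the existing steepest descent analysis is upgraded to be uniform for $\theta$ in compacts of $(0,1]$, and the case $\theta\geq 1$ is obtained from the symmetry $\theta\mapsto 1/\theta$, $\alpha\mapsto(1+\alpha)/\theta-1$ rather than by a separate analysis. A further practical point your sketch underestimates is that the $\theta$-derivative introduces the digamma function with infinitely many poles into the integrand; isolating the $O(1)$ term then requires splitting the identity into several contour integrals and treating the pole contributions separately (this is where the series defining $d(\theta,\alpha)$ actually arises), which is substantially more delicate than the bookkeeping step your outline suggests. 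With the $2\times 2$ framework substituted for the rational-$\theta$ one, your plan becomes viable and is in essence the proof given in the paper.
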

\begin{remark}[The case $\theta = 1$]\upshape\label{theta1remark}
For $\theta = 1$, the expressions for the coefficients $\rho, a,b,c$, and $C$ given in (\ref{coeff rho a b})--(\ref{big C in thm}) reduce to
\begin{equation}
\rho = \frac{1}{2}, \qquad a = 1, \qquad b = 2\alpha, \qquad c = -\frac{\alpha^{2}}{4}, \quad \mbox{ and } \quad C = \frac{2^{- \frac{\alpha^{2}}{2}}G(1+\alpha)}{(2\pi)^{\frac{\alpha}{2}}},
\end{equation}
so we recover \eqref{asymp theta=1} as a special case of \eqref{known result fredholm general theta}. 
\end{remark}

\begin{remark}[The constant $d$]\upshape
The constant $d =d(\theta, \alpha)$ (constant in the sense that it is independent of $s$) is defined by the limit in (\ref{def of the constant d}) in a similar way as the Euler gamma constant $\gamma_{\mathrm{E}}$, which appears in the definition of $G$, is defined by (see \cite[Eq. 5.2.3]{NIST})
\begin{equation*}
\gamma_{\mathrm{E}} = \lim_{N\to + \infty} \sum_{k=1}^{N} \left( \frac{1}{k}-\ln \left( 1+\frac{1}{k} \right) \right).
\end{equation*}
The definition of $d$ can also be compared with the following expression for the derivative of the Riemann $\zeta$-function evaluated at $-1$ (see \cite[Eq. 5.17.7]{NIST}):
\begin{align*}
\zeta'(-1) = \lim_{N \to + \infty} \Bigg[ \sum_{k=1}^{N} \ln \Gamma(k) - \Bigg\{ \frac{1}{2}N^{2}\ln N - \frac{3}{4} N^{2} + \frac{\ln(2\pi)}{2}N-\frac{1}{12}\ln(N) \Bigg\} \Bigg].
\end{align*}
In fact, comparing the above expression with the definition (\ref{def of the constant d}) of $d$, we see that\footnote{Note that $d(\theta,\alpha)$ is well-defined for $\alpha = -1$ even though the point process is defined only for $\theta > 0$ and $\alpha > -1$.} 
$$d(1,-1) = \zeta'(-1).$$
More generally, for $\theta = 1$ but any value of $\alpha > -1$, we can use the functional equation for the Barnes $G$-function to write
\begin{equation}
\sum_{k=1}^{N} \ln \Gamma (1+\alpha + k) = \ln G(1+\alpha + N) - \ln G(2+\alpha).
\end{equation}
Using the expansion (see \cite[Eq. 5.17.5]{NIST})
\begin{equation}
\ln G(z+1) = \frac{z^{2}}{2}\ln z - \frac{3}{4}z^{2} + \frac{\ln(2\pi)}{2}z - \frac{1}{12}\ln z + \zeta'(-1) + \bigO(z^{-1}), \qquad z \to + \infty,
\end{equation}
we conclude from  (\ref{def of the constant d}) that
\begin{align}\label{d with theta=1}
& d(1,\alpha) = \zeta'(-1) + \frac{1+\alpha}{2}\ln (2\pi) - \ln G(2+\alpha)
\end{align}
for $\alpha > -1$. In other words, for $\theta = 1$, $d(\theta,\alpha)$ is expressed in terms of already known special functions evaluated at certain points. The next proposition shows that this is still the case if $\theta$ is a rational number, but then the expression becomes more complicated. 

\begin{proposition}[Expression for $d(\theta, \alpha)$ when $\theta = p/q$ is rational]\label{prop: constant d for rational theta}
Let $\alpha > -1$ and $\theta = p/q$ where $p,q \in \mathbb{N}\setminus \{0\}$. Then $d(\theta,\alpha)$ admits the following expression:
\begin{align}\nonumber
d\Big(\theta = \frac{p}{q},\alpha\Big) = &\; pq\zeta'(-1) + \frac{p+(1+2\alpha)q}{4} \ln(2\pi)-\frac{1+6\alpha^{2} + \theta (3+\theta) + 6\alpha(1+\theta)}{12 \theta} \ln q 
	 \\ \label{d in terms of Barnes G intro}
& - \sum_{k=1}^{q} \sum_{j = 1}^{p} \ln G \left( \frac{j+\alpha}{p} + \frac{k}{q} \right). 
\end{align}
\end{proposition}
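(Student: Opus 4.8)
The plan is to evaluate the limit \eqref{def of the constant d} for $\theta=p/q$ by rewriting $\sum_{k=1}^{N}\ln\Gamma(1+\alpha+k\theta)$ so that its $N$-dependence can be resummed through the Barnes $G$-function. Since the limit defining $d(\theta,\alpha)$ exists, it suffices to take $N=Mq$ and let $M\to\infty$.

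The first step is to break the Gamma factor into a product with denominator $p$ using the Gauss multiplication formula
\[
\Gamma(pw)=(2\pi)^{\frac{1-p}{2}}\,p^{\,pw-\frac12}\prod_{j=0}^{p-1}\Gamma\Big(w+\tfrac jp\Big)
\]
with $w=\tfrac{1+\alpha}{p}+\tfrac kq$, so that $pw=1+\alpha+k\theta$. Taking logarithms and summing over $k=1,\dots,Mq$ turns the prefactor into an explicit polynomial in $M$ with coefficients involving $\ln p$ and $\ln(2\pi)$, and leaves $p$ inner sums $\sum_{k=1}^{Mq}\ln\Gamma\big(\tfrac{j+\alpha}{p}+\tfrac kq\big)$, $j=1,\dots,p$. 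In each of these I would split $k=mq+\ell$ with $0\le m\le M-1$ and $1\le\ell\le q$; for fixed $j,\ell$ the argument then increases in integer steps, and the functional equation $G(z+1)=\Gamma(z)G(z)$ telescopes the $m$-sum into $\ln G\big(\tfrac{j+\alpha}{p}+\tfrac\ell q+M\big)-\ln G\big(\tfrac{j+\alpha}{p}+\tfrac\ell q\big)$. This already produces the $M$-independent double sum $-\sum_{k=1}^{q}\sum_{j=1}^{p}\ln G\big(\tfrac{j+\alpha}{p}+\tfrac kq\big)$ appearing in \eqref{d in terms of Barnes G intro}.

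Next I would insert the expansion $\ln G(z+1)=\tfrac{z^2}{2}\ln z-\tfrac34 z^2+\tfrac{\ln(2\pi)}{2}z-\tfrac1{12}\ln z+\zeta'(-1)+\bigO(z^{-1})$ into $\ln G\big(\tfrac{j+\alpha}{p}+\tfrac\ell q+M\big)$ and expand in powers of $M$, writing $c_{j\ell}:=\tfrac{j+\alpha}{p}+\tfrac\ell q-1$. Summing over $1\le j\le p$, $1\le\ell\le q$ requires only the elementary evaluations $\sum_{j,\ell}1=pq$, $\sum_{j,\ell}c_{j\ell}=\tfrac{p+(1+2\alpha)q}{2}$ and $\sum_{j,\ell}c_{j\ell}^2$ (a finite sum computed from $\sum j,\sum j^2$). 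The outcome is an expansion of $\sum_{k=1}^{Mq}\ln\Gamma(1+\alpha+k\theta)$ of the form $P(M)+pq\,\zeta'(-1)+\tfrac{p+(1+2\alpha)q}{4}\ln(2\pi)-\sum_{k=1}^{q}\sum_{j=1}^{p}\ln G\big(\tfrac{j+\alpha}{p}+\tfrac kq\big)+o(1)$, where $P(M)$ gathers all the divergent pieces (the monomials $M^2\ln M,\ M^2,\ M\ln M,\ M,\ \ln M$, with their $\ln p$- and $\ln(2\pi)$-coefficients).

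Finally I would substitute $M=N/q$ and $\ln M=\ln N-\ln q$ into $P(M)$ and subtract the bracket $\{\cdots\}$ of \eqref{def of the constant d}. Two things must be checked: (i) the divergent parts cancel, i.e.\ the coefficients of $N^2\ln N$, $N^2$, $N\ln N$, $N$ and $\ln N$ produced by $P(N/q)$ coincide with those in \eqref{def of the constant d}; in particular this needs the identity $\tfrac12\sum_{j,\ell}c_{j\ell}^2-\tfrac{pq}{12}=\tfrac{1+6\alpha^2+\theta(3+\theta)+6\alpha(1+\theta)}{12\theta}$ for the $\ln N$ coefficient, together with the analogous ones for the $N$- and $N^2$-coefficients (in which the separate $\ln p$ and $\ln q$ pieces reorganize into $\ln\theta$); and (ii) the only finite terms left over are $pq\,\zeta'(-1)$, $\tfrac{p+(1+2\alpha)q}{4}\ln(2\pi)$, the $-\sum_k\sum_j\ln G$ term, and the constant $-\big(\tfrac12\sum_{j,\ell}c_{j\ell}^2-\tfrac{pq}{12}\big)\ln q=-\tfrac{1+6\alpha^2+\theta(3+\theta)+6\alpha(1+\theta)}{12\theta}\ln q$ generated by $\ln M=\ln N-\ln q$. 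Combining (i) and (ii) gives \eqref{d in terms of Barnes G intro}. The main obstacle is the bookkeeping in (i)--(ii): one must keep track of the $\ln p$ and $\ln q$ contributions that are spread across the Gauss-multiplication prefactor, the $M^2\ln M$, $M\ln M$, $\ln M$ terms and the pure powers of $M$, and verify that they assemble into precisely the coefficients of \eqref{def of the constant d} plus the single surviving $\ln q$; beyond the quoted expansion of $\ln G$ no further analytic input is needed.
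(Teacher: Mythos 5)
Your proposal is correct and is essentially the paper's own argument: both restrict to $N$ a multiple of $q$, use the Gauss multiplication formula for $\Gamma$, telescope the resulting integer-step products via $G(z+1)=\Gamma(z)G(z)$ to produce the double sum of $\ln G\big(\tfrac{j+\alpha}{p}+\tfrac kq\big)$, and then extract the constant by comparing the large-argument expansion of $\ln G$ with the divergent bracket in the definition \eqref{def of the constant d}. The only immaterial differences are that you apply the multiplication formula before splitting the summation index modulo $q$ (the paper does these two steps in the opposite order) and that you verify the cancellation of the divergent terms explicitly, whereas the paper reads off the order-one term directly from the definition of $d$.
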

\begin{proof}
See Appendix \ref{subsection: d rational}.
\end{proof}

Quantities such as $\zeta'(-1)$ or $G(1+\alpha)$ appear in several asymptotic formulas in random matrix theory. For example, $\zeta'(-1)$ appears in the large gap asymptotics of the Airy point process \cite{DIK} and in the asymptotics of the partition function for a large class of random matrix ensembles \cite[equations (1.38)-(1.40)]{CharlierGharakhloo}, while the Barnes $G$-function appears in the large gap asymptotics of the Bessel point process (see \eqref{asymp theta=1}) and in the asymptotics of large Toeplitz and Hankel determinants with Fisher-Hartwig singularities \cite{DIK-FH,CharlierGharakhloo}. However, despite its relatively simple definition, we have not been able to express $d$ in terms of known special functions for irrational values of $\theta$.
\end{remark}

\begin{remark}[The symmetry $\theta \mapsto \frac{1}{\theta}$]\upshape\label{symmremark}
By \cite[page 4]{Borodin}, the determinant on the left-hand side of \eqref{known result fredholm general theta} is invariant under the following changes of the parameters:
\begin{align}\label{symmetry}
s \mapsto s^{\theta}, \qquad \theta \mapsto \frac{1}{\theta}, \quad \mbox{ and } \quad \alpha \mapsto \alpha^{\star}:=\frac{1+\alpha}{\theta}-1.
\end{align} 
It follows that the coefficients $\rho$, $a$, $b$, $c$, and $C$ must obey the following symmetry relations for any $\theta > 0$ and $\alpha > -1$:
\begin{align}\nonumber
& \rho(\theta,\alpha) = \theta \rho\big( \tfrac{1}{\theta},\alpha^{\star} \big), \quad 
a(\theta,\alpha) = a(\tfrac{1}{\theta},\alpha^{\star}), \quad 
b(\theta,\alpha) = b(\tfrac{1}{\theta},\alpha^{\star}), 
	\\ \label{rhoabcCsymm}
& c(\theta,\alpha) = \theta c(\tfrac{1}{\theta},\alpha^{\star}), \quad
C(\theta,\alpha) = C(\tfrac{1}{\theta},\alpha^{\star}),
\end{align}
where we have indicated the dependence of the coefficients on $\theta$ and $\alpha$ explicitly. The first four of these relations are easily verified directly from the definitions \eqref{coeff rho a b}--\eqref{little c in thm} by simple computations. The relation $C(\theta,\alpha) = C(\tfrac{1}{\theta},\alpha^{\star})$ can also be verified directly from the definition (\ref{big C in thm}) of $C$, but the computations are more involved. In fact, a long but straightforward computation which uses (\ref{d with theta=1}) and the functional relation $G(z+1) = \Gamma(z)G(z)$ implies that the relation $C(\theta,\alpha) = C(\tfrac{1}{\theta},\alpha^{\star})$ is equivalent to the symmetry relation for $d$ given in the following proposition.
\end{remark}

\begin{proposition}[Symmetry relation for $d$]\label{dsymmprop}
The constant $d = d(\theta, \alpha)$ defined in (\ref{def of the constant d}) satisfies  
\begin{align} \nonumber
d(\theta,\alpha) = & \; d \left( \frac{1}{\theta},\frac{1+\alpha}{\theta}-1 \right) + \ln \Gamma \left( \frac{1+\alpha}{\theta} \right) - \ln \Gamma \left( 1+\alpha \right) 
	\\ \label{symmetry for d} 
& \; + \frac{13 + 6 \alpha^{2} + \theta(\theta-3) + 6 \alpha (\theta + 3)}{12 \theta}\ln \theta
\end{align}
for $\theta > 0$ and $\alpha > -1$.
\end{proposition}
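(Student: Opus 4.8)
The plan is to derive (\ref{symmetry for d}) as a consequence of the symmetry $C(\theta,\alpha) = C(\tfrac1\theta,\alpha^\star)$ recorded in (\ref{rhoabcCsymm}), together with the explicit formula (\ref{big C in thm}) for $C$. The symmetry of $C$ itself is not a new input: by Borodin's invariance (\ref{symmetry}), the Fredholm determinant satisfies $\det(1-\mathbb{K}|_{[0,s]})$ with parameters $(\theta,\alpha)$ equal to $\det(1-\mathbb{K}|_{[0,s^\theta]})$ with parameters $(\tfrac1\theta,\alpha^\star)$, and feeding the asymptotic expansion (\ref{known result fredholm general theta}) into both sides --- using that $\rho,a,b,c$ already obey their symmetry relations, which is a direct computation --- forces the multiplicative constants to agree, $C(\theta,\alpha)=C(\tfrac1\theta,\alpha^\star)$. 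Substituting (\ref{big C in thm}) into this identity and solving for $d(\theta,\alpha)-d(\tfrac1\theta,\alpha^\star)$ produces a formula whose right-hand side involves only $d(1,\alpha)$, $d(1,\alpha^\star)$, Barnes $G$-values, a power of $2\pi$, and the elementary logarithms $\ln\theta$, $\ln(1+\theta)$ and $\ln(1+\tfrac1\theta)$, with coefficients that are explicit rational functions of $\theta$ and $\alpha$.

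Next I would eliminate the $\theta=1$ data using (\ref{d with theta=1}): writing $d(1,\beta)=\zeta'(-1)+\tfrac{1+\beta}{2}\ln(2\pi)-\ln G(2+\beta)$ for $\beta\in\{\alpha,\alpha^\star\}$ makes $\zeta'(-1)$ drop out. The remaining $\ln G$-terms --- those from (\ref{d with theta=1}) together with the $\ln G(1+\alpha)$ and $\ln G(1+\alpha^\star)$ coming from the prefactors in (\ref{big C in thm}) --- are reconciled via $G(z+1)=\Gamma(z)G(z)$; all $G$-values cancel and leave precisely $\ln\Gamma(1+\alpha^\star)-\ln\Gamma(1+\alpha)$, which equals $\ln\Gamma\big(\tfrac{1+\alpha}{\theta}\big)-\ln\Gamma(1+\alpha)$ since $1+\alpha^\star=\tfrac{1+\alpha}{\theta}$. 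What then remains is a purely algebraic check: that after the substitution $\alpha\mapsto\alpha^\star$, $\theta\mapsto\tfrac1\theta$ (so $1+\theta\mapsto\tfrac{1+\theta}{\theta}$ and $\ln(1+\tfrac1\theta)=\ln(1+\theta)-\ln\theta$), the accumulated $\ln(1+\theta)$ and $\ln(2\pi)$ coefficients vanish identically while the $\ln\theta$ coefficients sum to $\tfrac{13+6\alpha^2+\theta(\theta-3)+6\alpha(\theta+3)}{12\theta}$. This is the ``long but straightforward computation'' mentioned in Remark \ref{symmremark}, and the bookkeeping of these rational coefficients --- not any conceptual point --- is the only real obstacle.

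An alternative, self-contained route that avoids the determinant machinery is to work directly from the definition (\ref{def of the constant d}). Using a Binet--Malmsten integral representation of $\ln\Gamma$ and summing the resulting geometric series shows that $\sum_{k=1}^N\ln\Gamma(1+\alpha+k\theta)$ equals, up to the polynomial-in-$N$ terms, an integral with kernel $\big((1-e^{-t})(1-e^{-\theta t})\big)^{-1}$; the subtracted terms then match $A_N(\theta,\alpha)$ and identify $d(\theta,\alpha)$, up to explicit elementary terms, with $\ln\Gamma_2(1+\alpha\,|\,1,\theta)$, where $\Gamma_2$ is the Barnes double Gamma function. The rescaling $t\mapsto\theta t$ --- equivalently the scaling-plus-permutation symmetry $\zeta_2(s,w\,|\,1,\theta)=\theta^{-s}\zeta_2(s,\tfrac{w}{\theta}\,|\,1,\tfrac1\theta)$ of the Barnes double zeta function --- then yields (\ref{symmetry for d}), the $\ln\theta$-coefficient appearing as $-\zeta_2(0,\tfrac{1+\alpha}{\theta}\,|\,1,\tfrac1\theta)$, a value of the second generalized Bernoulli polynomial. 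Here the main effort is instead in matching the particular normalization $A_N$ of (\ref{def of the constant d}) with the standard normalization of $\Gamma_2$, which is again an elementary Euler--Maclaurin computation.
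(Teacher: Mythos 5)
Your primary route is circular within the logical structure of this paper. You propose to obtain (\ref{symmetry for d}) by combining Borodin's invariance (\ref{symmetry}) with the explicit formula (\ref{big C in thm}), arguing that the invariance of the determinant ``forces the multiplicative constants to agree''. What the invariance forces to agree are the \emph{true} constants in the expansion (\ref{known result fredholm general theta}); to convert this into an identity between the two evaluations of the explicit formula (\ref{big C in thm}) you need Theorem \ref{thm:main results} at both parameter points $(\theta,\alpha)$ and $(\tfrac{1}{\theta},\alpha^{\star})$. For $\theta\neq 1$ one of these points lies in $[1,\infty)$, and the paper proves Theorem \ref{thm:main results} there only by the symmetry argument of Section \ref{twocasessubsubsec}, which requires that the explicit formulas already satisfy (\ref{rhoabcCsymm}); and, as Remark \ref{symmremark} states, the relation $C(\theta,\alpha)=C(\tfrac{1}{\theta},\alpha^{\star})$ for the explicit formula (\ref{big C in thm}) is \emph{equivalent} to Proposition \ref{dsymmprop}. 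So the input of your first route is logically equivalent to its conclusion and, within the paper, is established only through it; the route would become non-circular only if you independently redid the Riemann--Hilbert analysis for $\theta\geq 1$, which is precisely what the paper avoids. This is why the paper proves Proposition \ref{dsymmprop} by a separate argument: for rational $\theta=p/q$ it computes $d(\theta,\alpha)-d(\tfrac{1}{\theta},\alpha^{\star})$ directly from the double-sum formula (\ref{d in terms of Barnes G intro}) of Proposition \ref{prop: constant d for rational theta}, using $G(z+1)=\Gamma(z)G(z)$ and the duplication formula for $\Gamma$, and then extends to all $\theta>0$ by showing that the approximants $d_N$ converge uniformly on compact subsets of a complex parameter domain, so that $d$ is holomorphic and the identity propagates by continuity.

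Your alternative route --- identifying $d(\theta,\alpha)$, up to elementary terms, with $\ln\Gamma_2(1+\alpha\mid 1,\theta)$ and invoking the scaling and period-permutation symmetries of the Barnes double zeta function --- is genuinely different from the paper's proof and is viable in principle; that the $\ln\theta$-coefficient should be a value of $\zeta_2(0,\cdot)$, i.e.\ a generalized Bernoulli polynomial, is consistent with the quadratic-in-$\alpha$ coefficient in (\ref{symmetry for d}), and such an argument would treat all $\theta>0$ at once without any density-in-$\theta$ step. As written, however, it is a plan rather than a proof: matching the subtracted polynomial in (\ref{def of the constant d}) with the standard normalization of $\Gamma_2$ (note also that the sum starts at $k=1$, so one row of the double product is missing --- this offset is exactly where terms such as $\ln\Gamma(\tfrac{1+\alpha}{\theta})-\ln\Gamma(1+\alpha)$ must come from) and producing the precise coefficient $\tfrac{13+6\alpha^{2}+\theta(\theta-3)+6\alpha(\theta+3)}{12\theta}$ constitute the actual substance of the proposition, and these computations are left unexecuted. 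If you carry out that Euler--Maclaurin/normalization bookkeeping, the second route yields an independent and arguably more conceptual proof than the paper's rational-$\theta$-plus-continuity argument; the first route should be discarded or rephrased as the consistency check the paper already describes in Remark \ref{symmremark}, not as a proof.
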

\begin{proof}
See Appendix \ref{subsection: symmetry for d}.
\end{proof}

Our second main result shows that the expansion \eqref{known result fredholm general theta} of the Fredholm determinant of $\mathbb{K}$ on $[0,s]$ can be extended to all orders in powers of $s^{-\rho}$ as $s \to + \infty$. More precisely, we have the following.
\begin{theorem}[Asymptotics to all orders]\label{thm: all order expansion}
Let $N \geq 1$ be an integer and fix $\theta > 0$ and $\alpha > -1$. As $s \to + \infty$, there exist constants $C_1,\ldots,C_N \in \mathbb{R}$ such that
\begin{equation}\label{all order expansion}
\det \Big( \left. 1-\mathbb{K} \right|_{[0,s]} \Big) = C \exp \Big(  -a s^{2\rho}+b s^{\rho}+c \ln s + \sum_{j=1}^{N}C_j s^{-j\rho} + \bigO\big(s^{-(N+1)\rho}\big) \Big), 
\end{equation}
where $\mathbb{K}$ is the kernel defined in (\ref{def kernel}) and $\rho, a,b,c,C$ are given by (\ref{coeff rho a b})--(\ref{big C in thm}).
\end{theorem}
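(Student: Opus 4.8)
The plan is to establish an asymptotic expansion \emph{to all orders} of $\frac{d}{ds}\ln\det(1-\mathbb{K}|_{[0,s]})$ by an all-orders Deift--Zhou analysis, and then to integrate, using the already-established leading expansion \eqref{known result fredholm general theta} to pin down the single constant of integration. First I would rescale the gap to the fixed interval $[0,1]$: the operator $\mathbb{K}|_{[0,s]}$ is unitarily equivalent to the operator on $[0,1]$ with kernel $s\,\mathbb{K}(sx,sy)$, which is again of the form \eqref{def kernel} but with $F(z)$ replaced by $F(z)\,s^{-z}$, so that $s$ now enters only as a parameter in the symbol. Differentiating the Fredholm determinant gives $\frac{d}{ds}\ln\det(1-\mathbb{K}|_{[0,s]}) = -\tr\!\big[(1-\mathbb{K}_s)^{-1}\partial_s\mathbb{K}_s\big]$, and since $\partial_s\ln\!\big(F(z)s^{-z}\big)=-z/s$, the right-hand side is a finite expression built from the resolvent of $\mathbb{K}_s$; for rational $\theta=p/q$ this is encoded in the $(p+q)\times(p+q)$ integrable/RH problem of \cite{ClaeysGirSti}, evaluated together with its $s$-derivative at the endpoints $0$ and $1$, and for irrational $\theta$ I would use the same device that \cite{ClaeysGirSti} employed to bypass non-integrability.

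Next I would carry the steepest-descent analysis of \cite{ClaeysGirSti} through to all orders. The outer parametrix is explicit and its $s$-dependence is through elementary powers of $s^{\rho}$ only; the local parametrices at the two endpoints---built from Bessel/Wright-type functions near the hard edge $0$ and from Airy-type functions near the right endpoint---each possess a complete asymptotic expansion in integer powers of the relevant local large parameter, which is a power of $s^{\rho}$. Hence the jump matrix of the final small-norm problem for the error $R$ equals the identity plus a complete asymptotic series in $s^{-\rho}$, uniformly on its contour, and the standard theory of small-norm RH problems yields $R = I + \sum_{k=1}^{N}R_k(z)\,s^{-k\rho} + \bigO(s^{-(N+1)\rho})$, where the coefficients $R_k$ solve an explicit recursive hierarchy of additive RH problems and are computed by residue calculus. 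The crucial verification here is that, beyond the three leading contributions $s^{2\rho}$, $s^{\rho}$ and $\ln s$, the construction produces \emph{only} integer powers of $s^{-\rho}$ and no further logarithms, so that the hierarchy for the $R_k$ closes; this is where most of the (lengthy but routine) work lies.

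Substituting the expansions of $R$ and of the parametrices into the differential identity then yields
\[
\frac{d}{ds}\ln\det(1-\mathbb{K}|_{[0,s]}) = -2a\rho\,s^{2\rho-1} + b\rho\,s^{\rho-1} + c\,s^{-1} + \sum_{j=1}^{N}\widetilde C_j\, s^{-j\rho-1} + \bigO\!\big(s^{-(N+1)\rho-1}\big),
\]
with the first three coefficients automatically reproducing \eqref{coeff rho a b} and \eqref{little c in thm}---a useful consistency check. Integrating from a fixed large reference point up to $s$ gives \eqref{all order expansion} with $C_j=-\widetilde C_j/(j\rho)$, up to an additive $s$-independent constant; consistency with \eqref{known result fredholm general theta} (equivalently, with Theorem \ref{thm:main results}) forces this constant to equal $\ln C$. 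For rational $\theta$ this completes the argument; for irrational $\theta$ one may either repeat the analysis with the device of \cite{ClaeysGirSti}, which respects the asymptotic-expansion structure throughout, or deduce the general case from the rational one, using that $\det(1-\mathbb{K}|_{[0,s]})$ is real-analytic in $\theta>0$ (the kernel \eqref{def kernel} is) and that the remainder in \eqref{all order expansion} is locally uniform in $\theta$, so that each coefficient $C_j(\theta)$, known on a dense set, extends. A third option is to integrate the $\theta$-differential identity of this paper from $\theta=1$, which treats all $\theta>0$ at once but requires the large-gap expansion of the Bessel point process to all orders.

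The main obstacle is the all-orders bookkeeping of the second paragraph: one must show, once and for all, that the outer parametrix, both local parametrices, and the matching between them generate in the expansion of $R$ nothing but integer powers of $s^{-\rho}$ (no new logarithms, no powers outside the lattice $\rho\mathbb{Z}$), so that the recursion for the $R_k$ is self-contained; a secondary difficulty is making the passage to irrational $\theta$ fully rigorous, i.e.\ establishing the locally uniform control of the remainder needed to transfer the expansion from rational to all $\theta>0$.
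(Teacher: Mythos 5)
Your strategy coincides with the paper's own proof: Theorem \ref{thm: all order expansion} is obtained there from the differential identity in $s$, \eqref{diff identity in s}, by extending the steepest-descent analysis of \cite{ClaeysGirSti} to all orders (Propositions \ref{Gprop}, \ref{pprop} and \ref{Rprop} show that $\ln\mathcal{G}$, $p_1(s)$ and $R_1(s)$ admit expansions in integer powers of $s^{-\rho}$ with no new logarithms) and then integrating, which is exactly the bookkeeping you single out as the main task. Only descriptive details differ from the actual analysis: the RH problem used is the $2\times 2$ one coming from the Bertola--Cafasso identity \eqref{det integrable}, valid for every $\theta>0$, so no rational/irrational dichotomy, density-in-$\theta$ argument, or $(p+q)\times(p+q)$ problem is needed; both local parametrices are Airy parametrices at the complex points $b_1$, $b_2$ (there is no Bessel parametrix at the origin), the real subtlety near $0$ being instead the non-uniformity of the expansions of $\ln\mathcal{G}(\zeta)$ and $p(\zeta)$ on $\Sigma_5$, handled in Section \ref{psec}; and the case $\theta\geq 1$ is reduced to $\theta\in(0,1]$ via the symmetry \eqref{symmetry}.
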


\subsection{Outline of proofs}\label{outlinesubsec}
Our proof of Theorem \ref{thm:main results} is based on some preliminary results from \cite{ClaeysGirSti}. An important and remarkable ingredient of that paper (inspired by \cite{BertolaCafasso}) is the identity
\begin{equation}\label{det integrable}
\det \Big( \left. 1-\mathbb{K} \right|_{[0,s]} \Big) = \det \Big( 1-\mathbb{M}_{s} \Big),
\end{equation}
where the integrable kernel $\mathbb{M}_{s}$ of size $2 \times 2$ is given for any $\theta > 0$ by
\begin{equation}\label{integrable kernel}
\mathbb{M}_{s}(u,v) = \frac{\textbf{f}(u)^{T}\textbf{g}(v)}{u-v}, \qquad \textbf{f}(u) = \frac{1}{2\pi i} \begin{pmatrix}
\chi_{\gamma}(u) \\
s^{u} \chi_{\tilde{\gamma}}(u)
\end{pmatrix}, \qquad \textbf{g}(v) = \begin{pmatrix}
-F(v)^{-1} \chi_{\tilde{\gamma}}(v) \\
s^{-v}F(v)\chi_{\gamma}(v)
\end{pmatrix}
\end{equation}
with $\chi_{\gamma}$ and $\chi_{\tilde{\gamma}}$ denoting the indicator functions of $\gamma$ and $\tilde{\gamma}$, respectively. Using some results from \cite{Bertola, BertolaCafasso} and following the procedure developed by Its-Izergin-Korepin-Slavnov (IIKS) \cite{IIKS}, the authors of \cite{ClaeysGirSti} obtained a differential identity for
\begin{equation}\label{log der s of fredholm}
\partial_{s} \ln \det \Big( \left. 1-\mathbb{K} \right|_{[0,s]} \Big)
\end{equation}
in terms of the solution $Y$ of a $2 \times 2$ matrix RH problem. Moreover, by performing a (non-standard) Deift/Zhou \cite{DeiftZhou} steepest descent analysis of this RH problem, they computed the large $s$ asymptotics of the expression in \eqref{log der s of fredholm}. The asymptotic formula \eqref{known result fredholm general theta} and the expressions (\ref{coeff rho a b}) for the coefficients $a$ and $b$ were then obtained from the relation
\begin{equation}\label{s integration idea}
\ln \det \Big( \left. 1-\mathbb{K} \right|_{[0,s]} \Big) = \ln \det \Big( \left. 1-\mathbb{K} \right|_{[0,M]} \Big) + \int_{M}^{s} \partial_{s'} \ln \det \Big( \left. 1-\mathbb{K} \right|_{[0,s']} \Big)ds',
\end{equation}
where $M$ is a sufficiently large but fixed constant. 

In principle, the method of \cite{ClaeysGirSti} can be employed to obtain any number of terms in the large $s$ expansion of \eqref{log der s of fredholm} (even though the computations become technically more involved as higher order terms are included). 
In particular, it is possible to compute the constant $c$ by extending the expansion of \eqref{log der s of fredholm} to the next order and then substituting the resulting asymptotics into the integrand of \eqref{s integration idea}. However, the fact that the quantity
\begin{equation*}
\ln \det \Big( \left. 1-\mathbb{K} \right|_{[0,M]} \Big)
\end{equation*}
is an unknown constant (independent of $s$) is an essential obstacle to the computation of $C$, see also \cite[Remark 1.3]{ClaeysGirSti}. Therefore, in the present work we adopt a different approach which takes advantage of the known result for $\theta = 1$ given in \eqref{asymp theta=1}. 

Whereas the approach of \cite{ClaeysGirSti} is based on a differential identity in $s$, our approach relies on a differential identity in $\theta$. More precisely, using \eqref{det integrable}--\eqref{integrable kernel} and results from \cite[Section 5.1]{Bertola}, we apply the IIKS procedure \cite{IIKS} to obtain a differential identity for 
\begin{equation}\label{log der theta of fredholm}
\partial_{\theta} \ln \det \Big( \left. 1-\mathbb{K} \right|_{[0,s]} \Big)
\end{equation}
in terms of the solution $Y$ of the RH problem of \cite{ClaeysGirSti} mentioned above (henceforth referred to as the RH problem for $Y$). 
By recycling the steepest descent analysis of \cite{ClaeysGirSti}, we obtain asymptotics of $Y$ as $s \to + \infty$. The steepest descent analysis in \cite{ClaeysGirSti} was performed for $\theta$ fixed, but we can easily show that the resulting asymptotic formulas are in fact valid uniformly for $\theta$ in any compact subset of $(0,+\infty)$. An integration of \eqref{log der theta of fredholm} from $\theta = 1$ to an arbitrary (but fixed) $\theta > 0$ then gives
\begin{equation}\label{theta integration idea}
\left.\ln \det \Big( \left. 1-\mathbb{K} \right|_{[0,s]} \Big)\right|_{\theta} = \left.\ln \det \Big( \left. 1-\mathbb{K} \right|_{[0,s]} \Big)\right|_{\theta=1} + \int_{1}^{\theta} \left. \partial_{\theta'} \ln \det \Big( \left. 1-\mathbb{K} \right|_{[0,s]} \Big) \right|_{\theta'}d\theta'.
\end{equation}
The main advantage of this approach is that the large $s$ asymptotics of
\begin{equation}
\left.\ln \det \Big( \left. 1-\mathbb{K} \right|_{[0,s]} \Big)\right|_{\theta=1}
\end{equation}
are known (including the constant term), see \eqref{asymp theta=1}. Therefore, if we compute the asymptotics of \eqref{log der theta of fredholm} to sufficiently high order and substitute the resulting expansion into \eqref{theta integration idea} (using the uniformity of this expansion with respect to $\theta$), we can obtain $C$ by performing the integral with respect to $\theta'$. 

\subsubsection{The two cases $\theta \leq 1$ and $\theta \geq 1$}\label{twocasessubsubsec}
The proof of Theorem \ref{thm:main results} naturally splits into the two cases $\theta \in (0,1]$ and $\theta \in [1, \infty)$. Similar techniques can be used to handle both of these cases, but since they are associated with different branch cut structures, slightly different arguments are required. To avoid having to deal with two different cases, we will therefore, for simplicity, give the derivation of Theorem \ref{thm:main results} only in the case $\theta \in (0,1]$ and then appeal to the symmetry (\ref{symmetry}) to extend the result to $\theta \in [1, \infty)$. The extension to $\theta \in [1, \infty)$ can be carried out as follows: Assuming that Theorem \ref{thm:main results} holds for $\theta \in (0,1]$, the invariance of the determinant in \eqref{known result fredholm general theta} under the symmetry (\ref{symmetry}) implies that, for any $\theta \in [1, \infty)$,
\begin{align*}
\ln \det \Big( \left. 1-\mathbb{K} \right|_{[0,s]} \Big) = & 
-a(\tfrac{1}{\theta},\alpha^{\star}) s^{2\theta\rho(\frac{1}{\theta},\alpha^{\star})}+b(\tfrac{1}{\theta},\alpha^{\star}) s^{\theta\rho(\frac{1}{\theta},\alpha^{\star})}+c(\tfrac{1}{\theta},\alpha^{\star}) \ln s 
	\\
& + C(\tfrac{1}{\theta},\alpha^{\star}) +\bigO\Big(\frac{1}{s^{\theta\rho(\frac{1}{\theta},\alpha^{\star})}} \Big), \qquad s \to + \infty.
\end{align*}
Using the symmetries in (\ref{rhoabcCsymm}), which we recall can be verified directly from the explicit expressions for $\rho$, $a$, $b$, $c$, $C$ in (\ref{coeff rho a b})--(\ref{big C in thm}) (see Remark \ref{symmremark}), the statement of Theorem \ref{thm:main results} follows also 
for $\theta \in [1, \infty)$. A similar argument applies to Theorem \ref{thm: all order expansion}. 
The upshot is that it is enough to prove Theorem \ref{thm:main results} and Theorem \ref{thm: all order expansion} for $\theta \in (0,1]$.

\subsubsection{Comparison with the approach of \cite{ClaeysGirSti}}
Even though our approach has the major advantage of opening up a path to the evaluation of the constant $C$, there are several disadvantages of integrating with respect to $\theta$ instead of with respect to $s$. First, in \cite{ClaeysGirSti} the authors were able to obtain expressions for the constants $a$ and $b$ at the hard edge not only for Muttalib--Borodin ensembles, but also for certain other limiting point processes arising from products of random matrices. This was feasible because $s$ is a common parameter in all of these models and the associated differential identities could be analyzed in a similar way in all cases. Since the parameter $\theta$ is not present in the other models, our method of deforming with respect to $\theta$ can only be applied in the case of the Muttalib--Borodin ensembles. Second, integration with respect to $\theta$ requires significantly more computational effort than integration with respect to $s$. This can be seen by taking the logarithm of the asymptotic formula \eqref{known result fredholm general theta} and differentiating the resulting equation with respect to $s$ and $\theta$ respectively:\footnote{From the analysis of \cite{ClaeysGirSti}, we can show that the error term in \eqref{known result fredholm general theta} is indeed differentiable and satisfies $\partial_{s} \bigO(s^{-\rho}) = \bigO(s^{-\rho -1})$ and $\partial_{\theta} \bigO(s^{-\rho}) = \bigO(s^{-\rho} \ln s)$ as $s \to + \infty$.}
\begin{align}\label{a b c in the diff identity for s lol}
\partial_{s} \ln \det\bigg(1 - \mathbb{K}\Big|_{[0,s]}\bigg) = & - 2  \rho a s^{2\rho -1} + \rho b s^{\rho-1} + \frac{c}{s} + \bigO(s^{-\rho -1}),  
	\\\nonumber
\partial_{\theta} \ln \det\bigg(1 - \mathbb{K}\Big|_{[0,s]}\bigg) = & - \frac{2a}{(1+\theta)^2} s^{2\rho} \ln{s} - (\partial_\theta a)s^{2\rho}
+ \frac{b}{(1+\theta)^2} s^\rho \ln{s}
+ (\partial_\theta b) s^{\rho}
	\\  \label{a b c in the diff identity for theta lol}
& + (\partial_\theta c) \ln{s}
+ \frac{\partial_\theta C}{C}
 + \bigO(s^{-\rho} \ln s), 
\end{align}
as $s \to + \infty$. Note that the differentiation with respect to $\theta$ generates additional terms proportional to $\ln s$. Moreover, the expansion in (\ref{a b c in the diff identity for theta lol}) involves the rather complicated first-order derivatives of $a,b,c$, and $C$ with respect to $\theta$. Third, it turns out that the differential identity with respect to $\theta$ is more intricate to analyze: While \eqref{log der s of fredholm} is expressed in terms of the first subleading term in the expansion of $Y(z)$ as $z \to + \infty$ (see \eqref{diff identity in s}), the analogous representation for \eqref{log der theta of fredholm} involves an integral whose integrand also contains the digamma function $\psi$ (see \eqref{first diff identity lemma}). The infinitely many poles of the digamma function $\psi$ (which we recall is defined as the log-derivative of $\Gamma$, see e.g. \cite[Eq. 5.2.2]{NIST}) complicate the analysis considerably.

For all the above reasons, we will in Section \ref{Section: little c} provide an independent derivation of the expression (\ref{little c in thm}) for $c$ by employing the differential identity in $s$. This derivation is significantly shorter than the derivation based on the differential identity in $\theta$ and it can also be generalized to other point processes. In particular, from the formulas we obtain we can straightforwardly determine the constants $c^{(1)}$ and $c^{(2)}$ of \cite[formula 1.20]{ClaeysGirSti} associated with point processes  at the hard edge of certain product random matrices, see Remark \ref{c1c2remark}. Furthermore, several important aspects of this alternative derivation of (\ref{little c in thm}) will be useful in the proofs of Theorem \ref{thm: all order expansion} and the expression (\ref{big C in thm}) for $C$. 

Finally, we note that the fact that the approach based on the differential identity in $\theta$ yields the same expressions (\ref{coeff rho a b}) and (\ref{little c in thm}) for the coefficients $a, b, c$ as the approach based on the differential identity in $s$ provides a nontrivial consistency check of our results.

\subsection{Organization of the paper}
In Section \ref{Section: ClaeysGirSti}, we introduce some notation and recall some results from \cite{ClaeysGirSti} that are needed for our analysis. 
In Sections \ref{psec} and \ref{Rsec}, we establish the existence of large $s$ asymptotics to all orders of three functions 
which play a pivotal role in the RH formulation. 
In Section \ref{Section: little c}, we use these expansions to prove Theorem \ref{thm: all order expansion} and to provide a first proof of the expression (\ref{little c in thm}) for $c$. 

In Section \ref{Section: diff identity in theta}, we derive a differential identity with respect to the parameter $\theta$. This identity expresses the $\theta$-derivative of $\ln \det( \left. 1-\mathbb{K} \right|_{[0,s]})$ as the sum of four integrals which we denote by $I_1$, $I_2$, $I_{3,K}$, and $I_{4,K}$. The arguments required to obtain the large $s$ asymptotics of these integrals are rather long and are presented in Sections \ref{I1sec}-\ref{I3I4sec}.

We complete the proof of Theorem \ref{thm:main results} in Section \ref{section: integration in theta} by substituting the above asymptotics into the differential identity in $\theta$ and integrating the resulting equation with respect to $\theta$. In addition to yielding the expression (\ref{big C in thm}) for $C$, this also provides independent derivations of the expressions (\ref{coeff rho a b}) and (\ref{little c in thm}) for the coefficients $a, b$, and $c$. 

The proofs of Propositions \ref{prop: constant d for rational theta} and \ref{dsymmprop} as well as the proofs of two lemmas (Lemma \ref{lemma: integrals for I1} and Lemma \ref{Zintlemma}) are presented in the four appendices.

\subsection*{Acknowledgements}
Support is acknowledged from the European Research Council, Grant Agreement No. 682537, the Swedish Research Council, Grant No. 2015-05430, the G\"oran Gustafsson Foundation, and the Ruth and Nils-Erik Stenb\"ack Foundation.
 
\section{Preliminary results from \cite{ClaeysGirSti}}\label{Section: ClaeysGirSti}
All the results presented in this section are taken from \cite{ClaeysGirSti}. We use the same notation as in \cite{ClaeysGirSti} except that we use $G$ to denote Barnes' $G$-function and $\mathcal{G}$ to denote the function which is denoted by $G$ in \cite{ClaeysGirSti}.
We start by recalling the RH problem for $Y$, which is central for this paper.

\subsubsection*{RH problem for Y}
\begin{itemize}
\item[(a)] $Y : \mathbb{C}\setminus (\gamma \cup \tilde{\gamma}) \to \mathbb{C}^{2 \times 2}$ is analytic, where $\gamma$ and $\tilde{\gamma}$ are the oriented contours shown in Figure \ref{fig: contours gamma}.
\item[(b)] The limits of $Y(z)$ as $z$ approaches $\gamma \cup \tilde{\gamma}$ from the left (+) and from the right (--) exist, are continuous on $\gamma \cup \tilde{\gamma}$, and are denoted by $Y_+$ and $Y_-$, respectively. Furthermore, they are related by
\begin{align*}
& Y_{+}(z) = Y_{-}(z) \begin{pmatrix}
1 & -s^{-z}F(z) \\
0 & 1
\end{pmatrix}, & & z \in \gamma, \\
& Y_{+}(z) = Y_{-}(z) \begin{pmatrix}
1 & 0 \\
s^{z}F(z)^{-1} & 1
\end{pmatrix}, & & z \in \tilde{\gamma},
\end{align*}
where $F$ is given by \eqref{def of F}.
\item[(c)] As $z \to \infty$, $Y$ admits the expansion
\begin{align*}
Y(z) = I + \frac{Y_{1}}{z} + \bigO(z^{-2}),
\end{align*}
where the $2 \times 2$ matrix $Y_{1}$ depends on $s$, $\alpha$, and $\theta$ but not on $z$.
\end{itemize}
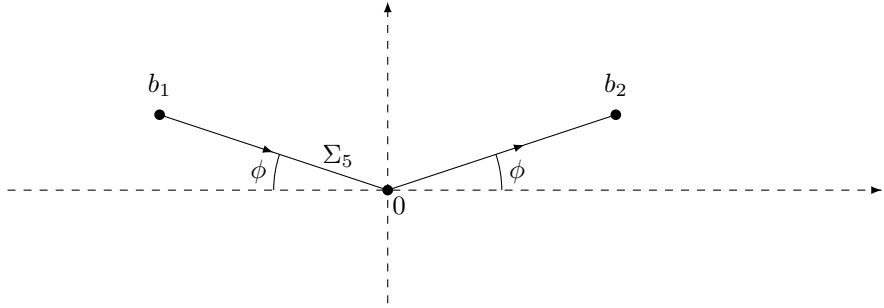
\begin{figure}
\begin{center}
\begin{tikzpicture}
\node at (0,0) {};
\fill (0,0) circle (0.07cm);
\node at (0.15,-0.2) {$0$};

\draw[dashed,->-=1,black] (0,-1.5) to [out=90, in=-90] (0,2.5);
\draw[dashed,->-=1,black] (-5,0) to [out=0, in=-180] (6.5,0);

			 \node at (-0.65,0.45) {$\Sigma_{5}$};

\draw[->-=0.6,black] (0,0)--(3,1);
\fill (3,1) circle (0.07cm);
\node at (3,1.4) {$b_{2}$};
\draw[-<-=0.5,black] (0,0)--(-3,1);
\fill (-3,1) circle (0.07cm);
\node at (-3,1.4) {$b_{1}$};

\draw ([shift=(0:1.5cm)]0,0) arc (0:18.43494:1.5cm);
\node at (1.7,0.25) {$\phi$};
\draw ([shift=(161.565:1.5cm)]0,0) arc (161.565:180:1.5cm);
\node at (-1.7,0.25) {$\phi$};
\end{tikzpicture}
\end{center}
\caption{\label{fig: support b1b2}The points $b_{1}$ and $b_{2}$ lie in the upper half-plane for $0 < \theta < 1$. The contour $\Sigma_{5}$ consists of the two line segments $[b_{1},0]$ and $[0,b_{2}]$.}
\end{figure}
The solution of the RH problem for $Y$ exists and is unique for any choice of the parameters $s > 0$, $\theta > 0$, and $\alpha > -1$, see \cite[below (1.18)]{ClaeysGirSti}. 

We choose the branch for $\ln F$ such that
\begin{align}\label{def of ln F}
\ln F(z) = \ln \Gamma \left( z + \frac{\alpha}{2} \right) - \ln \Gamma \left( \frac{\frac{\alpha}{2}+1-z}{\theta} \right),
\end{align}
where $z \mapsto \ln \Gamma(z)$ is the log-gamma function, which has a branch cut along $(-\infty,0]$. Therefore, $z \mapsto \ln F(z)$ has a branch cut along $(-\infty,-\frac{\alpha}{2}]\cup [1+\frac{\alpha}{2},+\infty)$. Following \cite{ClaeysGirSti}, we introduce a new complex variable $\zeta$ by 
\begin{align}\label{zetadef}
  z = i s^{\rho} \zeta + \tfrac{1}{2}.
\end{align}
As $s^{\rho} \zeta \to \infty$, we have the asymptotics
\begin{align}\nonumber
\ln F(i s^{\rho} \zeta + \tfrac{1}{2}) = &\; i s^{\rho} \ln(s) \zeta + i s^{\rho} [c_{1} \zeta \ln(i\zeta) + c_{2} \zeta \ln(-i\zeta) + c_{3} \zeta] 
	\\ \label{asymp for ln F} 
& + c_{4} \ln(s) + c_{5} \ln(i\zeta) + c_{6} \ln(-i \zeta) + c_{7} + \frac{c_{8}}{i s^{\rho} \zeta} + \bigO \left( \frac{1}{s^{2 \rho}\zeta^{2}} \right),  
\end{align}
where the logarithms on the right-hand side are defined using the principal branch. The real constants $c_{1},\ldots,c_{8}$ are computed in \cite[equation (3.12)]{ClaeysGirSti} and are given by\footnote{Here we have corrected a small typo in \cite[equation (3.12)]{ClaeysGirSti} in the expression for $c_{8}$, which has no consequence for the results of \cite{ClaeysGirSti} as $c_{8}$ does not play any role in the computation of $a$ and $b$.}
\begin{align}\nonumber
& c_{1} = 1, & & c_{2} = \frac{1}{\theta},  \\
& c_{3} = - \frac{\theta + 1 + \ln \theta}{\theta}, & & c_{4} = \frac{\theta + (\theta-1)\alpha - 1}{2(\theta +1)}, \nonumber \\
& c_{5} = \frac{\alpha}{2}, & & c_{6} = \frac{\theta - \alpha - 1}{2 \theta}, \nonumber \\
& c_{7} = - \frac{\theta - \alpha -1}{2 \theta} \ln \theta, & & c_{8} = \frac{3(1+\alpha)^{2}-7\theta-6\alpha \theta+3\alpha^2\theta+2\theta^2}{24\theta}. \label{def of c1...c8}
\end{align}
We also define $\mathcal{G}(\zeta)$ by
\begin{align}\label{calGdef}
\mathcal{G}(\zeta) = F(is^{\rho} \zeta + \tfrac{1}{2})e^{-is^{\rho}(\ln (s)\zeta - h(\zeta))},
\end{align}
where 
\begin{align}\label{hdef}
h(\zeta) = -c_{1} \zeta \ln(i \zeta) - c_{2} \zeta \ln(-i\zeta) - c_{3} \zeta.
\end{align}
The function $\mathcal{G}$ above is denoted by $G$ in \cite[equation (3.13)]{ClaeysGirSti}, while in this paper $G$ denotes Barnes' $G$-function. Note that $\mathcal{G}$ also depends on $s$, $\theta$ and $\alpha$, but we omit this dependence in the notation. Following \cite[Section 3.3]{ClaeysGirSti}, we define $b_{1},b_{2} \in \mathbb{C}$ by
\begin{align}\label{b1b2def}
b_{2} = - \overline{b_{1}} = |b_{2}|e^{i\phi}, \qquad \phi \in \left( - \frac{\pi}{2},\frac{\pi}{2} \right),
\end{align}
with 
\begin{align}\label{def of Re b2}
& -\re b_{1} = \re b_{2} = 2 \left( \frac{c_{2}}{c_{1}} \right)^{-\frac{c_{2}-c_{1}}{2(c_{2}+c_{1})}}e^{-\frac{c_{1}+c_{2}+c_{3}}{c_{1}+c_{2}}} = 2 \theta^{\frac{3-\theta}{2(1+\theta)}} > 0, 	
	 \\
& \sin \phi = \frac{c_{2}-c_{1}}{c_{2}+c_{1}} = \frac{1-\theta}{1+\theta}. \nonumber
\end{align}
Thus $\phi \geq 0$ for $0 < \theta \leq 1$ while $\phi \leq 0$ for $\theta \geq 1$. 
As explained in Section \ref{twocasessubsubsec}, it is enough to prove Theorem \ref{thm:main results} and Theorem \ref{thm: all order expansion} for $\theta \in (0,1]$ thanks to the symmetry (\ref{symmetry}). Therefore we will henceforth restrict ourselves to the case $0 < \theta \leq 1$, for which we have $\phi \in [0, \pi/2)$.

In the steepest descent analysis of the RH problem for $Y$, the so-called $g$-function plays an important role. Using this function, certain jumps of the RH problem can be made exponentially small as $s \to +\infty$. The $g$-function has a jump along the contour $\Sigma_{5}$, which consists of the two line segments $[b_{1},0]\cup[0,b_{2}]$ oriented to the right, see Figure \ref{fig: support b1b2}, and is defined as follows. Define the function $r(\zeta)$ by 
\begin{align}\label{rdef}
r(\zeta) = [(\zeta-b_{1})(\zeta-b_{2})]^{\frac{1}{2}},
\end{align}
where the branch is such that $r$ is analytic in $\mathbb{C}\setminus \Sigma_{5}$ and $r(\zeta) \sim \zeta$ as $\zeta \to \infty$. The second derivative of the $g$-function is given by
\begin{equation}\label{def of g''}
g''(\zeta) = - i \frac{c_{1}+c_{2}}{2} \left( \frac{1}{\zeta} - \frac{1}{r(\zeta)} + \frac{i \im b_{2}}{\zeta r(\zeta)} \right).
\end{equation}
Hence
\begin{align*}
g_{+}''(\zeta) + g_{-}''(\zeta) = - i \frac{c_{1}+c_{2}}{\zeta}, \qquad \zeta \in \Sigma_{5},
\end{align*}
and, as $\zeta \to \infty$, 
\begin{align*}
g''(\zeta) = \frac{2g_{1}}{\zeta^{3}} + \bigO(\zeta^{-4}), \qquad \text{where} \quad g_{1} = \frac{i (\re b_{2})^{2}(c_{1}+c_{2})}{8}.
\end{align*}
The $g$-function is then obtained by
\begin{align*}
g'(\zeta) = \int_{\infty}^{\zeta} g''(\xi)d\xi, \qquad g(\zeta) = \int_{\infty}^{\zeta} g'(\xi)d\xi,
\end{align*}
where the integration paths lie in the complement of $\Sigma_{5}$. The $g$-function is analytic on $\mathbb{C} \setminus \Sigma_{5}$ and has the following jump across $\Sigma_5$:
\begin{align}\label{jump relation g}
g_{+}(\zeta) + g_{-}(\zeta) - i h(\zeta) + \ell = 0, \qquad \zeta \in \Sigma_{5},
\end{align}
where $\ell = ih(b_{1}) - 2 \int_{\infty}^{b_{1}}g'(\xi)d\xi$.
 
\subsection{Steepest descent analysis}
Let $\sigma_{1}$ and $\sigma_{3}$ denote the first and third Pauli matrices given by
\begin{align}\label{Pauli matrices}
& \sigma_{1} = \begin{pmatrix}
0 & 1 \\
1 & 0
\end{pmatrix}, \qquad \sigma_{3} = \begin{pmatrix}
1 & 0 \\
0 & -1
\end{pmatrix}.
\end{align}
The steepest descent analysis of the RH problem for $Y$ involves a sequence of transformations $Y \mapsto U \mapsto T \mapsto S \mapsto R$.
The first transformation $Y \mapsto U$ is defined by
\begin{equation}\label{def of U}
U(\zeta) = s^{\frac{\sigma_{3}}{4}} Y \left( i s^{\rho} \zeta + \tfrac{1}{2} \right) s^{-\frac{\sigma_{3}}{4}}.
\end{equation}
The $2\times 2$ matrix-valued function $U$ is analytic on $\mathbb{C}\setminus (\gamma_{U} \cup \tilde{\gamma}_{U})$, where 
\begin{equation*}
\gamma_{U} = \{\zeta \in \mathbb{C}: i s^{\rho} \zeta + \tfrac{1}{2} \in \gamma \}, \qquad 
\tilde{\gamma}_{U} = \{\zeta \in \mathbb{C}: i s^{\rho} \zeta + \tfrac{1}{2} \in \tilde{\gamma} \},
\end{equation*}
see also \cite[Figure 2]{ClaeysGirSti}. 
Let $\{\Sigma_i\}_1^4$ denote the contours defined by 
\begin{align}\label{Sigmaidef}
\Sigma_{2} = - \overline{\Sigma_{1}} = b_{2} + e^{i(\phi + \epsilon)}\R_{\geq 0}, \qquad \Sigma_{4} = - \overline{\Sigma_{3}} = b_{2}+e^{-i \epsilon}\R_{\geq 0},
\end{align}
with $0 < \epsilon < \pi/10$ and oriented from left to right, see Figure \ref{fig: Sigma 1,2,3,4}. 
Recall that $\Sigma_{5} = [b_{1},0]\cup[0,b_{2}]$.

The second transformation $U \mapsto T$ consists of deforming the contour of the RH problem by considering an analytic continuation of $U$ such that $T$ is analytic in $\mathbb{C}\setminus \cup_{i=1}^{5} \Sigma_{i}$; we refer to \cite[Section 3.2]{ClaeysGirSti} for details.
\begin{figure}
\begin{center}
\begin{tikzpicture}
\node at (0,0) {};
\fill (0,0) circle (0.07cm);
\node at (0.15,-0.2) {$0$};

\draw[dashed,->-=1,black] (0,-1.5) to [out=90, in=-90] (0,2.5);
\draw[dashed,->-=1,black] (-5,0) to [out=0, in=-180] (6.5,0);

\draw[->-=0.6,black] (0,0)--(3,1);
\fill (3,1) circle (0.07cm);
\node at (3,1.4) {$b_{2}$};
\draw[dashed,black] (3,1)--(6,2);
\draw[->-=0.6,black] (3,1)--($(3,1)+(28.43:3.2)$);
\draw[->-=0.6,black] (3,1)--($(3,1)+(-10:2.9)$);
\node at (4.5,2.2) {$\Sigma_{2}$};
\node at (4.5,0.4) {$\Sigma_{4}$};

\draw[-<-=0.5,black] (0,0)--(-3,1);
\fill (-3,1) circle (0.07cm);
\node at (-3,1.4) {$b_{1}$};
\draw[dashed,black] (-3,1)--(-6,2);
\draw[-<-=0.6,black] (-3,1)--($(-3,1)+(180-28.43:3.2)$);
\draw[-<-=0.6,black] (-3,1)--($(-3,1)+(-180+10:2.9)$);
\node at (-4.5,2.2) {$\Sigma_{1}$};
\node at (-4.5,0.4) {$\Sigma_{3}$};

 \node at (-0.65,0.45) {$\Sigma_{5}$};

\draw ([shift=(0:1.5cm)]0,0) arc (0:18.43494:1.5cm);
\node at (1.7,0.25) {$\phi$};
\draw ([shift=(161.565:1.5cm)]0,0) arc (161.565:180:1.5cm);
\node at (-1.7,0.25) {$\phi$};

\draw ([shift=(18.43494:2.5cm)]3,1) arc (18.43494:28.43494:2.5cm);
\node at (5.6,2.1) {$\epsilon$};
\end{tikzpicture}
\end{center}
\caption{\label{fig: Sigma 1,2,3,4}The jump contour $\cup_{i=1}^{5} \Sigma_{i}$ for the RH problem for $T$.}
\end{figure}
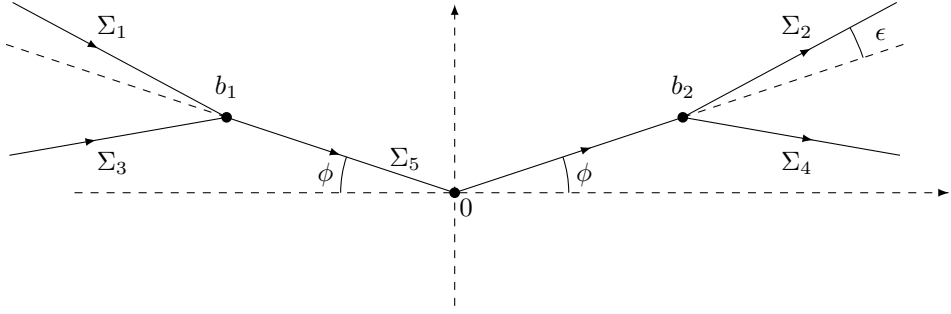
The third transformation $T \mapsto S$ uses the $g$-function and is defined by
\begin{equation}
S(\zeta) = e^{\frac{\ell}{2}s^{\rho}\sigma_{3}}T(\zeta) e^{-s^{\rho}g(\zeta) \sigma_{3}}e^{-\frac{\ell}{2}s^{\rho}\sigma_{3}}.
\end{equation}

The remainder of the steepest descent analysis of \cite{ClaeysGirSti} consists of finding good approximations of $S$ in different regions of the complex plane. 
Define the function $\gamma(\zeta)$ by
\begin{align*}
\gamma(\zeta) = \left( \frac{\zeta-b_{1}}{\zeta -b_{2}} \right)^{1/4},
\end{align*}
where the branch is such that $\gamma(\zeta)$ is an analytic function of $\zeta \in \mathbb{C} \setminus \Sigma_{5}$ and $\gamma(\zeta) \sim 1$ as $\zeta \to \infty$. 
Define also the function $p:\mathbb{C}\setminus \Sigma_{5} \to \C$ by
\begin{align}\label{def of p}
& p(\zeta) = - \frac{r(\zeta)}{2\pi i} \int_{\Sigma_{5}} \frac{\ln \mathcal{G}(\xi)}{r_{+}(\xi)}\frac{d\xi}{\xi-\zeta},
\end{align}
where the branch for $\ln \mathcal{G}$ is such that
\begin{equation}\label{def of ln mathcal G}
\ln \mathcal{G}(\zeta) = \ln F(i s^{\rho} \zeta + \tfrac{1}{2}) - i s^{\rho} \left( \ln(s) \zeta - h(\zeta) \right)
\end{equation}
with $\ln F$ defined as in \eqref{def of ln F}. 
Outside small neighborhoods of $b_{1}$ and $b_{2}$, $S$ is well approximated by the global parametrix $P^{\infty}$ defined by
\begin{align}\label{def of Pinf}
&P^{\infty}(\zeta) = e^{-p_{0} \sigma_{3}}Q^{\infty}(\zeta)
e^{p(\zeta)\sigma_{3}} \quad \text{with} \quad Q^{\infty}(\zeta) = \begin{pmatrix}
\frac{\gamma(\zeta)+\gamma(\zeta)^{-1}}{2} & \frac{\gamma(\zeta)-\gamma(\zeta)^{-1}}{2i} \\
\frac{\gamma(\zeta)-\gamma(\zeta)^{-1}}{-2i} & \frac{\gamma(\zeta)+\gamma(\zeta)^{-1}}{2}
\end{pmatrix}.
\end{align}
The function $p$ satisfies $p(\zeta) = \overline{p(-\bar{\zeta})}$ and
\begin{align}
& p_{+}(\zeta) + p_{-}(\zeta) = - \ln \mathcal{G}(\zeta), & & \zeta \in \Sigma_{5}, \\
& p(\zeta) = p_{0} + \frac{p_{1}}{\zeta} + \bigO(\zeta^{-2}), & & \zeta \to \infty, \label{def of expansion of p with p0 p1}
\end{align}
where the constants $p_{0} \in \R$ and $p_{1} \in i\R$ are given by
\begin{align*}
p_{0} & = \frac{1}{2\pi i} \int_{\Sigma_{5}} \frac{\ln \mathcal{G}(\xi)}{r_{+}(\xi)}d\xi, \\
p_{1} & = - \frac{b_{1}+b_{2}}{4\pi i} \int_{\Sigma_{5}} \frac{\ln \mathcal{G}(\xi)}{r_{+}(\xi)}d\xi + \frac{1}{2\pi i} \int_{\Sigma_{5}} \frac{\xi \ln \mathcal{G}(\xi)}{r_{+}(\xi)}d\xi
 = \frac{1}{2\pi i} \int_{\Sigma_{5}} \frac{(\xi - i \im (b_{2})) \ln \mathcal{G}(\xi)}{r_{+}(\xi)}d\xi.
\end{align*}

\subsubsection{The local parametrix $P$}
Near the points $b_{1}$ and $b_{2}$, $S$ is no longer well approximated by $P^{\infty}$, and we need to construct local approximations to $S$ (also called local parametrices and denoted by $P$). Following \cite{ClaeysGirSti}, these local parametrices are built out of Airy functions and are defined in small open disks $\mathbb{D}_{\delta}(b_1)$ and $\mathbb{D}_{\delta}(b_2)$ centered at $b_{1}$ and $b_{2}$, respectively:
\begin{align*}
\mathbb{D}_{\delta}(b_{j}) = \{z \in \mathbb{C} : |z-b_{j}|< \delta\}, \qquad j = 1,2,
\end{align*}
for some sufficiently small radius $\delta > 0$ which is independent of $s$. Furthermore, $P$ satisfies the following matching condition with $P^{\infty}$ on the boundary $\partial \mathbb{D}_{\delta}(b_{1}) \cup \partial \mathbb{D}_{\delta}(b_{2})$:
$$e^{p_{0}\sigma_{3}}P(\zeta) = \big(I + 
\bigO(s^{-\rho})\big)  e^{p_{0}\sigma_{3}}P^{\infty}(\zeta), \qquad s \to + \infty,$$
uniformly for $\zeta \in \partial \mathbb{D}_{\delta}(b_{1}) \cup \partial \mathbb{D}_{\delta}(b_{2})$. 
The local parametrix $P$ obeys the symmetry 
\begin{equation}\label{Psymm}
P(\zeta) = \overline{P(- \overline{\zeta})}, \qquad \zeta \in \mathbb{D}_{\delta}(b_{1}) \cup \mathbb{D}_{\delta}(b_{2}),
\end{equation}
and therefore we can restrict attention to the construction of $P$ in $\mathbb{D}_{\delta}(b_{1})$. There are a few minor typos in \cite{ClaeysGirSti}: the factors $\sqrt{2\pi}$ in \cite[equations (3.57)--(3.59)]{ClaeysGirSti} should be $2 \sqrt{\pi}$ and the signs of the exponential factors in \cite[equations (3.63), (3.65), (3.67)]{ClaeysGirSti} should be modified. These typos have no repercussion on the results of \cite{ClaeysGirSti}, but will play a role for us. In what follows, we therefore give the definition of $P$ in detail. First, define the complex-valued functions $\{y_{j}(\zeta)\}_1^3$ by
\begin{align*}
y_{j}(\zeta) = e^{\frac{2\pi i j}{3}} \mbox{Ai}(e^{\frac{2\pi i j}{3}}\zeta), \qquad j = 0,1,2,
\end{align*}
and let the $2\times 2$-matrix valued functions $\{A_j(\zeta)\}_1^3$ be given by
\begin{align}
& A_{1}(\zeta) = -2 i \sqrt{\pi} \begin{pmatrix}
-y_{2}(\zeta) & -y_{0}(\zeta) \\
-y_{2}'(\zeta) & -y_{0}'(\zeta)
\end{pmatrix}, \label{def of A1} \\
& A_{2}(\zeta) = -2 i \sqrt{\pi} \begin{pmatrix}
-y_{2}(\zeta) & y_{1}(\zeta) \\
-y_{2}'(\zeta) & y_{1}'(\zeta)
\end{pmatrix}, \label{def of A2} \\
& A_{3}(\zeta) = -2 i \sqrt{\pi} \begin{pmatrix}
y_{0}(\zeta) & y_{1}(\zeta) \\
y_{0}'(\zeta) & y_{1}'(\zeta)
\end{pmatrix}. \label{def of A3}
\end{align}
These functions satisfy
\begin{align*}
& A_{1}(\zeta) = A_{2}(\zeta) \begin{pmatrix}
1 & -1 \\
0 & 1 
\end{pmatrix}, \qquad A_{2}(\zeta) = A_{3}(\zeta) \begin{pmatrix}
1 & 0 \\
1 & 1
\end{pmatrix}, \qquad 
A_{1}(\zeta) = A_{3}(\zeta) \begin{pmatrix}
1 & -1 \\
1 & 0
\end{pmatrix}.
\end{align*}
Moreover, 
\begin{align}\label{weak asymp for Ak}
A_{k}(\zeta) = \zeta^{-\frac{\sigma_{3}}{4}} \begin{pmatrix}
1 & i \\
1 & -i
\end{pmatrix} \Big[I + \bigO(\zeta^{-3/2}) \Big]e^{\frac{2}{3}\zeta^{3/2}\sigma_{3}}
\end{align}
as $\zeta \to \infty$ in the sector $S_{k}$ for $k = 1,2,3$, with
\begin{align}\label{Skdef}
S_{k} = \left\{ \zeta \in \mathbb{C}: \frac{2k - 3}{3}\pi + \delta \leq \arg \zeta \leq  \frac{2k+1}{3}\pi - \delta \right\}, \qquad k = 1,2,3,
\end{align}
and the branches of the complex powers in (\ref{weak asymp for Ak}) are such that $\zeta^u = e^{u\ln|\zeta| + iu \arg \zeta}$ where $\arg \zeta$ belongs to $(-\pi/3,\pi)$, $(\pi/3,5\pi/3)$, and $(\pi,7\pi/3)$ for $\zeta$ in $S_1, S_2, S_3$, respectively.
The local parametrix $P$ is defined for $\zeta \in \mathbb{D}_{\delta}(b_{1}) \setminus \cup_{i=1}^{5} \Sigma_{i}$ by
\begin{align}\label{def of P}
P(\zeta) = E(\zeta)A_{k} \big(s^{\frac{2}{3}\rho}f(\zeta)\big) e^{-s^{\rho}q(\zeta) \sigma_{3}} \mathcal{G}(\zeta)^{-\frac{\sigma_{3}}{2}}, \qquad \zeta \in [k], \;\; k = 1,2,3,
\end{align}
where $[k]$, $k = 1,2,3$, denote the three components of $\mathbb{D}_{\delta}(b_{1}) \setminus \cup_{i=1}^{5} \Sigma_{i}$ as shown in \cite[Figure 4]{ClaeysGirSti}, $q$ is the analytic function on $\mathbb{D}_\delta(b_{1})\setminus \Sigma_{5}$ given by
\begin{align} \label{def of q}
q(\zeta) = g(\zeta) - \frac{i}{2}h(\zeta) + \frac{\ell}{2},
\end{align}
the function $f$ is defined by
\begin{align}\label{relation f and q}
f(\zeta) = \left( \frac{3}{2}q(\zeta) \right)^{\frac{2}{3}},
\end{align}
and $E$ denotes the $2 \times 2$-matrix valued function analytic on $\mathbb{D}_\delta(b_{1})$ defined by
\begin{align}\label{def of E local param}
E(\zeta) = P^{\infty}(\zeta)\mathcal{G}(\zeta)^{\frac{\sigma_{3}}{2}}\begin{pmatrix}
1 & i \\ 1 & -i
\end{pmatrix}^{-1} \Big( s^{\frac{2}{3}\rho}f(\zeta) \Big)^{\frac{\sigma_{3}}{4}}.
\end{align}
It is shown in \cite[equation (3.71)]{ClaeysGirSti} that, as $\zeta \to b_1$,
\begin{align}\label{asymptotics q}
q(\zeta) = -\frac{2}{3} \frac{(c_1+c_2)}{\sqrt{2}} \frac{\sqrt{|\re b_1|}}{b_1} (\zeta -b_1)^{\frac{3}{2}} +\bigO\Big( (\zeta -b_1)^{\frac{5}{2}}\Big),
\end{align}
where the branch cut for $(\zeta-b_1)^{\frac{3}{2}}$ runs along $\Sigma_5$ and $(\zeta-b_1)^{\frac{3}{2}} > 0$ for $\zeta-b_1 > 0$. Hence $f$ is a conformal map from $\mathbb{D}_\delta(b_{1})$ to a neighborhood of $0$ such that $\arg f'(b_1) = 2\phi/3 \in [0, \pi/3)$.

\subsubsection{The solution $R$}
In view of \eqref{asymp for ln F} and (\ref{def of ln mathcal G}), the function $\mathcal{G}(\zeta)$ is not bounded as $s^{\rho}\zeta \to \infty$. Therefore, in the definition of the last transformation $S \mapsto R$, we need to multiply by a conjugation matrix $e^{p_{0}\sigma_{3}}$ in order for $R$ to be uniformly bounded on $\mathbb{C}$.\footnote{Note that the conjugation by $e^{p_{0}\sigma_{3}}$ only affects the off-diagonal elements of $R$. Thus, even though this conjugation was not present in \cite{ClaeysGirSti}, this does not affect the results of that paper as they only depend on the $(2,2)$ element of $R$.} More precisely, we define $R$ by
\begin{align} \label{def of R}
R(\zeta) = e^{p_0 \sigma_3} S(\zeta) \times \begin{cases}
P(\zeta)^{-1} e^{- p_0 \sigma_3}, &\text{ if $\zeta \in \Done \cup  \Dtwo$},
\\
P^{\infty}(\zeta)^{-1} e^{-p_0 \sigma_3}, &\text{ elsewhere}.
\end{cases}
\end{align}
Then $R(\zeta)$ is analytic for $\zeta \in \C \setminus \Gamma_{R}$ where $\Gamma_R$ consists of the parts of $\cup_{i=1}^5 \Sigma_i$ lying outside the disks $\mathbb{D}_\delta(b_j)$, $j = 1,2$, as well as the two clockwise circles $\partial \mathbb{D}_\delta(b_j)$, $j = 1,2$, see Figure \ref{fig: Gamma R}.
We will show in Section \ref{Rsec} that $R$ satisfies a small norm RH problem and that 
\begin{align}
& R(\zeta) = I + \frac{R_{1}}{\zeta} + \bigO(\zeta^{-2}) \qquad \mbox{as } \zeta \to \infty, \label{def of R1} 
\end{align}
where the matrix $R_{1}$ possesses the asymptotics
\begin{align}\label{def of R1^1}
  R_{1} = \frac{R_{1}^{(1)}}{s^{\rho}} + \bigO(s^{-2\rho}) \qquad \mbox{as } s \to + \infty, 
\end{align}
for a certain matrix $R_{1}^{(1)}$ independent of $s$ and $\zeta$.
\begin{figure}
\begin{center}
\begin{tikzpicture}
\node at (0,0) {};
\fill (0,0) circle (0.07cm);
\node at (0.15,-0.2) {$0$};

\draw[dashed,->-=1,black] (0,-1.5) to [out=90, in=-90] (0,2.5);
\draw[dashed,->-=1,black] (-5,0) to [out=0, in=-180] (6.5,0);

\draw[->-=0.6,black] (0,0)--($(3,1)+(18.435+180:0.6)$);
\fill (3,1) circle (0.05cm);
\node at (3,1.3) {$b_{2}$};
\draw[->-=0.6,black] ($(3,1)+(18.435:0.6)$)--($(3,1)+(28.435:3.2)$);
\draw[->-=0.6,black] ($(3,1)+(-10:0.6)$)--($(3,1)+(-10:2.9)$);

\draw[-<-=0.5,black] (0,0)--($(-3,1)+(-18.435:0.6)$);
\fill (-3,1) circle (0.05cm);
\node at (-3,1.3) {$b_{1}$};
\draw[-<-=0.6,black] ($(-3,1)+(180-28.435:0.6)$)--($(-3,1)+(180-28.43:3.2)$);
\draw[-<-=0.6,black] ($(-3,1)+(-180+10:0.6)$)--($(-3,1)+(-180+10:2.9)$);

\draw[-<-=0.24] (3,1) circle (0.6cm);
\draw[-<-=0.24] (-3,1) circle (0.6cm);
\end{tikzpicture}
\end{center}
\caption{\label{fig: Gamma R}The contour $\Gamma_{R}$ for the RH problem $R$.}
\end{figure}
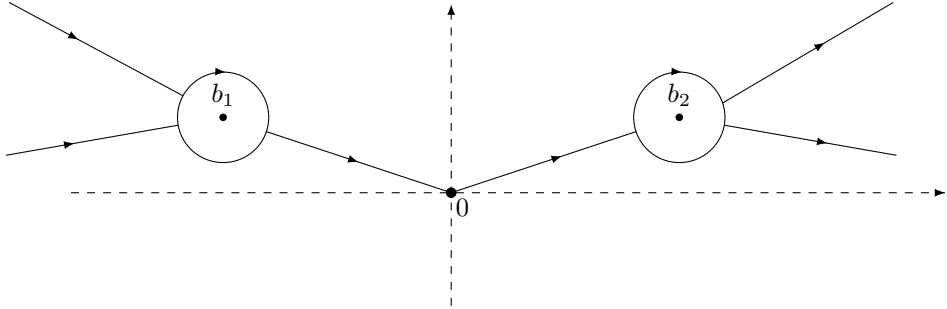

\subsection{Differential identity in $s$}
It was proved in \cite{ClaeysGirSti} that, for all $s > 0$, 
\begin{align}\nonumber
\partial_{s} \ln \det \big( 1-\mathbb{K}\big|_{[0,s]} \big) & \; = - \frac{1}{s}(Y_{1})_{2,2} 
	\\ \label{diff identity in s}
&\; = -\frac{(\re b_2)^2 (c_1+c_2)}{8} s^{2\rho-1}-is^{\rho-1}\big( -p_1(s)+(R_1(s))_{2,2} \big).
\end{align}
Furthermore, it was shown in \cite[Section 4.3]{ClaeysGirSti} that
\begin{align}
\partial_{s} \ln \det \big( 1-\mathbb{K}\big|_{[0,s]} \big) = & - \frac{(\re (b_{2}))^{2}(c_{1}+c_{2})}{8}s^{2 \rho -1}- \left( -c_{5} |b_{2}|  + \frac{c_{5}+c_{6}}{2}(|b_{2}|-\im(b_{2})) \right) s^{\rho-1} \nonumber \\
& + \frac{-\mathcal{K} + (R_{1}^{(1)})_{2,2}}{is} + \bigO(s^{-\rho-1}), \qquad \mbox{as } s \to + \infty, \label{a b but not c}
\end{align} 
where $\mathcal{K}$ is defined via the expansion (see \cite[Eq. (4.15)]{ClaeysGirSti})
\begin{equation}\label{p1 asymp from Claeys}
p_1=-ic_{5}|b_{2}| + i \frac{c_{5}+c_{6}}{2}(|b_{2}|-\im{b_2}) +\frac{\mathcal{K}}{s^\rho} + \bigO\bigg( \frac{1}{s^{2\rho}} \bigg), \qquad  s \to + \infty.
\end{equation}
Integration of (\ref{a b but not c}) yields the expressions in (\ref{coeff rho a b}) for the first two coefficients $a$ and $b$. Moreover, comparing (\ref{a b but not c}) with \eqref{a b c in the diff identity for s lol}, we infer that the coefficient $c$ can be expressed as
\begin{align}\label{expression for c in terms of mathcal K and RIp1p intro}
c = \frac{-\mathcal{K} + (R_{1}^{(1)})_{2,2}}{i}.
\end{align}
Thus, to compute $c$ it is enough to compute $\mathcal{K}$ and the $(2,2)$ entry of $R_{1}^{(1)}$.

\section{Asymptotics of $\mathcal{G} (\zeta)$ and $p(\zeta)$}\label{psec}
In this section, we establish asymptotic formulas for the functions $\mathcal{G} (\zeta)$ and $p(\zeta)$ defined in (\ref{calGdef}) and (\ref{def of p}) as $s \to + \infty$ with $\zeta$ such that $s^{\rho}\zeta \to \infty$. More precisely, we will prove that $\ln \mathcal{G} (\zeta)$ and $p(\zeta)$ admit expansions to all orders in inverse powers of $s^{\rho}\zeta$ and we will compute the coefficients of the expansion for $p(\zeta)$ explicitly up to and including the term of order $s^{-\rho}\zeta^{-1}$ (this term plays a role in the derivation of the expressions for both $c$ and $C$). The results are summarized in the followings two propositions whose proofs are presented in Sections \ref{Gpropproofsubsec} and \ref{ppropproofsubsec}, respectively. We let $\{c_j = c_j(\theta, \alpha)\}_1^8$ and $\{b_j = b_j(\theta, \alpha)\}_1^2$ be the constants defined in (\ref{def of c1...c8}) and (\ref{b1b2def}), respectively.

\begin{proposition}[Asymptotics of $\ln \mathcal{G} (\zeta)$]\label{Gprop}
Let $N\ge1$ be an integer. Let $\alpha > -1$ and $\theta \in (0, 1]$. 
There exist coefficients $\{\mathcal{G}_n = \mathcal{G}_n(\theta, \alpha)\}_1^N \subset \C$ such that the function $\mathcal{G}$ defined in (\ref{calGdef}) satisfies the asymptotic expansion
\begin{align}\label{full expansion G}
 \ln \mathcal{G} (\zeta) = c_4 \ln s + c_5\ln(i\zeta)+c_6\ln(-i\zeta)+c_7 + \frac{\mathcal{G}_{1}}{s^{\rho}\zeta} + \sum_{n=2}^{N} \frac{\mathcal{G}_n}{(s^\rho \zeta)^n} 
 + \bigO \bigg( \frac{1}{(s^\rho \zeta)^{N+1}}\bigg) 
 \end{align}
as $s^\rho \zeta \to  \infty$ uniformly for $\theta$ in compact subsets of $(0,1]$ and $\zeta \in \C$ such that $|\arg (\zeta) - \tfrac{\pi}{2}| > \epsilon$ and $|\arg (\zeta) + \tfrac{\pi}{2}| > \epsilon$ for any fixed $\epsilon  > 0$. The first coefficient is given by $\mathcal{G}_1= -ic_8$. 
\end{proposition}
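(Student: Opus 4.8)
The plan is to reduce everything to the classical Stirling expansion of $\ln\Gamma$. By the definition (\ref{calGdef}) we have $\ln\mathcal{G}(\zeta) = \ln F(is^{\rho}\zeta + \tfrac12) - is^{\rho}(\ln(s)\zeta - h(\zeta))$ with $h$ as in (\ref{hdef}), so it suffices to expand $\ln F(is^{\rho}\zeta + \tfrac12)$ to all orders and then subtract the (explicit, $\zeta$-polynomial) correction. Substituting $z = is^{\rho}\zeta+\tfrac12$ into (\ref{def of ln F}), the two arguments of the log-gamma functions are $\xi_{1} := is^{\rho}\zeta + \tfrac{1+\alpha}{2}$ and $\xi_{2} := \tfrac1\theta\big(\tfrac{1+\alpha}{2} - is^{\rho}\zeta\big)$. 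Under the hypotheses $|\arg\zeta - \tfrac\pi2| > \epsilon$ and $|\arg\zeta + \tfrac\pi2| > \epsilon$, one checks that $\arg\xi_{1} \to \tfrac\pi2 + \arg\zeta$ and $\arg\xi_{2} \to -\tfrac\pi2 + \arg\zeta$ (modulo $2\pi$) as $s^{\rho}|\zeta|\to\infty$, so both $\xi_{1}$ and $\xi_{2}$ eventually lie in a fixed sector $\{|\arg(\cdot)| \le \pi - \delta\}$ away from the negative real axis; moreover $|\xi_{1}|, |\xi_{2}| \gtrsim s^{\rho}|\zeta|$, the estimate for $\xi_{2}$ using $\theta \le 1$, and both bounds are uniform for $\theta$ in a compact subset $[\theta_{0},1]\subset(0,1]$ (recall $\rho = \theta/(1+\theta)$ stays bounded away from $0$ there). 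On such sectors, Stirling's series
\[
\ln\Gamma(\xi) = \Big(\xi - \tfrac12\Big)\ln\xi - \xi + \tfrac12\ln(2\pi) + \sum_{n=1}^{M}\frac{B_{2n}}{2n(2n-1)\,\xi^{2n-1}} + \bigO\big(\xi^{-2M-1}\big), \qquad \xi\to\infty,
\]
holds to every order $M$ with the error uniform in the sector (here $B_{2n}$ are the Bernoulli numbers), and in particular uniformly in $\theta$ by the above.

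I would then apply this expansion to $\ln\Gamma(\xi_{1})$ and $\ln\Gamma(\xi_{2})$, write each $\ln\xi_{j}$ as $\ln(is^{\rho}\zeta) + \ln\big(1 + \bigO((s^{\rho}\zeta)^{-1})\big)$ and each power $\xi_{j}^{-(2n-1)}$ as $(is^{\rho}\zeta)^{-(2n-1)}\big(1+\bigO((s^{\rho}\zeta)^{-1})\big)^{-(2n-1)}$, and expand all the correction factors as convergent power series in $1/(s^{\rho}\zeta)$. Collecting terms by their order in $s^{\rho}\zeta$: the pieces that grow — those proportional to $s^{\rho}\zeta\ln s$, $s^{\rho}\zeta\ln(\pm i\zeta)$, and $s^{\rho}\zeta$ — reconstitute exactly $is^{\rho}(\ln(s)\zeta - h(\zeta))$, i.e. the leading part of (\ref{asymp for ln F}), and so cancel against the term subtracted in (\ref{calGdef}); the order-$1$ terms assemble into $c_{4}\ln s + c_{5}\ln(i\zeta) + c_{6}\ln(-i\zeta) + c_{7}$ with $c_{4},\dots,c_{7}$ as in (\ref{def of c1...c8}); and the remaining terms form a series $\sum_{n\ge1}\mathcal{G}_{n}(s^{\rho}\zeta)^{-n}$ with $\mathcal{G}_{n} = \mathcal{G}_{n}(\theta,\alpha)$. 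Truncating at $n = N$ and absorbing the tail into $\bigO((s^{\rho}\zeta)^{-(N+1)})$ gives (\ref{full expansion G}). Finally, $\mathcal{G}_{1}$ is just the coefficient of $1/(s^{\rho}\zeta)$ in $\ln F(is^{\rho}\zeta+\tfrac12)$ — nothing else in $\ln\mathcal{G}$ contributes at this order — which by (\ref{asymp for ln F}) equals $c_{8}/(is^{\rho}\zeta) = -ic_{8}/(s^{\rho}\zeta)$, so $\mathcal{G}_{1} = -ic_{8}$.

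The main obstacle is the bookkeeping of branches. After expanding, one repeatedly meets factors such as $\ln(is^{\rho}\zeta)$ and $\ln\big(\tfrac{s^{\rho}}{\theta}(-i\zeta)\big)$, and one must check that on the region $|\arg\zeta\mp\tfrac\pi2|>\epsilon$ these equal $\rho\ln s + \ln(i\zeta)$ and $\rho\ln s - \ln\theta + \ln(-i\zeta)$ respectively with the principal branches — and, more to the point, with $\ln(i\zeta)$, $\ln(-i\zeta)$ in precisely the branch convention fixed after (\ref{asymp for ln F}) — so that no spurious $2\pi i$ appears anywhere; this is exactly the point where the $\epsilon$-cushions around the rays $\arg\zeta = \pm\tfrac\pi2$ are used, and the same care is needed to pin down the signs and $i$-factors leading to $\mathcal{G}_{1} = -ic_{8}$ (consistent with the corrected value of $c_{8}$ in (\ref{def of c1...c8})). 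Once the branches are fixed, matching the collected order-$1$ terms against the explicit $c_{4},\dots,c_{7}$ is a direct computation, and the uniformity in $\theta$ is inherited from the uniformity of Stirling's error on the fixed sectors together with the modulus bounds $|\xi_{j}|\gtrsim s^{\rho}|\zeta|$ noted above. The argument for general $N$ is identical to the case $N=1$, since Stirling's series itself is available to all orders.
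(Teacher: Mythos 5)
Your proposal is correct and follows essentially the same route as the paper: write $\ln F(is^{\rho}\zeta+\tfrac12)=\ln\Gamma(x)-\ln\Gamma(y)$ with $x=is^{\rho}\zeta+\tfrac{1+\alpha}{2}$ and $y=\theta^{-1}(\tfrac{1+\alpha}{2}-is^{\rho}\zeta)$, expand both log-gamma functions by Stirling's series to arbitrary order (the hypotheses $|\arg\zeta \mp \tfrac{\pi}{2}|>\epsilon$ keeping $x$ and $y$ in fixed sectors away from $(-\infty,0]$ with $|x|,|y|\gtrsim s^{\rho}|\zeta|$ uniformly for $\theta$ in compact subsets of $(0,1]$), and check that the growing terms cancel against the subtracted $is^{\rho}(\ln(s)\zeta-h(\zeta))$ while the order-one terms assemble into $c_{4}\ln s+c_{5}\ln(i\zeta)+c_{6}\ln(-i\zeta)+c_{7}$ and the decaying terms give the coefficients $\mathcal{G}_n$, with $\mathcal{G}_1=-ic_8$. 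The only difference is cosmetic: the paper uses Stirling's formula in exact form with the integral remainder $\mathcal{D}_N$ of (\ref{stirlingwithremainder})--(\ref{calDNdef}), estimated via (\ref{remainderO}) on the same sectors, a choice made so that the exact identity (\ref{exactG}) can be reused later in the contour deformations of the proof of Proposition \ref{pprop}, whereas you invoke the asymptotic series directly, which is equally sufficient for the present statement.
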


\begin{proposition}[Asymptotics of $p(\zeta)$]\label{pprop}
Let $N\ge1$ be an integer. Let $\alpha > -1$ and $\theta \in (0, 1]$. 
There exist holomorphic functions $\mathcal{A}_n:\C \setminus \Sigma_5 \to \C$, $n=1,\ldots N$, with 
$\mathcal{A}_n(\zeta) = \bigO(\zeta^n)$ as $\zeta \to \infty$, such that
\begin{align} \nonumber
p(\zeta)= & -\frac{c_4}{2} \ln(s) -\frac{c_5}{2} \ln(i\zeta)-\frac{c_6}{2}\ln(-i\zeta)-\frac{c_7}{2} + \frac{\mathcal{R}(\zeta)}{2}
 	\\ \label{full expansion of p}
& + \frac{\mathcal{A}_{1}(\zeta)}{s^{\rho} \zeta} 
+ \sum_{n=2}^{N} \frac{\mathcal{A}_n(\zeta)}{(s^\rho \zeta)^n} 
+ \bigO \bigg( \frac{1}{(s^\rho \zeta)^{N+1}}\bigg) + \bigO \bigg( \frac{1}{s^{(N+1)\rho}}\bigg) \qquad \mbox{as } s \to +\infty,
\end{align}
uniformly with respect to $\zeta \in \C \setminus \Sigma_5$ such that $s^{\rho}\zeta \to \infty$ and $\theta$ in compact subsets of $(0,1]$, where the functions $\mathcal{R}(\zeta)$ and $\mathcal{A}_1(\zeta)$ are given by
\begin{align}\label{def of mathcalR}
\mathcal{R}(\zeta)&= -c_5  \ln\bigg(\frac{|b_2|^2 + i\zeta \im{b_2} - i|b_2| r(\zeta)}{(r(\zeta) + \zeta - i\im{b_2})\zeta}\bigg)-c_6\ln\bigg(\frac{|b_2|^2 + i\zeta \im{b_2} +i|b_2| r(\zeta)}{(r(\zeta) + \zeta - i\im{b_2})\zeta}\bigg)
\end{align}
and
\begin{align}\label{def of mathcalA1}
\mathcal{A}_1(\zeta) = \frac{ic_8}{2}+\frac{c_8-\frac{3\alpha^2-1}{12}}{2|b_2|}r(\zeta).
\end{align}
\end{proposition}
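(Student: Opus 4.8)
Suppose, as in the statement, $\alpha>-1$, $\theta\in(0,1]$ and $N\ge1$ are fixed. The plan is to insert the asymptotic expansion of $\ln\mathcal{G}$ from Proposition~\ref{Gprop} into the integral representation~\eqref{def of p} and to evaluate the resulting integrals. Write, on $\Sigma_5$,
\[
\ln\mathcal{G}(\xi) = L(\xi) + D(\xi), \qquad L(\xi) := c_4\ln s + c_5\ln(i\xi) + c_6\ln(-i\xi) + c_7,
\]
so that, by Proposition~\ref{Gprop}, $D(\xi) = \sum_{n=1}^{N}\mathcal{G}_n(s^\rho\xi)^{-n} + \bigO((s^\rho\xi)^{-(N+1)})$ as $s^\rho\xi\to\infty$, with $\mathcal{G}_1 = -ic_8$; correspondingly $p = p_L + p_D$ according to~\eqref{def of p}. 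Throughout, the uniformity in $\theta$ on compact subsets of $(0,1]$ is inherited from the corresponding uniformity in Proposition~\ref{Gprop}.

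First I would compute $p_L(\zeta) = -\frac{r(\zeta)}{2\pi i}\int_{\Sigma_5}\frac{L(\xi)}{r_+(\xi)}\frac{d\xi}{\xi-\zeta}$ in closed form; this piece involves no asymptotics in $s$. Since $r_- = -r_+$ on $\Sigma_5$, each of the three integrals $\int_{\Sigma_5}\frac{d\xi}{r_+(\xi)(\xi-\zeta)}$ and $\int_{\Sigma_5}\frac{\ln(\pm i\xi)\,d\xi}{r_+(\xi)(\xi-\zeta)}$ can be rewritten as a contour integral of $\frac{1}{r(\xi)(\xi-\zeta)}$, resp.\ $\frac{\ln(\pm i\xi)}{r(\xi)(\xi-\zeta)}$, over a loop encircling $\Sigma_5$, which is then evaluated by deforming the loop outward so as to collect the residue at $\xi=\zeta$, the contributions from the branch cuts of $\ln(\pm i\xi)$ along the imaginary axis, and the behaviour at infinity. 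Recombining with the coefficients $c_4\ln s+c_7$, $c_5$, and $c_6$ yields exactly the first line of~\eqref{full expansion of p}, with the remainder identified as $\tfrac12\mathcal{R}(\zeta)$ given by~\eqref{def of mathcalR}. As a useful check one verifies that $\mathcal{R}(\zeta)\sim(c_5+c_6)\ln\zeta$ as $\zeta\to\infty$, so that the $\ln(\pm i\zeta)$ terms in $p_L$ cancel against $\tfrac12\mathcal{R}(\zeta)$ and $p_L(\zeta) = -\tfrac{c_4}{2}\ln s + \bigO(1)$ there, in agreement with the behaviour of $p_0 = \frac{1}{2\pi i}\int_{\Sigma_5}\frac{\ln\mathcal{G}(\xi)}{r_+(\xi)}\,d\xi$ implied by~\eqref{asymp for ln F}.

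For the correction piece $p_D$, naive term-by-term integration of the expansion of $D$ is not permissible: the functions $\xi^{-n}$ have non-integrable singularities at $\xi=0\in\Sigma_5$, and the expansion of Proposition~\ref{Gprop} degenerates near $\xi=0$ because it requires $s^\rho\xi\to\infty$. The key observation is that, using the identities $\rho(c_1+c_2)=1$ and $c_4=\rho(c_5+c_6)$ --- both immediate from~\eqref{def of c1...c8} and~\eqref{coeff rho a b} --- the explicit $s$-dependence of $D$ disappears: in fact $D(\xi)=\Phi(s^\rho\xi)$ for the $s$-independent function $\Phi(w):=\ln F(iw+\tfrac{1}{2})+ih(w)-c_5\ln(iw)-c_6\ln(-iw)-c_7$ (with $h$ as in~\eqref{hdef}), which satisfies $\Phi(w)=\mathcal{G}_1/w+\bigO(w^{-2})$ as $w\to\infty$ and $\Phi(w)=\bigO(\ln|w|)$ as $w\to0$. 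Then I would substitute $w=s^\rho\xi$ in the integral for $p_D$, so that $\Sigma_5$ is replaced by $s^\rho\Sigma_5$, which fills out the two rays through the origin as $s\to\infty$; splitting off the leading tail $\mathcal{G}_1/w$ with a smooth cutoff near $w=0$ makes the resulting integrals absolutely convergent uniformly in $s$, and expanding the remaining integrand via $r_+(w/s^\rho)\to r_+(0)\ne0$ and $w/s^\rho-\zeta\to-\zeta$ one collects the powers of $s^{-\rho}$. At first order this gives $p_D(\zeta)=\mathcal{A}_1(\zeta)/(s^\rho\zeta)+\bigO(\cdots)$ with $\mathcal{A}_1(\zeta)=-\tfrac{\mathcal{G}_1}{2}+c_\ast\,r(\zeta)$, where $-\tfrac{\mathcal{G}_1}{2}=\tfrac{ic_8}{2}$ is the jump-free particular solution of the scalar jump relation $(\,\cdot\,)_++(\,\cdot\,)_-=-\mathcal{G}_1$ on $\Sigma_5$, and $c_\ast$ is a constant obtained from a convergent integral of $\Phi$ over the two rays (together with an elementary regularized integral of $1/(\xi r_+(\xi))$), evaluated by means of the Stirling expansion of $\ln\Gamma$; this reproduces~\eqref{def of mathcalA1} after the identification $c_\ast=\frac{c_8-(3\alpha^2-1)/12}{2|b_2|}$. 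The higher coefficients are obtained in the same way: each $\mathcal{A}_n$ is holomorphic on $\C\setminus\Sigma_5$ of the form $-\tfrac{\mathcal{G}_n}{2}+r(\zeta)P_{n-1}(\zeta)$ with $P_{n-1}$ a polynomial of degree $\le n-1$, hence $\bigO(\zeta^n)$ at infinity.

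The step I expect to be the main obstacle is the uniform control of the error near $\xi=0$. One must carefully match the regime $s^\rho|\xi|$ large (where the expansion of $\ln\mathcal{G}$ is available but the individual terms $\mathcal{G}_n(s^\rho\xi)^{-n}$ blow up as $\xi\to0$) with the regime $s^\rho|\xi|$ bounded (where $\ln\mathcal{G}(\xi)$ is merely $\bigO(\ln|\xi|)$ and the expansion is useless), and show that no spurious logarithmic factor in $s$ survives in the final estimate: the would-be $\ln s$ contributions coming from $\int\mathcal{G}_1\,dw/w$ along the two pieces $[s^\rho b_1,0]$ and $[0,s^\rho b_2]$ of $s^\rho\Sigma_5$ cancel, owing to their opposite orientations at the corner and the continuity of $r_+$ at $0$ (where $r$ is not branched). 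The contribution of the region $|\xi|\lesssim s^{-\rho}$ is what produces the second error term $\bigO(s^{-(N+1)\rho})$ in~\eqref{full expansion of p}, while the tail $\bigO((s^\rho\xi)^{-(N+1)})$ of the expansion of $D$ produces the first. Finally, obtaining the explicit formulas~\eqref{def of mathcalR} and~\eqref{def of mathcalA1} with the correct constants requires keeping careful track of the branches of $\ln(\pm i\xi)$, of $r$, and of the auxiliary square roots along $\Sigma_5$, which turns a corner at the origin.
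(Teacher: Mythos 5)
Your key structural observation is correct and genuinely different from the paper's route: since $\rho(c_1+c_2)=1$ and $c_4=\rho(c_5+c_6)$, the function $\ln\mathcal{G}(\xi)$ indeed depends on $\xi$ only through $w=s^\rho\xi$, and your closed-form evaluation of the "$L$-part" (the same contour integrals as in Lemma \ref{lemma: identitiespcoeff}) produces the first line of \eqref{full expansion of p} with $\mathcal{R}$ as in \eqref{def of mathcalR}. The gap is in the treatment of $p_D$. The proposition asserts uniformity for \emph{all} $\zeta\in\C\setminus\Sigma_5$ with $s^\rho\zeta\to\infty$, in particular for $\zeta$ at distance $o(1)$ from the interior of $\Sigma_5$, for $\zeta$ near the endpoints $b_1,b_2$ (where $r_+$ vanishes), and for $|\zeta|\to0$ slowly; these boundary regimes are not decorative — the paper later uses $p_-$ on $\Sigma_5$ (proof of Lemma \ref{lemma: small norm jumps}) and $p$ on $\partial\Done$ (Lemma \ref{lemma: full expansion JR}). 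Your central step, Taylor-expanding $1/[r_+(w/s^\rho)(w/s^\rho-\zeta)]$ about the origin after rescaling, requires the contour to stay away from $\zeta$ and is a geometric expansion valid only on $|w|<s^\rho|\zeta|$; when $\zeta$ approaches $\Sigma_5$ or when a macroscopic part of $s^\rho\Sigma_5$ has $|w|\gtrsim s^\rho|\zeta|$, the expansion and the error estimates you describe break down, and your sketch offers no mechanism for this (your "main obstacle" discussion only addresses the matching near $\xi=0$). The cure is to use $r_+=-r_-$ to convert to loop integrals and deform the contour away from $\zeta$ (picking up residue corrections proportional to $\Phi(s^\rho\zeta)$, which are then re-expanded), with $\zeta$-dependent indentations near the origin and special handling near $b_1,b_2$ — but this is exactly the machinery of the paper's proof (the contours $\hat\sigma,\tilde\sigma$, $\Sigma_{5,\pm}(\min\{\epsilon,\zeta/2\})$ and the indicator-function corrections in Lemmas \ref{lemma: remainderp} and \ref{lemma: pintermediateexp}), so without adding it your argument does not prove the statement in the generality claimed.

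A secondary but substantive point concerns the explicit value of the coefficient of $r(\zeta)$ in \eqref{def of mathcalA1}. In your scheme this constant is a finite part of a conditionally convergent integral of $\Phi$ along the two limiting rays, and you assert it is "evaluated by means of the Stirling expansion of $\ln\Gamma$". Stirling gives the asymptotics of the integrand, not the value of the integral: to evaluate it you need either antiderivatives of $\ln\Gamma$ (Barnes $G$) together with their asymptotics and careful branch tracking, or a decomposition of $\Phi$ into a piece analytic off the upper imaginary half-line and a piece analytic off the lower one, so that each ray integral can be rotated and its finite part extracted. The latter is precisely the paper's $\hat f/\tilde f$ splitting, and it is where the information $\hat f_1=-i\frac{3\alpha^2-1}{12}\cdot\frac12$ — as opposed to the combined coefficient $\mathcal{G}_1=-ic_8$, which is all your tail subtraction sees — enters the formula \eqref{def of mathcalA1}. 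As written, the identification $c_\ast=\frac{c_8-(3\alpha^2-1)/12}{2|b_2|}$ is asserted rather than derived, and knowing only the coefficients $\mathcal{G}_n$ of the full tail of $\Phi$ is not enough to obtain it.
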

\begin{remark}\upshape
The expansion in \eqref{full expansion of p} is well-defined also for $\zeta \in i\R \setminus \{ 0 \}$ even though several of the coefficients have jumps across the imaginary axis. Indeed, it can be seen from (\ref{def of mathcalR}) (and more easily from the integral representation \eqref{calRexpression} of $\mathcal{R}$) that $\mathcal{R}$ has the following jump across the imaginary axis:
$$\mathcal{R}_+(\zeta) - \mathcal{R}_-(\zeta) = \begin{cases} 2\pi i c_5, & \zeta \in (i\infty, 0), \\
2\pi i c_6, & \zeta \in (-i\infty, 0),
\end{cases}$$
where $(i\infty, 0)$ and $(-i\infty, 0)$ are oriented towards the origin. It follows that the function
\[
-\frac{c_5}{2} \ln(i\zeta)-\frac{c_6}{2}\ln(-i\zeta)-\frac{c_7}{2} + \frac{\mathcal{R}(\zeta)}{2}
\]
has no jump across the imaginary axis and hence extends to an analytic function on $\C \setminus \Sigma_5$. 
\end{remark}

\begin{remark}\upshape
The expansion of $\ln \mathcal{G}(\zeta)$ as $s^{\rho} \zeta \to \infty$ up to and including the term of order $s^{-\rho}\zeta^{-1}$ is easily obtained from \eqref{asymp for ln F} and \eqref{def of ln mathcal G}, see \cite[Eq. (3.15)]{ClaeysGirSti}. The extension of this expansion to all orders is not straightforward and is the content of Proposition \ref{Gprop}. 
\end{remark}

\begin{remark}\upshape
The assumption that $0 < \theta \leq 1$ implies that $\phi = \arg b_{2}$ satisfies $0\leq \phi<\frac{\pi}{2}$, see Figure \ref{fig: support b1b2}. 
\end{remark}

\subsection{Proof of Proposition \ref{Gprop}}\label{Gpropproofsubsec}

We will employ the following exact representation for $\ln \Gamma(z)$ (see \cite[Eq. (6.34) with $r=N$ and Eq. (6.38)]{V2016}): 
\begin{align}\label{stirlingwithremainder}
\ln \Gamma(z) & = z\ln z-z-\frac{1}{2} \ln(z) + \ln\sqrt{2\pi} +\frac{1}{12z} +\sum_{n=2}^{N}\frac{B_{2n}}{(2n-1)2n} \frac{1}{z^{2n-1}} + \mathcal{D}_{N}(z), 
	\\ \label{calDNdef}
\mathcal{D}_{N}(z) & = -\frac{1}{2N+1}\int_{0}^{\infty} \frac{B_{2N+1}(\{t\})}{(z+t)^{2N+1}} dt,
\end{align}
which is valid for $|\arg z|<\pi$, with $N$ an arbitrary (but fixed) positive integer and where $\{ t \}$ denotes the fractional part of $t$, i.e., $\{t\} = t - \lfloor t \rfloor$ where $\lfloor t \rfloor$ is the largest integer smaller than or equal to $t$.
Here $B_{n}$ is the $n$th Bernoulli number and $B_{N}(x)$ the $N$th Bernoulli polynomial given by (see e.g. \cite[p. 804]{AS83})
\begin{equation*}
B_{N}(x) = \sum_{n=0}^{N} \binom{N}{n} B_{N-n}x^n.
\end{equation*}
The first terms on the right-hand side of \eqref{stirlingwithremainder} are the same as in Stirling's approximation formula; however \eqref{stirlingwithremainder} is an exact identity which is valid for all $z \in \C$ such that $|\arg z| < \pi$. It is straightforward to verify that
(see \cite[last equation on page 78]{V2016} for details)
\begin{align}\label{remainderO}
\mathcal{D}_N(z)= \bigO(z^{-2N-1}), \qquad  z \to \infty, \; |\arg z| < \pi - \epsilon,
\end{align}
for any fixed $\epsilon > 0$. 
Using the short-hand notation
\begin{equation}\label{def of x and y}
x(\xi)=x(\xi,s,\theta, \alpha)=is^\rho \xi + \frac{1+\alpha}{2}, \qquad y(\xi)=y(\xi,s,\theta,\alpha)=\frac{\frac{1+\alpha}{2}-is^\rho\xi}{\theta},
\end{equation}
we have
\begin{align}
& x(\xi) \leq 0 \qquad  \mbox{if and only if} \qquad \xi \in \big[ \tfrac{1+\alpha}{2}is^{-\rho},i\infty \big), \label{x negat} \\
& y(\xi) \leq 0 \qquad  \mbox{if and only if} \qquad \xi \in \big[ -\tfrac{1+\alpha}{2}is^{-\rho},-i\infty \big). \label{y negat}
\end{align}
Therefore, for all 
\begin{align*}
\xi \in \C \setminus \big(  \big[ \tfrac{1+\alpha}{2}is^{-\rho},i\infty \big)  \cup   \big[ -\tfrac{1+\alpha}{2}is^{-\rho},-i\infty \big)\big),
\end{align*}
we can use \eqref{stirlingwithremainder} together with \eqref{def of ln F} to write
\begin{align*}
\ln F&(is^\rho\xi+1/2)= \ln \Gamma(x(\xi)) - \ln \Gamma(y(\xi)) \\
= &\; x(\xi)\ln(x(\xi))-x(\xi)-\frac{1}{2}\ln(x(\xi))+\ln \sqrt{2\pi}+\frac{1}{12x(\xi)} +\sum_{n=2}^{N}\frac{B_{2n}}{(2n-1)2n} \frac{1}{x(\xi)^{2n-1}}+\mathcal{D}_N(x(\xi)) \\
& -y(\xi)\ln(y(\xi))+y(\xi)+\frac{1}{2}\ln(y(\xi))-\ln \sqrt{2\pi}-\frac{1}{12y(\xi)} -\sum_{n=2}^{N}\frac{B_{2n}}{(2n-1)2n} \frac{1}{y(\xi)^{2n-1}}-\mathcal{D}_N(y(\xi)).
\end{align*}
Hence, by \eqref{def of ln mathcal G} and (\ref{hdef}) we have, for any fixed $N\ge 1$,
\begin{align}\label{exactG}
\ln \mathcal{G}(\xi)= \hat{f}(\xi)+\tilde{f}(\xi) +\mathcal{D}_N(x(\xi))-\mathcal{D}_N(y(\xi)),
\end{align}
where the functions $\hat{f}(\xi)$ and $\tilde{f}(\xi)$ are defined by
\begin{subequations}\label{fhatftildedef}
\begin{align}\nonumber
\hat{f}(\xi) = &\; x(\xi)\ln(x(\xi))-x(\xi)-\frac{1}{2}\ln(x(\xi))+\frac{1}{12x(\xi)}-is^{\rho}\xi\big(a_1\ln(s)+c_1\ln(i\xi)+a_2\big)
\\
& +\sum_{n=2}^{N}\frac{B_{2n}}{(2n-1)2n} \frac{1}{x(\xi)^{2n-1}}, 
\\ \nonumber
\tilde{f}(\xi)= & -y(\xi)\ln(y(\xi))+y(\xi)+\frac{1}{2}\ln(y(\xi))-\frac{1}{12y(\xi)}-is^{\rho}\xi\big(\tilde{a}_1\ln(s)+c_2\ln(-i\xi)+\tilde{a}_2\big)
\\
& -\sum_{n=2}^{N}\frac{B_{2n}}{(2n-1)2n} \frac{1}{y(\xi)^{2n-1}} , 
\end{align}
\end{subequations}
with the real constants $a_1, a_2, \tilde{a}_1, \tilde{a}_2$ defined by
\begin{align*}
a_1=\frac{\theta}{\theta+1}, \quad a_2=-1,  \quad \tilde{a}_1=1-a_1=\frac{1}{\theta+1}, \quad \tilde{a}_2=c_3-a_2=-\frac{1+\ln \theta}{\theta}.
\end{align*}

The functions $\hat{f}(\xi)$, $\mathcal{D}_N(x(\xi))$ and $\tilde{f}(\xi)$, $\mathcal{D}_N(y(\xi))$ are analytic for
\begin{align}\label{DN f hat f tilde domain of analycity}
\xi \in \C \setminus  \big[ \tfrac{1+\alpha}{2}is^{-\rho},i\infty \big) \quad \text{ and } \quad \xi \in \C \setminus   \big[- \tfrac{1+\alpha}{2}is^{-\rho},-i\infty \big),
\end{align}
respectively. The asymptotics of $\hat{f}$ and $\tilde{f}$ as $s^{\rho}\xi \to \infty$ are easily obtained from \eqref{fhatftildedef}:
\begin{subequations}\label{fhatftildeexpansions}
 \begin{align}
 \hat{f}(\xi) &= a_3\ln(s)+c_5\ln(i\xi)+a_4+\sum_{n=1}^{N} \frac{\hat{f}_n}{(s^{\rho}\xi)^{n}}+\bigO\bigg( \frac{1}{(s^\rho \xi)^{N+1}} \bigg),\label{expansion fhat}
 \\
 \tilde{f}(\xi)&=\tilde{a}_3\ln(s)+c_6\ln(-i\xi)+\tilde{a}_4+\sum_{n=1}^{N} \frac{\tilde{f}_n}{(s^{\rho}\xi)^{n}}+\bigO\bigg( \frac{1}{(s^\rho \xi)^{N+1}}\bigg),\label{expansion ftilde}
 \end{align}
\end{subequations} 
 as $s^\rho \xi \to \infty$ uniformly for $\theta$ in compact subsets of $(0,1]$, where the constants $\{a_j, \tilde{a}_j\}_3^4 \subset \R$ and $\hat{f}_1, \tilde{f}_1 \in i\R$ are defined by
 \begin{align}
 a_3&= \frac{\alpha \theta}{2(\theta+1)}, &a_4&=0, & \hat{f}_1= -i\frac{3\alpha^2-1}{24}, \label{def of a3a4a5} 
 \\
 \tilde{a}_3&=c_4-a_3, & \tilde{a}_4&=c_7, &   \tilde{f}_1= -ic_8-\hat{f}_1, \label{def of a3a4a5 tilde} 
 \end{align}
 and $\{\hat{f}_n,\tilde{f}_n\}_{n=2}^N \subset \C$ are constants whose exact expressions are unimportant for us. However, we note that $\hat{f}_n(\theta,\alpha)$ and $\tilde{f}_n(\theta,\alpha)$ are continuous functions of $\alpha$ and $\theta$. From \eqref{remainderO} and \eqref{def of x and y}, we infer that
\begin{align}
& \mathcal{D}_{N}(x(\xi)) = \bigO\big( (s^\rho \xi)^{-2N-1}\big) & & \mbox{as } s^{\rho}\xi \to \infty, \quad |\arg (\xi) - \tfrac{\pi}{2}| > \epsilon, \label{lol23} \\
& \mathcal{D}_{N}(y(\xi)) = \bigO\big( (s^\rho \xi)^{-2N-1}\big) & & \mbox{as } s^{\rho}\xi \to \infty, \quad |\arg (\xi) + \tfrac{\pi}{2}| > \epsilon, \label{lol24}
\end{align}
for any $\epsilon > 0$ uniformly for $\theta$ in compact subsets of $(0,1]$.
Substituting \eqref{fhatftildeexpansions}--\eqref{lol24} into \eqref{exactG}, we obtain (\ref{full expansion G}) where the coefficients $\mathcal{G}_{n}$ are given by $\mathcal{G}_n =\hat{f}_n+\tilde{f}_n$; in particular, $\mathcal{G}_1= -ic_8$.
This completes the proof of Proposition \ref{Gprop}.

\subsection{Proof of Proposition \ref{pprop}}\label{ppropproofsubsec}
Recall that $p(\zeta)$ is defined by
\begin{align}\label{def of p recall}
p(\zeta)= - \frac{r(\zeta)}{2\pi i} \int_{\Sigma_{5}} \frac{\ln \mathcal{G}(\xi)}{r_{+}(\xi)}\frac{d\xi}{\xi-\zeta}.
\end{align}
Since $\Sigma_{5}$ passes through the origin, the large $s$ asymptotics for $p$ cannot be straightforwardly obtained from the asymptotics (\ref{full expansion G}) of $\ln \mathcal{G}(\zeta)$. We instead use formula \eqref{exactG} to be able to deform the contour $\Sigma_5$. Substituting \eqref{exactG} into the definition \eqref{def of p recall} of $p(\zeta)$ yields
\begin{align}\nonumber
p(\zeta)= & - \frac{r(\zeta)}{2\pi i} \int_{\Sigma_{5}} \frac{\hat{f}(\xi)}{r_{+}(\xi)}\frac{d\xi}{\xi-\zeta}- \frac{r(\zeta)}{2\pi i} \int_{\Sigma_{5}} \frac{\tilde{f}(\xi)}{r_{+}(\xi)}\frac{d\xi}{\xi-\zeta}
	\\ \label{lol27}
&-\frac{r(\zeta)}{2\pi i} \int_{\Sigma_{5}} \frac{\mathcal{D}_N(x(\xi))}{r_{+}(\xi)(\xi-\zeta)} d\xi +\frac{r(\zeta)}{2\pi i} \int_{\Sigma_{5}} \frac{\mathcal{D}_N(y(\xi))}{r_{+}(\xi)(\xi-\zeta)} d\xi.
\end{align}

The remainder of the proof is divided into three lemmas. The first lemma shows that the two integrals in (\ref{lol27}) involving $\mathcal{D}_N$ are small whenever $s^\rho$ and $s^{\rho}\zeta$ are large. 

\begin{lemma}\label{lemma: remainderp} 
For any integer $N\ge 1$, it holds that
\begin{subequations}
\begin{align}\label{lol26}
& \frac{r(\zeta)}{2\pi i} \int_{\Sigma_{5}}  \frac{\mathcal{D}_N(x(\xi))}{r_{+}(\xi)(\xi-\zeta)} d\xi 
=  \bigO\big((s^{\rho}\zeta)^{-2N-1}\big)
 + \bigO\big(s^{-\rho(2N+1)}\big) \qquad \mbox{as } s \to +\infty,
	\\\label{lol26b}
& \frac{r(\zeta)}{2\pi i} \int_{\Sigma_{5}} \frac{\mathcal{D}_N(y(\xi))}{r_{+}(\xi)(\xi-\zeta)} d\xi =  \bigO\big((s^{\rho}\zeta)^{-2N-1}\big)
 + \bigO\big(s^{-\rho(2N+1)}\big) \qquad \mbox{as } s \to +\infty,
\end{align}
\end{subequations}
uniformly for $\zeta \in \C \setminus \Sigma_5$ such that $s^{\rho}\zeta\to \infty$ and $\theta$ in compact subsets of $(0,1]$.
\end{lemma}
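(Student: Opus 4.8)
The plan is to handle both integrals by the same method, so I describe the argument for the one containing $\mathcal{D}_N(x(\xi))$ and indicate at the end how the one with $\mathcal{D}_N(y(\xi))$ follows from a reflected version. Set $\xi_0:=\tfrac{1+\alpha}{2}is^{-\rho}$, so that $x(\xi)=is^{\rho}(\xi-\xi_0)$ and $|x(\xi)|=s^{\rho}|\xi-\xi_0|$ by \eqref{def of x and y}. By \eqref{x negat} the function $\xi\mapsto\mathcal{D}_N(x(\xi))$ is analytic on $\C$ minus the half-line $[\xi_0,i\infty)$, and by \eqref{remainderO} it is $\bigO\big((s^{\rho}|\xi-\xi_0|)^{-2N-1}\big)$ as soon as $s^{\rho}|\xi-\xi_0|\to\infty$ with $\arg x(\xi)$ bounded away from $\pm\pi$. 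The difficulty is that the branch point $\xi_0$ of $\mathcal{D}_N(x(\cdot))$ tends to the origin as $s\to+\infty$, i.e.\ towards $\Sigma_5$; hence $\mathcal{D}_N(x(\xi))$ is only $\bigO(1)$, not small, on the part of $\Sigma_5$ lying within $\bigO(s^{-\rho})$ of $0$, and the integral cannot be bounded by estimating the integrand pointwise. A contour deformation is therefore unavoidable, and it is delicate precisely because $\Sigma_5$ passes through $0$ while the cut $[\xi_0,i\infty)$ starts arbitrarily close to $0$.

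First I would use that $r_+=-r_-$ on $\Sigma_5$ (while $\mathcal{D}_N(x(\cdot))$ and $\xi\mapsto(\xi-\zeta)^{-1}$ are continuous across $\Sigma_5$) to write
\[
\int_{\Sigma_5}\frac{\mathcal{D}_N(x(\xi))}{r_+(\xi)(\xi-\zeta)}\,d\xi=-\int_{\Sigma_5}\frac{\mathcal{D}_N(x(\xi))}{r_-(\xi)(\xi-\zeta)}\,d\xi .
\]
The integrand on the right is the boundary value, from the lower side of $\Sigma_5$, of a function analytic on $\C\setminus\big(\Sigma_5\cup[\xi_0,i\infty)\cup\{\zeta\}\big)$; I would deform this contour downwards, into the lower half-plane, onto a contour $\mathcal{C}$ from $b_1$ to $b_2$ with two properties: (i) $\mathcal{C}$ stays at distance at least some fixed $\delta_0>0$ from the origin; (ii) $\mathcal{C}$ stays outside a fixed angular neighbourhood of the positive imaginary axis. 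Such a $\mathcal{C}$ exists because, for $\theta$ in a compact subset of $(0,1]$, one has $0\le\phi\le\tfrac{\pi}{2}-\delta'$ for some $\delta'>0$ by \eqref{def of Re b2}, so $\arg b_1=\pi-\phi$ and $\arg b_2=\phi$ are bounded away from $\pm\tfrac{\pi}{2}$ and $\mathcal{C}$ can be routed from $b_1$ with increasing argument, around the bottom, up to $b_2$, never meeting the positive imaginary axis. Since $[\xi_0,i\infty)$ lies in the closed upper half-plane, this deformation never crosses it; it crosses the simple pole at $\xi=\zeta$ exactly when $\zeta$ lies in the bounded region $\Omega$ swept out. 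Hence, after multiplying by $\tfrac{r(\zeta)}{2\pi i}$,
\[
\frac{r(\zeta)}{2\pi i}\int_{\Sigma_5}\frac{\mathcal{D}_N(x(\xi))}{r_+(\xi)(\xi-\zeta)}\,d\xi
=-\frac{r(\zeta)}{2\pi i}\int_{\mathcal{C}}\frac{\mathcal{D}_N(x(\xi))}{r(\xi)(\xi-\zeta)}\,d\xi\;\mp\;\mathcal{D}_N(x(\zeta)),
\]
where the last term is present only when $\zeta\in\Omega$ (the sign being immaterial).

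It then remains to bound the two terms. On $\mathcal{C}$, property (i) gives $|\xi-\xi_0|\ge\delta_0/2$ for $s$ large (since $|\xi_0|=\bigO(s^{-\rho})$), hence $|x(\xi)|\ge\tfrac{\delta_0}{2}s^{\rho}$, while property (ii) keeps $x(\xi)=is^{\rho}(\xi-\xi_0)$ in a sector where $\arg$ is bounded away from $\pm\pi$; thus $\mathcal{D}_N(x(\xi))=\bigO\big(s^{-\rho(2N+1)}\big)$ uniformly on $\mathcal{C}$ by \eqref{remainderO}. As the singularities of $1/r$ at $b_1,b_2$ are of integrable type $(\xi-b_j)^{-1/2}$, and $|r(\zeta)|\asymp|\zeta|$ while $|\xi-\zeta|\asymp|\zeta|$ for $\xi\in\mathcal{C}$ when $|\zeta|$ is large, the weight $r(\zeta)/\big(r(\xi)(\xi-\zeta)\big)$ contributes only a bounded factor once $\mathcal{C}$ is chosen to stay a fixed distance from $\zeta$; the $\mathcal{C}$-term is therefore $\bigO\big(s^{-\rho(2N+1)}\big)$, uniformly. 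For the residue term $\mathcal{D}_N(x(\zeta))$: when it is present $\zeta$ lies in $\Omega$, which sits in the lower half-plane near the origin and so avoids $[\xi_0,i\infty)$, whence $\arg x(\zeta)$ is bounded away from $\pm\pi$; moreover $s^{\rho}\zeta\to\infty$ forces $|x(\zeta)|=|s^{\rho}\zeta-\tfrac{1+\alpha}{2}i|\ge\tfrac12|s^{\rho}\zeta|$ once $|s^{\rho}\zeta|$ is large, so $\mathcal{D}_N(x(\zeta))=\bigO\big((s^{\rho}\zeta)^{-2N-1}\big)$, again by \eqref{remainderO}. Adding the two contributions would give \eqref{lol26}; note that the hypothesis $s^{\rho}\zeta\to\infty$ is used exactly to keep $\zeta$ at distance $\gg s^{-\rho}$ from the origin, which is what makes $|x(\xi)|$ large for $\xi$ near $\zeta$ and rescues the bound when $\zeta$ is close to $\Sigma_5$. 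Finally, \eqref{lol26b} follows by the same argument applied to $\mathcal{D}_N(y(\xi))$, whose cut is the half-line $[-\xi_0,-i\infty)$ in the lower half-plane by \eqref{y negat}; one then routes $\mathcal{C}$ through the upper half-plane instead. The main obstacle is constructing $\mathcal{C}$ with properties (i) and (ii) simultaneously, together with the bookkeeping needed to make every estimate uniform in $\zeta\in\C\setminus\Sigma_5$ and in $\theta$ — in particular, dealing with the various locations of $\zeta$ (large, bounded, close to $\Sigma_5$, close to the imaginary axes).
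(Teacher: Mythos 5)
Your proposal is essentially the paper's own proof: the same use of $r_+=-r_-$ to continue the integrand off $\Sigma_5$, the same deformation to a contour passing below the origin (above, for the $\mathcal{D}_N(y(\cdot))$ integral) that stays a fixed distance from the origin and from the positive (resp.\ negative) imaginary axis, the residue $\mathcal{D}_N(x(\zeta))$ when $\zeta$ lies in the swept region, and the same estimates via \eqref{remainderO}, giving $\bigO(s^{-\rho(2N+1)})$ for the deformed integral and $\bigO((s^{\rho}\zeta)^{-2N-1})$ for the residue. The one piece of bookkeeping you defer—$\zeta$ within $\bigO(\delta)$ of $b_1$ or $b_2$, where the fixed endpoints prevent keeping the contour away from $\zeta$—is resolved in the paper by opening the deformed contour onto the circles $\partial\mathbb{D}_{\delta}(b_1)\cup\partial\mathbb{D}_{\delta}(b_2)$, which produces a half residue $\tfrac{1}{2}\mathcal{D}_N(x(\zeta))$ and is then estimated exactly as in your argument.
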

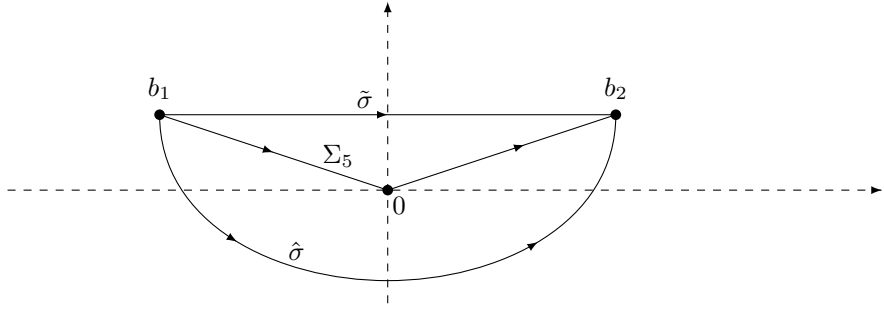
\begin{figure}
	 	\begin{center}
	 		\begin{tikzpicture}
	 		\node at (0,0) {};
	 		\fill (0,0) circle (0.07cm);
	 		\node at (0.15,-0.2) {$0$};
	 		
	 		\draw[dashed,->-=1,black] (0,-1.5) to [out=90, in=-90] (0,2.5);
	 		\draw[dashed,->-=1,black] (-5,0) to [out=0, in=-180] (6.5,0);
	 		
	 		\draw[->-=0.6,black] (0,0)--(3,1);
	 		\fill (3,1) circle (0.07cm);
	 		\node at (3,1.35) {$b_{2}$};
	 		\draw[-<-=0.5,black] (0,0)--(-3,1);
	 		\fill (-3,1) circle (0.07cm);
	 		\node at (-3,1.35) {$b_{1}$};
	 		
	 		\draw[->-=0.5,black] (-3,1)--(3,1);
	 		\draw[->-=0.5,black] (-3,1) to [out=-90, in=180] (0,-1.2);
	 		\draw[->-=0.5,black] (0,-1.2) to [out=0, in=-90] (3,1);
	 		
			 \node at (-0.65,0.45) {$\Sigma_{5}$};
	 		\node at (-1.2,-.8) {$\hat{\sigma}$};
	 		\node at (-0.3,1.2) {$\tilde{\sigma}$};
	 		\end{tikzpicture}
	 	\end{center}
	 	\caption{\label{fig: Sigma5 and sigma sigma tilde}The contours $\hat{\sigma}$ and $\tilde{\sigma}$ for $\zeta \notin \mathbb{D}_{\delta/2}(b_1)\cup\mathbb{D}_{\delta/2}(b_2)$.}
\end{figure}
\begin{proof}
Given $\zeta \in \C \setminus \Sigma_5$, the integrand in (\ref{lol26}) is an analytic function of 
\begin{align*}
\xi \in \C \setminus \Big(\Sigma_5 \cup \big[ \tfrac{1+\alpha}{2}is^{-\rho},i\infty \big)\cup \{\zeta \}\Big),
\end{align*}
see \eqref{DN f hat f tilde domain of analycity}. Using that $r_{+}(\xi) + r_{-}(\xi) = 0$ for $\xi  \in \Sigma_{5}$, we can deform the contour $\Sigma_{5}$ into another contour $\hat{\sigma}$ which crosses the imaginary axis below the origin such that
\begin{align}
& |\xi| > \epsilon \mbox{ and } |\arg (\xi) - \tfrac{\pi}{2}| > \epsilon \quad \mbox{ for all } \xi \in \hat{\sigma}, \quad \mbox{ for a certain } \epsilon > 0, \label{lol25}
\end{align}
and such that $\dist(\zeta, \hat{\sigma}) \geq \delta/2$. If $\zeta \notin \mathbb{D}_{\delta/2}(b_1)\cup\mathbb{D}_{\delta/2}(b_2)$, a representative choice of $\hat{\sigma}$ is shown in Figure \ref{fig: Sigma5 and sigma sigma tilde}, and we obtain
\begin{align}\label{first remainder}
 \frac{r(\zeta)}{2\pi i} \int_{\Sigma_{5}}  \frac{\mathcal{D}_N(x(\xi))}{r_{+}(\xi)(\xi-\zeta)} d\xi = \left\{ \begin{array}{l l}
 \ds -\frac{r(\zeta)}{2\pi i} \int_{\hat{\sigma}}  \frac{\mathcal{D}_N(x(\xi))}{r(\xi)(\xi-\zeta)} d\xi, & \zeta \in \mbox{ext}(\hat{\sigma} \cup \Sigma_{5}), \\
 \ds -\frac{r(\zeta)}{2\pi i} \int_{\hat{\sigma}}  \frac{\mathcal{D}_N(x(\xi))}{r(\xi)(\xi-\zeta)} d\xi + \mathcal{D}_{N}(x(\zeta)), & \zeta \in \mbox{int}(\hat{\sigma} \cup \Sigma_{5}),
 \end{array} \right.
 \end{align}
where, for a simple closed curve $\gamma \subset \C$, we write $\mbox{int}(\gamma)$ and $\mbox{ext}(\gamma)$ for the open subsets of $\C$ interior and exterior to $\gamma$, respectively. If $\zeta \in \mathbb{D}_{\delta/2}(b_1)\cup\mathbb{D}_{\delta/2}(b_2)$, then we use the jump relation of $r(\zeta)$ to open up the parts of $\hat{\sigma}$ close to the points $b_1$ and $b_2$ to two circles in such a way that $\partial\mathbb{D}_{\delta}(b_1)\cup\partial \mathbb{D}_{\delta}(b_2) \subset \hat{\sigma}$, see Figure \ref{fig: Sigma5 and sigma modified}, and instead of \eqref{first remainder} we obtain
\begin{align*}
 \frac{r(\zeta)}{2\pi i} \int_{\Sigma_{5}}  \frac{\mathcal{D}_N(x(\xi))}{r_{+}(\xi)(\xi-\zeta)} d\xi =-\frac{r(\zeta)}{4\pi i} \int_{\partial \mathbb{D}_{\delta}}  \frac{\mathcal{D}_N(x(\xi))}{r(\xi)(\xi-\zeta)} d\xi -\frac{r(\zeta)}{2\pi i} \int_{\hat{\sigma}\setminus \partial \mathbb{D}_{\delta}}  \frac{\mathcal{D}_N(x(\xi))}{r(\xi)(\xi-\zeta)} d\xi + \frac{\mathcal{D}_{N}(x(\zeta))}{2},
\end{align*}
where $\partial\mathbb{D}_{\delta}:=\partial\mathbb{D}_{\delta}(b_1)\cup\partial \mathbb{D}_{\delta}(b_2)$. 
The cases $\zeta \notin \mathbb{D}_{\delta/2}(b_1)\cup\mathbb{D}_{\delta/2}(b_2)$ and $\zeta \in \mathbb{D}_{\delta/2}(b_1)\cup\mathbb{D}_{\delta/2}(b_2)$ can be treated similarly. Since \eqref{lol25} holds, we can apply \eqref{lol23}, which implies the estimate $\mathcal{D}_N(x(\xi)) = \bigO(s^{-\rho(2N+1)})$ as $s \to +\infty$ uniformly for $\xi \in \hat{\sigma}$. Since $r(\zeta) \sim \zeta$ as $\zeta \to \infty$ and $\text{dist}(\zeta,\hat{\sigma}) \geq \delta/2$, we find 
 \begin{align*}
\begin{cases}
\ds -\frac{r(\zeta)}{2\pi i} \int_{\hat{\sigma}}  \frac{\mathcal{D}_N(x(\xi))}{r(\xi)(\xi-\zeta)} d\xi = \bigO(s^{-\rho(2N+1)}), & \ds \mbox{if } \zeta \notin \mathbb{D}_{\delta/2}(b_1)\cup\mathbb{D}_{\delta/2}(b_2) \\[0.3cm]
\ds -\frac{r(\zeta)}{4\pi i} \int_{\partial \mathbb{D}_{\delta}}  \frac{\mathcal{D}_N(x(\xi))}{r(\xi)(\xi-\zeta)} d\xi -\frac{r(\zeta)}{2\pi i} \int_{\hat{\sigma}\setminus \partial \mathbb{D}_{\delta}}  \frac{\mathcal{D}_N(x(\xi))}{r(\xi)(\xi-\zeta)} d\xi = \bigO(s^{-\rho(2N+1)}) & \ds \mbox{if } \zeta \in \mathbb{D}_{\delta/2}(b_1)\cup\mathbb{D}_{\delta/2}(b_2)
\end{cases}
\end{align*}
as $s\to +\infty$ uniformly for $\zeta \in \mathbb{C}\setminus \Sigma_{5}$ and $\theta$ in compact subsets of $(0,1]$. The term $\mathcal{D}_N(x(\zeta))$ is present in the case $\zeta \in \mathbb{D}_{\delta/2}(b_1)\cup\mathbb{D}_{\delta/2}(b_2)$, and also in \eqref{first remainder} if $\zeta \in \mbox{int}(\hat{\sigma} \cup \Sigma_{5})$. Since $|\arg(\zeta)- \tfrac{\pi}{2}|>\epsilon$ for a certain $\epsilon > 0$, and since $s^{\rho}\zeta \to \infty$ by assumption, we can apply \eqref{lol23} to obtain
\begin{align*}
\mathcal{D}_N(x(\zeta)) = \bigO((s^{\rho}\zeta)^{-2N-1}).
\end{align*}
This proves \eqref{lol26}.
\begin{figure}
 \begin{center}
 \begin{tikzpicture}
 \node at (0,0) {};
 \fill (0,0) circle (0.07cm);
 \node at (0.15,-0.2) {$0$};
 
 \draw[dashed,->-=1,black] (0,-1.5) to [out=90, in=-90] (0,2.5);
 \draw[dashed,->-=1,black] (-5,0) to [out=0, in=-180] (6.5,0);
 
 \draw[->-=0.6,black] (0,0)--(3,1);
 \fill (3,1) circle (0.07cm);
 \node at (4.75,0.9) {$\partial\mathbb{D}_{ \delta}(b_{2})$};
 \draw[-<-=0.5,black] (0,0)--(-3,1);
 \fill (-3,1) circle (0.07cm);
 \node at (-4.75,1.05) {$\partial\mathbb{D}_{ \delta}(b_{1})$};
 
 \draw[->-=0.6,black] (-2.05,0.69) to [out=-75, in=180] (0,-1.2);
 \draw[->-=0.5,black] (0,-1.2) to [out=0, in=-105] (2.05,0.69);
 
  \draw[<->,black] (-3,1.5)--(-3,2);
  \draw[<->,black] (3,1.5)--(3,2);
  
  \node at (-2.7,1.72) {\small $\delta/2$};
  \node at (3.3,1.73) {\small $\delta/2$};
  
  \fill (-3.2,0.7) circle (0.04cm);
  \node at (-3.05,0.7) {\small $\zeta$};
 
 \node at (-0.7,0.45) {$\Sigma_{5}$};
 \node at (-1.35,-.95) {$\hat{\sigma}$};
 
 \draw[dashed,black] (3,1) circle (0.5cm);
 \draw[dashed,black] (-3,1) circle (0.5cm);
  \draw[->-=1,black] (3,1) circle (1cm);
  \draw[->-=0.5,black] (-3,1) circle (1cm);
 \end{tikzpicture}
 \end{center}
 \caption{\label{fig: Sigma5 and sigma modified}The contour $\hat{\sigma}$ for $\zeta \in \mathbb{D}_{\delta/2}(b_1)\cup\mathbb{D}_{\delta/2}(b_2)$.}
 \end{figure}
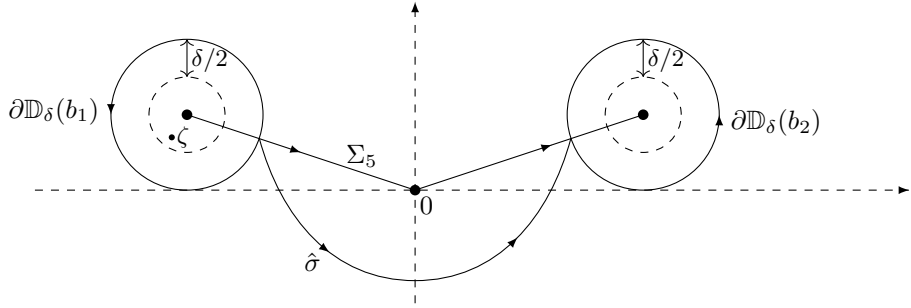
A similar argument based on deforming $\Sigma_{5}$ into a contour $\tilde{\sigma}$ which crosses the imaginary axis above the origin (see Figure \ref{fig: Sigma5 and sigma sigma tilde} in the case when $\zeta \notin \mathbb{D}_{\delta/2}(b_1)\cup\mathbb{D}_{\delta/2}(b_2)$) yields (\ref{lol26b}).
\end{proof}

It remains to compute the asymptotics of the two integrals in (\ref{lol27}) involving $\hat{f}$ and $\tilde{f}$. Since $\Sigma_{5}$ passes through the origin, we cannot immediately use the asymptotic formulas \eqref{fhatftildeexpansions} for $\hat{f}$ and $\tilde{f}$. However, since $\hat{f}$ and $\tilde{f}$ are analytic in the regions \eqref{DN f hat f tilde domain of analycity}, we can deform the contours in the same way as in the proof of Lemma \ref{lemma: remainderp} and then use \eqref{fhatftildeexpansions}.
\begin{definition}\label{def of contour Sigma+- hat}
Let $\epsilon > 0$ be sufficiently small but fixed. We define contours $\widehat{\Sigma}_{5,\pm} = \widehat{\Sigma}_{5,\pm}(\epsilon)$ as follows:
\begin{align*}
& \widehat{\Sigma}_{5,+}(\epsilon) = \big( \Sigma_{5} \cap \{|\xi| > \epsilon \} \big) \cup \{ \xi : |\xi| = \epsilon \mbox{ and } \phi < \arg \xi < \pi - \phi \}, \\
& \widehat{\Sigma}_{5,-}(\epsilon) = \big( \Sigma_{5} \cap \{|\xi| > \epsilon \} \big) \cup \{ \xi : |\xi| = \epsilon \mbox{ and } -\pi -\phi < \arg \xi < \phi \},
\end{align*}
with an orientation chosen from $b_{1}$ to $b_{2}$. Thus $\widehat{\Sigma}_{5,\pm}$ coincide with $\Sigma_{5}$ outside the disks $\{|\xi| \leq \epsilon \}$, and inside this disk, they differ from $\Sigma_{5}$ and instead coincide with the part of the circle $\{|\xi| = \epsilon\}$ lying above (resp. below) $\Sigma_{5}$.
\end{definition}
\begin{lemma}\label{lemma: pintermediateexp}

For each integer $N\ge1$, it holds that
\begin{align}\nonumber
p(\zeta) = & -(c_{4} \ln(s) + c_{7}) \frac{r(\zeta)}{2\pi i} \int_{\Sigma_{5}} \frac{1}{r_{+}(\xi)}\frac{d\xi}{\xi-\zeta} 
-r(\zeta) \bigg\{\frac{c_{5}}{2\pi i} \int_{\Sigma_{5}} \frac{\ln(i\xi)}{r_{+}(\xi)}\frac{d\xi}{\xi-\zeta} + \frac{c_6}{2\pi i} \int_{\Sigma_{5}} \frac{\ln(-i\xi)}{r_{+}(\xi)}\frac{d\xi}{\xi-\zeta} \bigg\}
	\\\nonumber
& +\sum_{n=1}^{N} \frac{1}{s^{n\rho}} 
\bigg\{\hat{f}_n \frac{r(\zeta)}{2\pi i} \int_{\Sigma_{5,-}} \frac{ 1}{\xi^{n}r_{-}(\xi)}\frac{d\xi}{\xi-\zeta} - \tilde{f}_n \frac{r(\zeta)}{2\pi i} \int_{\Sigma_{5,+}} \frac{ 1}{\xi^{n}r_{+}(\xi)}\frac{d\xi}{\xi-\zeta} \bigg\}
	\\\label{lemmapexpansion}
& + \bigO\big(( s^{\rho}\zeta)^{-N-1}\big)  + \bigO\big(s^{-\rho(N+1)}\big), \qquad s \to +\infty,
\end{align}
uniformly for $\zeta \in \C \setminus \Sigma_5$ such that $s^{\rho}\zeta\to \infty$ and $\theta$ in compact subsets of $(0,1]$, and where the contours $\Sigma_{5,\pm}$ depend on $\epsilon$ and $\zeta$ and are given by
\begin{align*}
\Sigma_{5,\pm}:= \widehat{\Sigma}_{5,\pm}(\min \{\epsilon, \tfrac{\zeta}{2}\})
\end{align*}
with $\widehat{\Sigma}_{5,\pm}(\min \{\epsilon, \frac{\zeta}{2}\})$ as in Definition \ref{def of contour Sigma+- hat}.
\end{lemma}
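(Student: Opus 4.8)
The plan is to start from the decomposition \eqref{lol27} of $p(\zeta)$ into four contour integrals over $\Sigma_5$, with integrands built from $\hat f$, $\tilde f$, $\mathcal{D}_N(x(\cdot))$ and $\mathcal{D}_N(y(\cdot))$, where $N$ is the fixed integer of the lemma (the same $N$ as in \eqref{exactG}--\eqref{fhatftildedef}). By Lemma \ref{lemma: remainderp}, the two integrals containing $\mathcal{D}_N$ are $\bigO((s^{\rho}\zeta)^{-2N-1})+\bigO(s^{-\rho(2N+1)})$, and since $2N+1>N+1$ these are absorbed into the error terms of \eqref{lemmapexpansion}; so only the $\hat f$- and $\tilde f$-integrals remain. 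The difficulty is that $\Sigma_5$ runs through the origin, where the asymptotic expansions \eqref{fhatftildeexpansions} are invalid, since the leading terms $c_5\ln(i\xi)$, $c_6\ln(-i\xi)$ and the corrections $\hat f_n/(s^\rho\xi)^n$, $\tilde f_n/(s^\rho\xi)^n$ are singular at $\xi=0$. The remedy, as in the proof of Lemma \ref{lemma: remainderp}, is that by \eqref{DN f hat f tilde domain of analycity} the function $\hat f$ is analytic on $\C\setminus[\tfrac{1+\alpha}{2}is^{-\rho},i\infty)$ and $\tilde f$ on $\C\setminus[-\tfrac{1+\alpha}{2}is^{-\rho},-i\infty)$, so each may be continued across the origin on the appropriate side.

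Next I would deform the contour. Using $r_+=-r_-$ on $\Sigma_5$ together with Cauchy's theorem in the thin lens pinched between $\Sigma_5$ and $\widehat{\Sigma}_{5,-}(\epsilon')$, where $\epsilon':=\min\{\epsilon,\zeta/2\}$ is chosen so that $\zeta$ stays in the unbounded component (the integrand $\hat f(\xi)/(r(\xi)(\xi-\zeta))$ being analytic in that lens, with $\xi=0$ harmless since $\hat f$ is analytic there), I replace $\Sigma_5$ by $\widehat{\Sigma}_{5,-}$ in the $\hat f$-integral — picking up the switch $r_+\mapsto r_-$ — and, symmetrically but now on the other side, by $\widehat{\Sigma}_{5,+}$ in the $\tilde f$-integral, where the boundary value $r_+$ survives (this is exactly why the denominators in \eqref{lemmapexpansion} are $r_-$ on $\Sigma_{5,-}$ and $r_+$ on $\Sigma_{5,+}$). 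On the deformed contours $s^\rho|\xi|\ge s^\rho\epsilon'=\min\{s^\rho\epsilon,\,s^\rho\zeta/2\}\to\infty$ and $|\arg\xi\mp\tfrac{\pi}{2}|$ stays bounded below, so \eqref{expansion fhat}--\eqref{expansion ftilde} may be substituted, uniformly for $\theta$ in compact subsets of $(0,1]$. I then split the integrand on each deformed contour into its non-decaying part $a_3\ln s+c_5\ln(i\xi)+a_4$ (resp. $\tilde a_3\ln s+c_6\ln(-i\xi)+\tilde a_4$), the truncated sum $\sum_{n=1}^N\hat f_n(s^\rho\xi)^{-n}$ (resp. $\sum_{n=1}^N\tilde f_n(s^\rho\xi)^{-n}$), and an $\bigO((s^\rho\xi)^{-N-1})$ remainder. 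Since $\ln(i\xi)$ and $\ln(-i\xi)$ extend analytically across the legs of $\Sigma_5$ near the origin with only an integrable singularity at $\xi=0$ and no jump across $\Sigma_5$, the non-decaying parts may be deformed back onto $\Sigma_5$, reversing the $r_\pm$ switches; adding the $\hat f$- and $\tilde f$-contributions and using $a_3+\tilde a_3=c_4$, $a_4+\tilde a_4=c_7$ from \eqref{def of a3a4a5}--\eqref{def of a3a4a5 tilde} reproduces the first line of \eqref{lemmapexpansion}. By contrast $\xi^{-n}$ has a pole at $\xi=0$, so the truncated sums cannot be returned to $\Sigma_5$ and remain as the stated integrals over $\Sigma_{5,\mp}=\widehat{\Sigma}_{5,\mp}(\epsilon')$, which is the second line of \eqref{lemmapexpansion}.

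Finally I would estimate the leftover. The $\bigO((s^\rho\xi)^{-N-1})$ remainder integrated along $\widehat{\Sigma}_{5,\mp}$ contributes, on the circular arc $|\xi|=\epsilon'$ (where $r$ is bounded away from $0$ and $\infty$ and $\dist(\zeta,\widehat{\Sigma}_{5,\mp})\ge\epsilon'$), a term $\bigO((s^\rho\epsilon')^{-N-1})=\bigO((s^\rho\zeta)^{-N-1})+\bigO(s^{-\rho(N+1)})$, and on the part lying on $\Sigma_5$ a term $\bigO(s^{-\rho(N+1)})$, using $s^\rho|\xi|\ge s^\rho\epsilon$ together with the integrability of $1/r$ near $b_1,b_2$; combined with the $\mathcal{D}_N$-bounds this gives the claimed error. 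I expect the main obstacle to be not any single estimate but the careful bookkeeping of the branch structures colliding at the origin — those of $\hat f$ and $\tilde f$, of $\ln(\pm i\xi)$ and $\xi^{-n}$, and of the square root defining $r$ — which dictates on which side each contour must be pushed, which boundary value $r_\pm$ survives each deformation, and why the truncation $\epsilon'=\min\{\epsilon,\zeta/2\}$ is needed to keep $\zeta$ out of the region swept by the deformation so that no residue terms appear.
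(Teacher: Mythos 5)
Your contour bookkeeping is essentially correct and close in spirit to the paper's: the reduction to the $\hat f$- and $\tilde f$-integrals via Lemma \ref{lemma: remainderp}, the sign switch $1/r_+ = -1/r_-$ that explains why $r_-$ appears on $\Sigma_{5,-}$ and $r_+$ on $\Sigma_{5,+}$, the side on which each integral may be pushed (dictated by the cuts $[\tfrac{1+\alpha}{2}is^{-\rho},i\infty)$ and $[-\tfrac{1+\alpha}{2}is^{-\rho},-i\infty)$), the return of the non-decaying parts to $\Sigma_5$ using $a_3+\tilde a_3=c_4$, $a_4+\tilde a_4=c_7$, and the retention of the $\xi^{-n}$ terms on $\Sigma_{5,\mp}$ all reproduce \eqref{lemmapexpansion}. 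Where you deviate from the paper is that you perform a single \emph{local} deformation of radius $\epsilon'=\min\{\epsilon,|\zeta|/2\}$ chosen so that no residue at $\zeta$ is ever produced, whereas the paper first deforms out to the $O(1)$ contours $\hat\sigma,\tilde\sigma$, substitutes the expansions there, and deforms back, accepting indicator terms $\chi_{\mathrm{int}}(\zeta)\hat f(\zeta)$, $\chi_{\mathrm{int}}(\zeta)\tilde f(\zeta)$ which are only removed at the end via \eqref{fhatftildeexpansions} evaluated at $\zeta$.

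The genuine gap is in your error estimate, and it is exactly the point the paper's residue mechanism is designed to handle. You assert $\dist(\zeta,\widehat{\Sigma}_{5,\mp})\ge\epsilon'$; this is true on the circular arc but false on the straight parts of $\Sigma_{5,\mp}$, which lie on $\Sigma_5$ itself, while the lemma claims uniformity for \emph{all} $\zeta\in\C\setminus\Sigma_5$ with $s^\rho\zeta\to\infty$ — including $\zeta$ arbitrarily close to $\Sigma_5$ and to $b_1,b_2$ (this uniformity is genuinely needed later, e.g. for $\zeta\in\partial\mathbb{D}_\delta(b_j)$, which crosses $\Sigma_5$, and for the boundary values $p_-$ on $\Sigma_5$ used in Lemma \ref{lemma: small norm jumps}). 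For such $\zeta$ the remainder contribution
\begin{equation*}
r(\zeta)\int_{\Sigma_{5,\mp}}\bigO\big((s^{\rho}\xi)^{-N-1}\big)\frac{d\xi}{r(\xi)(\xi-\zeta)}
\end{equation*}
over the straight parts is only $\bigO\big(s^{-\rho(N+1)}\ln\tfrac{1}{\dist(\zeta,\Sigma_5)}\big)$, and $\bigO\big(s^{-\rho(N+1)}\dist(\zeta,b_j)^{-1/2}\big)$ as $\zeta\to b_j$, so the claimed uniform $\bigO(s^{-\rho(N+1)})$ does not follow. The paper avoids this by substituting the expansions only on contours kept at distance $\ge\delta/2$ from $\zeta$ (with the separate disk construction when $\zeta\in\mathbb{D}_{\delta/2}(b_1)\cup\mathbb{D}_{\delta/2}(b_2)$), at the price of the $\chi$-terms, which are then shown to be $\bigO((s^\rho\zeta)^{-N-1})$. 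Your argument can be repaired in the same spirit: since the remainder in \eqref{fhatftildeexpansions} is analytic off the respective cut, bend the straight parts of $\Sigma_{5,\mp}$ slightly away from $\zeta$ (and around $b_1,b_2$), accepting at worst a residue at $\zeta$ of size $\bigO((s^\rho\zeta)^{-N-1})$ — but that is precisely the residue mechanism you designed the truncation $\epsilon'$ to avoid, and without some such device the uniformity asserted in the lemma is not established.
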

\begin{proof}
Let us assume that $\zeta \notin \mathbb{D}_{\delta/2}(b_1)\cup\mathbb{D}_{\delta/2}(b_2)$. For the integral involving $\hat{f}$ (resp. $\tilde{f}$), we deform $\Sigma_{5}$ into $\hat{\sigma}$ (resp. $\tilde{\sigma}$), and we pick up a residue if $\zeta \in \mbox{int}( \hat{\sigma} \cup \Sigma_{5})$ (resp. if $\zeta \in \mbox{int}( \tilde{\sigma} \cup \Sigma_{5})$). This gives
\begin{subequations}\label{fhatftildeintegrals}
\begin{align}
- \frac{r(\zeta)}{2\pi i} \int_{\Sigma_{5}} \frac{\hat{f}(\xi)}{r_{+}(\xi)}\frac{d\xi}{\xi-\zeta} & = \frac{r(\zeta)}{2\pi i} \int_{\hat{\sigma}} \frac{\hat{f}(\xi)}{r(\xi)}\frac{d\xi}{\xi-\zeta} - \chi_{\mbox{int}(\hat{\sigma} \cup \Sigma_5)}(\zeta) \hat{f}(\zeta), 
	\\
 - \frac{r(\zeta)}{2\pi i} \int_{\Sigma_{5}} \frac{\tilde{f}(\xi)}{r_{+}(\xi)}\frac{d\xi}{\xi-\zeta} & = - \frac{r(\zeta)}{2\pi i} \int_{\tilde{\sigma}} \frac{\tilde{f}(\xi)}{r(\xi)}\frac{d\xi}{\xi-\zeta} - \chi_{\mbox{int}(\tilde{\sigma} \cup \Sigma_5)}(\zeta) \tilde{f}(\zeta).
\end{align}
\end{subequations}
Since $|\xi| > \epsilon$ for $\xi \in \hat{\sigma} \cup \tilde{\sigma}$, the expansions \eqref{fhatftildeexpansions} imply that the integrals on the right-hand side of (\ref{fhatftildeintegrals}) satisfy
\begin{subequations}\label{fhatftildeintegrals2}
\begin{align}\nonumber
\frac{r(\zeta)}{2\pi i} \int_{\hat{\sigma}} \frac{\hat{f}(\xi)}{r(\xi)}\frac{d\xi}{\xi-\zeta} = & \;  (a_{3} \ln(s)+a_{4}) \frac{r(\zeta)}{2\pi i} \int_{\hat{\sigma}} \frac{1}{r(\xi)}\frac{d\xi}{\xi-\zeta}+ c_{5}\frac{r(\zeta)}{2\pi i} \int_{\hat{\sigma}} \frac{\ln(i\xi)}{r(\xi)}\frac{d\xi}{\xi-\zeta} 
	\\
& \; +\sum_{n=1}^{N} \frac{\hat{f}_n}{s^{n\rho}} \frac{r(\zeta)}{2\pi i} \int_{\hat{\sigma}} \frac{ 1}{\xi^{n}r(\xi)}\frac{d\xi}{\xi-\zeta} + \bigO(s^{-\rho(N+1)}),
	\\\nonumber
- \frac{r(\zeta)}{2\pi i} \int_{\tilde{\sigma}} \frac{\tilde{f}(\xi)}{r(\xi)}\frac{d\xi}{\xi-\zeta} = & -(\tilde{a}_{3} \ln(s)+\tilde{a}_{4}) \frac{r(\zeta)}{2\pi i} \int_{\tilde{\sigma}} \frac{1}{r(\xi)}\frac{d\xi}{\xi-\zeta}- c_{6}\frac{r(\zeta)}{2\pi i} \int_{\tilde{\sigma}} \frac{\ln(-i\xi)}{r(\xi)}\frac{d\xi}{\xi-\zeta} 
	\\
& -\sum_{n=1}^{N} \frac{\tilde{f}_n}{s^{n\rho}} \frac{r(\zeta)}{2\pi i} \int_{\tilde{\sigma}} \frac{ 1}{\xi^{n}r(\xi)}\frac{d\xi}{\xi-\zeta} +  \bigO(s^{-\rho(N+1)}),
\end{align}
\end{subequations}
as $s \to + \infty$ uniformly for $\zeta \in \C \setminus \Sigma_5$ and $\theta$ in compact subsets of $(0,1]$. 
Substituting (\ref{fhatftildeintegrals}) into (\ref{lol27}) and utilizing (\ref{fhatftildeintegrals2}) and Lemma \ref{lemma: remainderp} in the resulting expression for $p(\zeta)$, we conclude that
\begin{align}\nonumber
p(\zeta)= &\; \mathfrak{p}_1(\zeta) +\mathfrak{p}_2(\zeta) + \mathfrak{p}_3(\zeta) 
 - \chi_{\mbox{int}(\hat{\sigma} \cup \Sigma_5)}(\zeta) \hat{f}(\zeta)
 - \chi_{\mbox{int}(\tilde{\sigma} \cup \Sigma_5)}(\zeta) \tilde{f}(\zeta)
	\\ \label{lol272}
& + \bigO\big((s^{\rho}\zeta)^{-2N-1}\big)
+ \bigO\big(s^{-\rho(N+1)}\big).
 \end{align}
where the functions $\{\mathfrak{p}_j(\zeta)\}_1^3$ are given by
\begin{align}
\mathfrak{p}_1(\zeta) 
= &\; (a_{3} \ln(s)+a_{4}) \frac{r(\zeta)}{2\pi i} \int_{\hat{\sigma}} \frac{1}{r(\xi)}\frac{d\xi}{\xi-\zeta}
-(\tilde{a}_{3} \ln(s)+\tilde{a}_{4}) \frac{r(\zeta)}{2\pi i} \int_{\tilde{\sigma}} \frac{1}{r(\xi)}\frac{d\xi}{\xi-\zeta}, \label{frak1 1}
	\\
\mathfrak{p}_2(\zeta) = &\; c_{5}\frac{r(\zeta)}{2\pi i} \int_{\hat{\sigma}} \frac{\ln(i\xi)}{r(\xi)}\frac{d\xi}{\xi-\zeta} 
 - c_{6}\frac{r(\zeta)}{2\pi i} \int_{\tilde{\sigma}} \frac{\ln(-i\xi)}{r(\xi)}\frac{d\xi}{\xi-\zeta}, \label{frak2 1}
 	\\
\mathfrak{p}_3(\zeta)  = &\; \sum_{n=1}^{N} \frac{\hat{f}_n}{s^{n\rho}} \frac{r(\zeta)}{2\pi i} \int_{\hat{\sigma}} \frac{ 1}{\xi^{n}r(\xi)}\frac{d\xi}{\xi-\zeta} 
 -\sum_{n=1}^{N} \frac{\tilde{f}_n}{s^{n\rho}} \frac{r(\zeta)}{2\pi i} \int_{\tilde{\sigma}} \frac{ 1}{\xi^{n}r(\xi)}\frac{d\xi}{\xi-\zeta}. \label{frak3 1}
 \end{align}
Now, we deform the contours $\hat{\sigma}$ and $\tilde{\sigma}$ appearing in \eqref{frak1 1}-\eqref{frak2 1} back to $\Sigma_{5}$. The integrands in the right-hand side of \eqref{frak3 1} have a non-integrable singularity at $0$, and therefore for these integrals we instead deform $\hat{\sigma}$ into $\Sigma_{5,-}$ and $\tilde{\sigma}$ into $\Sigma_{5,+}$, and we find that
\begin{subequations}\label{frakpexpressions}
\begin{align}\nonumber
\mathfrak{p}_1(\zeta) = & -(c_{4} \ln(s) + c_{7}) \frac{r(\zeta)}{2\pi i} \int_{\Sigma_{5}} \frac{1}{r_{+}(\xi)}\frac{d\xi}{\xi-\zeta} 
	\\
& + (a_{3} \ln(s) + a_{4}) \chi_{\mbox{int}(\hat{\sigma} \cup \Sigma_5)}(\zeta)  
+ (\tilde{a}_{3} \ln(s) + \tilde{a}_{4}) \chi_{\mbox{int}(\tilde{\sigma} \cup \Sigma_5)}(\zeta), 
  	  \\\nonumber
\mathfrak{p}_2(\zeta) = & -c_{5}\frac{r(\zeta)}{2\pi i} \int_{\Sigma_{5}} \frac{\ln(i\xi)}{r_{+}(\xi)}\frac{d\xi}{\xi-\zeta} 
 - c_{6}\frac{r(\zeta)}{2\pi i} \int_{\Sigma_{5}} \frac{\ln(-i\xi)}{r_{+}(\xi)}\frac{d\xi}{\xi-\zeta}
	\\
& + c_{5} \ln(i\zeta) \chi_{\mbox{int}(\hat{\sigma} \cup \Sigma_5)}(\zeta) + c_{6} \ln(-i\zeta) \chi_{\mbox{int}(\tilde{\sigma} \cup \Sigma_5)}(\zeta),
	\\\nonumber
\mathfrak{p}_3(\zeta) = & \sum_{n=1}^{N} \frac{\hat{f}_n}{s^{n\rho}} \frac{r(\zeta)}{2\pi i} \int_{\Sigma_{5,-}} \frac{ 1}{\xi^{n}r_{-}(\xi)}\frac{d\xi}{\xi-\zeta} 	
-\sum_{n=1}^{N} \frac{\tilde{f}_n}{s^{n\rho}} \frac{r(\zeta)}{2\pi i} \int_{\Sigma_{5,+}} \frac{ 1}{\xi^{n}r_{+}(\xi)}\frac{d\xi}{\xi-\zeta} 
	\\
& + \sum_{n=1}^{N} \frac{\hat{f}_n}{{(s^{\rho}\zeta)^n}}\chi_{\mbox{int}(\hat{\sigma} \cup \Sigma_5)}(\zeta) + \sum_{n=1}^{N} \frac{\tilde{f}_n}{{(s^{\rho}\zeta)^n}} \chi_{\mbox{int}(\tilde{\sigma} \cup \Sigma_5)}(\zeta).
\end{align}
\end{subequations}
Substituting (\ref{frakpexpressions}) into \eqref{lol27}, the terms proportional to $\chi_{\mbox{int}(\hat{\sigma} \cup \Sigma_5)}$ and $\chi_{\mbox{int}(\tilde{\sigma} \cup \Sigma_5)}$ in the resulting expression for $p(\zeta)$ are given by
\begin{subequations}\label{chiterms}
\begin{align}
\chi_{\mbox{int}(\hat{\sigma} \cup \Sigma_5)}(\zeta) \bigg\{ - \hat{f}(\zeta) + a_{3} \ln(s) + c_{5} \ln(i\zeta) + a_{4} + \sum_{n=1}^{N} \frac{\hat{f}_{n}}{(s^{\rho}\zeta)^n} \bigg\} 
\end{align}
and
\begin{align}
\chi_{\mbox{int}(\tilde{\sigma} \cup \Sigma_5)}(\zeta) \bigg\{ - \tilde{f}(\zeta) + \tilde{a}_{3} \ln(s) + c_{6} \ln(-i\zeta) + \tilde{a}_{4} + \sum_{n=1}^{N} \frac{\tilde{f}_{n}}{(s^{\rho}\zeta)^n} \bigg\},
\end{align}
\end{subequations}
respectively. Recalling \eqref{fhatftildeexpansions}, we see that the expressions in (\ref{chiterms}) are $\bigO\big((s^{\rho} \zeta)^{-N-1}\big)$ as $s \to +\infty$ uniformly for $\zeta \in \C \setminus \Sigma_5$ such that $s^{\rho}\zeta \to \infty$ and $\theta$ in compact subsets of $(0,1]$. The expansion (\ref{lemmapexpansion}) then follows from (\ref{lol272}).

The case $\zeta \in \mathbb{D}_{\delta/2}(b_1)\cup\mathbb{D}_{\delta/2}(b_2)$ only requires minor adaptations of the above arguments, which are similar to those done in the proof of Lemma \ref{lemma: remainderp}, and we omit them here. 
\end{proof} 

It remains to compute the coefficients in the expansion (\ref{lemmapexpansion}) of Lemma \ref{lemma: pintermediateexp} more explicitly.
\begin{lemma}\label{lemma: identitiespcoeff} 
Let $\mathcal{R}$ be defined by \eqref{def of mathcalR} and let $\mathcal{G}_n$ be the $n$th coefficient in the expansion of $\mathcal{G}$ given in Proposition \ref{Gprop}. Then the following identities hold:
\begin{align}\label{identitiespcoeff1}
& \frac{r(\zeta)}{2\pi i} \int_{\Sigma_5} \frac{d\xi}{r_+(\xi)(\xi-\zeta)} =  \frac{1}{2}, 
	\\ \label{identitiespcoeff2}
&r(\zeta) \bigg\{\frac{c_{5}}{2\pi i} \int_{\Sigma_{5}} \frac{\ln(i\xi)}{r_{+}(\xi)}\frac{d\xi}{\xi-\zeta} + \frac{c_6}{2\pi i} \int_{\Sigma_{5}} \frac{\ln(-i\xi)}{r_{+}(\xi)}\frac{d\xi}{\xi-\zeta} \bigg\} 
 =  \frac{c_5\ln(i\zeta)}{2} + \frac{c_6\ln(i\zeta)}{2} - \mathcal{R}(\zeta),
	\\\label{identitiespcoeff3}
& \hat{f}_n \frac{r(\zeta)}{2\pi i} \int_{\Sigma_{5,-}} \frac{ 1}{\xi^{n}r_{-}(\xi)}\frac{d\xi}{\xi-\zeta}-\tilde{f}_n \frac{r(\zeta)}{2\pi i} \int_{\Sigma_{5,+}} \frac{ 1}{\xi^{n}r_{+}(\xi)}\frac{d\xi}{\xi-\zeta}
= \frac{\mathcal{A}_n(\zeta)}{\zeta^n}, \qquad n = 1, \dots, N,
\end{align}
where the functions $\mathcal{A}_n(\zeta)$ are defined by
\begin{align}\label{calAndef}
\mathcal{A}_n(\zeta) = -\frac{\mathcal{G}_n}{2}+\frac{ \mathcal{G}_n-2\hat{f}_n}{2 } \sum_{k=0}^{n-1} \frac{r(\zeta)\zeta^{k}}{k!} \frac{d^{k}}{d\xi^{k}}\big(r(\xi)^{-1}\big)\bigg|_{\xi=0-}, \qquad n = 1, \dots, N.
\end{align}
In particular, $\mathcal{A}_1(\zeta)$ is given explicitly by (\ref{def of mathcalA1}), the functions $\mathcal{A}_n(\zeta)$ are holomorphic on $\C \setminus \Sigma_5$, satisfy $\mathcal{A}_n = \bigO(\zeta^n)$ as $\zeta \to \infty$ uniformly for $\theta$ in compact subsets of $(0,1]$, and depend continuously on $\alpha$ and $\theta$.
\end{lemma}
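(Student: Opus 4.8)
The plan is to treat all three identities by the same device. Each left-hand side is, up to the bookkeeping of the contours, a Cauchy-type transform
\[
\mathcal{P}_\phi(\zeta) := \frac{r(\zeta)}{2\pi i}\int_{\Sigma_5}\frac{\phi(\xi)}{r_+(\xi)}\frac{d\xi}{\xi-\zeta},
\]
and the two structural facts that $r_+ = -r_-$ on $\Sigma_5$ (from the choice of branch of $r$) and $r(\zeta)\sim\zeta$ as $\zeta\to\infty$, together with the Sokhotski--Plemelj relation $\Phi_+ - \Phi_- = \phi/r_+$ for $\Phi(\zeta) = \frac{1}{2\pi i}\int_{\Sigma_5}\frac{\phi(\xi)}{r_+(\xi)(\xi-\zeta)}d\xi$, show that $\mathcal{P}_\phi = r\Phi$ is analytic on $\C\setminus\Sigma_5$, bounded at $b_1$ and $b_2$ (the factor $r(\zeta)\sim(\zeta-b_j)^{1/2}$ absorbing the $(\zeta-b_j)^{-1/2}$ singularity of $\Phi$), and satisfies the additive jump relation $\mathcal{P}_{\phi,+}(\zeta)+\mathcal{P}_{\phi,-}(\zeta)=\phi(\zeta)$ for $\zeta\in\Sigma_5$. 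Uniqueness is then a Liouville argument: if $\mathcal{Q}$ is any function analytic on $\C\setminus\Sigma_5$, bounded at $b_1,b_2$ and at $\infty$, and carrying the same additive jump across $\Sigma_5$, then $D:=\mathcal{P}_\phi-\mathcal{Q}$ satisfies $D_+ + D_- = 0$ there, so $rD$ is entire and $\bigO(\zeta)$ at infinity, hence affine, and $rD$ vanishes at $b_1,b_2$; therefore $rD\equiv 0$ and $\mathcal{P}_\phi\equiv\mathcal{Q}$.

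Identity \eqref{identitiespcoeff1} is the case $\phi\equiv 1$: the constant $\tfrac12$ is analytic, bounded, and has additive boundary values $\tfrac12+\tfrac12=1$ on $\Sigma_5$, so $\mathcal{P}_1\equiv\tfrac12$. For \eqref{identitiespcoeff2} I would take $\phi(\xi)=c_5\ln(i\xi)+c_6\ln(-i\xi)$; since $\Sigma_5$ meets the branch cuts $i\R_{\ge0}$ and $i\R_{\le0}$ of $\ln(i\,\cdot)$ and $\ln(-i\,\cdot)$ only at the origin, $\phi$ is analytic near $\Sigma_5\setminus\{0\}$ and $\mathcal{P}_\phi$ is well defined (the mild logarithmic singularity of $\phi$ at $0$ being integrable). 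One then checks directly from the explicit formula \eqref{def of mathcalR} that the right-hand side $\mathcal{Q}$ of \eqref{identitiespcoeff2} (i) carries the additive jump $\phi$ across $\Sigma_5$ --- here the jump of $r$ inside the logarithms defining $\mathcal{R}$ is the key input --- (ii) has no jump across the imaginary axis, the jumps of $\ln(i\zeta)$, $\ln(-i\zeta)$ and $\mathcal{R}$ cancelling there as recorded in the Remark following Proposition \ref{pprop}, (iii) has a singularity at $0$ matching that of $\mathcal{P}_\phi$, and (iv) is bounded at $b_1,b_2$ and tends to a finite limit at $\infty$ (the $\ln\zeta$ contributions cancelling); the uniqueness argument then gives \eqref{identitiespcoeff2}. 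Alternatively, one may deform $\Sigma_5$ outward, collecting the residue at $\xi=\zeta$ and the contributions of collapsing onto the two logarithmic branch cuts, which reduce to elementary integrals $\int_{i\R_\pm}\frac{d\xi}{r(\xi)(\xi-\zeta)}$ evaluable by the same residue computation and assembling into $\mathcal{R}(\zeta)$.

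Identity \eqref{identitiespcoeff3} is the most technical, because $\xi^{-n}r(\xi)^{-1}(\xi-\zeta)^{-1}$ has a pole of order $n$ at the origin, which is why the deformed contours $\Sigma_{5,\pm}$ appear. Using $\mathcal G_n=\hat f_n+\tilde f_n$ from the proof of Proposition \ref{Gprop}, I would write $\hat f_n = \tfrac12\mathcal G_n + \tfrac12(\hat f_n-\tilde f_n)$ and $\tilde f_n = \tfrac12\mathcal G_n - \tfrac12(\hat f_n-\tilde f_n)$, so that the left-hand side of \eqref{identitiespcoeff3} splits into a term with coefficient $\tfrac12(\hat f_n-\tilde f_n)$, in which the two contours rejoin to $\Sigma_5$ and the result is identified as $\mathcal{P}_\phi$ with $\phi(\xi)=\xi^{-n}$ (handled by the same uniqueness argument, with particular solution $\tfrac12\zeta^{-n}$ having the correct additive jump and singularity at $0$), plus a term with coefficient $\tfrac12\mathcal G_n$ that is a difference $\int_{\Sigma_{5,-}}-\int_{\Sigma_{5,+}}$ whose straight parts and arcs combine into a full small loop around the origin. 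That loop contributes $-2\pi i$ times the residue of $\xi^{-n}r(\xi)^{-1}(\xi-\zeta)^{-1}$ at $\xi=0$, and inserting $\frac{1}{\xi-\zeta}=-\sum_{m\ge0}\xi^m\zeta^{-m-1}$ turns this residue precisely into the finite sum $\sum_{k=0}^{n-1}\frac{r(\zeta)\zeta^k}{k!}\frac{d^k}{d\xi^k}(r(\xi)^{-1})\big|_{\xi=0-}$ of \eqref{calAndef}. Collecting the two pieces yields \eqref{identitiespcoeff3} with the formula \eqref{calAndef}; specialising to $n=1$ (only the $k=0$ term survives) and substituting $\mathcal G_1=-ic_8$, $\hat f_1=-i\frac{3\alpha^2-1}{24}$ and the value $r(0-)=-i|b_2|$ (forced by $b_1=-\overline{b_2}$ and the normalization of $r$) gives \eqref{def of mathcalA1}. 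The remaining assertions about $\mathcal A_n$ --- holomorphy on $\C\setminus\Sigma_5$, the bound $\mathcal A_n=\bigO(\zeta^n)$ as $\zeta\to\infty$ (since $r(\zeta)$ times a polynomial of degree $n-1$ is $\bigO(\zeta^n)$), and continuity in $\alpha,\theta$ --- then read off directly from \eqref{calAndef} and the corresponding properties of $r$, $\mathcal G_n$ and $\hat f_n$.

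I expect the main obstacle to be the fully rigorous execution of \eqref{identitiespcoeff3}: keeping track of the orientations of $\Sigma_{5,\pm}$ and of which boundary value $r_\pm$ occurs on each straight segment and arc, and isolating the residue at the origin correctly given that $\Sigma_5$ itself --- the branch cut of $r$ --- passes through that point, so that the small loop around $0$ must be combined with the neighbouring parts of $\Sigma_5$. A secondary but still delicate step is the direct verification in \eqref{identitiespcoeff2} that the explicit function $\mathcal{R}$ of \eqref{def of mathcalR} has exactly the jump $\phi$ across $\Sigma_5$ and the cancelling jumps across the imaginary axis; this amounts to computing the boundary values of $r$ inside the logarithms and simplifying, again using $b_1=-\overline{b_2}$.
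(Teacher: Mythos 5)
Your treatment of \eqref{identitiespcoeff1} and \eqref{identitiespcoeff2} is essentially fine: the Plemelj-plus-Liouville uniqueness argument (using $r_++r_-=0$, boundedness at $b_1,b_2$, and that an affine function with two zeros vanishes) is a legitimate alternative to the paper's contour deformations, and your ``alternative'' route for \eqref{identitiespcoeff2} --- residue at $\xi=\zeta$ plus collapsing onto the two logarithmic cuts and assembling the cut integrals into $\mathcal{R}$ via an explicit antiderivative --- is exactly the paper's proof (cf.\ \eqref{calRexpression}). One caution: carried out, that computation gives $\tfrac{c_5}{2}\ln(i\zeta)+\tfrac{c_6}{2}\ln(-i\zeta)-\tfrac{1}{2}\mathcal{R}(\zeta)$, which is the form actually used in Proposition \ref{pprop}; the right-hand side as printed in \eqref{identitiespcoeff2} contains misprints, so a ``direct verification'' of the printed expression, as in your first route, would fail and must target the corrected one.

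The genuine gap is in \eqref{identitiespcoeff3}: you have the two pieces of your decomposition interchanged. Writing $\hat f_n=\tfrac{\mathcal{G}_n}{2}+\tfrac{\hat f_n-\tilde f_n}{2}$, $\tilde f_n=\tfrac{\mathcal{G}_n}{2}-\tfrac{\hat f_n-\tilde f_n}{2}$, the coefficient $\tfrac{\mathcal{G}_n}{2}$ multiplies the \emph{difference} $\int_{\Sigma_{5,-}}\frac{d\xi}{\xi^n r_-(\xi)(\xi-\zeta)}-\int_{\Sigma_{5,+}}\frac{d\xi}{\xi^n r_+(\xi)(\xi-\zeta)}$ and $\tfrac{\hat f_n-\tilde f_n}{2}$ multiplies the \emph{sum}. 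Since the two contours carry opposite boundary values of $r$ on their common straight parts and $r_++r_-=0$, it is the \emph{sum} whose straight parts cancel; only the two small arcs survive, and (because $r$ changes sign across $\Sigma_5$) they glue into a single closed circle around $0$ with the single-valued integrand built from the $0^-$ branch, yielding $2\pi i$ times the residue at $0$ --- this is the origin of the $\sum_{k=0}^{n-1}$ term in \eqref{calAndef}, and it therefore enters with the factor $\tfrac{\hat f_n-\tilde f_n}{2}$, i.e.\ $\tfrac{\mathcal{G}_n-2\hat f_n}{2}$ after the sign from the expansion of $(\xi-\zeta)^{-1}$. The \emph{difference}, by contrast, does not collapse to a small loop at the origin (its straight parts double rather than cancel): it is a closed loop around the whole cut $\Sigma_5$, evaluated by deforming to infinity and picking up the residue at $\xi=\zeta$, which gives $-\zeta^{-n}$ and hence the term $-\tfrac{\mathcal{G}_n}{2\zeta^n}$. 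Your assignment --- the sum ``rejoining to $\Sigma_5$'' with ``particular solution $\tfrac12\zeta^{-n}$'' and the difference producing the residue at $0$ --- would, if executed, give $\tfrac{\hat f_n-\tilde f_n}{4\zeta^n}+\tfrac{\mathcal{G}_n}{2\zeta^n}\sum_k(\cdots)$, which is not $\mathcal{A}_n(\zeta)/\zeta^n$; moreover the Cauchy transform of $\xi^{-n}$ over $\Sigma_5$ does not even converge for $n\ge1$ (non-integrable singularity at the origin), which is precisely why the contours $\Sigma_{5,\pm}$ were introduced, so the Liouville framework cannot be invoked for that piece without the very bookkeeping at the origin that is in question. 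The paper's proof (splitting off the loop around the cut with coefficient $\hat f_n$, then symmetrizing the remaining $\Sigma_{5,+}$ integral) is the same idea, but with the roles of the two combinations kept straight; since \eqref{calAndef} feeds \eqref{def of mathcalA1}, the constant $\mathcal{K}$, and ultimately $c$, this swap is a substantive error rather than a presentational one.
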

\begin{proof}
Using that $r_{+}(\zeta) + r_{-}(\zeta) = 0$ for $\zeta \in \Sigma_{5}$, we obtain
\begin{align*}
\int_{\Sigma_5} \frac{d\xi}{r_+(\xi)(\xi-\zeta)} & = \frac{1}{2}\int_{\Sigma_5} \frac{d\xi}{r_+(\xi)(\xi-\zeta)} -\frac{1}{2}\int_{\Sigma_5} \frac{d\xi}{r_-(\xi)(\xi-\zeta)} = \frac{1}{2}\int_{\mathcal{L}}\frac{d\xi}{r(\xi)(\xi-\zeta)},
\end{align*}
where $\mathcal{L}$ is a clockwise loop which encircles $\Sigma_5$ but which does not encircle $\zeta$. Deforming $\mathcal{L}$ to infinity, picking up a residue at $\xi = \zeta$, and using that $\frac{1}{r(\xi)(\xi-\zeta)} = \bigO(\xi^{-2})$ as $\xi \to \infty$, we get
\[
\int_{\Sigma_5} \frac{d\xi}{r_+(\xi)(\xi-\zeta)} = \frac{\pi i}{r(\zeta)},
\]
which proves \eqref{identitiespcoeff1}. 

In order to prove (\ref{identitiespcoeff2}), we first establish the identities
\begin{subequations}\label{intlnixi}
\begin{align}\label{intlnixia}
& \int_{\Sigma_5} \frac{\ln(i\xi) d\xi}{r_+(\xi)(\xi-\zeta)} = \pi i \int_{0}^{i\infty} \frac{d\xi}{r(\xi)(\xi-\zeta)} + \pi i \frac{\ln(i\zeta)}{r(\zeta)},
	\\\label{intlnixib}
& \int_{\Sigma_5} \frac{\ln(-i\xi) d\xi}{r_+(\xi)(\xi-\zeta)} = \pi i\int_{0}^{-i\infty}\frac{d\xi}{r(\xi)(\xi-\zeta)}  +\pi i \frac{\ln(-i\zeta)}{r(\zeta)}.
\end{align}
\end{subequations}
The function $\ln(i\xi)$ is not analytic on $(0,i\infty)$. Therefore, to prove (\ref{intlnixia}), we first open up the contour $\Sigma_5$ and deform it into a loop $\mathcal{L}_1$ which encircles $\zeta$ but which avoids the positive imaginary axis as shown in Figure \ref{fig: L1}. This gives
\[
\int_{\Sigma_5} \frac{\ln(i\xi) d\xi}{r_+(\xi)(\xi-\zeta)} = \frac{1}{2}\int_{\Sigma_5} \frac{\ln(i\xi) d\xi}{r_+(\xi)(\xi-\zeta)} - \frac{1}{2}\int_{\Sigma_5} \frac{\ln(i\xi) d\xi}{r_-(\xi)(\xi-\zeta)}
= \frac{1}{2}\int_{\mathcal{L}_1} \frac{\ln(i\xi) d\xi}{r(\xi)(\xi-\zeta)} + \pi i \frac{\ln(i\zeta)}{r(\zeta)}.
\]
Deforming the circular part of $\mathcal{L}_1$ to infinity and using that $\ln(i\xi)$ jumps by $2\pi i$ across the positive imaginary axis, the identity (\ref{intlnixia}) follows. The identity (\ref{intlnixib}) follows in a similar way by deforming the contour to a loop which encircles $\zeta$ but which does not encircle the negative imaginary axis. 

\begin{figure}
	\begin{center}
		\begin{tikzpicture}
		\node at (0,0) {};
		\fill (0,0) circle (0.07cm);
		\node at (0.15,-0.2) {$0$};
		
		\draw[dashed,->-=1,black] (0,-1.5) to [out=90, in=-90] (0,2.5);
		\draw[dashed,->-=1,black] (-5,0) to [out=0, in=-180] (6.5,0);
		
		\draw[->-=0.6,black] (0,0)--(3,1);
		\fill (3,1) circle (0.07cm);
		\node at (3,1.3) {$b_{2}$};
		\draw[-<-=0.5,black] (0,0)--(-3,1);
		\fill (-3,1) circle (0.07cm);
		\node at (-3,1.3) {$b_{1}$};
		
		\draw[-<-=0.5,black] (-3.5,1) to [out=-90, in=180] (0,-1.2);
		\draw[-<-=0.5,black] (0,-1.2) to [out=0, in=-90] (3.5,1);
		\draw[->-=0.5,black] (-3.5,1) to [out=90, in=180] (-0.15,2);
		\draw[->-=0.5,black] (0.15,2) to [out=0, in=90] (3.5,1);
		\draw[-<-=0.5,black] (0,0)--(-0.15,2);
		\draw[->-=0.6,black] (0,0)--(0.15,2);
		
		\node at (-2.2,-1.1) {$\mathcal{L}_{1}$};
		
		\node at (1.25,-0.4) {$\zeta$};
		\fill (1,-0.4) circle (0.07cm);
		\end{tikzpicture}
	\end{center}
	\caption{\label{fig: L1}The contour $\mathcal{L}_{1}$.}
\end{figure}
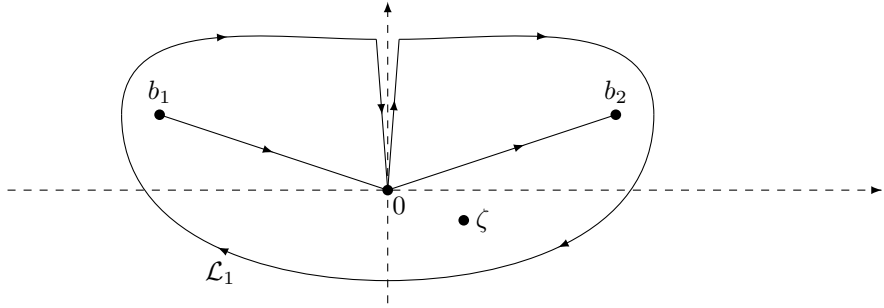

Using that
\begin{align*}
\frac{\partial}{\partial\xi} \Bigg( -\frac{1}{r(\zeta)}\ln\bigg(\frac{b_2(\zeta+\xi)+b_1(-2b_2+\zeta+\xi)-2(\zeta \xi +r(\zeta) r(\xi))}{\xi-\zeta} \bigg) \Bigg)= \frac{1}{r(\xi)(\xi-\zeta)}
\end{align*}
and $r_+(0)=i|b_2|=-r_-(0)$, we can write
\begin{align*}
-\ln\bigg(\frac{b_2(\zeta+\xi)+b_1(-2b_2+\zeta+\xi)-2(\zeta \xi +r(\zeta) r(\xi))}{\xi-\zeta} \bigg)\Bigg|_{\xi = 0^+}^{i\infty} &= \ln\bigg(\frac{|b_2|^2 + i\zeta \im{b_2} - i|b_2| r(\zeta)}{(r(\zeta) + \zeta - i\im{b_2})\zeta}\bigg),
\\
-\ln\bigg(\frac{b_2(\zeta+\xi)+b_1(-2b_2+\zeta+\xi)-2(\zeta \xi +r(\zeta) r(\xi))}{\xi-\zeta} \bigg)\Bigg|_{\xi = 0^-}^{-i\infty} &= \ln\bigg(\frac{|b_2|^2 + i\zeta \im{b_2} +i|b_2| r(\zeta)}{(r(\zeta) + \zeta - i\im{b_2})\zeta}\bigg),
\end{align*}
which shows that
\begin{align}\label{calRexpression}
\mathcal{R}(\zeta) = -r(\zeta)\bigg( c_5 \int_{0}^{i \infty} \frac{d\xi}{r(\xi)(\xi-\zeta)} + c_6 \int_{0}^{-i\infty} \frac{d\xi}{r(\xi)(\xi-\zeta)} \bigg).
\end{align}
The identity (\ref{identitiespcoeff2}) follows from (\ref{intlnixi}) and (\ref{calRexpression}).

To prove (\ref{identitiespcoeff3}), we write $\Sigma_{5,-}=(\Sigma_{5,-} \cup -\Sigma_{5,+}) \cup \Sigma_{5,+}$ and deform $\Sigma_{5,-} \cup -\Sigma_{5,+}$ into a clockwise loop $\mathcal{L}$ which encircles $\Sigma_5$ but which does not encircle $\zeta$. Taking into account the fact that $\mathcal{L}$ and $\Sigma_{5,-} \cup -\Sigma_{5,+}$ have opposite orientations, this shows that the left-hand side of (\ref{identitiespcoeff3}) equals
\begin{align}\label{ftwointegrals}
-\frac{r(\zeta)}{2\pi i} \hat{f}_n \int_{\mathcal{L}}  \frac{d\xi}{\xi^n r(\xi)(\xi-\zeta)}-\frac{r(\zeta)}{2\pi i}  (\tilde{f}_n-\hat{f}_n) \int_{\Sigma_{5,+}}  \frac{d\xi}{\xi^n r_+(\xi)(\xi-\zeta)}.
\end{align}
Deforming $\mathcal{L}$ to infinity (picking up a residue contribution from $\zeta$ but no contribution from infinity), we can write the first term in (\ref{ftwointegrals}) as
\begin{align}\label{identity3firstterm}
-\frac{r(\zeta)}{2\pi i} \hat{f}_n \int_{\mathcal{L}}  \frac{d\xi}{\xi^n r(\xi)(\xi-\zeta)} = -\frac{\hat{f}_n }{\zeta^{n}}.
\end{align}
On the other hand, using the jump relation of $r$ on $\Sigma_5$ to open up the contour $\Sigma_{5,+}$ and picking up a residue contribution from $\xi = 0$, we can write the second term in (\ref{ftwointegrals}) as
\begin{align}\nonumber
-\frac{r(\zeta)}{2\pi i}  (\tilde{f}_n-\hat{f}_n) \int_{\Sigma_{5,+}}  \frac{d\xi}{\xi^n r_+(\xi)(\xi-\zeta)} =& -\frac{r(\zeta)}{4\pi i}  (\tilde{f}_n-\hat{f}_n) \int_{\mathcal{L}}  \frac{d\xi}{\xi^n r(\xi)(\xi-\zeta)}
	\\\label{identity3secondterm}
&- (\tilde{f}_n-\hat{f}_n)\frac{r(\zeta)}{2(n-1)!} \frac{d^{n-1}}{d\xi^{n-1}} \bigg(\frac{1}{r(\xi)(\xi-\zeta)}\bigg)\bigg|_{\xi = 0^-}.
\end{align}
Deforming $\mathcal{L}$ to infinity (picking up a residue contribution from $\zeta$ but no contribution from infinity), the first term on the right-hand side of (\ref{identity3secondterm}) can be written as
\begin{align}\label{identity3secondtermb}
-\frac{r(\zeta)}{4\pi i}  (\tilde{f}_n-\hat{f}_n) \int_{\mathcal{L}}  \frac{d\xi}{\xi^n r(\xi)(\xi-\zeta)}=-\frac{(\tilde{f}_n-\hat{f}_n)}{2 \zeta^{n}}.
\end{align}
Substituting (\ref{identity3firstterm})--(\ref{identity3secondtermb}) into (\ref{ftwointegrals}) and recalling that $\mathcal{G}_n =\hat{f}_n+\tilde{f}_n$, it follows that the left-hand side of (\ref{identitiespcoeff3}) equals
\begin{align*}
& -\frac{\hat{f}_n }{\zeta^{n}}-\frac{(\tilde{f}_n-\hat{f}_n)}{2 \zeta^{n}}-(\tilde{f}_n-\hat{f}_n)\frac{r(\zeta)}{2(n-1)!} \frac{d^{n-1}}{d\xi^{n-1}} \bigg(\frac{1}{r(\xi)(\xi-\zeta)}\bigg)\bigg|_{\xi = 0^-}
\\
&= -\frac{\mathcal{G}_n}{2\zeta^{n}}+\frac{ \tilde{f}_n-\hat{f}_n}{2\zeta^n } \sum_{k=0}^{n-1} \frac{r(\zeta)\zeta^{k}}{k!}\frac{d^{k}}{d\xi^{k}}\big(r^{-1}(\xi)\big)\bigg|_{\xi=0^-}
= \frac{\mathcal{A}_n(\zeta)}{\zeta^n},
\end{align*}
which proves (\ref{identitiespcoeff3}).
Recalling that $\mathcal{G}_1= -ic_8$, $\hat{f}_1=-i\frac{3\alpha^2-1}{24}$, and $\tilde{f_1}=-ic_8-\hat{f}_1$ and using that $r_-(0)=-i|b_2|$, we find the explicit expression (\ref{def of mathcalA1}) for the coefficient $\mathcal{A}_1(\zeta)$.
Since $r(\zeta) \sim \zeta$ as $\zeta \to \infty$, we see from (\ref{calAndef}) that $\mathcal{A}_n(\zeta)$ is analytic for $\zeta \in \C \setminus \Sigma_5$ and of order $\bigO(\zeta^n)$ as $\zeta \to \infty$. Furthermore, since $\hat{f}_n$ and $\mathcal{G}_n$ depend continuously on $\alpha$ and $\theta$, so does $\mathcal{A}_n(\zeta)$.
\end{proof}

The asymptotic formula (\ref{full expansion of p}) for $p(\zeta)$ follows by substituting the identities of Lemma \ref{lemma: identitiespcoeff} into the expansion (\ref{lemmapexpansion}) of Lemma \ref{lemma: pintermediateexp}. This completes the proof of Proposition \ref{pprop}.

\section{Asymptotics of $R(\zeta)$} \label{Rsec}
In this section, we establish the existence of an expansion to all orders of $R(\zeta)$ as $s \to +\infty$ and derive an explicit expression for the first coefficient $R^{(1)}(\zeta)$ of this expansion. 
By expanding $R^{(1)}(\zeta)$ as $\zeta \to \infty$, we can compute the matrix $R_1^{(1)}$ defined by \eqref{def of R1^1}. Even though only the $(2,2)$ entry of $R_1^{(1)}$ is needed to compute $c$, we compute the full matrix $R^{(1)}(\zeta)$, because it will be needed later for the evaluation of $C$. 
The results are summarized in the following proposition.

\begin{proposition}[Asymptotics of $R$]\label{Rprop}
 Let $N \geq 1$ be an integer. Suppose $\alpha > -1$ and $0 < \theta \leq 1$. Let $\{c_j\}_1^8$ and $b_1, b_2$ be the complex constants expressed in terms of the parameters $\alpha$ and $\theta$ by (\ref{def of c1...c8}) and (\ref{b1b2def}), respectively.

There exist holomorphic functions $R^{(n)}:\C \setminus (\partial\Done \cup \partial \Dtwo) \to \C$, $n=1,\ldots N$, such that the matrix valued function $R(\zeta)$ defined in (\ref{def of R}) admits the expansion
  \begin{align}\label{full expansion R}
R(\zeta) = I+\sum_{n=1}^{N} \frac{R^{(n)}(\zeta)}{s^{n\rho}} + \bigO\bigg(\frac{1}{s^{(N+1)\rho}(1+ |\zeta|)}\bigg), \qquad s\to +\infty,
\end{align}
uniformly for $\zeta \in \mathbb{C}\setminus \Gamma_R$ and $\theta$ in compact subsets of $(0,1]$. 
As $\zeta \to \infty$, $R^{(n)}(\zeta) = O(\zeta^{-1})$ for each $n = 1, \dots, N$.
The expansion (\ref{full expansion R}) can be differentiated with respect to $\zeta$ in the sense that
\begin{align}\label{dRdzetaexpansion}
R'(\zeta) = \sum_{n=1}^{N} \frac{R^{(n)\prime}(\zeta)}{s^{n\rho}} + \bigO\bigg(\frac{1}{s^{(N+1)\rho}(1+ |\zeta|)^2}\bigg), \qquad s\to +\infty,
\end{align}
uniformly for $\zeta \in \mathbb{C}\setminus \Gamma_R$ and $\theta$ in compact subsets of $(0,1]$. 
For any $N \geq 1$,
\begin{align}\label{RplusRminusestimate}
R_{+}^{-1}(\zeta)R_{+}'(\zeta)-R_{-}^{-1}(\zeta)R_{-}'(\zeta) = \bigO\bigg(\frac{1}{s^{N\rho}(1+ |\zeta|)^N}\bigg) \qquad \mbox{ as } s \to +\infty,
\end{align}
uniformly for $\zeta \in \cup_{i=1}^4 \Sigma_i \setminus (\Done \cup \Dtwo)$ and $\theta$ in compact subsets of $(0,1]$.
Moreover, the first coefficient $R^{(1)}(\zeta)$ is given explicitly by
\begin{align}\label{explicit R1(zeta)}
R^{(1)}(\zeta)=\frac{A}{\zeta-b_1}+\frac{B}{(\zeta-b_1)^2} - \frac{\bar A}{\zeta-b_2} + \frac{\bar B}{(\zeta-b_2)^2}, \qquad \zeta \in \mathbb{C}\setminus (\overline{\Done} \cup \overline{\Dtwo}),
\end{align}
where the constant matrices $A$ and $B$ are defined by
\begin{align}\label{ABexplicit}
A = \begin{pmatrix}
A_{1,1} & A_{1,2} \\ A_{2,1} & A_{2,2}
\end{pmatrix},
\qquad 
B = - \frac{5b_1}{48(c_1+c_2)} \begin{pmatrix}
i & 1 \\ 1 & -i
\end{pmatrix},
\end{align}
with
\begin{align*}
A_{1,1}&=\frac{3 \im b_2 +2i\re b_2-12(|b_2|(c_5-c_6)(c_5+c_6)+(c_5^2+c_6^2)\im b_2+2ic_5c_6 \re b_2)}{48 (c_1+c_2) \re b_2}, 
\\
A_{1,2}&=\frac{4i(3|b_2|(c_5-c_6)(1+c_5+c_6)+\im b_2 + 3(c_5+c_5^2+c_6+c_6^2)\im b_2)}{48(c_1+c_2)\re b_2}
\\
&\quad -\frac{(5+12c_6+12c_5(1+2c_6))}{48(c_1+c_2)}
\\
A_{2,1}&= \frac{12i|b_2|(c_5-c_6)(-1+c_5+c_6)+4i(1+3(c_5-1)c_5+3(c_6-1)c_6) \im b_2}{48(c_1+c_2)\re b_2}
\\
& \quad +\frac{-5+12(c_5+c_6-2c_5c_6)}{48(c_1+c_2)}
\\
A_{2,2}&=-A_{1,1}.
\end{align*}
In particular, the matrix $R_1^{(1)}$ in (\ref{def of R1^1}) is given by $R_1^{(1)} = A-\bar{A}$ and has $(2,2)$ element
\begin{align}\label{R1122}
(R_1^{(1)})_{2,2} = -i\frac{1- 12c_5c_6}{12(c_1+c_2)}.
\end{align}
\end{proposition}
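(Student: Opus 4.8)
The plan is to show that $R$ solves a small-norm Riemann--Hilbert problem, extract the all-order expansion \eqref{full expansion R} by the standard iterative scheme, and then compute $R^{(1)}$ explicitly. From the chain of transformations $Y\mapsto U\mapsto T\mapsto S\mapsto R$ one has that $R$ is analytic on $\C\setminus\Gamma_R$, tends to $I$ at infinity, and satisfies $R_+=R_-V_R$ on $\Gamma_R$. On the lens contours $\cup_{i=1}^4\Sigma_i$ outside the disks the $g$-function has been constructed so that the relevant exponentials are exponentially small; hence $V_R-I$ and $V_R'$ are exponentially small there as $s\to+\infty$, uniformly for $\theta$ in compact subsets of $(0,1]$, and decay faster than any power of $|\zeta|^{-1}$ as $|\zeta|\to\infty$. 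On $\partial\Done\cup\partial\Dtwo$, where $s^{2\rho/3}f(\zeta)\to\infty$, I would insert the full asymptotic expansion of the Airy parametrix refining \eqref{weak asymp for Ak}; using $f^{3/2}=\tfrac32 q$ (from \eqref{relation f and q}) and the definition \eqref{def of E local param} of $E$, the prefactors telescope against one another and one obtains
\[
V_R(\zeta)=Q^\infty(\zeta)\,e^{(p(\zeta)+\frac12\ln\mathcal{G}(\zeta))\sigma_3}\bigl(I+\Delta_{\mathrm{Ai}}(s^{2\rho/3}f(\zeta))\bigr)^{-1}e^{-(p(\zeta)+\frac12\ln\mathcal{G}(\zeta))\sigma_3}Q^\infty(\zeta)^{-1},
\]
where $\Delta_{\mathrm{Ai}}(z)=\sum_{n\ge1}m_nz^{-3n/2}$ with $\{m_n\}$ the known Airy matrices. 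The key observation is that, by Propositions \ref{Gprop} and \ref{pprop}, all divergent and $s$-dependent pieces cancel: $2p(\zeta)+\ln\mathcal{G}(\zeta)=\mathcal{R}(\zeta)+\bigO((s^\rho\zeta)^{-1})$, so $e^{(p+\frac12\ln\mathcal{G})\sigma_3}=e^{\frac{\mathcal{R}}{2}\sigma_3}\bigl(I+\bigO((s^\rho\zeta)^{-1})\bigr)$; since $z^{-3/2}=\tfrac{2}{3s^\rho q(\zeta)}$, this yields a full expansion $V_R=I+\sum_{n=1}^N s^{-n\rho}\Delta_n(\zeta)+\bigO(s^{-(N+1)\rho})$ uniformly on the circles, in which each $\Delta_n$ extends from the circle to a function that is meromorphic in $\mathbb{D}_\delta(b_j)$ with a pole only at $b_j$ of finite order (the half-integer powers of $\zeta-b_j$ coming from $r$, $\gamma$ and $q^{-1}$ cancel, as they must since $R=SP^{-1}$ has no jump inside the disks).

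With these estimates in hand, the existence of the expansion \eqref{full expansion R}, the error bound, and $R^{(n)}(\zeta)=\bigO(\zeta^{-1})$ as $\zeta\to\infty$ all follow from the standard small-norm theory. Since $\|V_R-I\|_{L^2\cap L^\infty(\Gamma_R)}=\bigO(s^{-\rho})$, the RH problem is uniquely solvable for $s$ large via the Neumann series for $(1-\mathcal{C}_{V_R-I})$, with $R(\zeta)=I+\tfrac{1}{2\pi i}\int_{\Gamma_R}\frac{R_-(\xi)(V_R(\xi)-I)}{\xi-\zeta}\,d\xi$; substituting the expansion of $V_R-I$ and iterating determines $R^{(1)},R^{(2)},\dots$ recursively, with $R^{(n)}$ solving an additive problem $R^{(n)}_+-R^{(n)}_-=(\text{a polynomial in }\Delta_1,\dots,\Delta_n,R^{(1)},\dots,R^{(n-1)})$ and $R^{(n)}(\infty)=0$. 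Because this right-hand side is exponentially small off the circles, $R^{(n)}$ is, up to exponentially small corrections, the Cauchy transform over $\partial\Done\cup\partial\Dtwo$ of a function meromorphic inside the disks; evaluating by residues shows that for $\zeta$ outside the closed disks $R^{(n)}$ is a finite sum of negative powers of $\zeta-b_1$ and $\zeta-b_2$, hence holomorphic there and $\bigO(\zeta^{-1})$ at infinity. Differentiating the Cauchy-integral representation gives \eqref{dRdzetaexpansion} (alternatively, Cauchy's estimates on thin annuli around $\Gamma_R$). For \eqref{RplusRminusestimate}, one writes $R_+=R_-V_R$, so that $R_+^{-1}R_+'-R_-^{-1}R_-'=V_R^{-1}[R_-^{-1}R_-',V_R]+V_R^{-1}V_R'$; on $\cup_{i=1}^4\Sigma_i\setminus(\Done\cup\Dtwo)$ both $V_R-I$ and $V_R'$ are exponentially small while $R_-^{-1}R_-'$ is bounded, so the right-hand side is exponentially small, in particular $\bigO(s^{-N\rho}(1+|\zeta|)^{-N})$ for every $N$.

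It remains to compute $R^{(1)}$ explicitly, which is the heart of the matter. From the above, $R^{(1)}_+-R^{(1)}_-=\Delta_1$ on $\partial\Done\cup\partial\Dtwo$ with $\Delta_1(\zeta)=-Q^\infty(\zeta)e^{\frac{\mathcal{R}(\zeta)}{2}\sigma_3}\tfrac{2m_1}{3q(\zeta)}e^{-\frac{\mathcal{R}(\zeta)}{2}\sigma_3}Q^\infty(\zeta)^{-1}$, $m_1$ the first Airy matrix (the subleading data $\mathcal{A}_1,\mathcal{G}_1$ enter only at order $s^{-2\rho}$ and do not contribute here). Near $b_1$ one plugs in $q(\zeta)\sim-\tfrac{2}{3}\tfrac{c_1+c_2}{\sqrt2}\tfrac{\sqrt{|\re b_1|}}{b_1}(\zeta-b_1)^{3/2}$ from \eqref{asymptotics q}, the local behaviors $\gamma(\zeta)\sim\bigl(\tfrac{\zeta-b_1}{b_1-b_2}\bigr)^{1/4}$, $r(\zeta)\sim(b_1-b_2)^{1/2}(\zeta-b_1)^{1/2}$, and $\mathcal{R}(b_1)=0$, and reads off the Laurent coefficients of $\Delta_1$ at $b_1$: the coefficient of $(\zeta-b_1)^{-2}$ gives $B$ and that of $(\zeta-b_1)^{-1}$ gives $A$. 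Solving the additive problem by residues yields $R^{(1)}(\zeta)=\tfrac{A}{\zeta-b_1}+\tfrac{B}{(\zeta-b_1)^2}+(\text{contribution of }b_2)$, and the symmetry $R(\zeta)=\overline{R(-\bar\zeta)}$ inherited from \eqref{Psymm}, the analogous symmetry of $S$, and $p_0\in\R$ force the $b_2$-part to be $-\tfrac{\bar A}{\zeta-b_2}+\tfrac{\bar B}{(\zeta-b_2)^2}$, giving \eqref{explicit R1(zeta)}--\eqref{ABexplicit}. Expanding \eqref{explicit R1(zeta)} as $\zeta\to\infty$ gives $R_1=s^{-\rho}(A-\bar A)+\bigO(s^{-2\rho})$, so $R_1^{(1)}=A-\bar A$ and $(R_1^{(1)})_{2,2}=A_{2,2}-\overline{A_{2,2}}=-2i\,\im A_{1,1}$, which reduces to \eqref{R1122} after inserting the explicit $A_{1,1}$.

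\textbf{The main obstacle} is precisely this last explicit computation. One must keep careful track of the branch cuts of $r$, $\gamma$, $\mathcal{R}$ and $q^{-1}$ near $b_1$, verify that the half-integer powers cancel so that $\Delta_1$ is genuinely meromorphic there, and correctly combine the Airy matrix $m_1$ with the conjugation by $Q^\infty$, which is itself singular at $b_1$; this linear algebra and the attendant bookkeeping are where errors are easy to make. Everything else reduces to routine small-norm RH analysis, and the symmetry $R(\zeta)=\overline{R(-\bar\zeta)}$ halves the work by fixing the $b_2$-data in terms of the $b_1$-data.
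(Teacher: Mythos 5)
Your overall architecture is the paper's: you expand the jump on $\partial\Done\cup\partial\Dtwo$ by combining the full Airy asymptotics with Propositions \ref{Gprop} and \ref{pprop}, note that $2p+\ln\mathcal{G}$ collapses to $\mathcal{R}$ up to $\bigO((s^\rho\zeta)^{-1})$, show the off-disk jumps are exponentially small with arbitrary polynomial weights, run the standard small-norm/Neumann-series argument for \eqref{full expansion R} and its $\zeta$-derivative, obtain \eqref{RplusRminusestimate} from $R_+=R_-J_R$ together with the exponential estimates, and compute $R^{(1)}$ as the Cauchy transform of the order-$s^{-\rho}$ jump coefficient, which has double poles at $b_1,b_2$, with the $b_2$ data fixed by the symmetry $R(\zeta)=\overline{R(-\bar\zeta)}$. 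All of this matches the paper's proof (Lemmas \ref{lemma: full expansion JR} and \ref{lemma: small norm jumps} and Section 4.3), including your correct observations that the coefficients of the jump expansion have no jump across $\Sigma_5$ and hence are meromorphic in the disks, and that $\tilde{\mathcal{A}}_1,\mathcal{G}_1$ only enter at order $s^{-2\rho}$.

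There is, however, a concrete gap in your recipe for the residue matrix $A$. You propose to read off the Laurent coefficients of $\Delta_1$ at $b_1$ after inserting only the leading behaviours $q(\zeta)\sim -\tfrac{2}{3}\tfrac{c_1+c_2}{\sqrt 2}\tfrac{\sqrt{|\re b_1|}}{b_1}(\zeta-b_1)^{3/2}$, $\gamma\sim(\tfrac{\zeta-b_1}{b_1-b_2})^{1/4}$, $r\sim(b_1-b_2)^{1/2}(\zeta-b_1)^{1/2}$ and $\mathcal{R}(b_1)=0$. This determines the double-pole coefficient $B$, but not $A$: since the conjugated matrix $Q^\infty e^{\frac{\mathcal{R}}{2}\sigma_3}(\cdot)e^{-\frac{\mathcal{R}}{2}\sigma_3}(Q^\infty)^{-1}$ behaves like $(\zeta-b_1)^{-1/2}$ and $1/q$ like $(\zeta-b_1)^{-3/2}$, the coefficient of $(\zeta-b_1)^{-1}$ picks up contributions from the next order of each factor, namely the $(\zeta-b_1)^{5/2}$ term in the expansion \eqref{strong asymptotics of q} of $q$, the $\sqrt{\zeta-b_1}$ term in the expansion \eqref{asymptotics of R as zeta to b1} of $\mathcal{R}$, and the subleading terms of $\gamma^{\pm1}$ in $Q^\infty$. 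In particular, the entire $c_5,c_6$-dependence of $A$ enters through the $\sqrt{\zeta-b_1}$ coefficient of $\mathcal{R}$; with $\mathcal{R}$ treated as zero near $b_1$ your $A$ would be independent of $c_5,c_6$, contradicting the stated $A_{i,j}$, and this error would propagate to $(R_1^{(1)})_{2,2}$ in \eqref{R1122} and hence to the constant $c$. So the explicit computation must be carried out with these subleading local expansions (this is exactly what the paper does via \eqref{strong asymptotics of q}–\eqref{asymptotics of R as zeta to b1} substituted into \eqref{explicit JR1} and \eqref{ABdef}); as written, your final step would not produce the claimed $A$.
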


The remainder of this section is devoted to the proof of Proposition \ref{Rprop}. We start by obtaining an asymptotic expansion of the jump matrix $J_R$ for the RH problem satisfied by $R$.

\subsection{Asymptotics of $J_R$}
We recall from (\ref{def of R}) that $R$ is given by
\[
R(\zeta) = e^{p_0 \sigma_3} S(\zeta) \times \begin{cases}
P(\zeta)^{-1} e^{-p_0 \sigma_3}, &\text{ if $\zeta \in \Done \cup  \Dtwo$},
\\
P^{\infty}(\zeta)^{-1} e^{-p_0 \sigma_3}, &\text{ elsewhere},
\end{cases}
\]
where $S$, $P$, and $P^{\infty}$ have been defined in Section \ref{Section: ClaeysGirSti}. 
For $\zeta \in \Gamma_R$, $R$ satisfies the jump condition $R_+ = R_- J_R$ where 
\begin{align}\label{jumps for R}
J_R(\zeta)= \begin{cases}
e^{p_0 \sigma_3} P_-^\infty(\zeta)J_S(\zeta)P_+^\infty(\zeta)^{-1} e^{-p_0 \sigma_3}, &\text{ if $\zeta \in \Gamma_R \setminus (\partial \Done \cup \partial \Dtwo)$}, \\
e^{p_0 \sigma_3}P(\zeta)P^\infty(\zeta)^{-1} e^{-p_0 \sigma_3}, &\text{ if $\zeta \in \partial \Done \cup \partial \Dtwo$},
\end{cases}
\end{align}
and $J_S$ denotes the jump matrix for $S$ (see \cite[Eq. (3.21)]{ClaeysGirSti}):
\begin{align}\label{JSdef}
J_S(\zeta) = \begin{cases}
\begin{pmatrix} 1 & -\mathcal{G}(\zeta) e^{s^\rho(2g(\zeta) - i h(\zeta) + \ell)} \\ 0 & 1 \end{pmatrix}, & \zeta \in \Sigma_1 \cup \Sigma_2, 
	\\
\begin{pmatrix} 1 & 0 \\ \mathcal{G}(\zeta)^{-1} e^{-s^\rho(2g(\zeta) - i h(\zeta) + \ell)}  & 1 \end{pmatrix}, & \zeta \in \Sigma_3 \cup \Sigma_4, 
	\\
\begin{pmatrix} e^{-s^\rho(g_+(\zeta) - g_-(\zeta))} & -\mathcal{G}(\zeta) \\ \mathcal{G}(\zeta)^{-1} & 0 \end{pmatrix}, & \zeta \in \Sigma_5. 
\end{cases}
\end{align}

The symmetries 
$$\mathcal{G}(\zeta)=\overline{\mathcal{G}(-\bar{\zeta})}, \quad g(\zeta)=\overline{g(-\bar{\zeta})}, \quad h(\zeta) = -\overline{h(-\bar{\zeta})}, \quad P^\infty(\zeta)=\overline{P^\infty(-\bar{\zeta})}, \quad P(\zeta)=\overline{P(-\bar{\zeta})},$$
together with the fact that $p_0, \ell \in \R$ imply that the jump matrix $J_R$ obeys the symmetry
\begin{align}\label{JRsymm}
J_R(\zeta)=\overline{J_R(-\bar{\zeta})}, \qquad \zeta \in \Gamma_R.
\end{align}
From (\ref{JRsymm}) and the symmetry of the behavior of $R$ near the points of self-intersection of $\Gamma_{R}$ and infinity, as well as the uniqueness of the solution of the RH problem for $R$, we conclude that $R$ obeys the symmetry
\begin{align*}
R(\zeta) = \overline{R(-\overline{\zeta})}, \qquad \zeta \in \mathbb{C}\setminus \Gamma_{R}.
\end{align*}
Note that $|b_2| = (1+\theta)\theta^{\frac{1-\theta}{1+\theta}}$ by (\ref{b1b2def}) and (\ref{def of Re b2}), so that $b_1$ and $b_2$ approach the origin only as $\theta \downarrow 0$. 

The next lemma establishes the existence of an asymptotic expansion to all orders of the jump matrix $J_R(\zeta)$ as $s \to + \infty$.

\begin{lemma}[Asymptotics of $J_R$]\label{lemma: full expansion JR}
Let $N\ge 1$ be an integer and let $\alpha > -1$. There exists an asymptotic expansion
\begin{align} \label{full expansion JR}
	J_R(\zeta) = I + \sum_{n=1}^N \frac{J_R^{(n)}(\zeta)}{s^{n \rho}} + \bigO\bigg( \frac{1}{s^{(N+1)\rho}}\bigg) \qquad \mbox{as } s \to +\infty,
	\end{align}
where the error term is uniform for $\zeta \in \partial \Done\cup \partial \Dtwo$ and for $\theta$ in compact subsets of $(0,1]$, and 
\[
J_R^{(n)}: (\overline{\Done} \cup \overline{\Dtwo} ) \setminus \{b_{1},b_{2}\} \to \C^{2\times2}, \qquad n = 1, \dots, N,
\]
are holomorphic functions which satisfy the symmetry
\begin{equation}\label{symmetry of JR n for all n}
J_R^{(n)}(\zeta)=\overline{J_R^{(n)}(-\bar{\zeta})}, \qquad \zeta \in \overline{\Done} \cup \overline{\Dtwo}, \ n = 1, \dots, N.
\end{equation}
For $\zeta \in \overline{\Done}$, $J_{R}^{(1)}(\zeta)$ is explicitly given by
\begin{align} \label{explicit JR1}
J_R^{(1)}(\zeta) & \; = \frac{1}{72q(\zeta)}Q^\infty(\zeta)e^{\frac{\mathcal{R}(\zeta)}{2}\sigma_3}\begin{pmatrix}
-1 & -6i \\ -6i & 1
\end{pmatrix} e^{-\frac{\mathcal{R}(\zeta)}{2}\sigma_3} \big( Q^\infty(\zeta) \big)^{-1},
\end{align}
i.e.,
\begin{align}\nonumber
J_R^{(1)}(\zeta)_{1,1}&=-J_R^{(1)}(\zeta)_{2,2}, 
	\\ \nonumber
J_R^{(1)}(\zeta)_{2,1}&= \frac{1}{72 q(\zeta)}\bigg( i\frac{-\re b_2+6(i\im b_2-\zeta)\cosh(\mathcal{R}(\zeta))}{r(\zeta)}  +6i\sinh (\mathcal{R}(\zeta)) \bigg),
	\\ \nonumber
J_R^{(1)}(\zeta)_{1,2}&=\frac{1}{72 q(\zeta)}\bigg(i \frac{-\re b_2+6(i\im b_2-\zeta)\cosh(\mathcal{R}(\zeta))}{r(\zeta)}  -6i\sinh (\mathcal{R}(\zeta)) \bigg),
	\\ \label{JR1entries}
J_R^{(1)}(\zeta)_{2,2}&=\frac{1}{72 q(\zeta)}\bigg( \frac{-(i\im b_2-\zeta)+6 \re (b_2)\cosh(\mathcal{R}(\zeta))}{r(\zeta)} \bigg).
\end{align}
\end{lemma}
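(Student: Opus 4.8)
The plan is to start from the representation $J_R(\zeta) = e^{p_0\sigma_3}P(\zeta)P^\infty(\zeta)^{-1}e^{-p_0\sigma_3}$ on $\partial\mathbb{D}_\delta(b_1)\cup\partial\mathbb{D}_\delta(b_2)$ (see \eqref{jumps for R}) and to expand the right-hand side in inverse powers of $s^\rho$. By \eqref{Psymm} and the symmetries $\mathcal{G}(\zeta)=\overline{\mathcal{G}(-\bar\zeta)}$, $p(\zeta)=\overline{p(-\bar\zeta)}$, $q(\zeta)=\overline{q(-\bar\zeta)}$, $Q^\infty(\zeta)=\overline{Q^\infty(-\bar\zeta)}$ and the reality of $p_0$ and $\ell$, it is enough to treat $\partial\mathbb{D}_\delta(b_1)$; the formulas on $\overline{\mathbb{D}_\delta(b_2)}$ and the symmetry \eqref{symmetry of JR n for all n} then follow by applying $\zeta\mapsto-\bar\zeta$. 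Substituting \eqref{def of E local param} into \eqref{def of P}, one obtains on each component $[k]$ of $\mathbb{D}_\delta(b_1)\setminus\cup_{i=1}^5\Sigma_i$
\[
P(\zeta) = P^\infty(\zeta)\,\mathcal{G}(\zeta)^{\frac{\sigma_3}{2}}\begin{pmatrix}1&i\\1&-i\end{pmatrix}^{-1}\big(s^{\frac23\rho}f(\zeta)\big)^{\frac{\sigma_3}{4}}A_k\big(s^{\frac23\rho}f(\zeta)\big)\,e^{-s^\rho q(\zeta)\sigma_3}\,\mathcal{G}(\zeta)^{-\frac{\sigma_3}{2}}.
\]

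I would then extend the Airy asymptotics \eqref{weak asymp for Ak} to all orders (classical: the coefficients are determined by the asymptotic series of $\mathrm{Ai}$ and $\mathrm{Ai}'$, with first coefficients $\tfrac5{72}$ and $-\tfrac7{72}$), giving constant matrices $M_1,\dots,M_N$, the same in every sector $S_k$, such that $\begin{pmatrix}1&i\\1&-i\end{pmatrix}^{-1}\zeta^{\frac{\sigma_3}{4}}A_k(\zeta)e^{-\frac23\zeta^{3/2}\sigma_3} = I + \sum_{j=1}^N M_j\zeta^{-3j/2} + \bigO(\zeta^{-\frac32(N+1)})$, with $M_1 = \tfrac1{48}\begin{pmatrix}-1&-6i\\-6i&1\end{pmatrix}$. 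Using $f(\zeta)^{3/2}=\tfrac32 q(\zeta)$ from \eqref{relation f and q}, the exponential produced by \eqref{weak asymp for Ak} cancels $e^{-s^\rho q(\zeta)\sigma_3}$ exactly and $f(\zeta)^{-3j/2}=(\tfrac23)^j q(\zeta)^{-j}$; together with $e^{p_0\sigma_3}P^\infty(\zeta)=Q^\infty(\zeta)e^{p(\zeta)\sigma_3}$ (from \eqref{def of Pinf}) this yields
\[
J_R(\zeta) = Q^\infty(\zeta)\,e^{(p(\zeta)+\frac12\ln\mathcal{G}(\zeta))\sigma_3}\bigg(I + \sum_{j=1}^N\frac{(2/3)^jM_j}{s^{j\rho}q(\zeta)^j} + \bigO\big(s^{-(N+1)\rho}\big)\bigg)e^{-(p(\zeta)+\frac12\ln\mathcal{G}(\zeta))\sigma_3}\,Q^\infty(\zeta)^{-1}.
\]
On $\partial\mathbb{D}_\delta(b_1)$ the variable $\zeta$ is bounded and bounded away from $0$, uniformly for $\theta$ in compact subsets of $(0,1]$ (recall $b_1\to0$ only as $\theta\downarrow0$), so $s^\rho\zeta\to+\infty$ and $f(\zeta)$ stays away from $0$ there; Propositions \ref{Gprop} and \ref{pprop} then give $p(\zeta)+\tfrac12\ln\mathcal{G}(\zeta)=\tfrac{\mathcal{R}(\zeta)}2+\sum_{n=1}^N(\text{holomorphic in }\zeta)\,s^{-n\rho}+\bigO(s^{-(N+1)\rho})$ uniformly on the circle. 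Expanding the diagonal conjugation $e^{(p+\frac12\ln\mathcal{G})\sigma_3}(\,\cdot\,)e^{-(p+\frac12\ln\mathcal{G})\sigma_3}$ in powers of $s^{-\rho}$ and collecting the coefficient of $s^{-n\rho}$ produces \eqref{full expansion JR} with $J_R^{(n)}$ independent of $s$.

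The same computation in fact gives the expansion throughout $\mathbb{D}_\delta(b_1)\setminus\{b_1\}$ (Propositions \ref{Gprop} and \ref{pprop} apply there since $\zeta$ is bounded away from $0$), and since $P$ and $P^\infty$ satisfy the \emph{same} jump relation across $\Sigma_5\cap\mathbb{D}_\delta(b_1)$, the product $P(\zeta)P^\infty(\zeta)^{-1}$ is jump-free across $\Sigma_5$ inside the disk, while its jumps across $\Sigma_1,\dots,\Sigma_4$ there are exponentially small in $s$ and contribute to no coefficient; hence each $J_R^{(n)}$ is holomorphic on $(\overline{\mathbb{D}_\delta(b_1)}\cup\overline{\mathbb{D}_\delta(b_2)})\setminus\{b_1,b_2\}$, and \eqref{symmetry of JR n for all n} follows from the $\zeta\mapsto-\bar\zeta$ symmetries above. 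At order $s^{-\rho}$ only $j=1$ contributes and one may use $p+\tfrac12\ln\mathcal{G}=\tfrac{\mathcal{R}}2+\bigO(s^{-\rho})$, so $J_R^{(1)}(\zeta)=\tfrac23 q(\zeta)^{-1}Q^\infty(\zeta)e^{\frac{\mathcal{R}(\zeta)}2\sigma_3}M_1e^{-\frac{\mathcal{R}(\zeta)}2\sigma_3}Q^\infty(\zeta)^{-1}$, and $\tfrac23 M_1=\tfrac1{72}\begin{pmatrix}-1&-6i\\-6i&1\end{pmatrix}$ gives \eqref{explicit JR1}. The entrywise formulas \eqref{JR1entries} then follow by writing out $Q^\infty$ in terms of $\gamma(\zeta)$ and using the identities $\gamma(\zeta)^2+\gamma(\zeta)^{-2}=\tfrac{2\zeta-2i\im b_2}{r(\zeta)}$ and $\gamma(\zeta)^2-\gamma(\zeta)^{-2}=\tfrac{2\re b_2}{r(\zeta)}$ (consequences of $b_1+b_2=2i\im b_2$ and $b_2-b_1=2\re b_2$), which convert the factors $e^{\pm\mathcal{R}(\zeta)}$ into $\cosh\mathcal{R}(\zeta)$ and $\sinh\mathcal{R}(\zeta)$.

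The main obstacle is the first manipulation: keeping track of the several branch-cut-carrying factors — $\mathcal{G}^{-\sigma_3/2}$, the fractional powers of $s^{\frac23\rho}f(\zeta)$, and the sector-dependent $A_k$ together with its sector-dependent branch in \eqref{weak asymp for Ak} — so that all the exponentials cancel and the expansion collapses to a conjugation by the diagonal matrix $e^{(p+\frac12\ln\mathcal{G})\sigma_3}$, as well as computing the Airy coefficient $M_1$ in the exact normalization of \eqref{def of A1}--\eqref{weak asymp for Ak}. Once that is in place, the remaining steps are bookkeeping, supported by Propositions \ref{Gprop} and \ref{pprop}, the jump-freeness of $P(\zeta)P^\infty(\zeta)^{-1}$ across $\Sigma_5$, and the uniform boundedness of $b_1,b_2,f,q,\mathcal{R}$ over compact subsets of $(0,1]$.
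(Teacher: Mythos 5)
Your route is the same as the paper's: substitute \eqref{def of Pinf}, \eqref{def of P}, \eqref{def of E local param} into \eqref{jumps for R} on $\partial\Done$, extend \eqref{weak asymp for Ak} to an all-order Airy expansion, use Propositions \ref{Gprop} and \ref{pprop} to collapse everything to a conjugation of $I+\sum_j (2/3)^j M_j\, s^{-j\rho}q(\zeta)^{-j}$ by $Q^\infty(\zeta)e^{(p(\zeta)+\frac12\ln\mathcal{G}(\zeta))\sigma_3}$, and transfer to $\overline{\Dtwo}$ and \eqref{symmetry of JR n for all n} by $\zeta\mapsto-\bar\zeta$. Your normalizations are right ($u_1=\tfrac{5}{72}$, $\tfrac23M_1=\tfrac1{72}\bigl(\begin{smallmatrix}-1&-6i\\-6i&1\end{smallmatrix}\bigr)$, error controlled since $|q|$ is bounded below on the circle uniformly for $\theta$ in compact subsets of $(0,1]$), and the entrywise formulas via $\gamma^2\pm\gamma^{-2}=\frac{2\zeta-2i\im b_2}{r(\zeta)},\frac{2\re b_2}{r(\zeta)}$ are fine.

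One step of your holomorphy argument is, however, false as stated: you claim $P$ and $P^\infty$ have the \emph{same} jump across $\Sigma_5\cap\Done$, so that $P(P^\infty)^{-1}$ is jump-free there. In fact $P$ is built to reproduce the jumps of $S$ inside the disk (this is exactly why $R$ has no jump contour inside the disks), and on $\Sigma_5$ the jump \eqref{JSdef} carries the entry $e^{-s^\rho(g_+-g_-)}=e^{-2s^\rho q_+}$ in the $(1,1)$ position, whereas the jump of $P^\infty$ has $0$ there. Consequently $P(P^\infty)^{-1}$ does jump across $\Sigma_5$; the jump is $I$ plus a nilpotent term proportional to $\mathcal{G}(\zeta)e^{-s^\rho(g_+-g_-)}$, which is exponentially small only for $\zeta$ bounded away from $b_1$ and is $\bigO(1)$ near $b_1$ where $q\to0$. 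The conclusion you need is still true and can be obtained in two ways: (i) the paper's way, namely verifying directly that the explicit coefficient functions have no jump across $\Sigma_5$, using $\mathcal{R}_++\mathcal{R}_-=0$, $q_++q_-=0$, $\tilde{\mathcal{A}}_{n,+}+\tilde{\mathcal{A}}_{n,-}=0$, $Q^\infty_+=Q^\infty_-\bigl(\begin{smallmatrix}0&-1\\1&0\end{smallmatrix}\bigr)$ and the corresponding conjugation identity for the constant Airy matrices; or (ii) your own exponential-smallness argument, which you invoked for $\Sigma_1,\dots,\Sigma_4$, applied also to $\Sigma_5$ away from $b_1$ (the relevant combination $e^{2p_-}\mathcal{G}e^{-s^\rho(g_+-g_-)}$ is exponentially small there, as in the proof of Lemma \ref{lemma: small norm jumps}), since exponentially small discrepancies cannot alter any coefficient of the expansion; continuity plus Morera then gives holomorphy across $\Sigma_5\setminus\{b_1\}$. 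Relatedly, the assertion that the expansion holds ``throughout $\mathbb{D}_\delta(b_1)\setminus\{b_1\}$'' should be weakened to ``uniformly on compact subsets of $\overline{\Done}\setminus\{b_1\}$'', since $s^{\frac23\rho}f(\zeta)$ is not uniformly large near $b_1$; this costs nothing because the coefficient functions are explicit.
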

\begin{proof}
Substituting the expressions \eqref{def of Pinf}, \eqref{def of P} and \eqref{def of E local param} for $P^\infty$, $P$, and $E$ into the expression \eqref{jumps for R} for $J_{R}$ on $\partial \Done$, we find
\begin{align}\nonumber
J_{R}(\zeta) = & \; Q^{\infty}(\zeta) e^{p(\zeta)\sigma_{3}} \mathcal{G}(\zeta)^{\frac{\sigma_{3}}{2}} \begin{pmatrix}
1 & i \\ 1 & -i
\end{pmatrix}^{-1} \Big( s^{\frac{2}{3}\rho}f(\zeta) \Big)^{\frac{\sigma_{3}}{4}} 
	\\\label{JRexpression1}
& \times A_{k}\big( s^{\frac{2}{3}\rho}f(\zeta) \big)e^{-s^{\rho}q(\zeta)\sigma_{3}}\mathcal{G}(\zeta)^{-\frac{\sigma_{3}}{2}}e^{-p(\zeta)\sigma_{3}}Q^{\infty}(\zeta)^{-1}, \qquad \zeta \in \partial \Done.
\end{align}

We can extend the asymptotic formula \eqref{weak asymp for Ak} for $A_{k}(\zeta)$ to all orders as follows. The Airy function admits the following well-known uniform asymptotic expansions to all orders (see \cite[Eqs. 9.7.5 and 9.7.6]{NIST}):
\begin{align}\label{Airyexpansions}
\Ai(\zeta)&\sim \frac{e^{-\frac{2}{3}\zeta^\frac{3}{2}}}{2\sqrt{\pi} \zeta^{1/4}} \sum_{l=0}^\infty \frac{(-1)^lu_l}{(\frac{2}{3}\zeta^{3/2})^l}, & &
\Ai'(\zeta)\sim-\frac{e^{-\frac{2}{3}\zeta^\frac{3}{2}}\zeta^{1/4}}{2\sqrt{\pi} } \sum_{l=0}^\infty \frac{(-1)^lv_l}{(\frac{2}{3}\zeta^{3/2})^l},
\end{align}
as $\zeta \to \infty$, $|\arg \zeta|<\pi - \delta'$ for any $\delta' >0$, where the coefficients $\{u_l, v_l\}_{l=0}^{\infty}$ are given by
\begin{align*}
& u_{l} = \frac{(6l-5)(6l-3)(6l-1)}{(2l-1)216l}u_{l-1}, & &  l \geq 1, \\
& v_{l} = \frac{6l+1}{1-6l}u_{l}, & & l \geq 1,
\end{align*}
and $u_{0} = v_{0} = 1$. Substituting the asymptotic expansions (\ref{Airyexpansions}) into \eqref{def of A1}--\eqref{def of A3}, it follows that, for $k = 1,2,3$,
\begin{align}\nonumber
A_{k}(\zeta) & \sim  \zeta^{-\frac{\sigma_3}{4}}\sum_{l=0}^{\infty}\frac{1}{(\frac{2}{3}\zeta^{3/2})^l}  \begin{pmatrix}
u_l & i(-1)^lu_l \\ v_l & -i(-1)^lv_l
\end{pmatrix}e^{\frac{2}{3}\zeta^{3/2}\sigma_3} 
	\\ \label{ExpansionAk}
& \sim \zeta^{-\frac{\sigma_3}{4}} \begin{pmatrix}
1 & i \\
1 & -i
\end{pmatrix} \left( I + \sum_{l=1}^{\infty} \frac{1}{(\frac{2}{3}\zeta^{3/2})^l} \frac{u_{l}}{1-6l}\begin{pmatrix}
1 & (-1)^{l+1}6 l i \\
6 l i & (-1)^{l}
\end{pmatrix} \right) e^{\frac{2}{3}\zeta^{3/2}\sigma_3}
\end{align}
uniformly in the sector $S_k$ defined in (\ref{Skdef}), where the branches of complex powers are as in (\ref{weak asymp for Ak}).

Next note that by combining the expansions \eqref{full expansion G} and \eqref{full expansion of p}, we find
\begin{align}
e^{p(\zeta)\sigma_3}\mathcal{G}(\zeta)^{\frac{\sigma_3}{2}} &= \exp \left( \left[ \frac{\mathcal{R}(\zeta)}{2} + \sum_{n=1}^{N} \frac{\tilde{\mathcal{A}}_n(\zeta)}{(s^\rho \zeta)^n}+\bigO\bigg( \frac{1}{(s^\rho \zeta)^{N+1}}\bigg) + \bigO \bigg( \frac{1}{s^{(N+1)\rho}}\bigg)\right] \sigma_{3} \right) \label{expansion e^p mathcalG}
\end{align}
as $s \to +\infty$ uniformly for $\theta$ in compact subsets of $(0,1]$ and uniformly for $\zeta \in \C \setminus \Sigma_{5}$ such that $s^{\rho}\zeta \to \infty$, $|\arg (\zeta) - \tfrac{\pi}{2}| > \epsilon$ and $|\arg (\zeta) + \tfrac{\pi}{2}| > \epsilon$ for any fixed $\epsilon  > 0$, where $\mathcal{R}(\zeta)$ is defined by \eqref{def of mathcalR} and $\tilde{\mathcal{A}}_n(\zeta)$ are holomorphic functions of $\zeta \in \mathbb{C}\setminus \Sigma_{5}$ defined by
\begin{align*}
& \tilde{\mathcal{A}}_n(\zeta) = \mathcal{A}_n(\zeta) +\frac{\mathcal{G}_n}{2}, \qquad n \geq 1.
\end{align*}

Utilizing the large $s$ expansions \eqref{ExpansionAk} and \eqref{expansion e^p mathcalG} in the expression (\ref{JRexpression1}) for $J_R(\zeta)$, we obtain
	\begin{align}\nonumber
	J_R(\zeta)= & \; I + Q^\infty(\zeta)\exp \left( \left[ \frac{\mathcal{R}(\zeta)}{2} + \sum_{n=1}^{N} \frac{\tilde{\mathcal{A}}_n(\zeta)}{(s^\rho \zeta)^n} +\bigO\big( s^{-(N+1)\rho }\big) \right] \sigma_{3} \right)
		\\\nonumber
	& \times \left( \sum_{l=1}^{N} \frac{1}{(s^{\rho} q(\zeta))^l} \frac{u_{l}}{1-6l}\begin{pmatrix}
	1 & (-1)^{l+1}6 l i \\
	6 l i & (-1)^{l}
	\end{pmatrix} + \bigO\big((s^\rho q(\zeta))^{-(N+1)}\big)  \right)
		\\ \label{formula full expansion JR}
	& \times \exp \left(- \left[ \frac{\mathcal{R}(\zeta)}{2} + \sum_{n=1}^{N} \frac{\tilde{\mathcal{A}}_n(\zeta)}{(s^\rho \zeta)^n}+\bigO\big( s^{-(N+1)\rho }\big) \right] \sigma_{3} \right) Q^\infty(\zeta)^{-1},
\end{align}
as $s \to +\infty$ uniformly for $\zeta \in \partial \Done$ and $\theta$ in compact subsets of $(0,1]$. 
The error term $\bigO((s^\rho q(\zeta))^{-(N+1)})$ can be replaced by $\bigO(s^{-(N+1)\rho})$, because $|q(\zeta)|$ is uniformly bounded away from zero on $\partial \Done$ by (\ref{asymptotics q}).
It follows that $J_R$ admits an expansion of the form \eqref{full expansion JR} with coefficients $J_R^{(n)}(\zeta)$, $n = 1, \dots, N$, which can be computed explicitly from (\ref{formula full expansion JR}) by straightforward algebra. In particular, this gives the explicit expression (\ref{explicit JR1}) for the first coefficient $J_{R}^{(1)}(\zeta)$; using the definition (\ref{def of Pinf}) of $Q^\infty(\zeta)$ and the fact that $r(\zeta) = (\zeta -b_2)\gamma(\zeta)^2$, the relations in (\ref{JR1entries}) follow. 

We finally show that $J_R^{(n)}(\zeta)$, $n = 1, \dots, N$, are analytic functions of $\zeta \in \overline{\Done} \setminus \{b_1\}$. This will complete the proof of the lemma because the expansion \eqref{full expansion JR} for $\zeta \in \partial \Dtwo$ and the symmetry \eqref{symmetry of JR n for all n} then follow from (\ref{JRsymm}).
Clearly, the coefficients $J_R^{(n)}$ are analytic on $\overline{\Done} \setminus \Sigma_{5}$. 
In fact, it follows from (\ref{formula full expansion JR}) that they have no jump across $\Sigma_{5}$, because for $\zeta \in \Sigma_{5}$ we have
\begin{align*}
& \mathcal{R}_{+}(\zeta) + \mathcal{R}_{-}(\zeta) = 0, \qquad  q_{+}(\zeta) + q_{-}(\zeta) = 0, \qquad Q_{+}^{\infty}(\zeta) = Q_{-}^{\infty}(\zeta)\begin{pmatrix}
0 & -1 \\ 1 & 0
\end{pmatrix}, \\
& \tilde{\mathcal{A}}_{n,+}(\zeta) + \tilde{\mathcal{A}}_{n,-}(\zeta) = 0 \qquad \mbox{for all }n \geq 1, \\
& \frac{q_{-}(\zeta)^{l}}{q_{+}(\zeta)^{l}} \begin{pmatrix}
1 & (-1)^{l+1}6 l i \\
6 l i & (-1)^{l}
\end{pmatrix}^{-1} \begin{pmatrix}
	0 & -1 \\ 1 & 0
\end{pmatrix} \begin{pmatrix}
1 & (-1)^{l+1}6 l i \\
6 l i & (-1)^{l}
\end{pmatrix} = \begin{pmatrix}
0 & -1 \\ 1 & 0
\end{pmatrix} \quad \mbox{for all }l \geq 1.
\end{align*}
This shows that the coefficients $J_R^{(n)}(\zeta)$ are analytic on $\overline{\Done} \setminus \{b_1\}$ (note however that the $J_R^{(n)}(\zeta)$ may have poles at $b_1$ because $q(\zeta)\to 0$ as $\zeta \to b_1$).
\end{proof}

\subsection{Existence of an expansion to all orders}
In the following lemma, we show that the $L^{p}$ norm of $w_R := J_R-I$ on $\Gamma_{R}$ is small for any $1 \leq p \leq \infty$ uniformly for $\theta$ in compact subsets of $(0,1]$, whenever $s$ is large enough.

\begin{lemma}[Estimates of $w_R$]\label{lemma: small norm jumps} Let $N \geq 1$ be an integer and let $K$ be a compact subset of $(0,1]$. For each $1 \le p \le \infty$ and each $M \geq 0$, there exist positive constants $C'$ and $c'$ such that the following estimates hold:
	\begin{subequations} \label{Smalljumpestimates}
		\begin{align}
	\sup_{\theta \in K}	\left\lVert w_R - \sum_{n=1}^{N}\frac{J_R^{(n)}}{s^{n\rho}} \right\rVert_{L^p(\partial \Done \cup \partial \Dtwo)} &\le \frac{C'}{s^{(N+1)\rho}},\label{Smalljumpestimatesa}
		\\
	\sup_{\theta \in K}	\lVert (1+ |\zeta|)^M w_R \rVert_{L^p(\Gamma_R \setminus (\partial \Done \cup \partial \Dtwo))} &\le C'e^{-c's^\rho}, \label{Smalljumpestimatesb}
		\\
	\sup_{\theta \in K}	\lVert (1+ |\zeta|)^M \partial_\zeta w_R \rVert_{L^\infty(\cup_{i=1}^4 \Sigma_i \setminus ( \Done \cup \Dtwo))} &\le C'e^{-c's^\rho}.\label{Smalljumpestimatesc}
		\end{align}
	\end{subequations}
\end{lemma}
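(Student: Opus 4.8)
The three estimates are of rather different character, so I would treat them separately. Estimate \eqref{Smalljumpestimatesa} on the circles is immediate from Lemma \ref{lemma: full expansion JR}: that lemma gives $w_R(\zeta) = J_R(\zeta)-I = \sum_{n=1}^{N}J_R^{(n)}(\zeta)s^{-n\rho} + \bigO(s^{-(N+1)\rho})$ uniformly for $\zeta \in \partial\Done \cup \partial\Dtwo$ and $\theta$ in compact subsets of $(0,1]$, and since $\partial\Done \cup \partial\Dtwo$ has fixed finite length $4\pi\delta$ (independent of $s$ and of $\theta$), the sup-norm bound yields $\|w_R - \sum_{n=1}^N J_R^{(n)}s^{-n\rho}\|_{L^p} \le (4\pi\delta)^{1/p}\|\cdots\|_{L^\infty} \le C' s^{-(N+1)\rho}$ for every $1\le p\le\infty$.

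For \eqref{Smalljumpestimatesb} and \eqref{Smalljumpestimatesc} I would first write $w_R$ explicitly on $\Gamma_R\setminus(\partial\Done\cup\partial\Dtwo)$. Using $P_+^\infty = P_-^\infty J_{P^\infty}$ together with the jump relations for $P^\infty$, $p$, $g$, $\mathcal G$ from Section \ref{Section: ClaeysGirSti}, one finds from \eqref{jumps for R}--\eqref{JSdef} that $w_R = e^{p_0\sigma_3}P_-^\infty\big(J_S-J_{P^\infty}\big)(P_+^\infty)^{-1}e^{-p_0\sigma_3}$ equals $e^{p_0\sigma_3}P^\infty$ times a strictly triangular matrix, times $(P^\infty)^{-1}e^{-p_0\sigma_3}$, whose only nonzero entry is $-\mathcal G(\zeta)e^{s^\rho(2g(\zeta)-ih(\zeta)+\ell)}$ on $\Sigma_1\cup\Sigma_2$, is $\mathcal G(\zeta)^{-1}e^{-s^\rho(2g(\zeta)-ih(\zeta)+\ell)}$ on $\Sigma_3\cup\Sigma_4$, and is $\mathcal G(\zeta)e^{-s^\rho(g_+(\zeta)-g_-(\zeta))}$ on $\Sigma_5$ (always outside the disks). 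The plan is then to bound the factors. On $\Gamma_R\setminus(\partial\Done\cup\partial\Dtwo)$ we have $\dist(\zeta,\{b_1,b_2\})\ge\delta$, so $\gamma(\zeta)^{\pm1}$ is bounded; combining this with Propositions \ref{Gprop} and \ref{pprop} where $s^\rho\zeta\to\infty$ (i.e.\ on $\cup_{i=1}^4\Sigma_i$ and on the part of $\Sigma_5$ away from $0$) and with the \emph{exact} representation \eqref{exactG} of $\ln\mathcal G$ and the definition \eqref{def of p} of $p$ near $\zeta=0$, one checks that $\|e^{p_0\sigma_3}\|$, $\|P^\infty(\zeta)^{\pm1}\|$, $|\mathcal G(\zeta)^{\pm1}|$ and $\|(P^\infty)'(\zeta)\|$ are all $\le C_K\,s^{c}(1+|\zeta|)^{c}$ for some $c>0$, uniformly for $\theta\in K$. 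On the other hand, the steepest descent analysis of \cite{ClaeysGirSti} supplies the sign inequalities $\re(2g-ih+\ell)\le-c<0$ on $\Sigma_1\cup\Sigma_2$ outside the disks, $\re(2g-ih+\ell)\ge c>0$ on $\Sigma_3\cup\Sigma_4$ outside the disks, and $\re(g_+-g_-)\ge c>0$ on $\Sigma_5$ outside the disks; moreover, since $h(\zeta)=-c_1\zeta\ln(i\zeta)-c_2\zeta\ln(-i\zeta)-c_3\zeta$ with $c_1+c_2>0$ and $g(\zeta)=\bigO(\zeta^{-1})$, one has $\pm\re(2g-ih+\ell)\le-c\ln(2+|\zeta|)-c$ along the four unbounded rays (the sign being dictated by $\sin(\arg\zeta)\gtrless0$). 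Multiplying the bounds, the exponential $e^{-cs^\rho(1+\ln(2+|\zeta|))}$ beats the prefactor $C_K s^c(1+|\zeta|)^c$ once $s$ is large, giving the pointwise estimate $(1+|\zeta|)^M|w_R(\zeta)|\le C'e^{-c's^\rho}(1+|\zeta|)^{-2}$ on $\Gamma_R\setminus$ disks. Integrating over $\Gamma_R\setminus$ disks — a bounded segment plus finitely many rays on which $(1+|\zeta|)^{-2}$ is $L^p$-integrable for all $p$ — yields \eqref{Smalljumpestimatesb}. For \eqref{Smalljumpestimatesc} I would differentiate the explicit expression for $w_R$ on $\cup_{i=1}^4\Sigma_i$: the derivative either hits $P^\infty$ (giving the bounded factor $(P^\infty)'$) or $\mathcal G^{\pm1}$ (a mild factor $\bigO(|\zeta|^{-1})$ by Proposition \ref{Gprop}) or the exponential, where $\partial_\zeta e^{s^\rho(2g-ih+\ell)}=s^\rho(2g'-ih')e^{s^\rho(2g-ih+\ell)}$ with $g'=\bigO(\zeta^{-2})$ and $h'(\zeta)=\bigO(\ln(2+|\zeta|))$; this costs at most an extra factor $Cs^\rho\ln(2+|\zeta|)$, again absorbed by the exponential, and the same bound follows in $L^\infty$.

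\textbf{Main obstacle.} The only genuinely delicate point is the uniformity with respect to $\theta$, since the sign inequalities above were proved in \cite{ClaeysGirSti} for a fixed $\theta$. I would upgrade them to uniform-in-$\theta$ statements by noting that $g$, $h$, $\ell$, $b_1$, $b_2$, $p_0$ depend real-analytically (in particular continuously) on $\theta$, that the contours $\Sigma_i=\Sigma_i(\theta)$ can be chosen to vary continuously with $\theta$ (already the opening angle $\phi=\phi(\theta)=\arcsin\frac{1-\theta}{1+\theta}$ does, and stays bounded away from $\tfrac\pi2$ for $\theta\in K$), and that the inequalities are strict on the relevant relatively closed subsets of the contours; the functions $(\zeta,\theta)\mapsto\re(2g-ih+\ell)$ and $(\zeta,\theta)\mapsto\re(g_+-g_-)$ then attain their extrema over compact sets that vary continuously with $\theta$, and those extrema have the required sign, producing a $\theta$-independent $c>0$. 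I expect the heaviest bookkeeping to be the portion of $\Sigma_5$ near the origin — which lies in $\Gamma_R\setminus$ disks but where $s^\rho\zeta$ need not be large, so Propositions \ref{Gprop} and \ref{pprop} do not apply and one must instead read the polynomial-in-$s$ bounds for $\mathcal G^{\pm1}$ and $p$ directly off \eqref{exactG} and \eqref{def of p}, checking that $x(\xi)$ and $y(\xi)$ from \eqref{def of x and y} remain in the sector $|\arg|<\pi-\epsilon$ for $\xi\in\Sigma_5$ so that the remainder estimate \eqref{remainderO} is available. Conceptually, however, the whole argument is the standard small-norm estimate, and once these uniformity issues are settled it is routine.
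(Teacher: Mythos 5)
Your proposal follows essentially the same route as the paper's proof: (a) is deduced exactly as you say from Lemma \ref{lemma: full expansion JR} and the compactness of $\partial\Done\cup\partial\Dtwo$, and (b) is obtained by conjugating out the parametrix to isolate the scalar entry, bounding $e^{2p}\mathcal{G}$ and the $Q^{\infty}$ factors polynomially via Propositions \ref{Gprop} and \ref{pprop}, and invoking the (uniform-in-$\theta$) sign estimates for $\re(2g-ih+\ell)$ and for $g_{+}-g_{-}$ coming from \cite[Lemma 3.1]{ClaeysGirSti}, including the same separate treatment of the portion of $\Sigma_5$ where $s^{\rho}\zeta$ is not large. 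The only (minor) divergence is (c): the paper extends the bounds of (b) to narrow open sectors around the rays and then uses analyticity of $J_R$ together with Cauchy's estimate, which avoids having to differentiate $\mathcal{G}$, $p$, $g$, $h$ explicitly, whereas your direct differentiation also works but requires those routine extra derivative bounds.
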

\begin{proof}
In this proof, $c'$ and $C'$ denote generic positive constant which may change within a computation. Since $\partial \Done \cup \partial \Dtwo$ is compact, the estimate \eqref{Smalljumpestimatesa} follows from Lemma \ref{lemma: full expansion JR}. 

Assume $\zeta \in (\Sigma_1 \cup \Sigma_2)\setminus (\Done \cup \Dtwo)$. By (\ref{def of Pinf}), (\ref{jumps for R}), and (\ref{JSdef}), we have 
\begin{align}\label{wRexpression}
	w_R(\zeta)=Q^{\infty}(\zeta) \begin{pmatrix}
	0 & -e^{2p(\zeta)}\mathcal{G}(\zeta) e^{s^{\rho}(2g(\zeta) - i h(\zeta) + \ell)} \\
	0 & 0
	\end{pmatrix}Q^{\infty}(\zeta)^{-1}.
\end{align}
We see from the expression (\ref{def of mathcalR}) for $\mathcal{R}(\zeta)$ that $|\re{\mathcal{R}(\zeta)}| = \bigO(\ln |\zeta|)$ as $\zeta \to \infty$
and hence 
$$|e^{\mathcal{R}(\zeta)/2}| = \bigO\big((1 + |\zeta|)^{C'}\big)$$
uniformly for $\zeta \in \Sigma_1\cup \Sigma_2$ and $\theta \in K$. It then follows from Propositions \ref{Gprop} and \ref{pprop} (see (\ref{expansion e^p mathcalG})) that 
\begin{align}\label{e2pcalGestimate}
\big|e^{2p(\zeta)}\mathcal{G}(\zeta)\big| = \bigO\big((1 + |\zeta|)^{C'}\big)
\end{align}
uniformly for $\zeta \in \Sigma_1\cup \Sigma_2$, $\theta \in K$, and $s \geq 1$.
Furthermore, a minor modification of the proof of \cite[Lemma 3.1]{ClaeysGirSti}\footnote{Note that there is a typo in the lemma: it should be $2g(\zeta) - i h(\zeta) + \ell$ instead of $2g(\zeta) - i h(\zeta) - \ell$. Existence of a constant $c'$ is clear from the proof of the lemma.} together with the fact that $h(\zeta) = \bigO(\zeta \ln \zeta)$ as $\zeta \to \infty$ yields
	\begin{align} \label{re2g-ih+l bounded}
	\re (2g(\zeta) - i h(\zeta) + \ell)< -c' |\zeta| <0, \qquad \zeta \in \Sigma_{1} \cup \Sigma_{2},
	\end{align}
	for some $c' >0$ for all $\theta \in K$. Equations (\ref{wRexpression}), (\ref{e2pcalGestimate}), and (\ref{re2g-ih+l bounded}) imply that, for any $M \geq 0$, 
\begin{align}\label{wRSigma12}
\sup_{\theta \in K}	\lVert (1+ |\zeta|)^M w_R \rVert_{L^p((\Sigma_{1}\cup\Sigma_{2}) \setminus (\Done \cup \Dtwo))} \le C'e^{-c's^\rho},
\end{align}
and a similar argument shows that
\begin{align}\label{wRSigma34}
\sup_{\theta \in K}	\lVert (1+ |\zeta|)^M w_R \rVert_{L^p((\Sigma_{3}\cup\Sigma_{4}) \setminus (\Done \cup \Dtwo))} \le C'e^{-c's^\rho}. 
\end{align}

Let now $\zeta \in \Sigma_5 \setminus (\Done \cup \Dtwo)$. Then, from (\ref{jumps for R}) and (\ref{JSdef}), we obtain
	\begin{align}
	J_{R}(\zeta) = e^{p_0 \sigma_3} P_-^\infty(\zeta)\begin{pmatrix}
	e^{-s^\rho(g_+(\zeta)-g_-(\zeta))}& -\mathcal{G}(\zeta) \\
	\mathcal{G}(\zeta)^{-1} & 0
	\end{pmatrix}P_+^\infty(\zeta)^{-1}e^{-p_0 \sigma_3}.
	\end{align}
	Using the jump relation of $P^\infty$, given in \cite[Eq. (3.47)]{ClaeysGirSti}, and (\ref{def of Pinf}) this becomes
	\begin{align*}
	J_{R}(\zeta) &=e^{p_0 \sigma_3} P_-^\infty(\zeta)\begin{pmatrix}
	e^{-s^\rho(g_+(\zeta)-g_-(\zeta))}&- \mathcal{G}(\zeta) \\
	\mathcal{G}(\zeta)^{-1} & 0
	\end{pmatrix}\begin{pmatrix}
	0& \mathcal{G}(\zeta) \\
	-\mathcal{G}(\zeta)^{-1} & 0
	\end{pmatrix}P_-^\infty(\zeta)^{-1}e^{-p_0 \sigma_3}
	\\
	&=e^{p_0 \sigma_3} P_-^\infty(\zeta)\begin{pmatrix}
	1&\mathcal{G}(\zeta)e^{-s^\rho(g_+(\zeta)-g_-(\zeta))}  \\
	0 & 1
	\end{pmatrix}P_-^\infty(\zeta)^{-1}e^{-p_0 \sigma_3}
	\\
	&=Q_{-}^{\infty}(\zeta) \begin{pmatrix}
	1& -e^{2p_-(\zeta)}\mathcal{G}(\zeta) e^{-s^\rho(g_+(\zeta)-g_-(\zeta))} \\
	0 & 1
	\end{pmatrix}Q_{-}^{\infty}(\zeta)^{-1}.
	\end{align*}
Note that $Q^{\infty}_{-}(\zeta)$ and $Q^{\infty}_{-}(\zeta)^{-1}$ are independent of $s$ and bounded from above and from below for $\zeta \in \Sigma_{5}\setminus (\Done \cup \Dtwo)$. Combining Proposition \ref{pprop} with \cite[Lemma 3.1]{ClaeysGirSti}, we have
\begin{align*}
|e^{2p_-(\zeta)}\mathcal{G}(\zeta) e^{-s^\rho(g_+(\zeta)-g_-(\zeta))}| \leq C' e^{-c' s^{\rho}} \qquad \mbox{for }\zeta \in \Sigma_{5} \mbox{ such that } s^{\rho}\zeta \geq M
\end{align*}
for a certain large constant $M$, uniformly for $\theta \in K$. For $\zeta \in \Sigma_{5}$ such that $s^{\rho}\zeta \leq M$, the same estimate still holds; this follows from \cite[Lemma 3.1]{ClaeysGirSti} together with the fact that 
\begin{align*}
e^{2p_-(\zeta)}\mathcal{G}(\zeta) = \bigO(1) \qquad \mbox{for }\zeta \in \Sigma_{5} \mbox{ such that } s^{\rho}\zeta \leq M.
\end{align*}
Therefore, we have
\begin{align*}
\sup_{\theta \in K}	\lVert w_R \rVert_{L^p(\Sigma_{5}\setminus (\Done \cup \Dtwo))} \le C'e^{-c's^\rho},
\end{align*}
which together with (\ref{wRSigma12}) and (\ref{wRSigma34}) finishes the proof of (\ref{Smalljumpestimatesb}).

The estimates (\ref{wRSigma12}) and (\ref{wRSigma34}) can clearly be extended to narrow open sectors containing the rays $\cup_{i=1}^4 \Sigma_i \setminus ( \Done \cup \Dtwo)$. The estimate (\ref{Smalljumpestimatesc}) then follows from the analyticity of the jump matrix $J_R$ and Cauchy's estimate.
\end{proof}

For the reader's convenience, we recall some well-known facts from the theory of singular integral operators. For a function $u \in L^2(\Gamma_R)$ we define the Cauchy integral $\mathcal{C}u$ by
\begin{align*}
\mathcal{C}u(\zeta) = \frac{1}{2\pi i} \int_{\Gamma_R} \frac{u(\xi)}{\xi-\zeta}d\xi,\qquad \zeta \in \C \backslash \Gamma_R,
\end{align*}
and we denote the non-tangential limits of $\mathcal{C}u$ from the left- and right-hand side of $\Gamma_R$ by $\mathcal{C}_{+} u$ and $\mathcal{C}_{-}u$, respectively. The Cauchy operator $\mathcal{C}_{w_R}:L^2(\Gamma_R) \to L^2(\Gamma_R)$ is defined by
\begin{align}
\mathcal{C}_{w_R}u=\mathcal{C}_-(w_R u).
\end{align}
This operator is bounded and linear and, assuming that $I-\mathcal{C}_{w_R}:L^2(\Gamma_R) \to L^2(\Gamma_R)$ is invertible, the solution of the RH problem for $R$ is given by (see e.g. \cite[Section 7]{Deiftetal})
\begin{align} \label{explicit solution of RH problem R}
R= I + \mathcal{C}(\mu_R w_R),
\end{align}
where
\begin{align}\label{muR in terms of CwR}
\mu_R=I+(I-\mathcal{C}_{w_R})^{-1}\mathcal{C}_{w_R}(I).
\end{align}
In particular, if $\mathcal{C}_{w_R}$ has sufficiently small $L^2$-operator norm, $I-\mathcal{C}_{w_R}$ can be inverted in terms of a Neumann series, that is,
\begin{align}\label{Neumann series}
(I-\mathcal{C}_{w_R})^{-1}= \sum_{n=0}^{\infty} \mathcal{C}_{w_R}^n.
\end{align}
Hence it follows from Lemma \ref{lemma: small norm jumps} and the estimate
\begin{align} 
\lVert \mathcal{C}_{w_R} \rVert_{L^2(\Gamma_R)\to L^2(\Gamma_R)} \le \lVert \mathcal{C_-} \rVert_{L^2(\Gamma_R)\to L^2(\Gamma_R)} \lVert w_R\rVert_{L^\infty(\Gamma_R)},
\end{align}
that $I-\mathcal{C}_{w_R}$ is invertible for all sufficiently large $s$. Here $\lVert \cdot \rVert_{L^2(\Gamma_R)\to L^2(\Gamma_R)}$ denotes the operator norm of bounded linear operators $L^2(\Gamma_R)\to L^2(\Gamma_R)$. 

The standard theory for asymptotics of small norm RH problems (see e.g. \cite{Deiftetal}) together with Lemma \ref{lemma: small norm jumps} implies
that $R$ satisfies (\ref{full expansion R}) and that this expansion can be differentiated with respect to $\zeta$. The basic idea here is to combine \eqref{explicit solution of RH problem R}--\eqref{Neumann series} and the expansion \eqref{full expansion JR} of the jump matrix. This immediately gives \eqref{full expansion R} uniformly for $\zeta$ bounded away from the contour $\Gamma_R$. For $\zeta$ close to $\Gamma_R$, one uses analyticity of the jump matrix in a neighborhood of $\Gamma_R$ to deform the contour in such a way that $\zeta$ is bounded away from the deformed contour.

Using the jump relation $R_+ = R_- J_R$, the left-hand side of (\ref{RplusRminusestimate}) can be written for $\zeta \in \cup_{i=1}^4 \Sigma_i \setminus (\Done \cup \Dtwo)$ as
$$J_R^{-1}(\zeta) R_{-}^{-1}(\zeta)R_{-}'(\zeta)J_R(\zeta) + J_R^{-1}(\zeta)J_R'(\zeta) 
- R_{-}^{-1}(\zeta)R_{-}'(\zeta).
$$
The estimate (\ref{RplusRminusestimate}) is then a consequence of the estimates (\ref{Smalljumpestimatesb}) and (\ref{Smalljumpestimatesc}) of $J_R(\zeta)$ and $J_R'(\zeta)$, as well as the expansions (\ref{full expansion R}) and
(\ref{dRdzetaexpansion}) of $R(\zeta)$ and $R'(\zeta)$.

\subsection{Explicit expression for $R^{(1)}(\zeta)$ }
We next derive the explicit expression (\ref{explicit R1(zeta)}) for the coefficient $R^{(1)}(\zeta)$. We have $R=I+\mathcal{C}(\mu_R w_R)$ and, by Lemma \ref{lemma: small norm jumps} and \eqref{muR in terms of CwR},
\begin{align*}
w_R(\zeta) = \frac{J_R^{(1)}(\zeta)}{s^\rho}+ \bigO(s^{-2\rho} (1+|\zeta|)^{-2}), \qquad  \mu_R(\zeta) = I+\bigO(s^{-\rho}),
\end{align*}
as $s\to +\infty$, where the error terms are uniform with respect to $\zeta \in \Gamma_R$ and $\theta$ in compact subsets of $(0,1]$. This implies
\begin{align}\label{R1JR1}
R^{(1)}(\zeta)= \mathcal{C}J_R^{(1)}(\zeta) 
= \frac{1}{2\pi i} \int_{\partial \Done \cup \partial \Dtwo} \frac{J_R^{(1)}(\xi)}{\xi-\zeta} d\xi,
\end{align}
where $\partial \Done$ and $\partial \Dtwo$ are oriented clockwise. From Lemma \ref{lemma: full expansion JR} and (\ref{asymptotics q}), $J_R^{(1)}$ is analytic on $(\overline{\Done} \cup \overline{\Dtwo})\setminus \{ b_1,b_2 \}$ with a double pole at each of the points $b_1$ and $b_2$. Furthermore, by (\ref{symmetry of JR n for all n}) we have $J_R^{(1)}(\zeta) = \overline{J_R^{(1)}(-\bar\zeta)}$ and hence
\begin{align}
\frac{1}{2\pi i} \int_{\partial \Done \cup \partial \Dtwo} \frac{J_R^{(1)}(\xi)}{\xi-\zeta} d\xi=\frac{1}{2\pi i} \int_{\partial \Done} \frac{J_R^{(1)}(\xi)}{\xi-\zeta} d\xi+\overline{\frac{1}{2\pi i}\int_{\partial \Done} \frac{J_R^{(1)}(\xi)}{\xi+\bar\zeta} d\xi}.
\end{align}
By Cauchy's formula, if $\zeta \notin \overline{\Done}$, we have
\begin{align}\label{intDoneJR1}
\frac{1}{2\pi i} \int_{\partial \Done} \frac{J_R^{(1)}(\xi)}{\xi-\zeta} d\xi &= \frac{A}{\zeta-b_1} + \frac{B}{(\zeta-b_1)^2},
\end{align}
where the matrices $A$ and $B$ are defined by
\begin{align}\label{ABdef}
A = \frac{d}{d\xi}\big( (\xi-b_1)^2J_R^{(1)}(\xi)\big)\big|_{\xi=b_1}, \qquad B = \lim_{\xi\to b_1}\big( (\xi-b_1)^2 J_R^{(1)}(\xi) \big),
\end{align}
so that
$$J_R^{(1)}(\xi) = \frac{B}{(\xi-b_1)^2} + \frac{A}{\xi - b_1} + \bigO(1), \qquad \xi \to b_1.$$
It follows from equations (\ref{R1JR1})--(\ref{intDoneJR1}) that $R^{(1)}(\zeta)$ satisfies (\ref{explicit R1(zeta)}) with $A$ and $B$ given by (\ref{ABdef}). 

We next show that the matrices $A$ and $B$ can be written as in (\ref{ABexplicit}).
Expanding \eqref{def of g''} in powers of $\sqrt{\zeta-b_1}$ and recalling the definition \eqref{def of q} of $q$, we obtain
\begin{align}\nonumber
q(\zeta) = & -\frac{2}{3} \frac{c_1+c_2}{\sqrt{2}} \frac{\sqrt{\re b_2}}{b_1} (\zeta-b_1)^{\frac{3}{2}}
	\\ \label{strong asymptotics of q}
& +\frac{(c_1 + c_2)(3i\im b_2 +\re b_2) }{30 \sqrt{2}\, b_1^2 \sqrt{\re b_2}}(\zeta-b_1)^{\frac{5}{2}}   +\bigO\big((\zeta-b_1)^{3}\big), \qquad \zeta \to b_1. 
\end{align}
Expansion of \eqref{def of mathcalR} gives
\begin{align}\label{asymptotics of R as zeta to b1}
\mathcal{R}(\zeta) =-\frac{\sqrt{2}}{\sqrt{\re b_2} } \bigg( i(c_5+c_6)+(c_5-c_6)\frac{b_2}{|b_2|} \bigg) \sqrt{\zeta -b_1} + \bigO\big( (\zeta -b_1)^{\frac{3}{2}} \big).
\end{align}
Substituting \eqref{strong asymptotics of q} and \eqref{asymptotics of R as zeta to b1} into \eqref{explicit JR1} a straightforward calculation shows that $A$ and $B$ can be written as in (\ref{ABexplicit}).

Finally, it follows from (\ref{explicit solution of RH problem R}) and Lemma \ref{lemma: small norm jumps} that the order in which the expansions in $s$ and $\zeta$ are computed is irrelevant for the evaluation of the coefficient $R_1^{(1)}$ defined in \eqref{def of R1^1}. Thus, from \eqref{full expansion R} and \eqref{explicit R1(zeta)}, we have 
$R_1^{(1)} = A-\bar{A}$ and a straightforward computation then gives the expression (\ref{R1122}) for $(R_1^{(1)})_{2,2}$.
This completes the proof of Proposition \ref{Rprop}.

\section{Proof of Theorem \ref{thm: all order expansion} and of the expression (\ref{little c in thm}) for $c$}\label{Section: little c}
In this section, we use the expansions of $p$ and $R$ derived in Sections \ref{psec} and \ref{Rsec} to prove Theorem \ref{thm: all order expansion} and to provide a first proof of the expression (\ref{little c in thm}) for the constant $c$.

\subsection{Proof of Theorem \ref{thm: all order expansion}.} 
Propositions \ref{pprop} and \ref{Rprop} yield expansions for $p_{1}(s)$ and $R_1(s)$ in negative powers of $s^\rho$ to all orders uniformly for $\theta$ in compact subsets of $(0,1]$. 
Indeed, since $p$ is analytic at $\zeta = \infty$, (\ref{def of expansion of p with p0 p1}) implies
$$p_1(s) = \frac{1}{2\pi i} \int_{|\zeta|=r} p(\zeta) d\zeta$$ 
where $r$ is any fixed large radius; substituting in \eqref{full expansion of p}, this gives the following extension of (\ref{p1 asymp from Claeys}) to all orders as $s \to + \infty$:
\begin{equation}\label{p1asymptoticsallorders}
p_1(s) = -ic_{5}|b_{2}| + i \frac{c_{5}+c_{6}}{2}(|b_{2}|-\im{b_{2}}) 
+ \sum_{n=1}^{N} \frac{\frac{1}{2\pi i} \int_{|\zeta|=r} \zeta^{-n}  \mathcal{A}_n(\zeta) d\zeta}{s^{n\rho}}
+ \bigO\bigg( \frac{1}{s^{(N+1)\rho}} \bigg).
\end{equation}
Similarly, by the definition (\ref{def of R1}) of $R_1(s)$ and the expansion (\ref{full expansion R}) of $R(\zeta)$,
$$R_1(s) = \lim_{r \to +\infty} \frac{1}{2\pi i} \int_{|\zeta|=r} R(\zeta) d\zeta
= \lim_{r \to +\infty} \bigg\{\frac{1}{2\pi i} \int_{|\zeta|=r} \sum_{n=1}^{N} \frac{R^{(n)}(\zeta)}{s^{n\rho}}d\zeta +  \int_{|\zeta|=r} \mathfrak{g}(s, \zeta) d\zeta\bigg\}$$ 
where the function $\mathfrak{g}$ obeys the bound $|\mathfrak{g}(\zeta, s)| \leq C' s^{-(N+1)\rho}(1+ |\zeta|)^{-1}$. 
The coefficients $R^{(n)}(\zeta)$ are analytic at $\zeta = \infty$ by Proposition \ref{Rprop}.
Hence
$$R_1(s) = \sum_{n=1}^{N} \frac{R_1^{(n)}}{s^{n\rho}} 
+ \bigO(s^{-(N+1)\rho}), \qquad s \to +\infty,$$ 
where $R_1^{(n)}$ denotes the coefficient of $\zeta^{-1}$ in the large $\zeta$ expansion of $R^{(n)}(\zeta)$, and we have used that
$$\bigg|\lim_{r \to +\infty} \int_{|\zeta|=r} \mathfrak{g}(s, \zeta) d\zeta\bigg|
\leq \limsup_{r \to +\infty} 2\pi r C' s^{-(N+1)\rho}(1+ r)^{-1} = 2\pi C' s^{-(N+1)\rho}.$$

Since \eqref{diff identity in s} expresses $\partial_{s} \ln \det \big(  1- \mathbb{K}\big|_{[0,s]}\big)$ identically in terms of $p_{1}(s)$ and $(R_1(s))_{2,2}$, we deduce the existence of an asymptotic expansion to all orders of $\det (1- \left.\mathbb{K}\right|_{[0,s]})$ as $s \to +\infty$ for each $\theta \in (0,1]$. This proves Theorem \ref{thm: all order expansion}.

\subsection{Proof of the expression (\ref{little c in thm}) for $c$}
Comparing (\ref{p1 asymp from Claeys}) and (\ref{p1asymptoticsallorders}), we see that 
$$\mathcal{K} = \frac{1}{2\pi i} \int_{|\zeta|=r} \frac{\mathcal{A}_{1}(\zeta)}{\zeta} d\zeta,$$
where $r > 0$ is any large radius, i.e., $\mathcal{K}$ is the term of order $1$ in the large $\zeta$ expansion of the function $\mathcal{A}_{1}(\zeta)$ defined in (\ref{def of mathcalA1}).
A direct computation shows that
\begin{align*}
\mathcal{A}_{1}(\zeta)=\frac{c_8- \frac{3\alpha^2-1}{12}}{2|b_2|}\zeta + \frac{i}{2}\bigg\{c_8- \bigg(c_8- \frac{3\alpha^2-1}{12}\bigg)\frac{\im b_2 }{|b_2|}\bigg\} + O(\zeta^{-1}), \qquad \zeta \to \infty,
\end{align*}
and therefore
\begin{align}\label{calKexpression}
\mathcal{K} = \frac{i}{2}\bigg\{ c_8- \bigg(c_8- \frac{3\alpha^2-1}{12}\bigg)\frac{\im b_2 }{|b_2|}\bigg\}.
\end{align}

Substituting the expressions (\ref{R1122}) and (\ref{calKexpression}) for $(R_1^{(1)})_{2,2}$ and $\mathcal{K}$ into \eqref{expression for c in terms of mathcal K and RIp1p intro} and recalling the definition \eqref{def of c1...c8} of the constants $\{c_j\}_1^8$, we obtain the expression \eqref{little c in thm} for $c$.

\begin{remark}\upshape
The above evaluation of the constant $c$ is based on the differential identity (\ref{diff identity in s}) in $s$. In Section \ref{section: integration in theta}, we will obtain an independent second proof of (\ref{little c in thm}) by using a differential identity in $\theta$.
\end{remark}

\begin{remark}[The constant $c$ for two other models]\upshape\label{c1c2remark}
Our approach to obtain the constant $c$ presented in Sections \ref{psec} and \ref{Rsec} is based on the differential identity in $s$ derived in \cite{ClaeysGirSti}. Hence, it also applies to two other random matrix models studied in \cite{ClaeysGirSti}. 
The first model consists of random matrices of the form
\begin{align*}
M^{(1)}= (G_r \ldots G_1)^\ast G_r\ldots G_1,
\end{align*}
where ${}^*$ denotes the complex conjugate transpose operator, and each $G_j$ is an independent $(n+\nu_j)\times(n+\nu_{j-1})$ complex Ginibre matrix, with integers $r\ge1$, $\nu_0=0$, and $\nu_j\ge0$, $j=1,\ldots,r$. The second model consists of products of the form
\begin{align*}
M^{(2)}=(T_r\ldots T_1)^\ast T_r\ldots T_1,
\end{align*}
where each $T_{j}$ is an $(n+\nu_j)\times(n+\nu_{j-1})$ upper left truncation of an $\ell_j \times \ell_j$ Haar distributed unitary matrix $U_j$. Here $U_1,\ldots,U_r$ are assumed to be independent and $\nu_0=0$, $r\ge1$, and $\nu_j\ge0$, $j=1,\ldots,r$, are integers. Furthermore, it is assumed that $\ell_j \ge n+\nu_j+1$ and $\sum_{j=1}^{r}(\ell_j-n-\nu_j)\ge n$. In the second model, a subset $J\subset \{2,\ldots,r\}$ of cardinality $q<r$ is fixed such that $\mu_j:=\ell_{k_j}-n>\nu_j$ for $k_j \in  J$ and $\ell_k -n \to +\infty$ for $k\in \{ 1,\ldots r \} \setminus J$ as $n$ and $\ell_1,\ldots,\ell_r$ go to infinity. In \cite{ClaeysGirSti}, it is shown that these two models admit large gap asymptotics for the eigenvalues of the form
\begin{align*}
\mathbb{P}^{(j)}(\mbox{gap on } [0,s]) = C^{(j)} \exp \left( -a^{(j)} s^{2\rho^{(j)}} + b^{(j)} s^{\rho} + c^{(j)} \ln s \right) (1 + o(1)) \qquad  \mbox{as }s \to + \infty,
\end{align*}
where the first and second model corresponds to $j=1$ and $j=2$, respectively. Moreover, explicit expressions are derived for the constants $\rho^{(j)}$, $a^{(j)}$, and $b^{(j)}$. 

A straightforward modification of our approach yields the existence of constants $C_1^{(j)},\ldots,C_N^{(j)} \in \mathbb{R}$ such that
\begin{equation}
\mathbb{P}^{(j)}(\mbox{gap on } [0,s])  = C^{(j)} \exp \Big(  -a^{(j)} s^{2\rho^{(j)}}+b^{(j)} s^{\rho}+c^{(j)} \ln s + \sum_{j=1}^{N}C_j^{(j)} s^{-j\rho} + \bigO\big(s^{-(N+1)\rho}\big) \Big),
\end{equation}
as $s\to+\infty$ for $j=1,2$, and shows that the constants $c^{(1)}$ and $c^{(2)}$ are given explicitly by
\begin{align}
c^{(1)}&= \frac{r-1}{12(r+1)} - \frac{1}{2(r+1)}  \sum_{j=1}^{r} \nu_j^2, \label{expr for c1 first pp}
\\
c^{(2)}&= \frac{r-q-1}{12(r-q+1)}- \frac{1}{2(r-q+1)} \bigg( \sum_{j=1}^r \nu_j^2 - \sum_{j=1}^{q} \mu_j^2 \bigg). \nonumber
\end{align}
Let $\mathbb{K}^{(1)}$ be the hard edge limiting kernel for the eigenvalues associated to the first model presented above (this is the same notation as in \cite{ClaeysGirSti}). For certain particular choices of the parameters $\nu_{1}, \ldots,\nu_{r}$ and $\theta$, the kernel $\mathbb{K}^{(1)}$ defines the same point process (up to rescaling) as the one associated to $\mathbb{K}$\footnote{$\mathbb{K}$ in the present paper is denoted by $\mathbb{K}^{(3)}$ in \cite{ClaeysGirSti}.}--this is a result of Kuijlaars and Stivigny, see \cite[Theorem 5.1]{KuijSti2014}. More precisely, if $r \geq 1$ is an integer, $\alpha > -1$ and
\begin{align}
& \theta = \frac{1}{r}, & & \nu_{j} = \alpha + \frac{j-1}{r}, \qquad j = 1,\ldots,r, \label{cond parameters for relation first and third kernel}
\end{align}
then the kernels $\mathbb{K}^{(3)}$ and $\mathbb{K}$ are related by
\begin{align*}
\left( \frac{x}{y} \right)^{\alpha}\mathbb{K}^{(1)}(x,y) = r^{r} \mathbb{K}^{(3)}(r^{r}x,r^{r}y).
\end{align*}
Therefore, if the parameters satisfy \eqref{cond parameters for relation first and third kernel}, we obtain the following relations:\footnote{The quantities $\rho$, $a$, $b$, $c$ and $C$ in the present paper are denoted by $\rho^{(3)}$, $a^{(3)}$, $b^{(3)}$,  $c^{(3)}$ and $C^{(3)}$ in \cite{ClaeysGirSti}.}
\begin{align}
& \rho^{(1)} = \rho, \qquad a^{(1)} = a r^{2r\rho}, \qquad b^{(1)} = b r^{r\rho}, \label{rho a b first and third kernel} \\
& c^{(1)} = c, \qquad C^{(1)} = r^{rc}C.
\end{align}
The three relations in \eqref{rho a b first and third kernel} can be verified from \cite{ClaeysGirSti}, and the relation $c^{(1)} = c$ can be verified directly from \eqref{little c in thm} and \eqref{expr for c1 first pp}. This provides a non-trivial consistency check of the results from \cite{ClaeysGirSti} and of our result for $c$ and $c^{(1)}$.
\end{remark}

\section{Differential identity in $\theta$}\label{Section: diff identity in theta}
In this section, we derive an identity for the derivative of $\ln \det(1-\mathbb{K}|_{[0,s]})$ with respect to $\theta$. As explained in Section \ref{outlinesubsec}, this differential identity is needed for the derivation of the expression (\ref{big C in thm}) for $C$. Our proof of Lemma \ref{lemma diff theta first 1} below is inspired by the derivation of the differential identity \eqref{diff identity in s} given in \cite{ClaeysGirSti}.

\begin{lemma}[Differential identity in $\theta$, $1$st version]\label{lemma diff theta first 1}
For every $\alpha > -1$, $\theta > 0$ and $s>0$, the following identity holds:
\begin{align}\nonumber
& \partial_{\theta} \ln \det \Big( \left. 1-\mathbb{K} \right|_{[0,s]} \Big) 
=  \frac{1}{2} \int_{\gamma \cup \widetilde{\gamma}} \partial_{\theta} \ln \Gamma \Big( \frac{\frac{\alpha}{2}+1-z}{\theta} \Big) \tr[Y_{+}^{-1}(z)Y_{+}^{\prime}(z)\sigma_{3} - Y_{-}^{-1}(z)Y_{-}^{\prime}(z)\sigma_{3}] \frac{dz}{2\pi i}  
	\\ \label{first diff identity lemma} 
& =  \frac{-1}{2\theta^{2}} \int_{\gamma \cup \widetilde{\gamma}} \Big(\frac{\alpha}{2}+1-z\Big) \psi \Big( \frac{\frac{\alpha}{2}+1-z}{\theta} \Big) \tr[Y_{+}^{-1}(z)Y_{+}^{\prime}(z)\sigma_{3} - Y_{-}^{-1}(z)Y_{-}^{\prime}(z)\sigma_{3}] \frac{dz}{2\pi i}. 
\end{align}
where $\psi = (\ln \Gamma)^{\prime}$ is the di-gamma function.
\end{lemma}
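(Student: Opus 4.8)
The plan is to adapt the Its--Izergin--Korepin--Slavnov (IIKS) machinery---exactly as it was used in \cite{ClaeysGirSti} to derive the differential identity in $s$---but now differentiating the integrable kernel $\mathbb{M}_s$ with respect to $\theta$ rather than $s$. The starting point is the identity $\det(1-\mathbb{K}|_{[0,s]}) = \det(1-\mathbb{M}_s)$ from \eqref{det integrable}, where $\mathbb{M}_s$ is the $2\times 2$ integrable kernel in \eqref{integrable kernel}. Writing $\mathbb{M}_s(u,v) = \mathbf{f}(u)^T\mathbf{g}(v)/(u-v)$ with $\mathbf{f},\mathbf{g}$ as in \eqref{integrable kernel}, the general IIKS formula for the logarithmic derivative of a Fredholm determinant with respect to a parameter $\nu$ (here $\nu = \theta$) on which $\mathbf{f}$ and $\mathbf{g}$ depend reads
\begin{align*}
\partial_{\theta}\ln\det(1-\mathbb{M}_s) = \int_{\gamma\cup\tilde\gamma} \tr\!\Big( \big(\partial_\theta \mathbf{F}(z)\big)\big(\mathbf{G}(z)\big)^T \Big)\,\frac{dz}{2\pi i} \, ,
\end{align*}
where $\mathbf{F}(z) = Y_+(z)\mathbf{f}(z)$ and $\mathbf{G}(z)$ is the corresponding resolvent-dressed version of $\mathbf{g}$, and $Y$ is the solution of the $2\times2$ RH problem stated in Section \ref{Section: ClaeysGirSti}. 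I would take these facts from \cite[Section 5.1]{Bertola} and \cite{BertolaCafasso}, as the paper itself indicates.

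The concrete computation then proceeds as follows. First I would observe that the only $\theta$-dependence in $\mathbf{f},\mathbf{g}$ sits inside $F(z) = \Gamma(z+\tfrac\alpha2)/\Gamma\big(\tfrac{\tfrac\alpha2+1-z}{\theta}\big)$, and more precisely only through the denominator $\Gamma\big(\tfrac{\tfrac\alpha2+1-z}{\theta}\big)$. Hence $\partial_\theta \ln F(z) = -\partial_\theta \ln\Gamma\big(\tfrac{\tfrac\alpha2+1-z}{\theta}\big)$, and the IIKS trace collapses to a multiple of $\partial_\theta \ln F(z)$ times the combination $\tr[Y_+^{-1}Y_+'\sigma_3 - Y_-^{-1}Y_-'\sigma_3]$ evaluated on the jump contour $\gamma\cup\tilde\gamma$; the $\sigma_3$ appears because differentiating $\ln F$ enters the jump matrices of $Y$ (which are triangular with $\pm s^{\mp z}F(z)^{\pm1}$ off-diagonal entries) in a way that is tracked by conjugation by $\sigma_3$, exactly paralleling how $\partial_s$ produced a $\sigma_3$-structure in \eqref{diff identity in s}. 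Carrying the bookkeeping through---using the jump relations in item (b) of the RH problem for $Y$ and the cyclicity of the trace---yields the first displayed equality in \eqref{first diff identity lemma}, namely
\[
\partial_\theta \ln\det(1-\mathbb{K}|_{[0,s]}) = \frac{1}{2}\int_{\gamma\cup\tilde\gamma} \partial_\theta \ln\Gamma\Big(\tfrac{\tfrac\alpha2+1-z}{\theta}\Big)\,\tr[Y_+^{-1}Y_+'\sigma_3 - Y_-^{-1}Y_-'\sigma_3]\,\frac{dz}{2\pi i}.
\]
The factor $\tfrac12$ and the precise sign I would fix by a careful comparison with the $s$-derivative derivation in \cite[Section 4]{ClaeysGirSti}, where the analogous constant was pinned down.

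The second equality in \eqref{first diff identity lemma} is then a direct chain-rule computation: with $w := \tfrac{\tfrac\alpha2+1-z}{\theta}$ we have $\partial_\theta w = -w/\theta = -\tfrac{\tfrac\alpha2+1-z}{\theta^2}$, so
\[
\partial_\theta \ln\Gamma(w) = \psi(w)\,\partial_\theta w = -\frac{1}{\theta^2}\Big(\tfrac\alpha2+1-z\Big)\psi\Big(\tfrac{\tfrac\alpha2+1-z}{\theta}\Big),
\]
and substituting this in gives the stated second line. One should also check that the contour integral converges: on $\gamma$ the argument $z\to\infty$ in the left half-plane and on $\tilde\gamma$ in the right half-plane, and combined with the asymptotic decay of $\tr[Y_+^{-1}Y_+'\sigma_3 - Y_-^{-1}Y_-'\sigma_3]$ (which vanishes rapidly since the jumps of $Y$ decay) the integrand is integrable despite the polynomial growth of $(\tfrac\alpha2+1-z)\psi(\cdot)$. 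The main obstacle, and the step requiring the most care, is the rigorous justification of the IIKS differentiation formula in this setting---specifically, verifying that $\mathbb{M}_s$ is trace-class with the required smooth $\theta$-dependence, that $1-\mathbb{M}_s$ is invertible so that the resolvent kernel exists, and that one may legitimately interchange $\partial_\theta$ with the (infinite-dimensional) determinant and with the contour integrals; here I would lean on the trace-class and regularity estimates already established in \cite{ClaeysGirSti, Bertola, BertolaCafasso} for the same kernel. A secondary subtlety, flagged in the paper's outline, is that the presence of the digamma function $\psi$ (with its infinitely many poles to the left of $\gamma$ and right of $\tilde\gamma$) means the integrand cannot be simplified by naive residue/contour-collapse arguments, but for the mere statement of the identity this is not yet an issue---it only becomes one in the later asymptotic analysis.
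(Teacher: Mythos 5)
Your overall route is the same as the paper's: apply the Bertola--Cafasso/IIKS deformation formula to $\det(1-\mathbb{M}_s)$ with $\theta$ as the deformation parameter, note that the only $\theta$-dependence of the jump data sits in $\Gamma\big(\tfrac{\frac{\alpha}{2}+1-z}{\theta}\big)$, and finish with the chain rule; that last step, giving the second equality in \eqref{first diff identity lemma}, is correct as you wrote it, and your remarks on trace-class/invertibility issues are reasonable but secondary.

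The genuine gap is in the middle: you never derive the symmetric combination $\tr[Y_{+}^{-1}Y_{+}'\sigma_{3}-Y_{-}^{-1}Y_{-}'\sigma_{3}]$ or its prefactor $\tfrac12$; you assert that "the IIKS trace collapses" to it and that the constant and sign can be "fixed by comparison with the $s$-derivative derivation". That comparison cannot settle the matter, because the $s$-identity \eqref{diff identity in s} is ultimately expressed through the residue at infinity, $(Y_1)_{2,2}$, not through a contour integral of a $\pm$-symmetric trace, so there is no analogous constant to import; the $\tfrac12$ here is not a universal normalization but the outcome of a specific symmetrization. Concretely, the missing content is the following algebra. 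Starting from $\partial_{\theta}\ln\det\big(1-\mathbb{K}|_{[0,s]}\big)=\int_{\gamma\cup\tilde\gamma}\tr\big[Y_{-}^{-1}Y_{-}'\,\partial_{\theta}J\,J^{-1}\big]\tfrac{dz}{2\pi i}$ with $J=Y_{-}^{-1}Y_{+}$, one computes $\partial_{\theta}J\,J^{-1}=\partial_{\theta}\ln\Gamma\big(\tfrac{\frac{\alpha}{2}+1-z}{\theta}\big)\,(J-I)\sigma_{3}$, and then uses that $J$ is triangular with unit diagonal (so $J-I$ is strictly off-diagonal and $\tr[J^{-1}J'\sigma_3]=0$) to get $J\sigma_{3}J^{-1}=(2J-I)\sigma_{3}$, hence $\tr[Y_{+}^{-1}Y_{+}'\sigma_{3}]=2\tr[Y_{-}^{-1}Y_{-}'J\sigma_{3}]-\tr[Y_{-}^{-1}Y_{-}'\sigma_{3}]$ and $\tr[JY_{+}^{-1}Y_{+}'\sigma_{3}]=\tr[Y_{-}^{-1}Y_{-}'J\sigma_{3}]$; combining these gives $\tr[Y_{-}^{-1}Y_{-}'(J-I)\sigma_{3}]=\tfrac12\big(\tr[Y_{+}^{-1}Y_{+}'\sigma_{3}]-\tr[Y_{-}^{-1}Y_{-}'\sigma_{3}]\big)$, which is exactly where both the factor $\tfrac12$ and the $\pm$ structure of the lemma come from. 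Without this computation (or its equivalent in your $\mathbf{F},\mathbf{G}$ formulation), the first equality of \eqref{first diff identity lemma} is not established.
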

\begin{proof}
From \cite[Theorem 2.1]{BertolaCafasso} and \cite[Eq. (2.20)]{ClaeysGirSti}, letting $\theta$ play the role of the deformation parameter, we deduce that
\begin{equation}\label{first diff id}
\partial_{\theta} \ln \det \Big( \left. 1-\mathbb{K} \right|_{[0,s]} \Big) = \int_{\gamma \cup \widetilde{\gamma}} \tr[Y_{-}^{-1}(z)Y_{-}^{\prime}(z)\partial_{\theta}J(z)J^{-1}(z)] \frac{dz}{2\pi i},
\end{equation}
where $J(z) := Y_{-}^{-1}(z)Y_{+}(z)$, i.e.,
\begin{align*}
J(z) = \left\{ \begin{array}{l l}
\ds \begin{pmatrix}
1 & -s^{-z}F(z) \\
0 & 1
\end{pmatrix}, & z \in \gamma, \\[0.3cm]
\ds \begin{pmatrix}
1 & 0 \\
s^{z}F(z)^{-1} & 1
\end{pmatrix}, & z \in \tilde{\gamma}.
\end{array} \right.
\end{align*}
Therefore, we obtain
\begin{equation*}
\partial_{\theta}J(z)J(z)^{-1} = \partial_{\theta} \ln \Gamma \Big( \frac{\frac{\alpha}{2}+1-z}{\theta} \Big) (J(z)-I)\sigma_{3},
\end{equation*}
from which it follows that
\begin{equation}\label{lol2}
\partial_{\theta} \ln \det \Big( \left. 1-\mathbb{K} \right|_{[0,s]} \Big) = \int_{\gamma \cup \widetilde{\gamma}} \partial_{\theta} \ln \Gamma \Big( \frac{\frac{\alpha}{2}+1-z}{\theta} \Big) \tr[Y_{-}^{-1}Y_{-}^{\prime}(J-I)\sigma_{3}] \frac{dz}{2\pi i}.
\end{equation}
Since $J$ is triangular and $J-I$ is off-diagonal, using also the jump relations for $Y$, we infer that
\begin{equation*}
J\sigma_{3}J^{-1} = (2J-I)\sigma_{3},
\end{equation*}
from which we obtain
\begin{equation}\label{observation 1 inside lemma}
\tr[Y_{+}^{-1}Y_{+}^{\prime}\sigma_{3}] = 2 \tr[Y_{-}^{-1}Y_{-}^{\prime}J\sigma_{3}]-\tr[Y_{-}^{-1}Y_{-}^{\prime}\sigma_{3}].
\end{equation}
A similar computation yields
\begin{equation}\label{observation 2 inside lemma}
\tr[JY_{+}^{-1}Y_{+}^{\prime}\sigma_{3}] = \tr[Y_{-}^{-1}Y_{-}^{\prime}J\sigma_{3}].
\end{equation}
By substituting \eqref{observation 2 inside lemma} in \eqref{observation 1 inside lemma}, we obtain
\begin{equation}\label{observation 3 inside lemma}
\tr[(I-J)Y_{+}^{-1}Y_{+}^{\prime}\sigma_{3}] = \tr[Y_{-}^{-1}Y_{-}^{\prime}(J-I)\sigma_{3}].
\end{equation}
Using \eqref{observation 2 inside lemma} and \eqref{observation 3 inside lemma}, we arrive at
\begin{align*}
\tr[Y_{-}^{-1}Y_{-}^{\prime}(J-I)\sigma_{3}] 
& = \ds \frac{1}{2} \Big( \tr[(I-J)Y_{+}^{-1}Y_{+}^{\prime}\sigma_{3}] + \tr[Y_{-}^{-1}Y_{-}^{\prime}(J-I)\sigma_{3}] \Big) 
	\\
& = \ds \frac{1}{2} \Big( \tr[Y_{+}^{-1}Y_{+}^{\prime}\sigma_{3}] - \tr[Y_{-}^{-1}Y_{-}^{\prime}\sigma_{3}] \Big).
\end{align*}
Substitution of the above identity into \eqref{lol2} finishes the proof.
\end{proof}
In the following lemma, we rewrite the differential identity \eqref{first diff identity lemma} in a form which is more convenient for the asymptotic analysis. Let us define the sequence $\{\zeta_j\}_0^\infty \subset i\R$ by 
\begin{align}\label{poles of digamma}
\zeta_j = -i\frac{\frac{1 +\alpha}{2} + j\theta}{s^\rho}, \qquad j = 0,1,2, \dots,
\end{align}
and the meromorphic function $H(\zeta)$ by
\begin{align}\label{def of H}
H(\zeta) = \frac{1}{\theta^{2}} \left( \frac{1+\alpha}{2}- i s^{\rho}\zeta \right) \psi \bigg(\frac{\frac{1+\alpha}{2}-is^{\rho}\zeta}{\theta} \bigg).
\end{align}
Note that $H$ has a simple pole at each of the points $\zeta_j$, $j = 1,2, \dots$, and no other poles in $\C$; the point $\zeta_0$ is a simple pole of $\psi(\frac{\frac{1+\alpha}{2}-is^{\rho}\zeta}{\theta})$ but not of $H$.

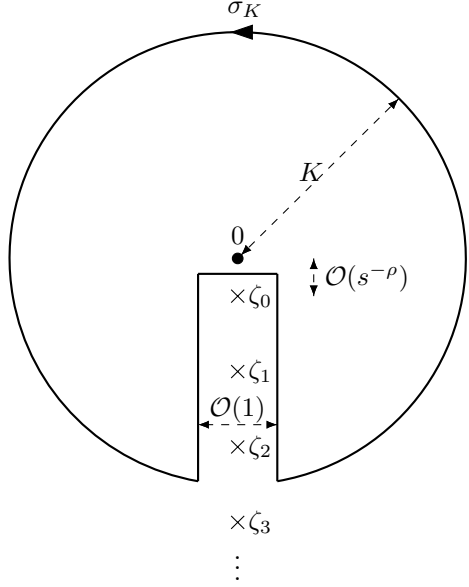
\begin{figure}
\begin{center}
\begin{tikzpicture}
\draw[fill] (0,0) circle (0.07cm);
\node at (0.1,3.3) {$\sigma_K$};

\node at (0,0.3) {$0$};

\node at (0.3,-.5) {$\zeta_0$};
\node at (0.3,-1.5) {$\zeta_1$};
\node at (0.3,-2.5) {$\zeta_2$};
\node at (0.3,-3.5) {$\zeta_3$};

\draw (0,-0.5) node[cross=0.1cm,rotate=0] {};

\draw (0,-1.5) node[cross=0.1cm,rotate=0] {};

\draw (0,-2.5) node[cross=0.1cm,rotate=0] {};

\draw (0,-3.5) node[cross=0.1cm,rotate=0] {};

\node at (0,-4) {$\vdots$};

\draw[thick,black] ([shift=(-80:3cm)]0,0) arc (-80:260:3);

\draw[thick] (-80:3)--($(-80:3)+(0,2.75)$);
\draw[thick] (-100:3)--($(-100:3)+(0,2.75)$);
\draw[thick] ($(-80:3)+(0,2.75)$)--($(-100:3)+(0,2.75)$);

\draw[dashed,-<-=0.02,->-=1] (0,0) -- (45:3);
\node at (50:1.5) {$K$};

\draw[dashed,-<-=0.0,->-=1] ($(-80:3)+(0,.75)$)--($(-100:3)+(0,.75)$);
\node at (-90:2) {$\bigO(1)$};

\draw[dashed,-<-=0.0,->-=1] ($(1,0)$)--($(1,-0.5)$);
\node at (1.7,-0.25) {$\bigO(s^{-\rho})$};


\draw[black,arrows={-Triangle[length=0.3cm,width=0.2cm]}]
($(-0.1,0)+(90:3)$) --  ++(-0.0001,0);

\end{tikzpicture}
\end{center}
\caption{The contour $\sigma_{K}$ and the poles $\{\zeta_{j}\}_0^\infty$ of $\psi(\frac{\frac{1+\alpha}{2}-is^{\rho}\zeta}{\theta})$ in the complex $\zeta$-plane. The uppermost pole $\zeta_0$ lies a distance $\bigO(s^{-\rho})$ from the origin as $s \to + \infty$. The horizontal line segment has a length of order $\bigO(1)$ as $s \to +\infty$ and crosses the imaginary axis half-way between the origin and $\zeta_0$. 
\label{fig: sigmaK}}
\end{figure}

Given $K > |b_1|$, we let $\sigma_K$ denote the closed $s$-dependent counterclockwise contour displayed in Figure \ref{fig: sigmaK}. The contour $\sigma_{K}$ surrounds $\Sigma_{5}$ once in the positive direction, but does not surround any of the poles $\zeta_{j}$ of $H$. The circular part of $\sigma_{K}$ has radius $K$ and its horizontal part has a length of order $\bigO(1)$ as $s \to +\infty$ and crosses the imaginary axis at the point $\zeta_0/2$. 
If $K = 2|b_1|$, we write $\sigma$ for $\sigma_K$, i.e., $\sigma = \sigma_{2|b_1|}$.
We also define the contour $\widetilde{\Sigma}_{K}$ as the union of the parts exterior to $\sigma_K$ of the rays $\{\Sigma_{i}\}_1^4$ defined in (\ref{Sigmaidef}), i.e., 
\begin{align*}
\widetilde{\Sigma}_{K} = \bigcup_{i=1}^{4} \Sigma_{i} \setminus \{|\zeta| \leq K\}.
\end{align*}

\begin{lemma}[Differential identity in $\theta$, $2$nd version]\label{lemma: simplified diff dientity}
Let $K$ be such that $K > 2|b_{1}|$. Then
\begin{align}
& \partial_{\theta} \ln \det \Big( \left. 1-\mathbb{K} \right|_{[0,s]} \Big) = I_{1} + I_{2} + I_{3,K} + I_{4,K}, \label{diff identity simplified}
\end{align}
where
\begin{align}\label{I1def}
I_{1} & = s^{\rho} \int_{\sigma} H(\zeta) g'(\zeta) \frac{d\zeta}{2\pi i}, 
	\\\label{I2def}
I_{2} & = \frac{1}{2} \int_{\sigma} H(\zeta) \tr \Big[ P^{\infty}(\zeta)^{-1}P^{\infty}(\zeta)' \sigma_{3} \Big] \frac{d\zeta}{2\pi i},  
	\\\label{I3Kdef}
I_{3,K} & = \frac{1}{2} \int_{\sigma_{K}} H(\zeta) \tr \Big[P^{\infty}(\zeta)^{-1}e^{-p_{0}\sigma_{3}}R^{-1}(\zeta)R'(\zeta)e^{p_{0}\sigma_{3}}P^{\infty}(\zeta)\sigma_{3}\Big] \frac{d\zeta}{2\pi i},  
	\\\label{I4Kdef}
I_{4,K} & = -\frac{1}{2} \int_{\widetilde{\Sigma}_{K}} H(\zeta) \tr \Big[P^{\infty}(\zeta)^{-1}e^{-p_{0}\sigma_{3}}\Big(R_{+}^{-1}(\zeta)R_{+}'(\zeta)-R_{-}^{-1}(\zeta)R_{-}'(\zeta)\Big)e^{p_{0}\sigma_{3}}P^{\infty}(\zeta)\sigma_{3}\Big]. 
\end{align} 
\end{lemma}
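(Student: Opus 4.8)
The starting point is the first version of the identity, namely Lemma 4.5, which expresses $\partial_\theta \ln\det(1-\mathbb{K}|_{[0,s]})$ as
$$-\frac{1}{2\theta^2}\int_{\gamma\cup\widetilde\gamma}\Big(\tfrac\alpha2+1-z\Big)\psi\Big(\tfrac{\frac\alpha2+1-z}{\theta}\Big)\tr\big[Y_+^{-1}Y_+'\sigma_3-Y_-^{-1}Y_-'\sigma_3\big]\frac{dz}{2\pi i}.$$
First I would perform the change of variables $z=is^\rho\zeta+\tfrac12$ from \eqref{zetadef}. Under this substitution the integrand's scalar prefactor becomes exactly $H(\zeta)$ as defined in \eqref{def of H} (up to the Jacobian $dz = is^\rho\,d\zeta$), and the contours $\gamma,\widetilde\gamma$ transform into $\gamma_U,\widetilde\gamma_U$. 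Using the first transformation $Y\mapsto U$ from \eqref{def of U}, which is a constant conjugation by $s^{\sigma_3/4}$, the trace of the logarithmic-derivative jump is unchanged, so the identity now reads as an integral over $\gamma_U\cup\widetilde\gamma_U$ of $H(\zeta)$ times $\tr[U_+^{-1}U_+'\sigma_3 - U_-^{-1}U_-'\sigma_3]$ against $\frac{d\zeta}{2\pi i}$ (the extra $is^\rho$ is absorbed; one must track this factor carefully — it is the source of the $s^\rho$ in $I_1$).

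The heart of the argument is to propagate this through the remaining transformations $U\mapsto T\mapsto S\mapsto R$ and to deform the contour. The transformations $U\mapsto T$ (analytic continuation), $T\mapsto S$ (conjugation by the $g$-function via $e^{-s^\rho g(\zeta)\sigma_3}$ up to constant factors), and $S\mapsto R$ (replacing $S$ by $R$ times $P$ or $P^\infty$, conjugated by $e^{p_0\sigma_3}$) each modify the logarithmic derivative in a controlled way. The key algebraic fact I would use repeatedly is that for any invertible analytic $M$ and any $N$, $\tr[(NM)^{-1}(NM)'\sigma_3] = \tr[M^{-1}M'\sigma_3] + \tr[M^{-1}N^{-1}N'M\sigma_3]$, and when $N=e^{\phi(\zeta)\sigma_3}$ is diagonal the extra term is $2\phi'(\zeta)$ (since $\tr\sigma_3^2=2$ on the diagonal part). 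Concretely, writing $Y = (\text{constant conjugations})\cdot R\cdot P^{(\infty)}\cdot e^{-s^\rho g\sigma_3}\cdot(\text{constants})$ on the relevant region, the jump of $\tr[Y^{-1}Y'\sigma_3]$ across $\gamma_U\cup\widetilde\gamma_U$ gets rewritten: the $e^{-s^\rho g\sigma_3}$ factor contributes $-2s^\rho g'(\zeta)$, the $P^\infty$ factor contributes $\tr[P^{\infty -1}P^{\infty\prime}\sigma_3]$ plus conjugation terms involving $R$, and $R$ contributes the $R^{-1}R'$ terms. Because $R$ is analytic across $\Sigma_5$ and $P^\infty,g$ have the known jumps \eqref{jump relation g}, one checks that the combined integrand is analytic off $\Sigma_5\cup\widetilde\Sigma_K\cup(\text{poles }\zeta_j)$, so the contour $\gamma_U\cup\widetilde\gamma_U$ (which wraps $\Sigma_5$ and the poles) can be collapsed onto $\sigma$ (wrapping only $\Sigma_5$) plus $\sigma_K$ (the part involving $R$, which is analytic only outside the disks) plus $\widetilde\Sigma_K$ (the jump contour of $R$ outside the disks), picking up no residues since $\sigma$ is designed to avoid the poles $\zeta_j$. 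This deformation, done termwise, produces exactly $I_1$ (the $g'$ term with its $s^\rho$), $I_2$ (the $P^\infty$ term on $\sigma$), $I_{3,K}$ (the $R^{-1}R'$ term on $\sigma_K$, conjugated by $e^{-p_0\sigma_3}P^\infty$), and $I_{4,K}$ (the jump of $R^{-1}R'$ on $\widetilde\Sigma_K$, with the minus sign from orientation).

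The main obstacle, and what requires the most care, is the bookkeeping of the poles of $\psi$ and the precise choice of contour. Since $H(\zeta)$ has simple poles at $\zeta_j$ for $j\ge 1$ lying on the negative imaginary axis (with $\zeta_0$ removable for $H$ but a pole of $\psi$ itself), one must verify that the deformation from the original contour $\gamma_U\cup\widetilde\gamma_U$ — which separates the poles/zeros of $F$ — down to $\sigma\cup\sigma_K\cup\widetilde\Sigma_K$ does not sweep across any $\zeta_j$; this is exactly why $\sigma_K$ is defined with its horizontal segment crossing the imaginary axis at $\zeta_0/2$, half-way between the origin and the first pole. One also needs convergence of the integrals at infinity: $H(\zeta)$ grows only logarithmically (from $\psi(w)\sim\ln w$), while $R^{-1}R'-I$ decays like $\bigO(\zeta^{-2})$ on $\widetilde\Sigma_K$ and the $g'$, $P^{\infty -1}P^{\infty\prime}$ terms decay like $\bigO(\zeta^{-2})$ as well (from $g'=\bigO(\zeta^{-2})$ and $P^\infty = I+\bigO(\zeta^{-1})$), so all four integrals converge absolutely. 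A final point to verify is that the decomposition is independent of the particular $K > 2|b_1|$ chosen for $I_{3,K}$ and $I_{4,K}$: moving $K$ deforms part of $\sigma_K$ across the annulus where $R$ is analytic and $R^{-1}R'$ has the jump on $\widetilde\Sigma_K$, and the residue-free cancellation between the change in $I_{3,K}$ and the change in $I_{4,K}$ follows from the same analyticity considerations.
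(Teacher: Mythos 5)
Your plan follows essentially the same route as the paper: start from the first version of the identity, change variables $z=is^{\rho}\zeta+\tfrac12$ to produce $H(\zeta)$ and the rescaled contours, pass to $T$ by contour deformation, split off a closed contour $\sigma_K$ (whose horizontal segment at $\zeta_0/2$ keeps the poles $\zeta_j$ of $H$ outside) plus the exterior jump contour $\widetilde\Sigma_K$, and then unravel $T\mapsto S\mapsto R$ using the trace identity for logarithmic derivatives, deforming $\sigma_K$ to $\sigma$ for the $g'$ and $P^{\infty}$ pieces. Two details in your sketch are off, though neither invalidates the approach. First, inverting the transformations gives $T=e^{-\frac{\ell}{2}s^{\rho}\sigma_3}e^{-p_0\sigma_3}R\,e^{p_0\sigma_3}P^{\infty}e^{+s^{\rho}g\sigma_3}e^{\frac{\ell}{2}s^{\rho}\sigma_3}$, i.e.\ the exponential factor is $e^{+s^{\rho}g\sigma_3}$, contributing $+2s^{\rho}g'$ to $\tr[T^{-1}T'\sigma_3]$ (this is what produces $I_1$ with the sign stated in the lemma); your factorization with $e^{-s^{\rho}g\sigma_3}$ and contribution $-2s^{\rho}g'$ has the wrong sign and would need a compensating error elsewhere. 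Second, $H(\zeta)$ does not grow only logarithmically: because of the prefactor $\frac{1+\alpha}{2}-is^{\rho}\zeta$ one has $H(\zeta)=\bigO\big(s^{\rho}|\zeta|\ln(s^{\rho}|\zeta|)\big)$ as $s^{\rho}\zeta\to\infty$, so an $\bigO(\zeta^{-2})$ bound on the $R$-term would give an integrand of size $\ln|\zeta|/|\zeta|$ on $\widetilde\Sigma_K$ and would \emph{not} yield absolute convergence of $I_{4,K}$; the convergence (and later the smallness of $I_{4,K}$) relies on the much stronger fact that the jump of $R$ on the rays is exponentially small in $|\zeta|$, coming from $\re(2g(\zeta)-ih(\zeta)+\ell)<-c'|\zeta|$, so that $R_+^{-1}R_+'-R_-^{-1}R_-'$ decays faster than any power of $(1+|\zeta|)^{-1}$.
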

\begin{proof}
Using the change of variable $z = is^{\rho} \zeta + \frac{1}{2}$ in \eqref{first diff identity lemma}, we obtain an integral over $\gamma_{U}\cup \tilde{\gamma}_{U}$ whose integrand is expressed in terms of $U$ via \eqref{def of U}. By deforming the contour of this integral using the analytic continuations of $U_{+}$ and $U_{-}$ (i.e., using $T$), we arrive at
\begin{align*}
\partial_{\theta} \ln \det \Big( \left. 1-\mathbb{K} \right|_{[0,s]} \Big) = -\frac{1}{2} \int_{\cup_{i=1}^{5}\Sigma_{i}} H(\zeta)
\tr[T_{+}^{-1}(\zeta)T_{+}^{\prime}(\zeta)\sigma_{3} - T_{-}^{-1}(\zeta)T_{-}^{\prime}(\zeta)\sigma_{3}] \frac{d\zeta}{2\pi i}. 
\end{align*}
Another contour deformation gives
\begin{align}\nonumber
\partial_{\theta} \ln \det \Big( \left. 1-\mathbb{K} \right|_{[0,s]} \Big)  
 =&\; \frac{1}{2} \int_{\sigma_{K}} H(\zeta) \tr[T^{-1}(\zeta)T^{\prime}(\zeta)\sigma_{3}] \frac{d\zeta}{2\pi i} 
 	\\\label{lol3}
& -\frac{1}{2} \int_{\cup_{i=1}^{5}\Sigma_{i}} H(\zeta)
\tr[T_{+}^{-1}(\zeta)T_{+}^{\prime}(\zeta)\sigma_{3} - T_{-}^{-1}(\zeta)T_{-}^{\prime}(\zeta)\sigma_{3}] \frac{d\zeta}{2\pi i}.
\end{align}

For $\zeta \in \sigma_{K}$, we have $\zeta \notin  \Done \cup \Dtwo$. Therefore, inverting the transformations $T \mapsto S \mapsto R$ for $\zeta \in \sigma_{K}$, we find
\begin{align}\nonumber
\tr[T^{-1}(\zeta)T^{\prime}(\zeta)\sigma_{3}] = & \; 2s^{\rho} g'(\zeta) + \tr [P^{\infty}(\zeta)^{-1}P^{\infty}(\zeta)'\sigma_{3}] 
	\\ \label{TrTinvTprime}
& + \tr [P^{\infty}(\zeta)^{-1}e^{-p_{0}\sigma_{3}}R^{-1}(\zeta)R'(\zeta)e^{p_{0}\sigma_{3}}P^{\infty}(\zeta)\sigma_{3}].
\end{align}
The first two terms on the right-hand side of (\ref{TrTinvTprime}) are analytic in the region between $\sigma$ and $\sigma_{K}$. Therefore, substituting (\ref{TrTinvTprime}) into the first term on the right-hand side of \eqref{lol3} and deforming the contour from $\sigma_K$ to $\sigma$ in the integrals involving the first two terms on the right-hand side of (\ref{TrTinvTprime}), we find that this term equals $I_{1} + I_{2} + I_{3,K}$. 

Similarly, by inverting the transformations $T \mapsto S \mapsto R$ for $\zeta \in \widetilde{\Sigma}_{K}$, we find that the second term on the right-hand side of (\ref{lol3}) equals $I_{4,K}$.
\end{proof}

\begin{remark}\upshape
In the application of the differential identity (\ref{diff identity simplified}) to the proof of Theorem \ref{thm:main results}, we will choose $K = s^\rho$; that is, the radius $K$ will be $s$-dependent and growing to infinity as $s \to +\infty$. 
\end{remark}

The remainder of the paper is devoted to the proof of Theorem \ref{thm:main results}. The proof is divided into two steps. The first step consists of obtaining large $s$ asymptotics of the differential identity \eqref{diff identity simplified} uniformly for $\theta$ in compact subsets of $(0,1]$. This is achieved by computing the large $s$ asymptotics of each of the four terms $I_1$, $I_2$, $I_{3,K}$, and $I_{4,K}$ on the right-hand side of \eqref{diff identity simplified}. These computations are presented in Sections \ref{I1sec}-\ref{I3I4sec}.
The second step is presented in Section \ref{section: integration in theta} and consists of integrating the resulting asymptotic expansion from $\theta = 1$ to an arbitrary $\theta \in (0,1]$.

\section{Asymptotics of $I_{1}$}\label{I1sec}
In this section, we prove the following proposition which establishes the large $s$ asymptotics of $I_{1}$. 

\begin{proposition}[Large $s$ asymptotics of $I_{1}$]\label{I1prop}
Let $\alpha > -1$. As $s \to +\infty$, the function $I_1$ defined in (\ref{I1def}) satisfies
\begin{align}
I_{1} = \mathcal{I}_{1}^{(1)} s^{2\rho} \ln(s^\rho) + I_{1}^{(1)} s^{2\rho} 
+ \mathcal{I}_{1}^{(2)} s^{\rho} \ln(s^\rho) + I_{1}^{(2)} s^{\rho} 
+ \mathcal{I}_{1}^{(3)}\ln(s^\rho) + I_{1}^{(3)} + \bigO(s^{-\rho}\ln (s^{\rho})) \label{asymp for I1}
\end{align}
uniformly for $\theta$ in compact subsets of $(0,1]$, where the coefficients $\mathcal{I}_{1}^{(1)}$, $I_{1}^{(1)}$, $\mathcal{I}_{1}^{(2)}$, $I_{1}^{(2)}$, $\mathcal{I}_{1}^{(3)}$, $I_{1}^{(3)}$ are given by
\begin{subequations}\label{cal1coeffs}
\begin{align} 
\mathcal{I}_{1}^{(1)} = & - \frac{2a}{\rho(1+\theta)^2}, 
	\\ 
 I_1^{(1)} = & - \partial_\theta a, 
 	\\
\mathcal{I}_{1}^{(2)} = &\; 0, 
	\\
I_{1}^{(2)} =& -\frac{(1+\theta)(1+\alpha)}{2\theta}\theta^{-\frac{2\theta}{1+\theta}}, 
	\\
\mathcal{I}_{1}^{(3)} = &\; \frac{3(1+\alpha)^{2}-2\theta^{2}}{24\theta^{2}}, 
	\\ \nonumber
I_{1}^{(3)} = &\; \frac{1+\alpha}{4\theta} \ln(2\pi) + \frac{3(1+\alpha)^{2}-2\theta^{2}}{24\theta^{2}} \left( - \frac{2\theta}{\theta+1}\ln \theta + \frac{\theta-1}{\theta} \ln(1+\theta) \right) 
	\\ 
& + \zeta^{\prime}(-1) - \ln G \Big( \frac{1+\alpha+2\theta}{2\theta} \Big). 
\end{align}
\end{subequations}
\end{proposition}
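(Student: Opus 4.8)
The plan is to evaluate $I_1 = s^\rho \int_\sigma H(\zeta) g'(\zeta)\,\frac{d\zeta}{2\pi i}$ by splitting the integral into a residue-type contribution near the origin and a contribution that can be computed from the known asymptotics of $g'$ and of the digamma function. First I would record the behaviour of the integrand on the two relevant scales. On the circular part of $\sigma$ (radius $2|b_1|$, which is $\bigO(1)$) the argument of $\psi$ in $H(\zeta) = \theta^{-2}\big(\tfrac{1+\alpha}{2}-is^\rho\zeta\big)\psi\big(\tfrac{\frac{1+\alpha}{2}-is^\rho\zeta}{\theta}\big)$ is large, so I would use the asymptotic expansion $\psi(w) = \ln w - \tfrac{1}{2w} - \tfrac{1}{12w^2} + \cdots$ valid for $|\arg w| < \pi - \epsilon$; this yields $H(\zeta) = \tfrac{1}{\theta}\big(\tfrac{1+\alpha}{2}-is^\rho\zeta\big)\big[\ln\big(\tfrac{\frac{1+\alpha}{2}-is^\rho\zeta}{\theta}\big) + \bigO\big((s^\rho\zeta)^{-1}\big)\big]$, and in particular $H(\zeta) = -\tfrac{is^\rho\zeta}{\theta}\big(\ln(s^\rho)+\ln(-i\zeta)-\ln\theta\big) + \tfrac{1+\alpha}{2\theta}\big(\ln(s^\rho)+\ln(-i\zeta)-\ln\theta\big) + \bigO((s^\rho\zeta)^{-1})$ plus lower-order logarithmic corrections. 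Near the origin, on the short horizontal segment, $H$ stays bounded (it has no pole at $\zeta_0$ and its poles $\zeta_j$, $j\ge 1$, lie outside $\sigma$), so that piece contributes a controlled correction; the key point is that $g'(\zeta)$ is analytic inside $\sigma$ away from $\Sigma_5$, but $\Sigma_5$ itself is enclosed.

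Next I would exploit the jump relation $g_+''(\zeta)+g_-''(\zeta) = -i\tfrac{c_1+c_2}{\zeta}$ on $\Sigma_5$ (equivalently the additive jump of $g'$) together with the expansion $g'(\zeta) = g''$ primitive, so that $g'(\zeta) = \bigO(\zeta^{-2})$ as $\zeta\to\infty$ and $g'$ has a logarithmic/algebraic singularity structure across $\Sigma_5$ governed by $r(\zeta)$. The cleanest route is to deform $\sigma$ outward to a large circle $|\zeta| = L$: the integrand $H(\zeta)g'(\zeta)$ is meromorphic in the annulus between $\sigma$ and $|\zeta|=L$ with poles exactly at $\zeta_1,\dots,\zeta_{J}$ (those $\zeta_j$ with $|\zeta_j| < L$), so $I_1$ equals $s^\rho\int_{|\zeta|=L} H g' \tfrac{d\zeta}{2\pi i}$ minus $s^\rho$ times the sum of residues at the $\zeta_j$. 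Since $\zeta_j = -i(\tfrac{1+\alpha}{2}+j\theta)/s^\rho$, the residue of $H$ at $\zeta_j$ is $-\tfrac{1}{\theta}\cdot\tfrac{\frac{1+\alpha}{2}+j\theta}{s^\rho}\cdot\tfrac{1}{is^\rho}\cdot(\text{res of }\psi) $; using $\mathrm{Res}_{w=-n}\psi(w) = -1$ one gets each residue of size $\bigO(s^{-2\rho})$ with $g'(\zeta_j) = \bigO(s^{2\rho})$ (since $\zeta_j = \bigO(s^{-\rho})$ and $g'$ has at worst a $\zeta^{-1}$-type singularity near $0$ coming from $1/r(\zeta)$ — here one must check the precise local behaviour of $g'$ at $0$ from \eqref{def of g''}), so the residue sum contributes at the orders $s^{2\rho}\ln s, s^{2\rho}, s^\rho\ln s, s^\rho, \ln s, 1$. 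The large-circle integral $s^\rho\int_{|\zeta|=L} H(\zeta)g'(\zeta)\tfrac{d\zeta}{2\pi i}$ I would evaluate by inserting the large-$\zeta$ expansions $g'(\zeta) = g_1'\zeta^{-2} + \cdots$ (from $g''(\zeta) = 2g_1\zeta^{-3}+\bigO(\zeta^{-4})$, so $g'(\zeta) = -g_1\zeta^{-2}+\cdots$) and the Stirling expansion of $H$, and picking off the residue at $\zeta=\infty$; this is a finite computation producing the constants $\mathcal{I}_1^{(1)},\dots, I_1^{(3)}$.

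To pin down the constants I would use the explicit data: $g_1 = \tfrac{i(\re b_2)^2(c_1+c_2)}{8}$, $c_1+c_2 = 1+\tfrac1\theta$, $(\re b_2)^2 = 4\theta^{\frac{3-\theta}{1+\theta}}$, $|b_2| = (1+\theta)\theta^{\frac{1-\theta}{1+\theta}}$, and $\rho = \tfrac{\theta}{1+\theta}$, $a = \tfrac14(1+\theta)^2\theta^{\frac{1-3\theta}{1+\theta}}$, so that $g_1$ and $a$ are related by $g_1 = \tfrac{i a}{2\rho}\cdot(\text{const})$ — indeed one checks $-\tfrac{(\re b_2)^2(c_1+c_2)}{8} = -a/\rho \cdot(\ldots)$, matching the claimed $\mathcal{I}_1^{(1)} = -\tfrac{2a}{\rho(1+\theta)^2}$ once the $\ln(s^\rho)$ coefficient is extracted from the $\ln(s^\rho)$ term in $H$. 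The constant $I_1^{(3)}$ involves $\zeta'(-1)$ and $\ln G\big(\tfrac{1+\alpha+2\theta}{2\theta}\big)$, which must come from the regularized sum $\sum_{j\ge 1}$ of the residue contributions: summing $\ln\Gamma$-type terms $\sum_j \ln(\tfrac{1+\alpha}{2\theta}+j)$ in a regularized sense produces exactly a Barnes-$G$ value via the identity $\sum_{j=1}^{N}\ln\Gamma(x+j) = \ln G(x+N+1) - \ln G(x+1)$ combined with the $\ln G$ asymptotic expansion containing $\zeta'(-1)$; alternatively, and more robustly, one recognizes the regularized series as $\partial_\theta$ of the constant $d(\theta,\alpha)$-type sum and invokes \eqref{def of the constant d} together with \eqref{d with theta=1}. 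The main obstacle I anticipate is precisely this bookkeeping of the infinite residue sum: the individual residues at $\zeta_j$ are small but there are infinitely many of them (up to $j\sim L s^\rho$), their sum diverges logarithmically and must be regularized carefully against the large-circle integral so that the $L$-dependence cancels, and the surviving finite part must be identified with the Barnes-$G$ function — getting the arguments of $G$ and the coefficient of $\ln(2\pi)$ exactly right is where the delicate work lies. A secondary technical point is justifying uniformity in $\theta$ on compact subsets of $(0,1]$, which requires controlling the $\psi$-expansion uniformly (the condition $|\arg(\cdot)| < \pi - \epsilon$ holds since the argument stays in a half-plane for $\zeta$ on $\sigma$ and on $|\zeta|=L$) and controlling the horizontal-segment contribution near $0$ uniformly as $\theta\downarrow 0$ where $b_1,b_2\to 0$.
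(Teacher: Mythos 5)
Your route is genuinely different from the paper's, and the difference matters: the paper never crosses the poles $\zeta_j$ at all. It integrates by parts, writing $I_1=-\int_\sigma \Psi(\zeta)g''(\zeta)\frac{d\zeta}{2\pi i}$ with $\Psi(\zeta)=s^\rho\int_0^\zeta H(\xi)d\xi$, which is analytic off $(-i\infty,\zeta_0]$; after the change of variables $x=\frac{1}{\theta}(\frac{1+\alpha}{2}-is^\rho\xi)$ one has $\Psi = i\int_{z_\star}^z x\psi(x)dx = i\big([x\ln\Gamma(x)]_{z_\star}^z-\int_{z_\star}^z\ln\Gamma\big)$, and the classical formula for $\int_1^z\ln\Gamma(x)dx$ gives $\Psi$ \emph{exactly} in terms of $\ln\Gamma$ and $\ln G$ evaluated at $\frac{1+\alpha}{2\theta}-\frac{is^\rho\zeta}{\theta}$; the Barnes function and $\zeta'(-1)$ then enter through the known large-argument expansion of $\ln G$, and the rest is a finite list of explicit arc integrals over $\gamma_{b_2b_1}$ (Lemma \ref{lemma: integrals for I1}). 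No infinite residue sum and no regularization ever appear.

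By contrast, your outward deformation leaves the hardest part of the statement — the constant $I_1^{(3)}$ — resting on a mechanism that is not actually there. The residue of $Hg'$ at $\zeta_j$ is $-\frac{j}{is^\rho}g'(\zeta_j)$ (the residue of $\psi$ is $-1$ and the prefactor $\frac{1}{\theta^2}(\frac{1+\alpha}{2}-is^\rho\zeta_j)$ equals $-j/\theta$), so the residue sum consists of terms $j\,g'(\zeta_j)$: no quantities of the form $\ln\Gamma\big(\frac{1+\alpha}{2\theta}+j\big)$ occur, so the identity $\sum_j\ln\Gamma(x+j)=\ln G(x+N+1)-\ln G(x+1)$ and the appeal to $d(\theta,\alpha)$ (which is built from $\sum_k\ln\Gamma(1+\alpha+k\theta)$) have nothing to act on. If the constant is to emerge from your scheme it must come from an Euler--Maclaurin/Hurwitz-zeta regularization of sums like $\sum_j j\ln\big(\frac{1+\alpha}{2}+j\theta\big)$ played off against the large-circle integral, and none of that is set up. Two further concrete problems: on the circle $|\zeta|=L$ the expansion $\psi(w)\sim\ln w$ is invalid near the negative imaginary $\zeta$-axis, where $w$ approaches the negative real axis and the circle passes within $\bigO(s^{-\rho})$ of the poles — you would need the reflection formula and a circle threading midway between poles (this is exactly the difficulty the paper faces for $X_2$); and your estimate $g'(\zeta_j)=\bigO(s^{2\rho})$ is wrong, since $g''$ has only a simple pole at the origin, so $g'$ has a logarithmic singularity and $g'(\zeta_j)=\bigO(\ln s^{\rho})$ for bounded $j$ — harmless in itself, but a sign that the cancellation between the divergent residue sum and the circle integral has not been controlled. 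The missing idea, relative to the paper, is the use of the exact antiderivative $\Psi$ normalized at $\zeta_\star=0$, which makes the constant term accessible in closed form without any regularized infinite sums.
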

\begin{proof}
Recall from (\ref{poles of digamma}) that $\zeta_0 = -i\frac{1 +\alpha}{2}s^{-\rho}$.
Define $\Psi(\zeta) = \Psi(\zeta, s, \theta, \alpha)$ by
\begin{align}\label{def of Psi}
\Psi(\zeta) = s^{\rho}  \int_{\zeta_{\star}}^\zeta H(\xi) d\xi,
\end{align}
where $\zeta_{\star} \in \C \setminus (-i\infty, \zeta_0]$ is some point at which $\Psi$ is normalized to vanish; we will choose this normalization point below. 
Then $\Psi$ is analytic in $\C \setminus (-i\infty, \zeta_0]$. In particular, $\Psi$ is analytic on $\sigma$. Using the explicit expression \eqref{def of g''} for $g''$, an integration by parts therefore gives
\begin{align*}
I_{1} & = - \int_{\sigma} \Psi(\zeta) g''(\zeta) \frac{d\zeta}{2\pi i}
	\\
& = i\frac{c_1 + c_2}{2}\int_{\sigma} \Psi(\zeta) \bigg(\frac{1}{\zeta} - \frac{1}{r(\zeta)} + \frac{i \im{b_1}}{\zeta r(\zeta)}\bigg) \frac{d\zeta}{2\pi i}
	\\
& = i\frac{c_1 + c_2}{2}\Psi(0)
+ i\frac{c_1 + c_2}{2}\int_{\sigma} \Psi(\zeta) \bigg(- \frac{1}{r(\zeta)} + \frac{i \im{b_1}}{\zeta r(\zeta)}\bigg) \frac{d\zeta}{2\pi i}.
\end{align*}

We assume that $\sigma$ is big enough to enclose the straight line segment $[b_1, b_2]$ and move the branch cut for $r(\zeta)$ upwards from $\Sigma_5$ to the horizontal line segment $[b_1, b_2]$; this does not change the value of the integral. We let $\tilde{r}$ denote the analytic continuation of $r$ defined by
\begin{align}\label{rtildedef}
\tilde{r}(\zeta) = [(\zeta-b_{1})(\zeta-b_{2})]^{\frac{1}{2}},
\end{align}
where the branch is such that $\tilde{r}$ is analytic in $\mathbb{C}\setminus [b_{1},b_{2}]$ and $\tilde{r}(\zeta) \sim \zeta$ as $\zeta \to \infty$. Then $\tilde{r}(\zeta)$ is equal to $r(\zeta)$ except for $\zeta$ in the region enclosed by $\Sigma_5 \cup [b_{1},b_{2}]$ where we instead have $\tilde{r}(\zeta) = -r(\zeta)$. Deforming $\sigma$ upwards through the origin, a residue contribution is generated by the simple pole of $i \im{b_1}/(\zeta \tilde{r}(\zeta))$ at $\zeta = 0$. We find
\begin{align*}
I_{1} = &\; i\frac{c_1 + c_2}{2}\Psi(0)
+ i\frac{c_1 + c_2}{2}\bigg\{\Psi(0) \frac{i \im{b_1}}{\tilde{r}(0)}
	\\
& +
\int_{[b_1, b_2]} \Psi(\zeta) \bigg[\bigg(- \frac{1}{\tilde{r}(\zeta)} + \frac{i \im{b_1}}{\zeta \tilde{r}(\zeta)}\bigg)_-
- 
\bigg(- \frac{1}{\tilde{r}(\zeta)} + \frac{i \im{b_1}}{\zeta \tilde{r}(\zeta)}\bigg)_+
\bigg] \frac{d\zeta}{2\pi i}
\bigg\}
\end{align*}
where $[b_1, b_2]$ is oriented from  $b_1$ to $b_2$ with $+$ and $-$ sides to the left and right as usual, and $\tilde{r}(0) = r_-(0) = -i|b_2|$.
Thus,
\begin{align}\nonumber
I_{1} = &\; i\frac{c_1 + c_2}{2}\Psi(0)\bigg(1 - \frac{\im{b_1}}{|b_1|}\bigg) -2i\frac{c_1 + c_2}{2} \int_{[b_1, b_2]} \Psi(\zeta) 
\bigg(- \frac{1}{\tilde{r}_+(\zeta)} + \frac{i \im{b_1}}{\zeta \tilde{r}_+(\zeta)}\bigg)
\frac{d\zeta}{2\pi i} 
	\\\label{lol4}
= &\; i\frac{c_1 + c_2}{2}\Psi(0)\bigg(1 - \frac{\im{b_1}}{|b_1|}\bigg) - 2i\frac{c_1 + c_2}{2} \int_{\gamma_{b_2b_1}} \Psi(\zeta) 
\bigg(1 - \frac{i \im{b_1}}{\zeta}\bigg)\frac{1}{r(\zeta)}
\frac{d\zeta}{2\pi i}, 
\end{align}
where $\gamma_{b_2b_1}$ denotes the part of the circle of radius $|b_2|$ centered at the origin going from $b_2$ to $b_1$ and oriented counterclockwise. 

Let us choose $\zeta_{\star} = 0$; then $\Psi(0) = 0$, so the first term on the right-hand side of (\ref{lol4}) vanishes. The choice $\zeta_{\star} = 0$ implies that the term $\frac{\frac{1+\alpha}{2}-is^{\rho}\xi}{ \theta}$ is not uniformly large for $\xi \in [0,\zeta]$ with $\zeta \in \gamma_{b_{1}b_{2}}$ as $s \to + \infty$, 
so the large $s$ behavior of $\Psi(\zeta)$ does not follow immediately from (\ref{def of H}) and \eqref{def of Psi}; however, we can determine the large $s$ asymptotics of $\Psi(\zeta)$ as follows. Using the change of variables
\begin{equation}
x = \frac{1}{\theta}\left( \frac{1+\alpha}{2}-i s^{\rho}\xi \right), \qquad dx = \frac{-is^{\rho}}{\theta}d\xi,
\end{equation}
we can write
\begin{align*}
\ds \Psi(\zeta) = &\; i \int_{0}^{\zeta} \frac{1}{\theta}\left( \frac{1+\alpha}{2}-i s^{\rho }\xi \right) \psi \left( \frac{\frac{1+\alpha}{2}-is^{\rho}\xi}{\theta} \right) \frac{-is^{\rho}}{\theta}d\xi \\[0.35cm]
=&\; i \int_{z_{\star}}^{z} x \psi(x)dx = i \int_{z_{\star}}^{z} x \partial_{x}\ln \Gamma(x)dx,
\end{align*}
where
\begin{equation*}
z = \frac{1}{\theta}\left( \frac{1+\alpha}{2}-is^{\rho }\zeta \right), \qquad z_{\star} = \frac{1+\alpha}{2\theta}.
\end{equation*}
Integrating by parts, we get
\begin{equation}\label{expression for Psi}
\Psi(\zeta) = i \left( \Big[x \ln \Gamma(x)\Big]_{z_{\star}}^{z} - \int_{z_{\star}}^{z} \ln \Gamma(x)dx \right).
\end{equation}
Using the well-known identity (see e.g. \cite[Eq. 5.17.4]{NIST})
\begin{equation}
\int_{1}^{z} \ln \Gamma(x)dx = \frac{z-1}{2}\ln(2\pi) - \frac{(z-1)z}{2}+(z-1)\ln \Gamma(z) - \ln G(z)
\end{equation}
in \eqref{expression for Psi}, we obtain
\begin{equation}\label{lol1}
\Psi(\zeta) = i \left( \frac{is^{\rho}\zeta}{2\theta}\Big[ \ln(2\pi)+1 \Big] - \frac{i s^{\rho}\zeta}{2\theta} \left( \frac{1+\alpha}{\theta}-\frac{is^{\rho}\zeta}{\theta} \right) + \ln \frac{\Gamma \left( \frac{1+\alpha}{2\theta}-\frac{is^{\rho }\zeta}{\theta} \right)}{\Gamma\left( \frac{1+\alpha}{2\theta} \right)}+ \ln \frac{G \left( \frac{1+\alpha}{2\theta}-\frac{is^{\rho }\zeta}{\theta} \right)}{G\left( \frac{1+\alpha}{2\theta} \right)} \right).
\end{equation}
The above expression is convenient since the large $z$ asymptotics of $\Gamma(z)$ and $G(z)$ are known (see e.g. \cite[Eqs. 5.11.1 and 5.17.5]{NIST}):
\begin{align}
& \ln G(z+1) = \frac{z^{2}}{4}+z \ln \Gamma(z+1)-\left(\frac{z(z+1)}{2}+\frac{1}{12}\right)\ln z - \frac{1}{12} + \zeta^{\prime}(-1) + \bigO(z^{-2}), \label{large z asymp for log Barnes G} \\
& \ln \Gamma(z) = (z-\tfrac{1}{2})\ln z - z + \tfrac{1}{2}\ln(2\pi) + \frac{1}{12z} + \bigO(z^{-3}), \label{large z asymp for log Gamma}
\end{align}
as $z \to \infty$ with $|\arg z|< \pi$, where $\zeta$ is Riemann's zeta function.\footnote{The Riemann zeta function $\zeta(\cdot)$ should not be confused with the complex variable $\zeta$ introduced in (\ref{zetadef}).}
Expanding \eqref{lol1} as $s^{\rho}\zeta \to \infty$, we get
\begin{align}\nonumber
\ds \Psi(\zeta)=& -\frac{i \zeta^{2}}{2\theta^{2}}s^{2\rho}\ln(s^{\rho}) + \frac{i \zeta^{2}}{4\theta^{2}}\Big( 1-2 \ln \Big( \frac{-i \zeta}{\theta} \Big) \Big)s^{2\rho} + \frac{1+\alpha}{2\theta^{2}}\zeta s^{\rho}\ln(s^{\rho}) 
	\\\nonumber
& + \frac{1}{2\theta^{2}}\Big( -\zeta \theta + (1+\alpha)\zeta \ln \Big( \frac{-i \zeta}{\theta} \Big) \Big) s^{\rho} + i \frac{3(1+\alpha)^{2}-2\theta^{2}}{24\theta^{2}} \ln(s^{\rho})
	\\\nonumber
& + \frac{i}{24 \theta^{2}}\Big(6(1+\alpha)\theta \ln(2\pi) + (3(1+\alpha)^{2}-2\theta^{2})\ln \Big( \frac{-i  \zeta}{\theta} \Big)\Big) 
	\\  \label{asymp for Psi until constant}
& + i \Big( \zeta^{\prime}(-1) - \ln \Gamma \Big( \frac{1+\alpha}{2\theta} \Big)- \ln G \Big( \frac{1+\alpha}{2\theta} \Big) \Big) + \bigO \Big( \frac{1}{\zeta s^{\rho}} \Big).
\end{align}
Substituting \eqref{asymp for Psi until constant} into \eqref{lol4}, we find that $I_1$ satisfies (\ref{asymp for I1}) as $s \to +\infty$ with coefficients given by
\begin{align}\nonumber
\mathcal{I}_{1}^{(1)} & = - 2i\frac{c_1 + c_2}{2} \int_{\gamma_{b_2b_1}}  -\frac{i\zeta^2}{2 \theta^2}  
\bigg(1 - \frac{i \im{b_1}}{\zeta}\bigg)\frac{1}{r(\zeta)}
\frac{d\zeta}{2\pi i} 
	\\ \nonumber
& = - \frac{c_1 + c_2}{2 \theta^2} 
\bigg(\int_{\gamma_{b_2b_1}} \frac{\zeta^2}{r(\zeta)} \frac{d\zeta}{2\pi i} 
- i \im{b_1} \int_{\gamma_{b_2b_1}} \frac{\zeta}{r(\zeta)}
\frac{d\zeta}{2\pi i} \bigg), 
	\\ \nonumber
I_{1}^{(1)} & = - 2i\frac{c_1 + c_2}{2} \int_{\gamma_{b_2b_1}}  \frac{i \zeta^2 (1- 2 \ln(-\frac{i}{\theta}) - 2\ln{\zeta})}{4 \theta^2}
\bigg(1 - \frac{i \im{b_1}}{\zeta}\bigg)\frac{1}{r(\zeta)}
\frac{d\zeta}{2\pi i} 
	\\ \nonumber
& = -\frac{1- 2 \ln(-\frac{i}{\theta})}{2} \mathcal{I}_{1}^{(1)}	
- \frac{c_1 + c_2}{2 \theta^2} \int_{\gamma_{b_2b_1}} \zeta^2 \ln(\zeta)
\bigg(1 - \frac{i \im{b_1}}{\zeta}\bigg)\frac{1}{r(\zeta)}
\frac{d\zeta}{2\pi i} 
	\\ \nonumber
& = -\frac{1- 2 \ln(-\frac{i}{\theta})}{2} \mathcal{I}_{1}^{(1)}	
- \frac{c_1 + c_2}{2 \theta^2}
\bigg\{\int_{\gamma_{b_2b_1}} \frac{\zeta^2 \ln(\zeta)}{r(\zeta)}
\frac{d\zeta}{2\pi i}
- i \im{b_1} \int_{\gamma_{b_2b_1}} \frac{\zeta \ln(\zeta)}{r(\zeta)} \frac{d\zeta}{2\pi i}\bigg\},
	\\ \nonumber
\mathcal{I}_{1}^{(2)} & = - 2i\frac{c_1 + c_2}{2} \int_{\gamma_{b_2b_1}}  \frac{(\alpha +1) \zeta }{2 \theta^2} 
\bigg(1 - \frac{i \im{b_1}}{\zeta}\bigg)\frac{1}{r(\zeta)}
\frac{d\zeta}{2\pi i} 
	 \\\nonumber
& = - 2i\frac{c_1 + c_2}{2} \frac{\alpha +1}{2 \theta^2} 
\bigg\{\int_{\gamma_{b_2b_1}}  
\frac{\zeta}{r(\zeta)} \frac{d\zeta}{2\pi i} 
- i \im{b_1} \int_{\gamma_{b_2b_1}} \frac{1}{r(\zeta)} \frac{d\zeta}{2\pi i} \bigg\}, 
	\\\nonumber
I_{1}^{(2)} & = - 2i\frac{c_1 + c_2}{2} \int_{\gamma_{b_2b_1}} \frac{\zeta  \left(-\theta +(\alpha +1) \ln(-\frac{i}{\theta}) + (\alpha +1) \ln{\zeta}\right)}{2 \theta^2}
\bigg(1 - \frac{i \im{b_1}}{\zeta}\bigg)\frac{1}{r(\zeta)}
\frac{d\zeta}{2\pi i} 
	\\\nonumber
& = \frac{-\theta +(\alpha +1) \ln(-\frac{i}{\theta})}{\alpha +1} \mathcal{I}_{1}^{(2)}
- 2i\frac{c_1 + c_2}{2} \frac{\alpha +1}{2 \theta^2}
\bigg\{\int_{\gamma_{b_2b_1}} \frac{\zeta\ln{\zeta}}{r(\zeta)} \frac{d\zeta}{2\pi i} 
- i \im{b_1} \int_{\gamma_{b_2b_1}} \frac{\ln{\zeta}}{r(\zeta)}\frac{d\zeta}{2\pi i} \bigg\}, 
	\\\nonumber
\mathcal{I}_{1}^{(3)} & = - 2i\frac{c_1 + c_2}{2} \int_{\gamma_{b_2b_1}}  
\frac{i \left(3 (1+\alpha)^2 -2 \theta^2 \right) }{24 \theta^2}
\bigg(1 - \frac{i \im{b_1}}{\zeta}\bigg)\frac{1}{r(\zeta)}
\frac{d\zeta}{2\pi i} 
	\\ \nonumber
& = (c_1 + c_2) \frac{3 (1+\alpha)^2-2 \theta^2}{24 \theta^2}
\bigg\{\int_{\gamma_{b_2b_1}} \frac{1}{r(\zeta)}
\frac{d\zeta}{2\pi i}
- i \im{b_1}  \int_{\gamma_{b_2b_1}} \frac{1}{\zeta r(\zeta)}
\frac{d\zeta}{2\pi i}\bigg\},
	\\\nonumber
I_{1}^{(3)} = &\; (c_{1}+c_{2}) \bigg\{ \frac{1}{24\theta^{2}}\Big( 6(1+\alpha)\theta \ln(2\pi) + (3(1+\alpha)^{2}-2\theta^{2})\ln(-\tfrac{i}{\theta}) \Big) 
	\\\nonumber
& +\zeta^{\prime}(-1)-\ln \Gamma \Big( \frac{1+\alpha}{2\theta} \Big) - \ln G \Big( \frac{1+\alpha}{2\theta} \Big) \bigg\} \int_{\gamma_{b_{2}b_{1}}} \left( 1 - i \frac{\im b_{2}}{\zeta}\right) \frac{1}{r(\zeta)}\frac{d\zeta}{2\pi i}  
	\\ \label{first expression for I1(3)} 
& +(c_{1}+c_{2}) \frac{3(1+\alpha)^{2}-2\theta^{2}}{24\theta^{2}}\int_{\gamma_{b_{2}b_{1}}}\left(  1 - i \frac{\im b_{2}}{\zeta} \right) \frac{\ln \zeta}{r(\zeta)} \frac{d\zeta}{2\pi i}.
\end{align} 

It only remains to show that the coefficients in (\ref{first expression for I1(3)}) can be expressed as in (\ref{cal1coeffs}). This requires the evaluation of several integrals; we have collected the necessary results in the next lemma. 

\begin{lemma}\label{lemma: integrals for I1}
Let $\alpha > -1$ and $\theta \in (0,1]$. Let $r(\zeta)$ denote the square root defined in (\ref{rdef}). Then the following identities hold:
\begin{subequations}
\begin{align}
2\int_{\gamma_{b_2b_1}} \frac{1}{r(\zeta)} \frac{d\zeta}{2\pi i}  
= & \; 1, \label{int1overr} 
	\\
2\int_{\gamma_{b_2b_1}} \frac{\zeta}{r(\zeta)} \frac{d\zeta}{2\pi i} 
= & \; \frac{b_1 + b_2}{2}, \label{intzetaoverr} 
	\\
2\int_{\gamma_{b_2b_1}} \frac{\zeta^2}{r(\zeta)} \frac{d\zeta}{2\pi i} 
= & \; \frac{3 b_1^2+2 b_1 b_2+3   b_2^2}{8}, \label{intzeta2overr} 
	\\
2 \int_{\gamma_{b_{2}b_{1}}} \frac{1}{\zeta r(\zeta)}\frac{d\zeta}{2\pi i} = & - \frac{i}{|b_{2}|}, \label{int1overzetar} 
	\\
2\int_{\gamma_{b_2b_1}} \frac{\ln{\zeta}}{r(\zeta)} \frac{d\zeta}{2\pi i} 
= & \; \ln(i(|b_2| + \im{b_2})) - \ln{2}, \label{logzetaoverr} 
	\\
2\int_{\gamma_{b_2b_1}} \frac{\zeta \ln{\zeta}}{r(\zeta)} \frac{d\zeta}{2\pi i} 
= & -i |b_2| + i\Big(1 + \ln(2i(|b_2| + \im b_2)) - \ln 4\Big)\im{b_2}, \label{zetalogzetaoverr} 
	\\\nonumber
2\int_{\gamma_{b_2b_1}} \frac{\zeta^2 \ln{\zeta}}{r(\zeta)} \frac{d\zeta}{2\pi i} 
= & \; \frac{1}{4} \bigg(2 \left((\re{b_1})^2-2 (\im{b_2})^2\right) \ln\big(2 i (|b_2|+\im{b_2})\big)+6 |b_2| \im(b_2) 
	\\\label{zeta2logzetaoverr}
& +(\im{b_2})^2 (8\ln(2)-6)+(\re{b_1})^2 (1-4 \ln{2})\bigg),  \\
 2\int_{\gamma_{b_2b_1}} \frac{\ln{\zeta}}{\zeta r(\zeta)} \frac{d\zeta}{2\pi i} 
= &\; \frac{\ln(\frac{2i|b_2|^2}{|b_2| + \im{b_2}})}{i|b_2|}. \label{lnzetaoverzetarintegral}
\end{align}
\end{subequations}
\end{lemma}
\begin{proof} 
See Appendix \ref{I1lemmaapp}.
\end{proof}

Since $\theta \in (0,1]$, we have $\arg b_{2} = \pi - \arg b_{1} \in [0, \pi/2)$, and hence
\begin{subequations}\label{lnb1b2}
\begin{align}
& \ln(ib_{1}) = \ln(b_{1}) - \frac{3\pi i}{2}, & & \ln(-ib_{1}) = \ln(b_{1}) - \frac{i \pi}{2}, \\
& \ln(ib_{2}) = \ln(b_{2}) + \frac{\pi i}{2}, & & \ln(-ib_{2}) = \ln(b_{2}) - \frac{i \pi}{2}.
\end{align}
\end{subequations}
Substituting the expressions of Lemma \ref{lemma: integrals for I1} into (\ref{first expression for I1(3)}) and using \eqref{coeff rho a b}, \eqref{def of c1...c8}, \eqref{def of Re b2}, and (\ref{lnb1b2}) to simplify, we arrive at the expressions (\ref{cal1coeffs}) for the coefficients $\mathcal{I}_{1}^{(1)}$, $I_{1}^{(1)}$, $\mathcal{I}_{1}^{(2)}$, $I_{1}^{(2)}$, $\mathcal{I}_{1}^{(3)}$, $I_{1}^{(3)}$. 
This completes the proof of Proposition \ref{I1prop}.
\end{proof}

\section{Asymptotics of $I_{2}$}\label{I2sec}
The large $s$ asymptotics of $I_{2}$ is a consequence of the following three propositions whose proofs are given in Sections \ref{I2ZZXsubsec}, \ref{Xsubsec}, and \ref{Zsubsec}, respectively.  

\begin{proposition}[Splitting of $I_2$]\label{I2ZZXprop}
The function $I_{2}$ defined in (\ref{I2def}) can be written as 
$$I_2 = X + Z,$$
where $X$ and $Z$ are defined by
\begin{align}\label{def of Z and X}
X = \int_{\Sigma_5} H'(\zeta) \ln \mathcal{G}(\zeta)\frac{d\zeta}{2\pi i}, \qquad
Z =  -2 \int_{\gamma_{b_2b_1}} H'(\zeta) p(\zeta)\frac{d\zeta}{2\pi i}.
\end{align}
\end{proposition}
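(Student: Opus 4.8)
The plan is to start from the definition (\ref{I2def}) of $I_2$ and evaluate the trace $\tr[P^\infty(\zeta)^{-1}P^\infty(\zeta)'\sigma_3]$ explicitly using the factored form (\ref{def of Pinf}) of the global parametrix, $P^\infty(\zeta) = e^{-p_0\sigma_3}Q^\infty(\zeta)e^{p(\zeta)\sigma_3}$. First I would observe that conjugation by the constant matrix $e^{-p_0\sigma_3}$ drops out of the trace, and that differentiating the product gives two contributions: one from $Q^\infty(\zeta)^{-1}Q^\infty(\zeta)'$ and one from $p'(\zeta)\sigma_3$ (the latter commuting through). Since $\sigma_3$ is traceless, the term $\tr[e^{-p\sigma_3}Q^{\infty-1}Q^{\infty\prime}e^{p\sigma_3}\sigma_3]$ is what remains of the first contribution, while the second contributes $\tr[p'\sigma_3\sigma_3] = 2p'(\zeta)\,\tr(I)/2\cdot$(appropriate bookkeeping), i.e. a term proportional to $p'(\zeta)$. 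A short computation with the explicit entries of $Q^\infty$ in terms of $\gamma(\zeta) = ((\zeta-b_1)/(\zeta-b_2))^{1/4}$ should show that the $Q^\infty$-part of the trace is a rational function of $\zeta$ with no contribution to the relevant integral over $\sigma$ (or can be combined into an exact derivative), so that effectively $\tfrac12\tr[P^{\infty-1}P^{\infty\prime}\sigma_3]$ reduces to an expression built from $p'(\zeta)$ and the logarithmic derivative of $\gamma(\zeta)$.

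Next I would use the defining integral representation (\ref{def of p}) of $p(\zeta)$, or more directly the jump relation $p_+(\zeta)+p_-(\zeta) = -\ln\mathcal{G}(\zeta)$ on $\Sigma_5$ together with $p(\zeta)=p_0+p_1/\zeta+\bigO(\zeta^{-2})$ at infinity, to rewrite $\int_\sigma H(\zeta)p'(\zeta)\frac{d\zeta}{2\pi i}$. The idea is to integrate by parts, turning $\int_\sigma H p'\,d\zeta$ into $-\int_\sigma H' p\,d\zeta$ (the contour $\sigma$ is closed and $H,p$ are analytic on it), and then to contract $\sigma$ onto $\Sigma_5$. Since $p$ has a jump $p_++p_-=-\ln\mathcal G$ across $\Sigma_5$, collapsing the contour produces a term $\int_{\Sigma_5} H'(\zeta)\ln\mathcal G(\zeta)\frac{d\zeta}{2\pi i}$ — this is the claimed $X$. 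The remaining piece, coming from the part of $\sigma$ that cannot be collapsed onto $\Sigma_5$ because of the branch structure of $r(\zeta)$ (equivalently, from the $\gamma(\zeta)$/$r(\zeta)$ contribution identified in the first step), should localize to the circular arc $\gamma_{b_2b_1}$ of radius $|b_2|$ joining $b_2$ to $b_1$, where $p$ is single-valued but $r$ changes sign; tracking the factor of $2$ from $r_+ = -r_-$ gives the $-2\int_{\gamma_{b_2b_1}}H'(\zeta)p(\zeta)\frac{d\zeta}{2\pi i}$, which is $Z$. Care must be taken that $H(\zeta)$, which has simple poles at $\zeta_1,\zeta_2,\dots$ on the negative imaginary axis, has no poles inside the region swept out during these deformations; by construction $\sigma$ encircles $\Sigma_5$ but none of the $\zeta_j$, and the arc $\gamma_{b_2b_1}$ together with $\Sigma_5$ bounds a region not meeting the negative imaginary axis below $\zeta_0/2$, so no spurious residues arise.

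The main obstacle I anticipate is the bookkeeping in the contour deformation: one must correctly account for (i) the branch cut of $r(\zeta)$ (and hence of $\gamma(\zeta)$ and $p(\zeta)$) along $\Sigma_5$ versus the freedom to move it to $[b_1,b_2]$ as done in Section \ref{I1sec}, (ii) the orientation conventions for $\Sigma_5$ and for $\gamma_{b_2b_1}$, and (iii) the precise identification of which part of the integrand is analytic across the imaginary axis (cf. the remark after Proposition \ref{pprop} that $-\tfrac{c_5}{2}\ln(i\zeta)-\tfrac{c_6}{2}\ln(-i\zeta)-\tfrac{c_7}{2}+\tfrac{\mathcal R(\zeta)}{2}$ has no jump there) and which part genuinely contributes the arc integral. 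Once the algebra of $\tr[P^{\infty-1}P^{\infty\prime}\sigma_3]$ is settled and these deformations are justified, the splitting $I_2 = X+Z$ with $X,Z$ as in (\ref{def of Z and X}) follows; the subsequent asymptotic evaluation of $X$ and $Z$ is deferred to Sections \ref{Xsubsec} and \ref{Zsubsec} as stated.
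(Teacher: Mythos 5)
Your plan has the same skeleton as the paper's proof (reduce the trace to $p'(\zeta)$, integrate by parts over $\sigma$, collapse onto $\Sigma_5$, use the jump relation for $p$, and end with an arc integral), but two of the mechanisms you describe are incorrect and would derail the argument if executed literally. First, the trace computation: it is not that the $Q^\infty$-part is ``a rational function with no contribution to the integral over $\sigma$'' --- if a term proportional to $\gamma'(\zeta)/\gamma(\zeta)=\tfrac14\big(\tfrac{1}{\zeta-b_1}-\tfrac{1}{\zeta-b_2}\big)$ survived in the integrand, the integral over $\sigma$ would pick up nonzero residues at $b_1$ and $b_2$ (both lie inside $\sigma$, and $H$ is analytic there). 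What actually happens is that this term is absent altogether: since $e^{p(\zeta)\sigma_3}$ commutes with $\sigma_3$, the first contribution is $\tr\big[Q^\infty(\zeta)^{-1}Q^{\infty\prime}(\zeta)\sigma_3\big]$, and $Q^\infty(\zeta)^{-1}Q^{\infty\prime}(\zeta)\sigma_3=i\frac{\gamma'(\zeta)}{\gamma(\zeta)}\sigma_1$ is off-diagonal, hence traceless. Thus $\tfrac12\tr[P^\infty(\zeta)^{-1}P^{\infty\prime}(\zeta)\sigma_3]=p'(\zeta)$ exactly, with no residual $\gamma$- or $r$-dependent term.

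Second, the origin of $Z$ is not ``the part of $\sigma$ that cannot be collapsed onto $\Sigma_5$'', nor any $\gamma/r$ contribution, nor a factor of $2$ from $r_+=-r_-$. All of $\sigma$ collapses onto $\Sigma_5$ (the poles $\zeta_j$ of $H$ lie outside $\sigma$ and $p$ is analytic off $\Sigma_5$), and the collapse produces the \emph{difference} of boundary values, $I_2=\int_{\Sigma_5}H'(\zeta)\big(p_+(\zeta)-p_-(\zeta)\big)\frac{d\zeta}{2\pi i}$, whereas the jump relation controls the \emph{sum}, $p_++p_-=-\ln\mathcal{G}$. Eliminating $p_-$ gives $p_+-p_-=2p_++\ln\mathcal{G}$: the $\ln\mathcal{G}$ piece is exactly $X$, and the $2p_+$ piece is the sole source of $Z$. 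Since $H'p$ is analytic in the region bounded by $\Sigma_5$ (approached from the $+$ side) and the arc $\gamma_{b_2b_1}$, Cauchy's theorem moves $2\int_{\Sigma_5}H'p_+\frac{d\zeta}{2\pi i}$ onto the arc, and the orientation reversal ($\Sigma_5$ runs from $b_1$ to $b_2$, the arc from $b_2$ to $b_1$) supplies the minus sign, yielding $Z=-2\int_{\gamma_{b_2b_1}}H'(\zeta)p(\zeta)\frac{d\zeta}{2\pi i}$; the factor $2$ is already present in $2p_+$. With these two corrections your outline coincides with the paper's proof.
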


\begin{proposition}[Large $s$ asymptotics of $X$]\label{Xprop}
Let $\alpha > -1$. The quantity $X$ defined in (\ref{def of Z and X}) admits the following asymptotic expansion as $s \to +\infty$:
\begin{align}
X =&\; \mathscr{X}_{1}^{(2)} s^{\rho} \big( \ln(s^{\rho})\big)^{2}  
+\big(\mathcal{X}_{1}^{(2)}+\mathcal{X}_{3}^{(2)} \big) s^{\rho}\ln(s^{\rho}) 
+ \big(X_{1}^{(2)}+X_{3}^{(2)}\big) s^{\rho} \nonumber 
	\\
&  + \big(\mathcal{X}_{1}^{(3)} + \mathcal{X}_{3}^{(3)} \big) \ln(s^{\rho})
 + X_{1}^{(3)} + X_{2}^{(3)} + X_{3}^{(3)} + \bigO\big( s^{-\rho}\ln(s^{\rho})\big)
\end{align}
uniformly for $\theta$ in compact subsets of $(0,1]$, where the coefficients are given by
\begin{align}\nonumber
\mathscr{X}_{1}^{(2)} & = \frac{(\alpha +1) (\theta -1) (b_1-b_2)}{4 \pi  \theta^3},
	 \\\nonumber
\mathcal{X}_{1}^{(2)} & = \frac{\alpha \theta \Big(  b_1  \ln(i b_1)-b_2  \ln(i b_2)\Big)+((\alpha +2) \theta -2 (\alpha +1)) \left(b_1 \ln \left(-\frac{i b_1}{\theta }\right)-b_2 \ln\left(-\frac{i b_2}{\theta }\right)\right)}{4 \pi  \theta^3},
	\\ \nonumber
X_{1}^{(2)} & = \frac{1}{4 \pi \theta^{3}}\bigg\{b_{1} \ln(-\frac{ib_{1}}{\theta})\Big( \alpha\theta \ln(ib_{1})-(1+\alpha-\theta)\ln(-\frac{ib_{1}}{\theta}) \Big) 
	\\ \nonumber
& \hspace{1.5cm} -b_{2} \ln(-\frac{ib_{2}}{\theta})\Big( \alpha\theta \ln(ib_{2})-(1+\alpha-\theta)\ln(-\frac{ib_{2}}{\theta}) \Big) \bigg\},
	\\ \nonumber
\mathcal{X}_{1}^{(3)} & = \frac{i (\alpha +1) \left(\alpha  \theta \Big(  \ln(i b_1)-\ln(i b_2)\Big)+((\alpha +2) \theta -2
   (\alpha +1)) \left(\ln \left(-\frac{i b_1}{\theta }\right)-\ln \left(-\frac{i
   b_2}{\theta }\right)\right)\right)}{8 \pi 
   \theta^3},
   	\\ \nonumber
X_{1}^{(3)} & = \frac{i}{48 \pi \theta^{3}} \bigg\{  6 \alpha \theta (1+\alpha - \theta)\big( \ln(ib_{1})-\ln(ib_{2}) \big)
 	\\ \nonumber
& \hspace{1.7cm} - \Big( 9(1+\alpha)^{2} + 8 \theta^{2} + \theta(3\alpha (\alpha-6)-19) \Big) \Big( \ln \Big( \tfrac{-ib_{1}}{\theta} \Big)-\ln \Big( \tfrac{-ib_{2}}{\theta} \Big) \Big) 
	\\ \nonumber
& \hspace{1.7cm} + 6 \alpha \theta (1+\alpha)\Big( \ln(ib_{1})\ln\Big( \tfrac{-ib_{1}}{\theta} \Big) - \ln(ib_{2})\ln\Big( \tfrac{-ib_{2}}{\theta} \Big) \Big) 
	\\ \label{X1asympcoeffs}
& \hspace{1.7cm} - 6 (1+\alpha)(1+\alpha-\theta) \Big( \Big( \ln \big( \tfrac{-ib_{1}}{\theta} \big)\Big)^{2}
-\Big(\ln \big( \tfrac{-ib_{2}}{\theta} \big)\Big)^{2} \Big) \bigg\},
\end{align}
\begin{align}\nonumber
X_{2}^{(3)} = &\; \frac{\alpha}{2\theta} + \sum_{k=1}^{\infty}  \bigg\{ -k\psi \left( 1+\alpha+k \theta \right) + k\ln \left( \frac{1+\alpha}{2}+ k \theta \right) +\frac{\alpha}{2\theta} 
	\\ \label{X2p3p}
&\hspace{1.8cm} +  \frac{1-6 \alpha - 9 \alpha^{2}}{24 \theta^{2}} \ln \left( 1+\frac{1}{k} \right) \bigg\},
\end{align}
and
\begin{align}\nonumber
\mathcal{X}_{3}^{(2)} = & -  \frac{i(\alpha +1) (\theta -1) }{2 \theta^3} \frac{b_1 - b_2}{2\pi i}, 
	\\\nonumber
X_{3}^{(2)} = & - \frac{i(\alpha +1) (\theta -1) }{2 \theta^3} 
\frac{b_2 - b_1 + b_1\ln(-\frac{i b_1}{\theta}) - b_2 \ln(-\frac{ib_2}{\theta})}{2\pi i}, 
	\\\nonumber
\mathcal{X}_{3}^{(3)} = & \; \frac{\theta\left(9 \alpha^2+6 \alpha -1\right)  -3 (\alpha +1)^2+2 \theta^2 }{24 \theta^3} \frac{\ln(b_1/b_2)}{2 \pi i}, 
	\\ \nonumber
X_{3}^{(3)} = & \; 
\frac{\theta\left(9 \alpha^2+6 \alpha -1\right)  -3 (\alpha +1)^2+2 \theta^2}{24 \theta^3}
\frac{\ln(\frac{b_1}{b_2}) \ln(-\frac{b_1b_2}{\theta^2})}{4\pi i} 
	\\ \label{X3asympcoeffs}
& \; +
\frac{6 (\alpha +1) (\theta -1) (\alpha -\theta +1)}{24 \theta^3} \frac{\ln(b_1/b_2)}{2 \pi i}.
\end{align}

\end{proposition}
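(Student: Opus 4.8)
\emph{Overview.} The quantity to analyze is $X = \int_{\Sigma_5} H'(\zeta)\,\ln \mathcal{G}(\zeta)\,\tfrac{d\zeta}{2\pi i}$. Since $\Sigma_5$ passes through the origin---where neither the expansion \eqref{full expansion G} of $\ln\mathcal{G}$ nor the explicit antiderivative $\Psi$ of \eqref{expression for Psi} applies---the plan is to insert the exact identity \eqref{exactG}, $\ln\mathcal{G} = \hat f + \tilde f + \mathcal{D}_N(x(\cdot)) - \mathcal{D}_N(y(\cdot))$, and split $X = X_{\hat f} + X_{\tilde f} + X_{\mathcal{D},x} - X_{\mathcal{D},y}$ with $X_{\hat f} = \int_{\Sigma_5}H'\hat f\,\tfrac{d\zeta}{2\pi i}$, and so on. The point is that $\hat f$ and $\mathcal{D}_N(x(\cdot))$ are analytic in the open lower half-plane and $\tilde f$, $\mathcal{D}_N(y(\cdot))$ in the open upper half-plane, while the meromorphic function $H$ has its poles $\{\zeta_j\}_{j\ge 1}$ only on the negative imaginary axis; so the two branch cuts of $\ln\mathcal{G}$ on the imaginary axis can be separated by deforming the $\hat f$-integrals downward and the $\tilde f$-integrals upward. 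In each resulting contour integral one substitutes the Stirling-type expansions \eqref{expansion fhat}, \eqref{expansion ftilde} (and, where convenient, the explicit form \eqref{asymp for Psi until constant} of $\Psi$) and integrates term by term; the $\mathcal{D}_N$-pieces contribute only $\bigO(s^{-\rho N})$ by \eqref{lol24} and the $\ln$-times-linear growth bound on $H'$, hence are negligible once $N$ is large.

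\emph{The upper contributions.} Deform $\Sigma_5$ up into the open upper half-plane to a contour $\tilde\sigma$ from $b_1$ to $b_2$ with $|\zeta|\ge\kappa$ for a small fixed $\kappa>0$; no singularity is crossed, so $X_{\tilde f} = \int_{\tilde\sigma}H'\tilde f\,\tfrac{d\zeta}{2\pi i}$ and $X_{\mathcal{D},y} = \int_{\tilde\sigma}H'\mathcal{D}_N(y(\cdot))\,\tfrac{d\zeta}{2\pi i}$. On $\tilde\sigma$ one has $\arg\zeta\in(0,\pi)$, hence $\arg\!\big(\tfrac1\theta(\tfrac{1+\alpha}{2}-is^\rho\zeta)\big)\in(-\tfrac{\pi}{2},\tfrac{\pi}{2})$ and all the needed expansions apply directly; $X_{\tilde f}$ is then computed by inserting \eqref{expansion ftilde} and evaluating the resulting elementary contour integrals, while $X_{\mathcal{D},y}$ is negligible.

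\emph{The lower contributions and the main obstacle.} For $X_{\hat f}$ and $X_{\mathcal{D},x}$, deform $\Sigma_5$ downward through the origin into the lower half-plane to a contour $\hat\sigma$ from $b_1$ to $b_2$ crossing the negative imaginary axis midway between two consecutive poles, that is, at a point where $w:=\tfrac1\theta(\tfrac{1+\alpha}{2}-is^\rho\zeta)$ equals a half-integer $-(J(s)+\tfrac12)$ with $J(s)=\lfloor(\kappa s^\rho-\tfrac{1+\alpha}{2})/\theta\rfloor\sim\kappa s^\rho/\theta$. The deformation sweeps across $\zeta_1,\dots,\zeta_{J(s)}$, so
\[
X_{\hat f}+X_{\mathcal{D},x} = \int_{\hat\sigma}H'\big(\hat f+\mathcal{D}_N(x(\cdot))\big)\tfrac{d\zeta}{2\pi i}\;-\;\sum_{j=1}^{J(s)}\operatorname{Res}_{\zeta=\zeta_j}\Big[H'(\zeta)\big(\hat f(\zeta)+\mathcal{D}_N(x(\zeta))\big)\Big].
\]
Using $\operatorname{Res}_{\zeta=\zeta_j}H=\tfrac{ij}{s^\rho}$ and the identity $x(\zeta_j)=1+\alpha+j\theta$ together with the fact that $\hat f+\mathcal{D}_N(x(\cdot))$ equals $\ln\Gamma(x(\cdot))$ minus an elementary term, a short computation gives $\operatorname{Res}_{\zeta=\zeta_j}\big[H'(\hat f+\mathcal{D}_N(x(\cdot)))\big] = j\big(\psi(1+\alpha+j\theta)-\ln(\tfrac{1+\alpha}{2}+j\theta)\big)$; the truncated-Stirling remainders in $\hat f$ and $\mathcal{D}_N(x)$ cancel automatically, as do the $\ln s$-terms. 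The finite sum $\sum_{j=1}^{J(s)}j\big(\psi(1+\alpha+j\theta)-\ln(\tfrac{1+\alpha}{2}+j\theta)\big)$ grows with $J(s)$; subtracting term by term the two regularizing corrections that appear in the definition of $X_2^{(3)}$ gives
\[
\sum_{j=1}^{J(s)}j\big(\psi(1+\alpha+j\theta)-\ln(\tfrac{1+\alpha}{2}+j\theta)\big) = \tfrac{\alpha}{2\theta}J(s)+\tfrac{1-6\alpha-9\alpha^2}{24\theta^2}\ln\!\big(J(s)+1\big)-X_2^{(3)}+\bigO(s^{-\rho}),
\]
so this sum supplies precisely $-X_2^{(3)}$ to the constant term, plus a $\kappa$-dependent $s^\rho$ term and a $\ln(s^\rho)$ term. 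The remaining integral $\int_{\hat\sigma}H'(\hat f+\mathcal{D}_N(x(\cdot)))$ is evaluated using \eqref{expansion fhat} and \eqref{asymp for Psi until constant}, except on the short arc of $\hat\sigma$ near the negative imaginary axis, where the argument of $\psi$ lies near the negative real axis and these expansions degenerate; there one writes $\psi(w)=\psi(1-w)-\pi\cot(\pi w)$ to isolate the elementary pole part of $H$ and computes its $\bigO(1)$ contribution rather than discarding it. This bookkeeping around the poles accumulating at the origin---checking that all $\kappa$-dependence cancels between $\int_{\hat\sigma}$ and the residue sum, that the $s^\rho(\ln s^\rho)^2$, $s^\rho\ln(s^\rho)$, $s^\rho$, and $\ln(s^\rho)$ coefficients come out as claimed, and that the series $X_2^{(3)}$ is extracted cleanly and uniformly for $\theta$ in compact subsets of $(0,1]$---is the step I expect to be the main obstacle.

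\emph{Conclusion.} Collecting the contributions and evaluating the elementary contour integrals that occur---with the help of Lemma \ref{lemma: integrals for I1}, the asymptotics \eqref{large z asymp for log Barnes G}--\eqref{large z asymp for log Gamma} of $\Gamma$ and $G$, the explicit constants \eqref{def of c1...c8} and \eqref{def of Re b2}, and the branch identities \eqref{lnb1b2}---the stated formulas for $\mathscr{X}_1^{(2)}$, $\mathcal{X}_1^{(2)}$, $X_1^{(2)}$, $\mathcal{X}_1^{(3)}$, $X_1^{(3)}$, $X_2^{(3)}$, $\mathcal{X}_3^{(2)}$, $X_3^{(2)}$, $\mathcal{X}_3^{(3)}$, $X_3^{(3)}$ follow after a long but routine algebraic simplification, and grouping the result by powers of $s^\rho$ and $\ln(s^\rho)$ gives the asserted expansion; taking $N$ large relative to the number of retained terms absorbs every $\mathcal{D}_N$-error into the $\bigO(s^{-\rho}\ln(s^\rho))$ remainder.
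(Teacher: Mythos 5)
Your strategy is genuinely different from the paper's, and its skeleton is sound. The paper first integrates by parts, so that the integrand becomes $H\,\mathcal{G}'/\mathcal{G}$ with $\mathcal{G}'/\mathcal{G}$ an \emph{exact} combination of two digamma functions and logarithms, and then splits $X=X_1+X_2+X_3$ by adding and subtracting the rational function $\hat g$ of \eqref{def of g hat}: the boundary term $X_1$ is read off from the expansions of $H$ and $\ln\mathcal G$ at $b_1,b_2$; the piece $X_2$ has an integrand which is $\bigO(w^{-2}\ln w)$ and independent of $s$ except through the contour, so its limit is obtained by deforming into the right half-plane and summing the residues at \emph{all} poles at once, which gives the convergent series $X_2^{(3)}$ with no truncation and no auxiliary parameter; and $X_3$ is pushed to the arc $is^\rho\gamma_{b_2b_1}$ without crossing any singularity, where uniform digamma asymptotics apply. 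You instead keep $H'\ln\mathcal G$, insert the exact decomposition \eqref{exactG}, push the $\tilde f$-part up (correct: nothing is crossed) and the $\hat f$-part down across the first $J(s)\sim\kappa s^\rho/\theta$ poles of $H'$. Your residue identity is correct: since $\hat f+\mathcal D_N(x(\cdot))=\ln\Gamma(x(\cdot))-\tfrac12\ln(2\pi)-is^\rho\zeta\big(a_1\ln s+\ln(i\zeta)+a_2\big)$ exactly and $a_1=\rho$, one indeed gets $\mathrm{Res}_{\zeta_j}\big[H'\,(\hat f+\mathcal D_N(x(\cdot)))\big]=j\big(\psi(1+\alpha+j\theta)-\ln(\tfrac{1+\alpha}{2}+j\theta)\big)$, with the correct orientation sign, so your truncated sum does contain the series in \eqref{X2p3p} (note, though, that your displayed partial-sum identity misses the additive constant $\tfrac{\alpha}{2\theta}$ that stands outside the sum in \eqref{X2p3p}).

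There are, however, two concrete problems. First, your treatment of the short arc of $\hat\sigma$ near the negative imaginary axis is wrong as stated: on the portion where $y(\zeta)=\tfrac1\theta(\tfrac{1+\alpha}{2}-is^\rho\zeta)$ lies within $O(1)$ of the negative real axis (a $\zeta$-arc of length $O(s^{-\rho})$), differentiating the reflection term $-\pi\cot(\pi y)$ produces $y\,\pi^2/\sin^2(\pi y)$ with $|y|\sim\kappa s^\rho/\theta$, so $H'=\bigO(s^{2\rho})$ there, while $\hat f=a_3\ln s+\bigO(1)$; a direct computation (on the vertical line through the half-integer, $1/\sin^2(\pi y)$ reduces to $1/\cosh^2(\pi\tau)$, which integrates to $2/\pi$) shows this piece contributes at orders $s^\rho\ln(s^\rho)$ and $s^\rho$, with $\kappa$-dependence --- not the ``$\bigO(1)$ contribution'' you claim. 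Second, and decisively for this proposition, your route generates $\kappa$-dependent and growing terms at every order $s^\rho(\ln s^\rho)^2$, $s^\rho\ln s^\rho$, $s^\rho$, $\ln s^\rho$ from three separate sources (the truncated residue sum, the smooth part of $\int_{\hat\sigma}$, and the near-axis piece just discussed); the statement being proved consists precisely of the ten explicit coefficients, yet none of the required cancellations is carried out and none of $\mathscr X_1^{(2)},\dots,X_3^{(3)}$ is actually derived, so the asserted formulas are not established. The paper's prior integration by parts together with the $\hat g$-regularization is exactly what eliminates the truncated sums, the parameter $\kappa$, and the neighbourhood of the pole string, leaving only elementary arc integrals; if you want to keep your decomposition, you would have to perform the near-axis computation and exhibit these cancellations explicitly before the coefficient extraction can even begin.
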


\begin{proposition}[Large $s$ asymptotics of $Z$]\label{Zprop}
Let $\alpha > -1$. The quantity $Z$ defined in (\ref{def of Z and X}) admits the following asymptotic expansion as $s \to +\infty$:
\begin{align}
Z = \mathscr{Z}^{(2)} s^\rho (\ln(s^\rho))^2  + \mathcal{Z}^{(2)} s^\rho \ln(s^\rho)
  + Z^{(2)} s^\rho  + \mathcal{Z}^{(3)} \ln(s^\rho) + Z^{(3)} + \bigO(s^{-\rho} \ln(s^\rho))\label{asymp for Z}
\end{align}
uniformly for $\theta$ in compact subsets of $(0,1]$, where the coefficients $\mathscr{Z}^{(2)}$, $\mathcal{Z}^{(2)}$, $Z^{(2)}$, $\mathcal{Z}^{(3)}$, $Z^{(3)}$ are given by
\begin{align*}
\mathscr{Z}^{(2)} = & -\frac{c_4}{2\pi \theta^2 \rho} (b_1 - b_2),
	\\
\mathcal{Z}^{(2)} =&\;  \frac{1}{2 \pi  \theta^2}\bigg\{\pi  |b_2| (c_5-c_6)+\frac{b_2 c_4 \ln\left(-\frac{i b_2}{\theta }\right)-b_1 c_4 \ln \left(-\frac{ib_1}{\theta }\right)}{\rho } +(b_1-b_2)(c_5+c_6-c_7)
	\\
& +\frac{i \pi}{2} \Big(b_1 (3c_5+c_6)+b_2 (c_5-c_6)\Big) -b_1 \ln(b_1)(c_5+c_6)
	\\
& +b_2 \ln(b_2) (c_5+c_6)+\pi \im(b_2) (c_5+c_6)\bigg\},
\end{align*}
\begin{align*}
Z^{(2)} = &\; \frac{1}{4\pi \theta^{2}} \bigg\{ -2 (b_{1}-b_{2})\Big( 2(c_{5}+c_{6})-c_{7} \Big)- i \pi \Big( (-4 i |b_{2}| + 3b_{1}+b_{2})c_{5} + (b_{1}-b_{2})c_{6} \Big) \\
& + b_{1} \ln(b_{1}) \Big( 4c_{5}+4c_{6}-2c_{7}+ \pi i (3c_{5}+c_{6}) \Big) - 2 b_{1} (\ln{b_{1}})^2(c_{5}+c_{6}) \\
& +b_{2} \ln (b_{2}) \Big( -4c_{5}-4c_{6}+2c_{7} + \pi i (c_{5}-c_{6})+2(c_{5}+c_{6})\ln (b_{2}) \Big) \\
& +\pi \Big( 2 |b_{2}|(c_{5}-c_{6})-i(b_{1}+b_{2})(c_{5}+c_{6}) \Big) \ln \left( \frac{i(|b_{2}|+\im(b_{2}))}{2} \right) \\
& + 2 \left( 1+\ln(\tfrac{-i}{\theta})\right) \bigg[ (b_{1}-b_{2})(c_{5}+c_{6}-c_{7}) + |b_{2}| (c_{5}-c_{6}) \pi \\
& + \frac{\pi i}{2}\Big( b_{2}(c_{5}-c_{6})+b_{1}(3c_{5}+c_{6})  \Big) + \pi \im (b_{2}) (c_{5}+c_{6}) \\
& - b_{1} \ln (b_{1})(c_{5}+c_{6}) + b_{2} \ln(b_{2})(c_{5}+c_{6}) \bigg] \bigg\},
	\\
\mathcal{Z}^{(3)} =&\; \frac{2 \pi  \big(c_8-\frac{3\alpha^2-1}{12}\big) \rho  (\im(b_2)-|b_2|)-i |b_2| (\ln(b_1)-\ln(b_2)) (\alpha  c_4+c_4-2 c_8 \rho )}{4 \pi  |b_2| \theta^2 \rho },
\end{align*}
and
\begin{align*}
Z^{(3)} =&\; \frac{1}{8 \pi  |b_2| \theta^2}\bigg\{4 \pi  \bigg(c_8-\frac{3\alpha^2-1}{12}\bigg) \bigg[-|b_2| \ln \left(\frac{2 i |b_2|^2}{|b_2|+\im(b_2)}\right)
	\\
& +\im(b_2) \ln \left(\frac{1}{2} i (|b_2|+\im(b_2))\right)+|b_2|-\im(b_2)\bigg]
	\\
& +i (\alpha +1) |b_2| \bigg[-4 i \pi  \ln(|b_2|) (c_5-c_6)+4 i \pi  c_6 \ln \left(\frac{2}{|b_2|+\im(b_2)}\right)
	\\
& +\ln(b_1) \Big(-2 c_7+i \pi  (3 c_5+c_6)\Big)-(\ln{b_1})^2 (c_5+c_6) \\ & +\ln(b_2) \Big(\ln(b_2) (c_5+c_6)
 +i \pi  (c_5-c_6)+2 c_7\Big)+2 \pi ^2 c_5\bigg]
	\\
& +4 i \left(1+\ln\Big(-\frac{i}{\theta }\Big)\right) \bigg(|b_2| c_8 (\ln(b_1)-\ln(b_2))+i \pi  \bigg(c_8-\frac{3\alpha^2-1}{12}\bigg) (|b_2|-\im(b_2))\bigg)
	\\
& +2 i |b_2| c_8 \left((\ln{b_1})^2-(\ln{b_2})^2\right)\bigg\}.
\end{align*}
\end{proposition}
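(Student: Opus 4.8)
The strategy is to substitute the large $s$ asymptotics of $H'(\zeta)$ and of $p(\zeta)$ into $Z = -2\int_{\gamma_{b_2b_1}} H'(\zeta)p(\zeta)\,\frac{d\zeta}{2\pi i}$ and to integrate term by term over the bounded circular arc $\gamma_{b_2b_1}$. First I would record the behaviour of $H'(\zeta)$ on $\gamma_{b_2b_1}$. Writing $w = w(\zeta) = (\tfrac{1+\alpha}{2}-is^\rho\zeta)/\theta$, differentiation of \eqref{def of H} gives $H'(\zeta) = -\tfrac{is^\rho}{\theta^2}\bigl(\psi(w)+w\,\psi'(w)\bigr)$. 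On $\gamma_{b_2b_1}$ one has $|\zeta|=|b_2|$, which is bounded away from $0$ and $\infty$ for $\theta$ in a compact subset of $(0,1]$, so $w\to\infty$ uniformly with $\arg w$ staying in a fixed compact subset of $(-\pi,\pi)$; hence the classical expansions $\psi(w)=\ln w - \tfrac1{2w}+\bigO(w^{-2})$ and $\psi'(w)=\tfrac1w+\tfrac1{2w^2}+\bigO(w^{-3})$ apply and yield
\begin{equation*}
H'(\zeta) = -\frac{is^\rho}{\theta^2}\Bigl(\ln(s^\rho) + \ln\bigl(\tfrac{-i\zeta}{\theta}\bigr) + 1 + \frac{i(1+\alpha)}{2s^\rho\zeta} + \bigO(s^{-2\rho})\Bigr),
\end{equation*}
uniformly for $\zeta\in\gamma_{b_2b_1}$ and $\theta$ in compacts of $(0,1]$; retaining terms beyond the $\bigO(s^{-2\rho})$ correction is unnecessary, since once multiplied by $p(\zeta)=\bigO(\ln s^\rho)$ they contribute only to the remainder $\bigO(s^{-\rho}\ln s^\rho)$.

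Next I would insert this expansion together with \eqref{full expansion of p} (with $N=1$) into $Z$, replacing $\ln s$ by $\rho^{-1}\ln(s^\rho)$. Multiplying out, $H'(\zeta)p(\zeta)$ becomes a finite sum of terms $s^{j\rho}(\ln s^\rho)^k$ with $j\in\{1,0,-1\}$, $k\in\{0,1,2\}$, each multiplied by a function of $\zeta$ built from $\ln(i\zeta)$, $\ln(-i\zeta)$, $\mathcal{R}(\zeta)$, $r(\zeta)$ and $\zeta^{-1}$, using the explicit formulas \eqref{def of mathcalR} and \eqref{def of mathcalA1} for $\mathcal{R}$ and $\mathcal{A}_1$. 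To handle the branch cuts that $\gamma_{b_2b_1}$ meets — those of $\ln(\pm i\zeta)$ on the imaginary axis and of $r$ on $\Sigma_5$ — I would keep the combination $-\tfrac{c_5}{2}\ln(i\zeta)-\tfrac{c_6}{2}\ln(-i\zeta)-\tfrac{c_7}{2}+\tfrac{\mathcal{R}(\zeta)}{2}$ grouped as a single function analytic on $\mathbb{C}\setminus\Sigma_5$. The error terms in \eqref{full expansion of p} and in the digamma expansion, being uniform of order $\bigO(s^{-2\rho})$, integrate over the bounded arc into the claimed remainder $\bigO(s^{-\rho}\ln s^\rho)$. This power counting already reproduces the shape of \eqref{asymp for Z}, and the top coefficient falls out at once: since $\int_{\gamma_{b_2b_1}}d\zeta = b_1-b_2$, one gets $\mathscr{Z}^{(2)} = -2\cdot\tfrac{-i}{\theta^2}\cdot\bigl(-\tfrac{c_4}{2\rho}\bigr)\cdot\tfrac{b_1-b_2}{2\pi i} = -\tfrac{c_4}{2\pi\theta^2\rho}(b_1-b_2)$.

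It then remains to evaluate the finitely many contour integrals $\int_{\gamma_{b_2b_1}}(\,\cdots)\,\frac{d\zeta}{2\pi i}$ occurring in this expansion, namely of $1$, $\ln(\pm i\zeta)$, $\mathcal{R}(\zeta)$, $(\ln(-i\zeta))^2$, $\ln(i\zeta)\ln(-i\zeta)$, $\ln(-i\zeta)\mathcal{R}(\zeta)$, $\zeta^{-1}$, $r(\zeta)\zeta^{-1}$, $\ln(\pm i\zeta)\zeta^{-1}$, $\ln(-i\zeta)r(\zeta)\zeta^{-1}$ and $\mathcal{R}(\zeta)\zeta^{-1}$. Several are of the same type as in Lemma \ref{lemma: integrals for I1}; the remaining ones — in particular those weighted by $\mathcal{R}$ — can be computed by exhibiting explicit antiderivatives along the arc, or, for the $\mathcal{R}$-weighted integrals, by substituting the representation \eqref{calRexpression} and interchanging the order of integration so as to reduce to elementary integrals. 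I would collect these evaluations into a separate lemma proved in an appendix, then substitute them back and simplify using the explicit values of $b_1,b_2,c_1,\dots,c_8,\rho$ to obtain the stated formulas for $\mathcal{Z}^{(2)}$, $Z^{(2)}$, $\mathcal{Z}^{(3)}$ and $Z^{(3)}$.

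The principal obstacle will be the bookkeeping: the calculation is long, and one must organise the integrand so that only genuinely analytic combinations of $\ln(\pm i\zeta)$, $\mathcal{R}(\zeta)$ and $r(\zeta)$ get integrated across the arc, while the evaluation of the $\mathcal{R}$-weighted integrals and of the products of two logarithms requires care. A minor additional point is to check that the digamma asymptotics hold uniformly as $\theta\downarrow 0$; this is harmless because $\theta$ is confined to a compact subset of $(0,1]$, so $|b_2|$ stays bounded below.
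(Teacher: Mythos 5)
Your proposal is correct and follows essentially the same route as the paper: expand $H'(\zeta)$ via the digamma asymptotics and $p(\zeta)$ via Proposition \ref{pprop} uniformly on the bounded arc $\gamma_{b_2b_1}$, multiply out, absorb the uniform error terms into $\bigO(s^{-\rho}\ln(s^\rho))$, and reduce the coefficients to finitely many explicit contour integrals evaluated in a separate lemma (the paper's Lemma \ref{Zintlemma}, proved in an appendix by antiderivatives and contour deformation). The only cosmetic difference is bookkeeping: the paper groups the integrand into the combinations $\mathcal{B}(\zeta)$ and $\mathcal{A}(\zeta)$ and evaluates the five integrals $\int\mathcal{B}$, $\int\ln(\zeta)\mathcal{B}$, $\int\mathcal{A}$, $\int\ln(\zeta)\mathcal{A}$, $\int\mathcal{B}/\zeta$, whereas you list the individual log- and $\mathcal{R}$-weighted pieces and allow the alternative of using \eqref{calRexpression} with Fubini.
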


\subsection{Proof of Proposition \ref{I2ZZXprop}}\label{I2ZZXsubsec}
Recall that the integral $I_{2}$ is given by
\begin{align*}
I_{2} & = \frac{1}{2} \int_{\sigma} H(\zeta) \tr \Big[ P^{\infty}(\zeta)^{-1}P^{\infty}(\zeta)' \sigma_{3} \Big] \frac{d\zeta}{2\pi i},
\end{align*}
where $H$ is given by \eqref{def of H} and $\sigma$ is a closed curve surrounding $\Sigma_{5}$ once in the positive direction which does not surround any of the poles $\zeta_{j}$ of $H$. A straightforward computation gives
\begin{align*}
\tr\left[ P^{\infty}(\zeta)^{-1}P^{\infty \prime}(\zeta)\sigma_3 \right]
= &\; \tr\bigg[ e^{-p(\zeta) \sigma_3} Q^\infty(\zeta)^{-1} e^{p_0\sigma_3}  
\Big(e^{-p_0\sigma_3} Q^{\infty \prime}(\zeta) e^{p(\zeta) \sigma_3}
	\\
& + e^{-p_0\sigma_3} Q^\infty(\zeta) e^{p(\zeta) \sigma_3} p'(\zeta) \sigma_3\Big)\sigma_3 \bigg]
	\\
= &\; \tr\bigg[Q^\infty(\zeta)^{-1} Q^{\infty \prime}(\zeta) \sigma_3 \bigg]
+ \tr\big[p'(\zeta) I\big] = 2 p'(\zeta),
\end{align*}
where we have used that
$$Q^\infty(\zeta)^{-1} Q^{\infty \prime}(\zeta) \sigma_3 = i\frac{\gamma'(\zeta)}{\gamma(\zeta)} \sigma_1$$
has trace zero in the last step. Hence
\begin{align*}
& I_{2} = \int_\sigma H(\zeta) p'(\zeta)\frac{d\zeta}{2\pi i}.
\end{align*}
The function $p(\zeta)$ is analytic for $\zeta \in \C \setminus \Sigma_5$ and satisfies the following jump condition across $\Sigma_5$:
$$p_+(\zeta) + p_-(\zeta) = -\ln \mathcal{G}(\zeta), \qquad \zeta \in \Sigma_5.$$
Integrating by parts, deforming the contour, and using the jump condition for $p$, we find
\begin{align*}
 I_{2} 
 & = -\int_\sigma H'(\zeta) p(\zeta)\frac{d\zeta}{2\pi i}
=  \int_{\Sigma_5} H'(\zeta) (p_+(\zeta) - p_-(\zeta))\frac{d\zeta}{2\pi i} \nonumber
	\\
& =  \int_{\Sigma_5} H'(\zeta) (2p_+(\zeta) + \ln \mathcal{G}(\zeta))\frac{d\zeta}{2\pi i}
\nonumber	\\
& =  -2 \int_{\gamma_{b_2b_1}} H'(\zeta) p(\zeta)\frac{d\zeta}{2\pi i}
+ \int_{\Sigma_5} H'(\zeta) \ln \mathcal{G}(\zeta)\frac{d\zeta}{2\pi i} 
= Z + X,
\end{align*}
which completes the proof.

\subsection{Proof of Proposition \ref{Xprop}}\label{Xsubsec}
An integration by parts gives
\begin{align*}
X = &\; \frac{H(\zeta) \ln \mathcal{G}(\zeta)}{2\pi i}\bigg|_{\zeta = b_1}^{b_2}
- \int_{\Sigma_5} H(\zeta) \frac{\mathcal{G}'(\zeta)}{\mathcal{G}(\zeta)} \frac{d\zeta}{2\pi i}.
\end{align*}
From the expression \eqref{def of ln mathcal G} for $\ln\mathcal{G}$, we have
\begin{align*}
\frac{\mathcal{G}'(\zeta)}{\mathcal{G}(\zeta)} = & -is^\rho\bigg\{c_1 +c_2+c_3 + c_1\ln(i\zeta) + c_2\ln(-i\zeta) + \ln(s) 
	\\
& - \psi\bigg(\frac{1+\alpha}{2} + is^\rho \zeta\bigg) - \frac{1}{\theta}\psi\bigg(\frac{\frac{1+\alpha}{2} - is^\rho \zeta}{\theta}\bigg) \bigg\}
	\\
 = &\; i s^\rho\bigg\{\psi\bigg(\frac{1+\alpha}{2} + is^\rho \zeta\bigg) - \ln(i \zeta s^\rho)
+ \frac{\psi(\frac{\frac{1+\alpha}{2} - is^\rho \zeta}{\theta}) - \ln(-\frac{i \zeta s^\rho}{\theta})}{\theta}\bigg\},
\end{align*}
where $c_{1}$, $c_{2}$, and $c_{3}$ are given by \eqref{def of c1...c8}. Thus,
\begin{align*}
X =&\; \frac{H(\zeta) \ln \mathcal{G}(\zeta)}{2\pi i}\bigg|_{\zeta = b_1}^{b_2}
- i s^\rho \int_{\Sigma_5} \bigg\{ \frac{1}{\theta^{2}} \left( \frac{1+\alpha}{2}- i s^{\rho}\zeta \right) \psi \bigg(\frac{\frac{1+\alpha}{2}-is^{\rho}\zeta}{\theta} \bigg)\bigg\}
	\\
&\times \bigg\{\psi\bigg(\frac{1+\alpha}{2} + is^\rho \zeta\bigg) - \ln(i \zeta s^\rho)
+ \frac{\psi(\frac{\frac{1+\alpha}{2} - is^\rho \zeta}{\theta}) - \ln(-\frac{i \zeta s^\rho}{\theta})}{\theta}\bigg\}\frac{d\zeta}{2\pi i}
	\\
= &\; \frac{H(\zeta) \ln \mathcal{G}(\zeta)}{2\pi i}\bigg|_{\zeta = b_1}^{b_2}
- \int_{i s^\rho \Sigma_5} \bigg\{ \frac{1}{\theta^{2}} \left( \frac{1+\alpha}{2}- w \right) \psi \bigg(\frac{\frac{1+\alpha}{2}- w}{\theta} \bigg)\bigg\}
	\\
&\times \bigg\{\psi\bigg(\frac{1+\alpha}{2} + w\bigg) - \ln(w)
+ \frac{\psi(\frac{\frac{1+\alpha}{2} - w}{\theta}) - \ln(-\frac{w}{\theta})}{\theta}\bigg\}\frac{dw}{2\pi i},
\end{align*}
where we have changed variables to $w = is^\rho \zeta$ in the second step. The function $\psi(\frac{\frac{1+\alpha}{2} - w}{\theta})$ has poles at the points $w = \frac{1+\alpha}{2} + j\theta \in (0,\infty)$, $j = 0,1,2, \dots$, and the function $\psi(\frac{1+\alpha}{2} + w)$ has poles at the points $w = -\frac{1+\alpha}{2} - j \in (-\infty,0)$, $j = 0,1,2, \dots$. Thus the term which will cause the most difficulties in the analysis is the one involving the product $\psi(\frac{\frac{1+\alpha}{2} - w}{\theta})\psi(\frac{1+\alpha}{2} + w)$ (for the other terms we can deform the contour into the left half-plane and use the large $z$ asymptotics of $\psi(z)$).

Let $m = \frac{1+\alpha}{2}$; the exact value of $m$ is not essential as long as $m > 0$. We split $X$ as follows:
\begin{align}\label{X1X2X3}
X = &\; X_1 + X_2 + X_3,
\end{align}
where
\begin{align}\label{X1def}
X_1 = &\; \frac{H(\zeta) \ln \mathcal{G}(\zeta)}{2\pi i}\bigg|_{\zeta = b_1}^{b_2},
	\\ \label{X2def}
X_2 = & - \int_{i s^\rho \Sigma_5} \frac{1}{\theta^{2}} \left( \frac{1+\alpha}{2}- w \right) \psi \bigg(\frac{\frac{1+\alpha}{2}- w}{\theta} \bigg) \bigg\{\psi\bigg(\frac{1+\alpha}{2} + w\bigg) - \ln(w) +\hat{g}(w) \bigg\}\frac{dw}{2\pi i},
	\\ \label{X3def}
X_3 = & - \int_{i s^\rho \Sigma_5} \frac{1}{\theta^{2}} \left( \frac{1+\alpha}{2}- w \right) \psi \bigg(\frac{\frac{1+\alpha}{2}- w}{\theta} \bigg) \bigg\{ \frac{\psi(\frac{\frac{1+\alpha}{2} - w}{\theta}) - \ln(-\frac{w}{\theta})}{\theta}
-\hat{g}(w)\bigg\}\frac{dw}{2\pi i},
\end{align}
and the function $\hat{g}(w)$ is defined by
\begin{equation}\label{def of g hat}
\hat{g}(w) = - \frac{\alpha}{2} \left( \frac{1}{w-m} - \frac{m}{(w-m)^{2}} \right) - \frac{1-3\alpha^{2}}{24(w-m)^{2}}.
\end{equation}
Note that all the $s$-dependence of $X_{2}$ and $X_{3}$ is in the contour. The term $\hat{g}(w)$ has been added and subtracted so that the integrand in the definition of $X_2$  is $\bigO(w^{-2} \ln w)$ as $w \to \infty$. This can be verified by using the asymptotic expansion (see \cite[Eq. 5.11.2]{NIST})
\begin{align}\label{digammaasymmptotics}
\psi(z) \sim \ln z - \frac{1}{2z} - \sum_{k=1}^\infty \frac{B_{2k}}{2k z^{2k}}, \qquad z \to \infty, \ |\arg z| \leq \pi - \delta,
\end{align}
where $B_{2k}$ is the $2k$th Bernoulli number, which implies that
\begin{align}\nonumber
\psi\bigg(\frac{1+\alpha}{2} + w\bigg) = &\; \ln\bigg(\frac{1+\alpha}{2} + w\bigg) 
- \frac{1}{2(\frac{1+\alpha}{2} + w)} - \frac{1}{12(\frac{1+\alpha}{2} + w)^2} + \bigO(w^{-4})
	\\ \label{psipluswasymptotics}
= &\; \ln{w} + \frac{\alpha}{2w} + \frac{1 - 3\alpha^2}{24 w^2} + \bigO(w^{-3})
\end{align}
as $w \to \infty$ away from the negative real axis, as well as the expansion
\begin{align}\label{asymp for g hat}
\hat{g}(w) = - \frac{\alpha}{2w} - \frac{1-3\alpha^{2}}{24w^{2}} + \bigO(w^{-3}) \qquad \mbox{as } w \to \infty.
\end{align}
The integrals defining $X_2$ and $X_3$ converge because the function $\hat{g}(w)$ is analytic except for a double pole at $w = m > 0$.

We will show that $X_1, X_2, X_3$ satisfy the large $s$ asymptotics
\begin{align}\label{asymp for X1}
& X_{1} = \mathscr{X}_{1}^{(2)} s^\rho (\ln(s^\rho))^2 + \mathcal{X}_{1}^{(2)} s^\rho \ln(s^\rho) + X_{1}^{(2)} s^\rho + \mathcal{X}_{1}^{(3)} \ln(s^\rho) + X_{1}^{(3)} + \bigO\big(s^{-\rho}\ln(s^{\rho})\big),
	\\ \label{asymp for X2}
& X_{2} = X_{2}^{(3)} + \bigO(s^{-\rho} \ln (s^{\rho})),
	\\ \label{asymp for X3}
& X_3 = \mathcal{X}_{3}^{(2)} s^{\rho} \ln(s^{\rho}) + X_{3}^{(2)} s^{\rho} + \mathcal{X}_{3}^{(3)} \ln (s^{\rho}) + X_{3}^{(3)} + \bigO\big(s^{-\rho}\ln (s^{\rho})\big), 
\end{align}
uniformly for $\theta$ in compact subsets of $(0,1]$, where the coefficients of the three expansions are given by (\ref{X1asympcoeffs}), (\ref{X2p3p}), and (\ref{X3asympcoeffs}), respectively. This will complete the proof of Proposition \ref{Xprop}.

\subsubsection{Asymptotics of $X_{1}$}
For $\zeta \in \C$ bounded away from $i \mathbb{R}$, we have the expansions (see (\ref{full expansion G}), (\ref{def of H}), and (\ref{digammaasymmptotics}))
$$\ln \mathcal{G}(\zeta) = \frac{c_4}{\rho} \ln(s^\rho) + c_5 \ln(i\zeta) + c_6 \ln(-i\zeta) + c_7 + \frac{c_8}{i\zeta s^\rho} + \bigO(s^{-2\rho}), \qquad s \to +\infty,$$
and
\begin{align}\label{bigFasymptotics}
H(\zeta) = -\frac{i \zeta \ln \left(-\frac{i \zeta s^\rho}{\theta}\right)}{\theta^2} s^\rho
+ \frac{\alpha -\theta +(\alpha +1) \ln \left(-\frac{i \zeta  s^\rho}{\theta}\right)+1}{2 \theta^2} + \bigO(s^{-\rho}), \qquad s \to +\infty,
\end{align}
where the constants $c_j$ are given by \eqref{def of c1...c8}. Substituting these expansions into the definition (\ref{X1def}) of $X_1$, we find (\ref{asymp for X1}).

\subsubsection{Asymptotics of $X_{2}$}
Since the integrand in the definition of $X_2$  is $\bigO(w^{-2} \ln w)$ as $w \to \infty$, we see that $X_2$ satisfies (\ref{asymp for X2}) with
\begin{align}\label{X23expression}
X_{2}^{(3)} = - \int_{-i \infty}^{i \infty} \left\{ \frac{1}{\theta^{2}}\left( \frac{1+\alpha}{2} - w \right) \psi \left( \frac{\frac{1+\alpha}{2}-w}{\theta} \right) \right\} \left\{ \psi \left( \frac{1+\alpha}{2}+w \right) - \ln w +\hat{g}(w)  \right\} \frac{dw}{2\pi i},
\end{align}
where the contour crosses the real line at $0$ and $\hat{g}(w)$ is given by \eqref{def of g hat}. It remains to show that $X_{2}^{(3)}$ can be written as in (\ref{X2p3p}). 

Since $\psi(z) = \frac{-1}{z+k} + \bigO\big( (z+k)^{-2} \big)$ as $z \to -k$ for each $k = 0,1,2, \dots$, it follows that 
\begin{equation*}
\psi \left( \frac{\frac{1+\alpha}{2}-w}{\theta} \right)
\end{equation*}
has a simple pole with residue $\theta$ at each of the points $\frac{1+\alpha}{2}+ k \theta$, $k = 0,1,2, \dots$. For $k \geq 1$, the associated residue gives a contribution to $X_{2}^{(3)}$ equal to (taking into account that after the deformation, the loop is going in the clockwise orientation around $\frac{1+\alpha}{2}+ k \theta$)
\begin{align*}
-k \left\{ \psi \left( 1+\alpha+k \theta \right) - \ln \left( \frac{1+\alpha}{2}+ k \theta \right) + \hat{g}\left(\frac{1+\alpha}{2}+ k \theta\right)  \right\}.
\end{align*}
On the other hand, the residue at $m = \frac{1+\alpha}{2}$ is given by
\begin{equation*}
-\frac{(1-6\alpha - 9 \alpha^{2}) \gamma_{\mathrm{E}} - 12 \alpha \theta}{24 \theta^{2}},
\end{equation*}
where $\gamma_{\mathrm{E}}$ is Euler's gamma constant.
By (\ref{psipluswasymptotics}) and (\ref{asymp for g hat}), we have
$$\psi \left( \frac{1+\alpha}{2}+w \right) - \ln w +\hat{g}(w) = \bigO(w^{-3})$$
as $w \to \infty$ away from the negative real axis. Moreover, by (\ref{digammaasymmptotics}), 
\begin{align}\label{psiminuswestimate}
\psi\bigg(\frac{\frac{1+\alpha}{2} - w}{\theta}\bigg) = &\; \ln\bigg(-\frac{w}{\theta}\bigg) + O(w^{-1}),
\end{align}	
as $w \to \infty$ away from the positive real axis; in fact, combining (\ref{digammaasymmptotics}) with the reflection formula $\psi(1-z) = \psi(z) + \pi \cot(\pi z)$, we see that  (\ref{psiminuswestimate}) holds also as $w \to \infty$ in a sector containing the positive real axis as long as $w$ stays away from the poles $\{\frac{1+\alpha}{2} + j\theta\}_{j=0}^\infty$.
Thus, deforming the contour in (\ref{X23expression}) to infinity in the right half-plane along curves which stay away from the set $\{\frac{1+\alpha}{2} + j\theta\}_{j=0}^\infty$, the contribution from infinity vanishes and we find
\begin{align*}
X_{2}^{(3)} = &  -\frac{(1-6\alpha - 9 \alpha^{2}) \gamma_{\mathrm{E}} - 12 \alpha \theta}{24 \theta^{2}} \\
& + \sum_{k=1}^{\infty} k \left\{ -\psi \left( 1+\alpha+k \theta \right) + \ln \left( \frac{1+\alpha}{2}+ k \theta \right) -\hat{g}\left(\frac{1+\alpha}{2}+ k \theta\right)  \right\},
\end{align*}
where the series is convergent because it originates from a convergent integral. Using the series representation for the Euler gamma constant (see \cite[Eq. 5.2.3]{NIST})
\begin{equation*}
\gamma_{\mathrm{E}} = \sum_{k=1}^{\infty} \left\{ \frac{1}{k}- \ln \left( 1+\frac{1}{k} \right) \right\}
\end{equation*}
together with the fact that
\begin{align*}
-k\hat{g}\left(\frac{1+\alpha}{2}+ k \theta\right) =  \frac{1-6 \alpha - 9 \alpha^{2}}{24 k \theta^{2}} + \frac{\alpha}{2 \theta},
\end{align*}
we conclude that $X_{2}^{(3)}$ can be written as in (\ref{X2p3p}). This proves (\ref{asymp for X2}).

\subsubsection{Asymptotics of $X_{3}$}
Deforming the contour $is^{\rho}\Sigma_{5}$ in the definition (\ref{X3def}) of $X_3$ in the left half-plane to the contour $is^{\rho}\gamma_{b_{2}b_{1}}$, we get
\begin{align}\label{lol10}
X_3 = & \int_{i s^\rho \gamma_{b_2b_1}} \frac{1}{\theta^{2}} \left( \frac{1+\alpha}{2}- w \right) \psi \bigg(\frac{\frac{1+\alpha}{2}- w}{\theta} \bigg) \bigg\{ \frac{\psi(\frac{\frac{1+\alpha}{2} - w}{\theta}) - \ln(-\frac{w}{\theta})}{\theta}
-\hat{g}(w)\bigg\}\frac{dw}{2\pi i}.
\end{align}
The above representation is convenient for the asymptotic analysis of $X_3$, because the argument $(\frac{1+\alpha}{2}- w)/\theta$ of $\psi$ is large as $s \to + \infty$ uniformly for $w \in i s^\rho \gamma_{b_2b_1}$. As $w \to \infty$ away from the positive real axis, we have
\begin{align}\label{psialphawthetaasymptotics}
\psi\bigg(\frac{\frac{1+\alpha}{2} - w}{\theta}\bigg) = &\; \ln(-\frac{w}{\theta}) + \frac{\theta-1-\alpha}{2w} + \frac{6\theta - 2\theta^2 - 3- 3\alpha^2 + 6\alpha(\theta -1)}{24 w^2} + \bigO(w^{-3}).
\end{align}	
Substitution of the expansions (\ref{psialphawthetaasymptotics}) and \eqref{asymp for g hat} into \eqref{lol10} yields
\begin{align*}
X_3 = & \int_{i s^\rho \gamma_{b_2b_1}} \bigg\{-\frac{(\alpha +1) (\theta -1) \ln \left(-\frac{w}{\theta }\right)}{2 \theta^3}
	\\
& +\frac{6 (\alpha +1) (\theta -1) (\alpha -\theta +1)+\left(\left(9 \alpha^2+6 \alpha -1\right) \theta -3 (\alpha +1)^2+2 \theta^2\right) \ln \left(-\frac{w}{\theta}\right)}{24 \theta^3 w}
	\\
& + \bigO(w^{-2} \ln(-w))\bigg\}\frac{dw}{2\pi i}.
\end{align*}
Letting $w = i s^\rho \zeta$, we obtain
\begin{align}
X_3 = & \int_{\gamma_{b_2b_1}} \bigg\{-\frac{(\alpha +1) (\theta -1) \ln \left(-\frac{ i s^\rho \zeta}{\theta }\right)}{2 \theta^3} \nonumber
	\\
& +\frac{6 (\alpha +1) (\theta -1) (\alpha -\theta +1)+\left(\left(9 \alpha^2+6 \alpha -1\right) \theta -3 (\alpha +1)^2+2 \theta^2\right) \ln \left(-\frac{ i s^\rho \zeta}{\theta}\right)}{24 \theta^3  i s^\rho \zeta}
 \bigg\}\frac{i s^\rho d\zeta}{2\pi i} \nonumber
 	\\
& + \bigO(s^{-\rho} \ln(s^\rho)) \nonumber
	\\ \label{asymp for X3m}
= & \; \mathcal{X}_{3}^{(2)} s^{\rho} \ln(s^{\rho}) + X_{3}^{(2)} s^{\rho} + \mathcal{X}_{3}^{(3)} \ln (s^{\rho}) + X_{3}^{(3)} + \bigO\big(s^{-\rho}\ln (s^{\rho})\big) 
\end{align}
uniformly for $\theta$ in compact subsets of $(0,1]$, where the coefficients are given by
\begin{align*}
\mathcal{X}_{3}^{(2)} = & -  \frac{i(\alpha +1) (\theta -1) }{2 \theta^3} \int_{\gamma_{b_2b_1}}\frac{d\zeta}{2\pi i}, \\
X_{3}^{(2)} = & - \frac{i(\alpha +1) (\theta -1) }{2 \theta^3} \int_{\gamma_{b_2b_1}}  \ln(-\tfrac{i \zeta}{\theta }) \frac{d\zeta}{2\pi i}, \\
\mathcal{X}_{3}^{(3)} = & \; \frac{ \left(9 \alpha^2+6 \alpha -1\right) \theta -3 (\alpha +1)^2+2 \theta^2 }{24 \theta^3} \int_{\gamma_{b_2b_1}} \frac{1}{\zeta} \frac{d\zeta}{2\pi i}, \\
X_{3}^{(3)} = & \; 
\frac{\left(9 \alpha^2+6 \alpha -1\right) \theta -3 (\alpha +1)^2+2 \theta^2}{24 \theta^3}
\int_{\gamma_{b_2b_1}}\frac{\ln(-\frac{i \zeta}{\theta })}{\zeta}
\frac{d\zeta}{2\pi i}.
\end{align*}
Using that
\begin{align*}
& \int_{\gamma_{b_2b_1}}\frac{d\zeta}{2\pi i}  = \frac{b_1 - b_2}{2\pi i}, \qquad
\int_{\gamma_{b_2b_1}}  \ln(-\frac{i \zeta}{\theta })\frac{d\zeta}{2\pi i} 
= \frac{b_2 - b_1 + b_1\ln(-\frac{i b_1}{\theta}) - b_2 \ln(-\frac{ib_2}{\theta})}{2\pi i}
	\\
& \int_{\gamma_{b_2b_1}} \frac{1}{\zeta} \frac{d\zeta}{2\pi i} 
= \frac{\ln(b_1/b_2)}{2 \pi i} = \frac{\pi - 2\arg(b_2)}{2\pi},
\qquad
\int_{\gamma_{b_2b_1}}\frac{\ln(-\frac{i \zeta}{\theta })}{\zeta}
\frac{d\zeta}{2\pi i} = \frac{\ln(\frac{b_1}{b_2}) \ln(-\frac{b_1b_2}{\theta^2})}{4\pi i}, \qquad
\end{align*}
we see that the coefficients $\mathcal{X}_{3}^{(2)}, X_{3}^{(2)}, \mathcal{X}_{3}^{(3)}, X_{3}^{(3)}$ can be written as in (\ref{X3asympcoeffs}). This proves (\ref{asymp for X3}) and thus completes the proof of Proposition \ref{Xprop}.

\subsection{Proof of Proposition \ref{Zprop}}\label{Zsubsec}
We have
\begin{align*}
H'(\zeta) = -\frac{i}{\theta^2}s^\rho \ln(s^\rho) -\frac{i(1 + \ln(-\frac{i\zeta}{\theta}))}{\theta^2}s^\rho 
  + \frac{1+\alpha}{2\zeta \theta^2} + \bigO(s^{-\rho}), \qquad s \to +\infty,
\end{align*}
uniformly for $\zeta \in \gamma_{b_2b_1}$.
Moreover, by Proposition \ref{pprop}, 
$$p(\zeta) = -\frac{c_4}{2\rho} \ln(s^\rho) + \frac{\mathcal{B}(\zeta)}{2} + \frac{\mathcal{A}(\zeta)}{s^\rho} + \bigO(s^{-2\rho}), \qquad s \to +\infty,$$
uniformly for $\zeta \in \gamma_{b_2b_1}$, where the coefficients $\mathcal{B}(\zeta)$ and $\mathcal{A}(\zeta)$ are defined by
\begin{align}\label{def of mathcal A}
\mathcal{A}(\zeta) = & \; \frac{\mathcal{A}_1(\zeta)}{\zeta}= \frac{ic_8}{2\zeta}+\frac{c_8-\frac{3\alpha^2-1}{12}}{2|b_2|} \frac{r(\zeta)}{\zeta}, 
	  \\\label{def of mathcal B}
\mathcal{B}(\zeta) = & \; \mathcal{R}(\zeta) - c_{7} - c_{5} \ln(i\zeta) - c_{6} \ln (-i\zeta), 
\end{align}
with $\mathcal{A}_1$ and $\mathcal{R}$ given by \eqref{def of mathcalA1} and \eqref{def of mathcalR}. Substitution into the definition (\ref{def of Z and X}) of $Z$ shows that $Z$ admits an expansion of the form (\ref{asymp for Z}) as $s \to + \infty$, uniformly for $\theta$ in compact subsets of $(0,1]$, with coefficients given by
\begin{align}\nonumber
\mathscr{Z}^{(2)} = & -\frac{c_4}{2\pi \theta^2 \rho} (b_1 - b_2) = \frac{(\alpha +1) (\theta -1) \theta^{\frac{2}{\theta +1}-\frac{5}{2}}}{\pi  (\theta+1) \rho},
	\\\nonumber
\mathcal{Z}^{(2)} =&  \int_{\gamma_{b_2b_1}} \frac{-c_4(1 + \ln(-\frac{i\zeta}{\theta }))+\rho \mathcal{B}(\zeta)}{2 \pi \theta^2 \rho } d\zeta
	\\\nonumber
= & - \frac{c_4 }{2 \pi \theta^2 \rho} \bigg(b_1\ln(-\frac{i b_1}{\theta }) - b_2\ln(-\frac{ib_2}{\theta })\bigg)
+  \frac{1}{2 \pi \theta^2} \int_{\gamma_{b_2b_1}} \mathcal{B}(\zeta) d\zeta,
   	\\\nonumber
Z^{(2)} = &\; 
\int_{\gamma_{b_2b_1}} \frac{1+\ln(-\frac{i \zeta}{\theta})}{2 \pi  \theta^2} \mathcal{B}(\zeta)d\zeta
	\\\nonumber
= &\; \frac{1+\ln(-\frac{i}{\theta})}{2 \pi  \theta^2} \int_{\gamma_{b_2b_1}}  \mathcal{B}(\zeta)d\zeta
+ \frac{1}{2 \pi  \theta^2} \int_{\gamma_{b_2b_1}} \ln(\zeta) \mathcal{B}(\zeta)d\zeta,
	\\\nonumber
\mathcal{Z}^{(3)} =&\; \int_{\gamma_{b_2b_1}} \bigg\{ \frac{\mathcal{A}(\zeta)}{\pi  \theta^2}-\frac{i (\alpha +1) c_4}{4 \pi  \zeta  \theta^2 \rho}\bigg\} d\zeta
	\\\nonumber
= &\; \frac{1}{\pi  \theta^2}  \int_{\gamma_{b_2b_1}} \mathcal{A}(\zeta)d\zeta
- \frac{i (\alpha +1) c_4}{4 \pi  \theta^2 \rho} (\ln{b_1} - \ln{b_2}),
   	\\\nonumber
Z^{(3)} =&\; \int_{\gamma_{b_2b_1}} \bigg\{ \frac{1+\ln{\zeta} + \ln(-\frac{i}{\theta})}{\pi \theta^2}\mathcal{A}(\zeta)  
   + \frac{i (\alpha +1) \mathcal{B}(\zeta)}{4 \pi  \zeta  \theta^2}\bigg\}d\zeta
   	\\ \label{Pjdef}
=&\;  \frac{1+ \ln(-\frac{i}{\theta})}{\pi \theta^2} \int_{\gamma_{b_2b_1}}\mathcal{A}(\zeta)  d\zeta
+  \frac{1}{\pi \theta^2} \int_{\gamma_{b_2b_1}} \ln({\zeta}) \mathcal{A}(\zeta)  d\zeta
   + \frac{i (\alpha +1)}{4 \pi \theta^2}  \int_{\gamma_{b_2b_1}} \frac{\mathcal{B}(\zeta)}{\zeta} d\zeta.
\end{align}

It only remains to show that the coefficients in (\ref{Pjdef}) can be expressed as in the statement of Proposition \ref{Zprop}. Inspection of (\ref{Pjdef}) shows that there are five different integrals that need to evaluated:
$$\int_{\gamma_{b_2b_1}} \mathcal{B}(\zeta) d\zeta, \quad
\int_{\gamma_{b_2b_1}} \ln(\zeta) \mathcal{B}(\zeta)d\zeta, \quad
\int_{\gamma_{b_2b_1}} \mathcal{A}(\zeta) d\zeta, \quad
\int_{\gamma_{b_2b_1}} \ln (\zeta) \mathcal{A}(\zeta) d\zeta, \quad 
\int_{\gamma_{b_2b_1}} \frac{\mathcal{B}(\zeta)}{\zeta} d\zeta.$$
These integrals are evaluated in the following lemma.

\begin{lemma}\label{Zintlemma}
For $\alpha > -1$ and $\theta \in (0,1]$, it holds that
\begin{subequations}
\begin{align}\nonumber
\int_{\gamma_{b_2b_1}} \mathcal{B}(\zeta) d\zeta 
= &\; \pi  |b_2| (c_5-c_6)+(b_1-b_2) (c_5+c_6-c_7)
	\\\nonumber
&  + \frac{ \pi i}{2}   \Big(b_1 (3 c_5+c_6)+b_2 (c_5-c_6)\Big)-b_1 \ln(b_1) (c_5+c_6)
	\\ \label{calBintegral}
& +b_2 \ln(b_2) (c_5+c_6)+\pi \im(b_2) (c_5+c_6), 
	\\ \nonumber
\int_{\gamma_{b_2b_1}} \ln (\zeta) \mathcal{B}(\zeta) d\zeta 
= &\; \frac{1}{2} \bigg\{\pi  \ln\left(\frac{1}{2} i (|b_2|+\im(b_2))\right) \Big(2 |b_2| (c_5-c_6)-i (b_1+b_2) (c_5+c_6)\Big)
	\\\nonumber
& -i \pi  \Big(c_6 (b_1-b_2)+c_5 (-4 i |b_2|+3 b_1+b_2)\Big)
	\\\nonumber
& -2 (b_1-b_2) (2 (c_5+c_6)-c_7)
	\\\nonumber
& +b_1 \ln (b_1) \Big(i \pi  (3 c_5+c_6)+4 c_5+4 c_6-2 c_7\Big) -2 b_1 (\ln{b_1})^2 (c_5+c_6) 
	\\ \label{lnzetacalBintegral}
& +b_2 \ln(b_2) \Big(2 \ln (b_2) (c_5+c_6)+i \pi  (c_5-c_6)-4 c_5-4 c_6+2 c_7\Big)\bigg\},
	\\\label{calAintegral}
 \int_{\gamma_{b_2b_1}} \mathcal{A}(\zeta) d\zeta 
= & -\frac{c_8(\arg b_1 - \arg b_2)}{2} - \frac{\pi \big(c_8-\frac{3\alpha^2-1}{12}\big)}{2}\bigg(1 - \frac{\im{b_2}}{|b_2|}\bigg),
	\\ \nonumber
 \int_{\gamma_{b_2b_1}} \ln (\zeta) \mathcal{A}(\zeta) d\zeta 
= &\; \frac{ic_8}{4}((\ln{b_1})^2 - (\ln{b_2})^2) 
+ \frac{\pi \big(c_8-\frac{3\alpha^2-1}{12}\big)}{2|b_2|}\bigg\{|b_2| - \im{b_2} - |b_2| \ln\bigg(\frac{2i|b_2|^2}{|b_2| + \im{b_2}}\bigg)
 	\\ \label{lnzetacalAintegral}
& + (\im{b_2})\ln\bigg(\frac{i(|b_2| + \im{b_2})}{2}\bigg)\bigg\},
	\\\nonumber
\int_{\gamma_{b_2b_1}} \frac{\mathcal{B}(\zeta)}{\zeta} d\zeta 
= &\; \frac{1}{2} \bigg\{-4 i \pi  \ln(|b_2|) (c_5-c_6)+4 i \pi  c_6 \ln \left(\frac{2}{|b_2|+\im(b_2)}\right)
	\\\nonumber
& +\ln(b_1) (-2 c_7+i \pi  (3 c_5+c_6))-(c_5+c_6)(\ln{b_1})^2 
	\\\label{calBoverzetaintegral}
& +\ln(b_2) (2 c_7+i \pi  (c_5-c_6))+ (c_5+c_6)(\ln{b_2})^2 +2 \pi ^2 c_5\bigg\}.
\end{align}
\end{subequations}
\end{lemma}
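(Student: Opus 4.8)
The proof amounts to the explicit evaluation of the five contour integrals over the arc $\gamma_{b_2b_1}$, and the (routine but lengthy) computation would naturally be placed in an appendix, in the same spirit as Lemma~\ref{lemma: integrals for I1}. Throughout one uses repeatedly the elementary relations $|b_1| = |b_2|$, $b_1 b_2 = -|b_2|^2$, $b_1 + b_2 = 2i\,\im b_2$, together with the branch identities (\ref{lnb1b2}).

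The two integrals involving $\mathcal{A}$ are almost immediate. By (\ref{def of mathcalA1}) and (\ref{def of mathcal A}) we have $\mathcal{A}(\zeta) = \frac{ic_8}{2\zeta} + \frac{c_8 - \frac{3\alpha^2-1}{12}}{2|b_2|}\,\frac{r(\zeta)}{\zeta}$, and the algebraic identity $\frac{r(\zeta)}{\zeta} = \frac{\zeta}{r(\zeta)} - \frac{b_1+b_2}{r(\zeta)} + \frac{b_1 b_2}{\zeta\, r(\zeta)}$ — which follows from $r(\zeta)^2 = \zeta^2 - (b_1+b_2)\zeta + b_1 b_2$ — reduces $\int_{\gamma_{b_2b_1}}\mathcal{A}(\zeta)\,d\zeta$ and $\int_{\gamma_{b_2b_1}}\ln(\zeta)\mathcal{A}(\zeta)\,d\zeta$ to linear combinations of the integrals $\int_{\gamma_{b_2b_1}}\frac{d\zeta}{r(\zeta)}$, $\int_{\gamma_{b_2b_1}}\frac{\zeta\,d\zeta}{r(\zeta)}$, $\int_{\gamma_{b_2b_1}}\frac{d\zeta}{\zeta\, r(\zeta)}$ and their $\ln\zeta$-weighted analogues, all of which are evaluated in Lemma~\ref{lemma: integrals for I1}, together with the elementary antiderivatives of $\frac1\zeta$ and $\frac{\ln\zeta}{\zeta}$. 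Simplifying the result with the relations above yields (\ref{calAintegral}) and (\ref{lnzetacalAintegral}).

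For the three integrals involving $\mathcal{B}$, I first split off the explicit logarithmic part: since $\mathcal{B}(\zeta) = \mathcal{R}(\zeta) - c_7 - c_5\ln(i\zeta) - c_6\ln(-i\zeta)$ and each of $1$, $\ln(\pm i\zeta)$, $\frac{\ln(\pm i\zeta)}{\zeta}$ has an elementary antiderivative, the only genuinely new input is the evaluation of $\int_{\gamma_{b_2b_1}}\mathcal{R}(\zeta)\,d\zeta$, $\int_{\gamma_{b_2b_1}}\ln(\zeta)\mathcal{R}(\zeta)\,d\zeta$ and $\int_{\gamma_{b_2b_1}}\frac{\mathcal{R}(\zeta)}{\zeta}\,d\zeta$. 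Here I would use the integral representation (\ref{calRexpression}), $\mathcal{R}(\zeta) = -r(\zeta)\big(c_5\int_0^{i\infty} + c_6\int_0^{-i\infty}\big)\frac{d\xi}{r(\xi)(\xi-\zeta)}$ (with the inner contour deformed so as to stay uniformly away from $\gamma_{b_2b_1}$): after interchanging the order of integration, the inner $\zeta$-integral is of the form $\int_{\gamma_{b_2b_1}}\frac{w(\zeta)\,r(\zeta)}{\xi-\zeta}\,d\zeta$ with $w\in\{1,\ln\zeta,\tfrac1\zeta\}$, and for fixed $\xi$ this is computed by closing $\gamma_{b_2b_1}$ with $\Sigma_5$ into a loop around the region enclosed by $\gamma_{b_2b_1}\cup\Sigma_5$ and applying Cauchy's theorem (picking up a residue at $\zeta=\xi$ exactly when $\xi$ lies inside this region, i.e.\ when $|\xi|<|b_2|$, and using $r_+ = -r_-$ on $\Sigma_5$); the remaining integration in $\xi$ along the imaginary axis is then elementary via partial fractions in $r(\xi)^2 = (\xi-b_1)(\xi-b_2)$. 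Equivalently — and perhaps more cleanly — one integrates by parts: a short computation from (\ref{def of mathcalR}) shows that $\mathcal{R}(b_1) = \mathcal{R}(b_2) = 0$ (at $\zeta = b_j$ the two logarithmic arguments in (\ref{def of mathcalR}) both equal $1$; alternatively use the symmetry $\mathcal{R}(\zeta) = \overline{\mathcal{R}(-\bar\zeta)}$), so that $\int_{\gamma_{b_2b_1}} v(\zeta)\,\mathcal{R}(\zeta)\,d\zeta = -\int_{\gamma_{b_2b_1}} V(\zeta)\,\mathcal{R}'(\zeta)\,d\zeta$ for the relevant antiderivatives $V$ of $v\in\{1,\ln\zeta,\tfrac1\zeta\}$; one then checks that $\mathcal{R}'(\zeta)$ admits a partial-fraction decomposition into terms $\frac1{r(\zeta)}$, $\frac{\zeta}{r(\zeta)}$, $\frac1{\zeta\, r(\zeta)}$, $\frac1\zeta$, which brings everything back to the menu of Lemma~\ref{lemma: integrals for I1} and a handful of elementary integrals. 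In either route the outcome is (\ref{calBintegral}), (\ref{lnzetacalBintegral}), (\ref{calBoverzetaintegral}).

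The difficulties are of a bookkeeping rather than a conceptual nature. One must track carefully which branch of the logarithm is used along $\gamma_{b_2b_1}$ — this is where (\ref{lnb1b2}) enters — as well as the contributions at $\zeta = 0$ and at the endpoints $b_1, b_2$, where $r(\zeta)$ is discontinuous and $\mathcal{B}(\zeta)$ has at worst a logarithmic singularity (so that all the integrands are indeed integrable). The step I expect to be the most laborious is the simplification of the large algebraic expressions produced by the above reductions into the compact closed forms displayed in the statement; this is mechanical but substantial.
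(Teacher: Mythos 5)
Your proposal is essentially correct and, at bottom, proceeds as the paper does: everything is reduced by integration by parts to the integrals of Lemma~\ref{lemma: integrals for I1} plus elementary antiderivatives, with the closed forms then obtained by branch-careful simplification using \eqref{lnb1b2}. The differences are organizational, and partly in your favor. For the $\mathcal{A}$-integrals the paper redoes contour deformations (small and large circles plus the negative real axis, with explicit antiderivatives of $r(\zeta)/\zeta$ and $r(\zeta)\ln(\zeta)/\zeta$), whereas your algebraic identity $\frac{r(\zeta)}{\zeta}=\frac{\zeta}{r(\zeta)}-\frac{b_1+b_2}{r(\zeta)}+\frac{b_1b_2}{\zeta r(\zeta)}$ feeds everything directly into \eqref{int1overr}--\eqref{int1overzetar} and \eqref{logzetaoverr}--\eqref{lnzetaoverzetarintegral}; with $b_1+b_2=2i\,\im b_2$ and $b_1b_2=-|b_2|^2$ this reproduces the paper's intermediate values \eqref{introverzeta} and \eqref{intrlogzetaoverzeta}, hence \eqref{calAintegral} and \eqref{lnzetacalAintegral} — a genuine shortcut. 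For the $\mathcal{B}$-integrals the paper's grouping $\mathcal{B}=-c_5f_+-c_6f_--c_7$ with $f_\pm'=\frac{1}{\zeta}-\frac{1}{r(\zeta)}\mp\frac{i|b_2|}{\zeta r(\zeta)}$ is designed so that each piece is continuous along $\gamma_{b_2b_1}$ and only endpoint boundary terms occur. Your splitting into $\mathcal{R}$ plus explicit logarithms, with $\mathcal{R}(b_1)=\mathcal{R}(b_2)=0$ (correct, most cleanly from \eqref{calRexpression} since $r(b_j)=0$ and the $\xi$-integrals stay bounded) and $\mathcal{R}'=\frac{c_5+c_6}{r(\zeta)}+\frac{i|b_2|(c_5-c_6)}{\zeta r(\zeta)}$, also works, but the place requiring care is not the endpoints (where $\mathcal{R}$ vanishes and $1/r$ is integrable) nor $\zeta=0$ (which does not lie on $\gamma_{b_2b_1}$): it is the interior point $\zeta=i|b_2|$, where the arc crosses the positive imaginary axis and where $\mathcal{R}$ and $\ln(i\zeta)$ are individually discontinuous ($\mathcal{R}_+-\mathcal{R}_-=2\pi i c_5$ there, as noted after Proposition~\ref{pprop}), even though $\mathcal{B}$ is continuous. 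Your integration by parts on $\mathcal{R}$ must therefore include the interior jump term $2\pi i c_5\,V(i|b_2|)$, and the primitive evaluations of the $\ln(i\zeta)$-pieces acquire the matching correction; these contributions cancel in the sum only if handled consistently, and they are exactly of the size of genuine terms in \eqref{calBintegral}. Finally, your first route (Fubini in \eqref{calRexpression}) is shaky as stated, since the $\xi$-path passes through $i|b_2|\in\gamma_{b_2b_1}$ and cannot be deformed away from the entire arc without crossing $\Sigma_5$; your second route suffices, so this is not a defect of the overall plan.
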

\begin{proof}
See Appendix \ref{Zintlemmaapp}.
\end{proof}

Substituting the results of Lemma \ref{Zintlemma} into (\ref{Pjdef}), we obtain after simplification the expressions for the coefficients
$\mathscr{Z}^{(2)}$, $\mathcal{Z}^{(2)}$, $Z^{(2)}$, $\mathcal{Z}^{(3)}$, and $Z^{(3)}$ given in the statement of Proposition \ref{Zprop}. This completes the proof of Proposition \ref{Zprop}.

\section{Asymptotics of $I_{3,K}$ and $I_{4,K}$}\label{I3I4sec}
In this section, we prove two propositions (Proposition \ref{I3Kprop} and Proposition \ref{I4Kprop}) which establish the large $s$ asymptotics of $I_{3,K}$ and $I_{4,K}$, respectively, where we henceforth choose $K = s^\rho$.

\begin{proposition}[Large $s$ asymptotics of $I_{3,K}$]\label{I3Kprop}
Let $\alpha > -1$ and let $K = s^\rho$. As $s \to +\infty$, the function $I_{3,K}$ defined in (\ref{I3Kdef}) satisfies
\begin{align}\label{asymp for I3}
I_{3,K} = \mathcal{I}_{3}^{(3)} \ln (s^{\rho}) + I_{3}^{(3)} + \bigO(s^{-\rho}\ln (s^{\rho}))
\end{align}
uniformly for $\theta$ in compact subsets of $(0,1]$, where the coefficients $\mathcal{I}_{3}^{(3)}$ and $I_{3}^{(3)}$ are given by
\begin{align*}
\mathcal{I}_{3}^{(3)} = & - \frac{3\alpha(1+\alpha-\theta)+\theta}{12\theta^{2}(\theta+1)}, 
	\\
I_{3}^{(3)} = &\; \frac{3\alpha(1+\alpha-\theta)+\theta}{6\theta(\theta+1)^{2}} \ln ( \theta ) - \frac{3 + 3\alpha(4+3\alpha)-4\theta - 3\alpha \theta (4+\alpha) + \theta^{2}}{24 \theta^{2}(1+\theta)}.
\end{align*}
\end{proposition}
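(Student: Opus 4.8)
The plan is to extract from the exact formula \eqref{I3Kdef} for $I_{3,K}$, with $K = s^\rho$, the large $s$ asymptotics by inserting the known expansions of the functions appearing in the integrand. Recall that
\[
I_{3,K} = \frac{1}{2} \int_{\sigma_{K}} H(\zeta) \tr \Big[P^{\infty}(\zeta)^{-1}e^{-p_{0}\sigma_{3}}R^{-1}(\zeta)R'(\zeta)e^{p_{0}\sigma_{3}}P^{\infty}(\zeta)\sigma_{3}\Big] \frac{d\zeta}{2\pi i}.
\]
First I would use the expansion of $R$ from Proposition \ref{Rprop}, writing $R^{-1}R' = \sum_{n=1}^{N} s^{-n\rho} R^{(n)\prime} + \bigO(s^{-(N+1)\rho}(1+|\zeta|)^{-2})$ modulo lower-order cross terms, so that to the order needed only the first coefficient $R^{(1)}(\zeta)$ given explicitly in \eqref{explicit R1(zeta)}--\eqref{ABexplicit} contributes. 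Since $R^{(1)\prime}(\zeta) = \bigO(\zeta^{-2})$ as $\zeta \to \infty$ and $H(\zeta)$ grows only like $\zeta \ln(s^\rho \zeta)$, the integrand on the circular part $|\zeta| = K = s^\rho$ decays and, together with the factor $s^{-\rho}$ from the expansion of $R$, produces the claimed $\ln(s^\rho)$ and constant terms; the contributions of $R^{(n)}$ for $n\ge 2$ and of the error term are $\bigO(s^{-\rho}\ln(s^\rho))$ after integration.

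Next I would simplify the trace. Using $P^\infty = e^{-p_0\sigma_3} Q^\infty e^{p\sigma_3}$ from \eqref{def of Pinf}, conjugation by $e^{-p_0\sigma_3}$ cancels and we are left with $\tr[e^{-p(\zeta)\sigma_3} Q^\infty(\zeta)^{-1} R^{(1)\prime}(\zeta) Q^\infty(\zeta) e^{p(\zeta)\sigma_3}\sigma_3]$. On the contour $\sigma_{s^\rho}$, away from $\Sigma_5$ and from the imaginary axis, Proposition \ref{pprop} gives $p(\zeta) = -\tfrac{c_4}{2}\ln s - \tfrac{c_5}{2}\ln(i\zeta) - \tfrac{c_6}{2}\ln(-i\zeta) - \tfrac{c_7}{2} + \tfrac{\mathcal{R}(\zeta)}{2} + \bigO((s^\rho\zeta)^{-1}) + \bigO(s^{-\rho})$, which contains an $s$-dependent piece $-\tfrac{c_4}{2}\ln s$ but no $\zeta$-dependence there; this $s$-dependence only enters through $e^{\pm c_4 \ln s}$, i.e. powers of $s$, which combine with the explicit $s^{-\rho}$ and with $H$. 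I would then deform $\sigma_{s^\rho}$ to a fixed contour $\sigma = \sigma_{2|b_1|}$, picking up no poles (the poles $\zeta_j$ of $H$ lie on the negative imaginary axis outside), and evaluate the resulting fixed contour integrals by residues: the integrand, as a function of $\zeta$, is meromorphic with poles only at $b_1, b_2$ (double and simple poles coming from $R^{(1)\prime}$ and from $r(\zeta)$, $\gamma(\zeta)$), so $\tfrac{1}{2\pi i}\oint$ reduces to a sum of residues at $b_1$ and $b_2$. The asymptotics of $q$, $\mathcal{R}$, $r$, $\gamma$ near $b_1$ (as in \eqref{strong asymptotics of q}, \eqref{asymptotics of R as zeta to b1}) feed into these residue computations, and the $\ln(s^\rho)$ term arises from the $\ln(-i\zeta s^\rho/\theta)$ factor inside $H(\zeta)$ evaluated at $\zeta = b_1, b_2$.

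The final step is the bookkeeping: collect the pieces proportional to $s^{-\rho}\cdot s^\rho\ln(s^\rho) = \ln(s^\rho)$ and to $s^{-\rho}\cdot s^\rho = 1$, discard everything that is genuinely $\bigO(s^{-\rho}\ln(s^\rho))$, and simplify using the explicit values of $c_1,\dots,c_8$ from \eqref{def of c1...c8}, of $b_1, b_2$ from \eqref{b1b2def}--\eqref{def of Re b2}, of $|b_2| = (1+\theta)\theta^{(1-\theta)/(1+\theta)}$, and of $\rho, a$ from \eqref{coeff rho a b}, together with the logarithm identities \eqref{lnb1b2}. This should yield
\[
\mathcal{I}_{3}^{(3)} = - \frac{3\alpha(1+\alpha-\theta)+\theta}{12\theta^{2}(\theta+1)}, \qquad I_{3}^{(3)} = \frac{3\alpha(1+\alpha-\theta)+\theta}{6\theta(\theta+1)^{2}} \ln \theta - \frac{3 + 3\alpha(4+3\alpha)-4\theta - 3\alpha \theta (4+\alpha) + \theta^{2}}{24 \theta^{2}(1+\theta)}.
\]
I expect the main obstacle to be twofold: first, justifying rigorously that the contour $\sigma_K$ with $K = s^\rho$ can be deformed to the fixed contour $\sigma$ and that the circular-arc contribution at radius $s^\rho$ is negligible — this requires the uniform-in-$\zeta$ bounds on $R^{(1)\prime}$, on $P^\infty$ (which grows only polynomially via $e^{\mathcal{R}/2}$, cf. \eqref{e2pcalGestimate}), and on $H$, and care that the product still decays fast enough after multiplication by the growing $H$; and second, the residue computation at $b_1$ and $b_2$, which is algebraically heavy because $R^{(1)\prime}$ has double poles there and one must differentiate the product $e^{-p\sigma_3}(Q^\infty)^{-1}(\cdot)Q^\infty e^{p\sigma_3}\sigma_3$, using the local expansions of $q$, $\mathcal{R}$, $r$, and $\gamma$ to sufficient order. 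The symmetry $\zeta \mapsto -\bar\zeta$ (under which $b_1 \leftrightarrow b_2$, $H$, $P^\infty$, $R$ all transform by complex conjugation) can be used to reduce the work to the residue at $b_1$ alone, the $b_2$ contribution being its conjugate.
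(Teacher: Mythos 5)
Your overall strategy — expand $R^{-1}R'$ via Proposition \ref{Rprop} so that only $R^{(1)}$ from \eqref{explicit R1(zeta)} matters to the orders claimed, estimate the remainder, deform the leading piece to a fixed contour, and insert the large-$s$ form of $H$ — is the same skeleton as the paper's proof, but two steps at the heart of your evaluation are flawed. First, you stop the trace simplification one step too early: since $\sigma_3$ commutes with $e^{p(\zeta)\sigma_3}$, cyclicity gives $\tr[e^{-p\sigma_3}(Q^\infty)^{-1}R^{-1}R'\,Q^\infty e^{p\sigma_3}\sigma_3]=\tr[R^{-1}R'\,Q^\infty\sigma_3(Q^\infty)^{-1}]$, and $Q^\infty\sigma_3(Q^\infty)^{-1}$ is the explicit $s$-independent matrix \eqref{lol17}. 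Your assertion that the $s$-dependence of $p$ enters through $e^{\pm c_4\ln s}$ and must be combined with $s^{-\rho}$ and $H$ is therefore incorrect: those exponentials multiply only the off-diagonal entries and drop out of the trace against $\sigma_3$. Taken at face value this would wreck the order counting (since $s^{\pm c_4}$ need not be small), and it also undermines your later claim about analyticity, because $p$, $\mathcal{R}$ and $\mathcal{G}$ have branch cuts inside the contour; the whole point of the paper's observation \eqref{lol17} is that none of these functions survive in the integrand.

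Second, and more seriously, the proposed evaluation ``by residues at $b_1$ and $b_2$'' cannot work. The integrand $H(\zeta)W(\zeta)$, with $W=\tr[R^{(1)\prime}Q^\infty\sigma_3(Q^\infty)^{-1}]$, contains the factor $1/r(\zeta)$, whose branch cut $\Sigma_5$ lies inside $\sigma$, and at $b_1,b_2$ the singularities are branch points of half-integer order (the poles of $R^{(1)\prime}$ combined with $(\zeta-b_j)^{-1/2}$), not poles; moreover $H(\zeta)W(\zeta)$ behaves like $s^\rho\ln(\cdot)/\zeta$ at infinity, so there is a nonvanishing contribution from infinity. In fact the coefficient $\mathcal{I}_3^{(3)}$ is exactly this contribution, $\tfrac{i}{2\theta^2}\tr[(A-\bar A)\sigma_3]$, i.e.\ it is governed by $R_1^{(1)}=A-\bar A$ at infinity — not by ``$\ln(-i\zeta s^\rho/\theta)$ evaluated at $\zeta=b_1,b_2$'' as you state. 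The paper proceeds by moving the cut of $r$ to $[b_1,b_2]$, replacing $H$ by its expansion \eqref{bigFasymptotics} (this must be done before deforming outward, since otherwise the infinitely many poles $\zeta_j$ of $H$ on the negative imaginary axis obstruct the deformation), and then deforming to infinity; the $\zeta\ln\zeta$ part produces, besides the large circle, an explicit integral along the negative real axis coming from the jump of $\ln\zeta$, with $\ln R$ divergences cancelling. Your residue-only scheme omits both the contribution from infinity and this branch-cut integral, so it would not reproduce $\mathcal{I}_3^{(3)}$ or $I_3^{(3)}$. A smaller point: the remainder of the $R$-expansion cannot be deformed from $\sigma_{s^\rho}$ to a fixed contour at all, because $R$ has jumps across the rays $\Sigma_1,\dots,\Sigma_4$ crossing the intervening region; it must be estimated on $\sigma_{s^\rho}$ itself, including near the pole $\zeta_0$ just below the contour, as in \eqref{tildeI3Kestimate}.
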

\begin{proof}
By the cyclicity of the trace, we can write the definition (\ref{I3Kdef}) of $I_{3,K}$ as
\begin{align}\label{I3K2}
I_{3,K} & = \frac{1}{2} \int_{\sigma_{K}} H(\zeta) \tr\Big[R^{-1}(\zeta)R'(\zeta)e^{p_{0}\sigma_{3}}P^{\infty}(\zeta)\sigma_{3}P^{\infty}(\zeta)^{-1}e^{-p_{0}\sigma_{3}}\Big] \frac{d\zeta}{2\pi i},
\end{align}
where $K = s^\rho$, and $\sigma_{K}$ and $H$ are defined in Lemma \ref{lemma: simplified diff dientity}. 
All the $s$-dependence of the trace lies in the factor $R^{-1}(\zeta)R'(\zeta)$, since by \eqref{def of Pinf}, the quantity
\begin{align}\nonumber
e^{p_{0}\sigma_{3}}P^{\infty}(\zeta)\sigma_{3}P^{\infty}(\zeta)^{-1}e^{-p_{0}\sigma_{3}} 
= &\; Q^{\infty}(\zeta) \sigma_{3} Q^{\infty}(\zeta)^{-1}
	\\\label{lol17}
=&\; \frac{1}{r(\zeta)}
\begin{pmatrix} \zeta -i\im(b_2)  & i\re{b_2} \\
i\re{b_2} & i\im(b_2) - \zeta \end{pmatrix}
\end{align}
is independent of $s$.
As $s \to + \infty$, we have by Proposition \ref{Rprop} that
\begin{align}\label{Rexpansionsec9}
& R(\zeta)= I+ \frac{R^{(1)}(\zeta)}{s^\rho} + \bigO\bigg(\frac{1}{s^{2\rho}(1+ |\zeta|)}\bigg)
\end{align}
uniformly for $\zeta \in \mathbb{C}\setminus \Gamma_{R}$ and that this expansion can be differentiated with respect to $\zeta$. 
The asymptotics in (\ref{Rexpansionsec9}) as well as all other asymptotic expansions in the rest of this section are uniform with respect to $\theta$ in compact subsets of $(0,1]$.

From the explicit expression (\ref{explicit R1(zeta)}) for $R^{(1)}$, we see that $R^{(1)}(\zeta)$ and $R^{(1)\prime}(\zeta)$ are $\bigO((1 +|\zeta|)^{-1})$ and $\bigO((1 +|\zeta|)^{-2})$, respectively, uniformly for $\zeta \in \sigma_K$ as $s \to +\infty$.
Therefore, 
\begin{align*}
R^{-1}(\zeta)R'(\zeta)
= &\; \bigg(I - \frac{R^{(1)}(\zeta)}{s^\rho} + \bigO\bigg(\frac{1}{s^{2\rho}(1+ |\zeta|)}\bigg)\bigg)\bigg(\frac{R^{(1)\prime}(\zeta)}{s^\rho} + \bigO\bigg(\frac{1}{s^{2\rho}(1+ |\zeta|)^2}\bigg)\bigg)
	\\
= &\; \frac{R^{(1)\prime}(\zeta)}{s^\rho} + \widetilde{R}_{R}(\zeta), \qquad \mbox{ as } s \to + \infty,
\end{align*}
uniformly for $\zeta \in \sigma_{K}$, where 
\begin{align*}
& \widetilde{R}_{R}(\zeta) = \bigO\bigg(\frac{1}{s^{2\rho}(1+ |\zeta|)^2}\bigg) \qquad \mbox{as } s \to + \infty,
\end{align*}
uniformly for $\zeta \in \sigma_{K}$.
Hence, for large $s$ and $\zeta \in \sigma_{K}$, we have
\begin{align}\label{TrWTr}
\tr\Big[R^{-1}(\zeta)R'(\zeta)e^{p_{0}\sigma_{3}}P^{\infty}(\zeta)\sigma_{3}P^{\infty}(\zeta)^{-1}e^{-p_{0}\sigma_{3}}\Big]
=
\frac{W(\zeta)}{s^\rho} + \tr \Big[ \widetilde{R}_{R}(\zeta) Q^{\infty}(\zeta) \sigma_{3} Q^{\infty}(\zeta)^{-1} \Big],
\end{align} 
where the function $W(\zeta)$ is defined by
\begin{align}\label{def of W}
W(\zeta) = \tr \bigg[R^{(1)\prime}(\zeta)Q^{\infty}(\zeta) \sigma_{3} Q^{\infty}(\zeta)^{-1}\bigg].
\end{align}
Substituting (\ref{TrWTr}) into \eqref{I3K2}, we see that
\begin{align}\label{I3Ksplitting}
I_{3,K} = I_{3} + \widetilde{I}_{3,K},
\end{align}
where $I_{3}$ and $\widetilde{I}_{3,K}$ are defined by
\begin{align*}
I_{3} = &\; \frac{1}{2s^{\rho}}\int_{\sigma} H(\zeta) W(\zeta) \frac{d\zeta}{2\pi i}, \\
\widetilde{I}_{3,K} = &\; \frac{1}{2}\int_{\sigma_{K}}H(\zeta) \tr \Big[ \widetilde{R}_{R}(\zeta) Q^{\infty}(\zeta) \sigma_{3} Q^{\infty}(\zeta)^{-1} \Big] \frac{d\zeta}{2\pi i}.
\end{align*}

We first estimate $\widetilde{I}_{3,K}$. Since
\begin{align}\nonumber
& H(\zeta) = \bigO\left(\frac{1}{s^{\rho}(\zeta-\zeta_{j})}\right)\qquad \mbox{as } \zeta \to \zeta_{j}, \ j = 0, 1, \dots, 
	\\\label{lol20}
& H(\zeta) = \bigO\big(\zeta s^{\rho} \ln (\zeta s^{\rho}) \big) \qquad \mbox{as } \zeta s^{\rho} \to \infty, 
\end{align}
we have (see also Figure \ref{fig: sigmaK})
\begin{align}\nonumber
|\widetilde{I}_{3,K}| = & \; \bigg| \frac{1}{2}\int_{\sigma_{K}\cap \{|\zeta| \leq c' s^{-\rho}\}}H(\zeta) \tr \Big[ \widetilde{R}_{R}(\zeta) Q^{\infty}(\zeta) \sigma_{3} Q^{\infty}(\zeta)^{-1} \Big] \frac{d\zeta}{2\pi i} 
	\\\nonumber
& + \frac{1}{2}\int_{\sigma_{K}\cap \{|\zeta| > c' s^{-\rho}\}}H(\zeta) \tr \Big[ \widetilde{R}_{R}(\zeta) Q^{\infty}(\zeta) \sigma_{3} Q^{\infty}(\zeta)^{-1} \Big] \frac{d\zeta}{2\pi i} \bigg| 
	\\ \nonumber
\leq & \; C' \int_{\sigma_{K}\cap \{|\zeta| \leq c' s^{-\rho}\}} \frac{|d\zeta|}{s^{3\rho}|\zeta-\zeta_0|} + C' \int_{\sigma_{K}\cap \{|\zeta| > c' s^{-\rho}\}} \frac{\ln|\zeta s^{\rho}|}{|\zeta| s^{\rho}}|d\zeta|
	\\\label{tildeI3Kestimate}
\leq &\; \frac{C'}{s^{3\rho}} + C' \frac{\ln s}{s^{\rho}} = \bigO\left(\frac{\ln s}{s^{\rho}}\right)	
\end{align}
where $c',C'>0$ are two sufficiently large constants. 

We next consider $I_{3}$. 
Substituting \eqref{explicit R1(zeta)} and \eqref{lol17} into \eqref{def of W}, it follows that
\begin{align}
W(\zeta) = &\; \frac{1}{r(\zeta)} \tr \bigg[
\bigg(-\frac{A}{(\zeta-b_1)^2} - \frac{2B}{(\zeta-b_1)^3} + \frac{\bar{A}}{(\zeta-b_2)^2} - \frac{2 \bar{B}}{(\zeta-b_2)^3}\bigg) \label{W explicit}
	\\
& \times  \begin{pmatrix} \zeta -i\im(b_2)  & i\re{b_2} \\
i\re{b_2} & i\im(b_2) - \zeta \end{pmatrix}\bigg]. \nonumber
\end{align}
Replacing $r$ by $\tilde{r}$ in $W$ does not change the value of $I_{3}$. Deforming $\sigma$ (which surrounds $0$) into another contour $\tilde{\sigma}$ which surrounds the cut $[b_{1},b_{2}]$ of $\tilde{r}$ once in the positive direction but which does not surround $0$, it transpires that
\begin{align*}
I_{3} = \frac{1}{2s^{\rho}}\int_{\tilde{\sigma}} H(\zeta) \widetilde{W}(\zeta) \frac{d\zeta}{2\pi i},
\end{align*}
where $\widetilde{W}$ is defined by the expression obtained by replacing $r$ by $\tilde{r}$ in the right-hand side of \eqref{W explicit}. Assuming that $\tilde{\sigma}$ is bounded away from $0$, we can replace $H(\zeta)$ by its large $s$ asymptotics (\ref{bigFasymptotics}); this gives
\begin{align*}
I_{3} = & - \frac{1}{2} \int_{\tilde{\sigma}} \frac{i \zeta \ln \left(-\frac{i \zeta s^\rho}{\theta}\right)}{\theta^2} \widetilde{W}(\zeta)\frac{d\zeta}{2\pi i} +  \bigO(s^{-\rho}\ln(s^\rho)).
\end{align*}
We split the leading term as follows:
\begin{align}\label{lol18}
I_{3} = I_{3,1} + I_{3,2} +  \bigO\big(s^{-\rho}\ln(s^\rho)\big),
\end{align}
where $I_{3,1}$ and $I_{3,2}$ are given by
$$I_{3,1} = - \frac{1}{2} \int_{\tilde{\sigma}} \frac{i \zeta \ln(-\frac{i s^\rho}{\theta})}{\theta^2} \widetilde{W}(\zeta)\frac{d\zeta}{2\pi i}, \qquad
I_{3,2} = - \frac{1}{2} \int_{\tilde{\sigma}} \frac{i \zeta \ln(\zeta)}{\theta^2} \widetilde{W}(\zeta)\frac{d\zeta}{2\pi i}.$$
From \eqref{W explicit}, we have the expansion
\begin{align*}
\widetilde{W}(\zeta) = -\frac{1}{\zeta^2} \tr[(A - \bar{A})\sigma_3] + \bigO(\zeta^{-3}) \qquad \mbox{ as } \zeta \to \infty.
\end{align*}
By deforming the contour $\tilde{\sigma}$ to infinity, we get
\begin{align}\label{I31explicit}
I_{3,1} = \frac{1}{2} \frac{i \ln(-\frac{i s^\rho}{\theta})}{\theta^2}  \tr[(A - \bar{A})\sigma_3]
= -\frac{(3 \alpha  (1+\alpha -\theta )+\theta ) \ln(-\frac{i s^\rho}{\theta })}{12 \theta^2 (\theta +1)},
\end{align}
while
\begin{align*}
I_{3,2} = & - \frac{1}{2} \lim_{R\to \infty}\bigg\{
\int_{C_{R}} \frac{i \zeta \ln(\zeta)}{\theta^2} \widetilde{W}(\zeta)\frac{d\zeta}{2\pi i}
+ \int_{-R}^0 \frac{i \zeta}{\theta^2} \widetilde{W}(\zeta) d\zeta\bigg\}
	\\
= & - \frac{1}{2} \lim_{R\to \infty}\bigg\{-\frac{i }{\theta^2} \tr[(A - \bar{A})\sigma_3]
\int_{C_{R}}  \frac{\ln(\zeta)}{\zeta} \frac{d\zeta}{2\pi i}
+ \int_{-R}^0 \frac{i \zeta}{\theta^2} W(\zeta) d\zeta\bigg\}
\end{align*}
We have
$$\int_{C_{R}}  \frac{\ln(\zeta)}{\zeta} \frac{d\zeta}{2\pi i} = \ln{R}$$
and the integral $\int_{-R}^0 \frac{i \zeta}{\theta^2} W(\zeta) d\zeta$ can be computed explicitly using \eqref{W explicit}. After simplification this gives
\begin{equation*}
I_{3,2} = - \frac{3 + 3\alpha(4+3\alpha)-4\theta - 3\alpha \theta (4+\alpha) + \theta^{2}}{24 \theta^{2}(1+\theta)} - \frac{3\alpha(1+\alpha - \theta)+\theta}{12 \theta^{2}(1+\theta)}\ln\Big( i \theta^{\frac{1-\theta}{1+\theta}} \Big).
\end{equation*}
Substituting this expression for $I_{3,2}$ into \eqref{lol18} and recalling (\ref{I3Ksplitting}), (\ref{tildeI3Kestimate}), and (\ref{I31explicit}), 
equation (\ref{asymp for I3}) follows.
\end{proof}

\begin{proposition}[Large $s$ asymptotics of $I_{4,K}$]\label{I4Kprop}
Let $\alpha > -1$ and let $K = s^\rho$. As $s \to +\infty$, the function $I_{4,K}$ defined in (\ref{I4Kdef}) satisfies, for any $N \geq 1$,
\begin{align}\label{asymp for I4K}
I_{4,K} = \bigO\big(s^{-N\rho}\big)
\end{align}
uniformly for $\theta$ in compact subsets of $(0,1]$.
\end{proposition}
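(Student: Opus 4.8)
The plan is to show that $I_{4,K}$ is negligible by bounding every factor in the integrand and exploiting the fact that, with $K=s^\rho$, the contour $\widetilde{\Sigma}_K$ begins at radius $s^\rho$. First I would observe that, since $K=s^\rho\to+\infty$, for all sufficiently large $s$ the disks $\Done$ and $\Dtwo$ are contained in $\{|\zeta|\le K\}$, so that $\widetilde{\Sigma}_K\subset\bigcup_{i=1}^{4}\Sigma_i\setminus(\Done\cup\Dtwo)$ and the estimate \eqref{RplusRminusestimate} of Proposition~\ref{Rprop} applies: for every integer $M\ge1$,
\[
R_{+}^{-1}(\zeta)R_{+}'(\zeta)-R_{-}^{-1}(\zeta)R_{-}'(\zeta)=\bigO\!\left(\frac{1}{s^{M\rho}(1+|\zeta|)^{M}}\right)
\]
uniformly for $\zeta\in\widetilde{\Sigma}_K$ and $\theta$ in compact subsets of $(0,1]$.

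Next I would simplify the conjugation in the integrand. From $P^{\infty}=e^{-p_0\sigma_3}Q^{\infty}e^{p\sigma_3}$ one gets $P^{\infty}(\zeta)^{-1}e^{-p_0\sigma_3}=e^{-p(\zeta)\sigma_3}Q^{\infty}(\zeta)^{-1}$ and $e^{p_0\sigma_3}P^{\infty}(\zeta)=Q^{\infty}(\zeta)e^{p(\zeta)\sigma_3}$, so the matrix inside the trace equals
\[
e^{-p(\zeta)\sigma_3}\,Q^{\infty}(\zeta)^{-1}\bigl(R_{+}^{-1}R_{+}'-R_{-}^{-1}R_{-}'\bigr)Q^{\infty}(\zeta)\,e^{p(\zeta)\sigma_3}\,\sigma_3.
\]
On $\widetilde{\Sigma}_K$ the function $\gamma(\zeta)$ is bounded and bounded away from $0$ (since $\zeta$ is far from $b_1,b_2$), so $Q^{\infty}(\zeta)^{\pm1}$ are uniformly bounded; and by Propositions~\ref{Gprop} and \ref{pprop}, since $\re p(\zeta)=\bigO(\ln s+\ln|\zeta|)$ there, one has $|e^{\pm2p(\zeta)}|\le C's^{C'}(1+|\zeta|)^{C'}$ uniformly in $\theta$, with $C'$ a fixed constant depending only on $\alpha$ and the compact $\theta$-set (in particular not on $M$ or $N$). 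Because conjugation by the diagonal $e^{\pm p(\zeta)\sigma_3}$ multiplies only the off-diagonal entries, the trace in the integrand is $\bigO\!\bigl(s^{C'-M\rho}(1+|\zeta|)^{C'-M}\bigr)$. Here one must check that $\widetilde{\Sigma}_K$ stays a fixed positive distance from the cut $\Sigma_5$ and from $i\R$ uniformly in $\theta$; this holds because the rays $\Sigma_i$ emanate from $b_1,b_2$, whose real parts $\pm|\re b_2|=\pm2\theta^{\frac{3-\theta}{2(1+\theta)}}$ are bounded away from $0$ on compact subsets of $(0,1]$, into sectors whose directions stay bounded away from $\pm\tfrac{\pi}{2}$ (using $\phi=\arg b_2<\tfrac{\pi}{2}$ uniformly on such subsets and $\epsilon$ small). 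The same geometric fact, together with the digamma expansion \eqref{digammaasymmptotics}, gives $|H(\zeta)|\le C'|\zeta|\,s^{\rho}\ln\!\bigl(2+|\zeta|s^{\rho}\bigr)$ uniformly on $\widetilde{\Sigma}_K$, on which $H$ has no poles.

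Finally I would assemble these bounds. Parametrizing each of the four branches of $\widetilde{\Sigma}_K$ by $|\zeta|=t\in[c's^{\rho},+\infty)$ (where $|\zeta|$ is comparable to arclength), one obtains
\[
|I_{4,K}|\le C'\,s^{C'}s^{-M\rho}\int_{c's^{\rho}}^{\infty}t\,(1+t)^{C'-M}\ln\!\bigl(2+ts^{\rho}\bigr)\,dt,
\]
and since $s^\rho\le t$ on this range, $\ln(2+ts^\rho)\le C't^{1/2}$; for $M$ large the integral converges and equals $\bigO\!\bigl((s^\rho)^{C'-M+5/2}\bigr)$, so that $|I_{4,K}|=\bigO\!\bigl(s^{C'}(s^\rho)^{C'-2M+5/2}\bigr)$. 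Bounding $s^{C'}\le(s^\rho)^{C'/\rho_0}$, where $\rho_0>0$ is a lower bound for $\rho$ on the compact $\theta$-set, and choosing $M$ sufficiently large (depending on $N$) yields $I_{4,K}=\bigO(s^{-N\rho})$, as claimed. I do not expect a genuine obstacle here: the proof is an exercise in assembling the uniform estimates already recorded in Propositions~\ref{Gprop}, \ref{pprop}, and \ref{Rprop}. The only points requiring care are the verification that $\widetilde{\Sigma}_K$ lies in the common domain of validity of all these estimates uniformly for $\theta$ in a compact subset of $(0,1]$, and the bookkeeping that ensures the at-most-polynomial growth (in $|\zeta|$) and at-most-fixed-power growth (in $s$) of the non-$R$ factors is outpaced by the $s^{-M\rho}(1+|\zeta|)^{-M}$ decay from \eqref{RplusRminusestimate} once $M$ is taken large.
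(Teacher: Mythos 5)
Your argument is correct and follows essentially the same route as the paper: the decisive ingredients are the arbitrarily fast decay \eqref{RplusRminusestimate} of $R_{+}^{-1}R_{+}'-R_{-}^{-1}R_{-}'$ on $\cup_{i=1}^4\Sigma_i\setminus(\Done\cup\Dtwo)$, the bound $H(\zeta)=\bigO\big(|\zeta|s^{\rho}\ln(|\zeta|s^{\rho})\big)$ away from the poles of the digamma factor, and the fact that $\widetilde{\Sigma}_{K}$ starts at radius $K=s^{\rho}$. The only difference is that the paper sidesteps your $e^{\pm 2p(\zeta)}$ estimates by cycling the trace and using $e^{p_{0}\sigma_{3}}P^{\infty}(\zeta)\sigma_{3}P^{\infty}(\zeta)^{-1}e^{-p_{0}\sigma_{3}}=Q^{\infty}(\zeta)\sigma_{3}Q^{\infty}(\zeta)^{-1}$ as in \eqref{lol17}, which is $s$-independent and bounded, so no polynomial-in-$s$ losses arise; your alternative of bounding $e^{\pm2p}$ polynomially and then taking the decay order $M$ large depending on $N$ is equally valid.
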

\begin{proof}
In view of \eqref{lol17}, we can write
\begin{align*}
I_{4,K} = -\frac{1}{2} \int_{\widetilde{\Sigma}_{K}} H(\zeta) \tr \Big[\Big(R_{+}^{-1}(\zeta)R_{+}'(\zeta)-R_{-}^{-1}(\zeta)R_{-}'(\zeta)\Big)Q^{\infty}(\zeta) \sigma_{3} Q^{\infty}(\zeta)^{-1}\Big],
\end{align*}
where $Q^{\infty}(\zeta) \sigma_{3} Q^{\infty}(\zeta)^{-1}$ is independent of $s$ and $\bigO(1)$ as $\zeta \to \infty$.
Using (\ref{RplusRminusestimate}) and \eqref{lol20}, we conclude that, for any $N$ large enough,
$$|I_{4,K}| = \bigO\bigg(\int_{s^\rho}^\infty \frac{|\zeta| s^{\rho} \ln(|\zeta| s^{\rho})}{s^{N\rho}(1+ |\zeta|)^N} d|\zeta|\bigg) = \bigO\big( s^{-N\rho}\big)$$
uniformly for $\theta$ in compact subsets of $(0,1]$. This proves (\ref{asymp for I4K}).
\end{proof}

\section{Proof of Theorem \ref{thm:main results}}\label{section: integration in theta}
Substituting the large $s$ asymptotics of the integrals $I_{1}$, $I_{2}$, $I_{3,K}$, and $I_{4,K}$ established in Sections \ref{I1sec}-\ref{I3I4sec} (see Propositions \ref{I1prop}, \ref{I2ZZXprop}, \ref{Xprop}, \ref{Zprop}, \ref{I3Kprop}, and \ref{I4Kprop}) into the differential identity \eqref{diff identity simplified}, we obtain
\begin{align}\nonumber
\partial_{\theta} \ln \det \Big( \left. 1-\mathbb{K} \right|_{[0,s]} \Big) = & \; \mathcal{I}_{1}^{(1)} s^{2\rho}\ln(s^{\rho}) + I_{1}^{(1)}s^{2\rho} + (\mathscr{X}_{1}^{(2)}+\mathscr{Z}^{(2)})s^{\rho} \big( \ln(s^{\rho})\big)^{2} 
	 \\\nonumber
& \; +\big( \mathcal{I}_{1}^{(2)} + \mathcal{X}_{1}^{(2)}+\mathcal{X}_{3}^{(2)}+\mathcal{Z}^{(2)} \big) s^{\rho}\ln(s^{\rho}) + \big( I_{1}^{(2)} + X_{1}^{(2)}+X_{3}^{(2)}+Z^{(2)} \big) s^{\rho}  
	\\\nonumber
& \; + \big( \mathcal{I}_{1}^{(3)} + \mathcal{X}_{1}^{(3)} + \mathcal{X}_{3}^{(3)} + \mathcal{Z}^{(3)} + \mathcal{I}_{3}^{(3)} \big) \ln(s^{\rho}) 
	 \\ \label{asymp for the diff identity final1}
& \; + I_{1}^{(3)} + X_{1}^{(3)} + X_{2}^{(3)} + X_{3}^{(3)} + Z^{(3)} + I_{3}^{(3)} + \bigO\big( s^{-\rho}\ln(s^{\rho})\big)
\end{align}
as $s \to + \infty$ uniformly for $\theta$ in compact subsets of $(0,1]$.

\subsection{Integration of the differential identity}
Since the asymptotic formula \eqref{asymp for the diff identity final1} is valid uniformly for $\theta$ in compact subsets of $(0,1]$, we can integrate \eqref{asymp for the diff identity final1} with respect to $\theta$ from $\theta = 1$ to an arbitrary $\theta \in (0,1]$. Using the known result \eqref{asymp theta=1} valid for $\theta =1$, this yields the following lemma. 

\begin{lemma}\label{int1thetalemma0}
Let $\alpha > -1$. The following expansion is valid uniformly for $\theta$ in compact subsets of $(0,1]$ as $s \to + \infty$:
\begin{align}\nonumber
\ln \det \Big( \left. 1-\mathbb{K} \right|_{[0,s]} \Big) = &  -a s^{2\rho}+b s^{\rho}+c \ln s + \ln G(1+\alpha) - \frac{\alpha}{2}\ln(2\pi) - \frac{\alpha^2}{2}\ln{2}
	\\\nonumber
& -  \int_{1}^{\theta} \ln G \left( 1+ \frac{1+\alpha}{2\theta'} \right) d\theta' 
+ \int_{1}^{\theta} W(\theta', \alpha) d\theta'
	\\ \label{asymp diff identity almost final l} 
& + \int_{1}^{\theta} X_{2}^{(3)}(\theta', \alpha)d\theta'
 + \bigO\big(s^{-\rho} \ln(s^{\rho})\big), 
\end{align}
where the coefficients $a,b,c$ are given by (\ref{coeff rho a b}) and (\ref{little c in thm}), $X_{2}^{(3)} = X_{2}^{(3)}(\theta, \alpha)$ is given in (\ref{X2p3p}), and $W(\theta, \alpha)$ is defined by
\begin{align*}
W(\theta, \alpha) = &\; \frac{-3-12 \alpha - 6 \alpha^{2} + 2 \theta + \theta^{2}}{6(1+\theta)^{2}} \ln(\theta) + \frac{1+\alpha}{4\theta}\ln(2\pi)  
	\\
& + \frac{-3(1+\alpha)^{2} + (2+3\alpha)^{2} \theta - (1+6\alpha) \theta^{2}}{24 \theta^{2} (1+\theta)} + \frac{1+6\alpha(1+\alpha)-\theta^{2}}{12\theta^{2}} \ln(1+\theta) + \zeta'(-1).
\end{align*}
\end{lemma}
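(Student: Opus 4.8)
The plan is to integrate the large-$s$ differential identity \eqref{asymp for the diff identity final1} with respect to $\theta$ from $\theta=1$ to an arbitrary $\theta\in(0,1]$, using the fact that all the asymptotic expansions in Sections \ref{I1sec}--\ref{I3I4sec} hold uniformly for $\theta$ in compact subsets of $(0,1]$, so term-by-term integration against $d\theta'$ is legitimate and commutes with taking $s\to+\infty$. The left-hand side integrates to $\ln\det(1-\mathbb{K}|_{[0,s]})|_\theta - \ln\det(1-\mathbb{K}|_{[0,s]})|_{\theta=1}$, and the second term is supplied by the known Bessel-kernel asymptotics \eqref{asymp theta=1} (rewritten as in Remark \ref{theta1remark}), which contributes the constant $\ln G(1+\alpha) - \tfrac{\alpha}{2}\ln(2\pi) - \tfrac{\alpha^2}{2}\ln 2$ together with the $\theta=1$ values of the $s^{2\rho}$, $s^\rho$, and $\ln s$ coefficients.

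First I would organize the right-hand side of \eqref{asymp for the diff identity final1} by the power of $s$ it multiplies, and integrate each block separately. For the $s^{2\rho}\ln(s^\rho)$ and $s^{2\rho}$ terms I would check, using the explicit formulas $\mathcal{I}_1^{(1)}=-\tfrac{2a}{\rho(1+\theta)^2}$ and $I_1^{(1)}=-\partial_\theta a$ from Proposition \ref{I1prop}, that $\int_1^\theta\big(\mathcal{I}_1^{(1)}\ln(s^\rho)+I_1^{(1)}\big)d\theta'$ reproduces exactly $-a(\theta)s^{2\rho}$ minus its value at $\theta=1$ (note $\rho\ln(s^\rho)=\rho^2\ln s$ and $2\rho=\tfrac{2\theta}{1+\theta}$ vary with $\theta$, so one must differentiate $-a s^{2\rho}$ in $\theta$ and match); this is the consistency check promised in Section \ref{outlinesubsec}. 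The same has to be done for the $s^\rho(\ln(s^\rho))^2$, $s^\rho\ln(s^\rho)$, and $s^\rho$ blocks — here one verifies $\mathscr{X}_1^{(2)}+\mathscr{Z}^{(2)}$ and the combined $\ln(s^\rho)$-coefficient integrate (again accounting for the $\theta$-dependence of $\rho$) to $b(\theta)s^\rho$ minus its $\theta=1$ value, with the genuinely $s$-independent $s^\rho$-coefficient $I_1^{(2)}+X_1^{(2)}+X_3^{(2)}+Z^{(2)}$ integrating to $0$ so that no spurious $s^\rho$ term survives. For the $\ln(s^\rho)$ block one checks that $\int_1^\theta(\mathcal{I}_1^{(3)}+\mathcal{X}_1^{(3)}+\mathcal{X}_3^{(3)}+\mathcal{Z}^{(3)}+\mathcal{I}_3^{(3)})d\theta'$, plus the contribution $c(1)\ln s$ from \eqref{asymp theta=1} (and the conversions $\ln(s^\rho)=\rho\ln s$ whose $\theta$-derivative contributes), gives $c(\theta)\ln s$ with $c(\theta)$ as in \eqref{little c in thm} — the second, independent derivation of $c$. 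Finally the purely constant block $I_1^{(3)}+X_1^{(3)}+X_2^{(3)}+X_3^{(3)}+Z^{(3)}+I_3^{(3)}$ is integrated in $\theta'$; the terms not involving $X_2^{(3)}$ and not involving the Barnes-$G$ piece $\ln G((1+\alpha+2\theta)/2\theta)$ of $I_1^{(3)}$ are elementary functions of $\theta'$ and collect into the function $W(\theta',\alpha)$ stated in the lemma, while $\ln G\big(\tfrac{1+\alpha+2\theta'}{2\theta'}\big)=\ln G\big(1+\tfrac{1+\alpha}{2\theta'}\big)$ is kept as $-\int_1^\theta\ln G(1+\tfrac{1+\alpha}{2\theta'})d\theta'$ and $X_2^{(3)}(\theta',\alpha)$ — a convergent series in $k$ — is kept as $\int_1^\theta X_2^{(3)}(\theta',\alpha)d\theta'$, exactly as in \eqref{asymp diff identity almost final l}. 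The error term $\bigO(s^{-\rho}\ln(s^\rho))$ integrates to an error of the same order because the implied constant is uniform on the compact $\theta'$-range; actually one should note $\rho$ depends on $\theta'$ but stays bounded below on $[\theta,1]$, so the bound survives integration.

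The main obstacle I anticipate is the purely algebraic bookkeeping in matching the $s^{2\rho}$, $s^\rho$, and $\ln s$ blocks: because $\rho=\rho(\theta)$ itself varies, integrating a coefficient of $s^{2\rho}\ln(s^\rho)$ does not simply give an antiderivative times $s^{2\rho}\ln(s^\rho)$ — one instead has to recognize the integrand as $\partial_{\theta'}\big(-a(\theta')s^{2\rho(\theta')}\big)$ expanded via $\partial_{\theta'}s^{2\rho}=2(\partial_{\theta'}\rho)s^{2\rho}\ln s$, and likewise for the $b s^\rho$ and $c\ln s$ terms, and then confirm that the many explicit contributions from Propositions \ref{I1prop}, \ref{Xprop}, \ref{Zprop}, and \ref{I3Kprop} sum to precisely these derivatives. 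This is a long but mechanical verification using $\rho=\tfrac{\theta}{1+\theta}$, $a$, $b$ from \eqref{coeff rho a b}, $c$ from \eqref{little c in thm}, the constants $\{c_j\}_1^8$ in \eqref{def of c1...c8}, and $\re b_2$, $\im b_2$ from \eqref{def of Re b2}. Once these matchings are confirmed, what remains after the leading $-as^{2\rho}+bs^{\rho}+c\ln s$ is exactly the constant term displayed in \eqref{asymp diff identity almost final l}, and the proof of the lemma is complete. The subsequent identification of this constant with $\ln C$ as in \eqref{big C in thm} — evaluating $\int_1^\theta\ln G(1+\tfrac{1+\alpha}{2\theta'})d\theta'$ and $\int_1^\theta X_2^{(3)}d\theta'$ in terms of the constant $d(\theta,\alpha)$ of \eqref{def of the constant d} — would be carried out in the steps following this lemma, and is where the definition of $d$ and Proposition \ref{dsymmprop} enter.
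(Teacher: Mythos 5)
Your overall route is exactly the paper's: integrate \eqref{asymp for the diff identity final1} in $\theta'$ from $1$ to $\theta$ (using the uniformity in $\theta$), supply the boundary term from \eqref{asymp theta=1} (and you correctly extract the extra constant $-\tfrac{\alpha^2}{2}\ln 2$ from $-\tfrac{\alpha^2}{4}\ln(4s)$), recognize each block of the integrand as a total $\theta'$-derivative of $-a\,s^{2\rho}+b\,s^{\rho}+c\ln s$, and leave the constant block in the form $-\ln G\big(1+\tfrac{1+\alpha}{2\theta'}\big)+W(\theta',\alpha)+X_2^{(3)}(\theta',\alpha)$, which is precisely how the paper proves the lemma.

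One predicted identity in your bookkeeping is wrong and would derail the ``mechanical verification'' if you tried to prove it as stated: for the $s^{\rho}$ block you assert that $\mathscr{X}_1^{(2)}+\mathscr{Z}^{(2)}$ together with the $s^\rho\ln(s^\rho)$ coefficient produce $b\,s^{\rho}$, while the plain $s^{\rho}$ coefficient $I_1^{(2)}+X_1^{(2)}+X_3^{(2)}+Z^{(2)}$ ``integrates to $0$''. Since $\partial_{\theta'}\big(b\,s^{\rho(\theta')}\big)=(\partial_{\theta'}b)\,s^{\rho'}+\tfrac{b}{\rho'(1+\theta')^2}\,s^{\rho'}\ln(s^{\rho'})$ contains no $(\ln s^{\rho'})^2$ term, the identities that must be checked are $\mathscr{X}_1^{(2)}+\mathscr{Z}^{(2)}=0$, $\ \mathcal{I}_1^{(2)}+\mathcal{X}_1^{(2)}+\mathcal{X}_3^{(2)}+\mathcal{Z}^{(2)}=\tfrac{b}{\rho(1+\theta)^2}$, and $I_1^{(2)}+X_1^{(2)}+X_3^{(2)}+Z^{(2)}=\partial_\theta b$ (which is certainly not zero), in exact parallel with $\mathcal{I}_1^{(1)}=-\tfrac{2a}{\rho(1+\theta)^2}$, $I_1^{(1)}=-\partial_\theta a$ and $\mathcal{I}_1^{(3)}+\mathcal{X}_1^{(3)}+\mathcal{X}_3^{(3)}+\mathcal{Z}^{(3)}+\mathcal{I}_3^{(3)}=\partial_\theta c/\rho$. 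Related to this, remember that since $s^{\rho(\theta')}$ varies with $\theta'$ one cannot integrate a coefficient separately and then multiply by a fixed power of $s$; the only meaningful statement is that each block equals the corresponding total $\theta'$-derivative (which you do articulate correctly for the $s^{2\rho}$ and $\ln s$ blocks). With that correction, the rest of your plan, including the treatment of the error term using that $\rho(\theta')\ge\rho(\theta)$ on the integration range, matches the paper's proof.
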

\begin{proof}
The proof involves long computations which use the definitions (\ref{def of c1...c8}) and (\ref{b1b2def}) of the constants $\{c_j\}_1^8$, $b_1$, and $b_2$, as well as the relations (\ref{lnb1b2}) satisfied by $\ln{b_1}$ and $\ln{b_2}$. 
Explicit expressions for the coefficients in (\ref{asymp for the diff identity final1}) are given in Propositions \ref{I1prop}, \ref{I2ZZXprop}, \ref{Xprop}, \ref{Zprop}, \ref{I3Kprop}, \ref{I4Kprop}.
After rather lengthy calculations, we find that the first six coefficients on the right-hand side of (\ref{asymp for the diff identity final1}) can be expressed as
\begin{align*}
& \mathcal{I}_{1}^{(1)} =  - \frac{2a}{\rho(1+\theta)^2},
	\\
& I_{1}^{(1)} =  - \partial_\theta a, 	
	\\
& \mathscr{X}_{1}^{(2)}+\mathscr{Z}^{(2)}=  0, 
	\\
& \mathcal{I}_{1}^{(2)} + \mathcal{X}_{1}^{(2)}+\mathcal{X}_{3}^{(2)}+\mathcal{Z}^{(2)} =  \frac{b}{\rho(1+\theta)^2}, 
	\\
& I_{1}^{(2)} + X_{1}^{(2)}+X_{3}^{(2)}+Z^{(2)} = \partial_\theta b, 
	\\
& \mathcal{I}_{1}^{(3)} + \mathcal{X}_{1}^{(3)} + \mathcal{X}_{3}^{(3)} + \mathcal{Z}^{(3)} + \mathcal{I}_{3}^{(3)} = \frac{\partial_\theta c}{\rho},
\end{align*}
where $a$, $b$, and $c$ are given by \eqref{coeff rho a b} and \eqref{little c in thm}. 
Integrating (\ref{asymp for the diff identity final1}) from $1$ to $\theta$ and using (\ref{asymp theta=1}) to compute the boundary term at $1$, this yields 
\begin{align*}\nonumber
\ln \det \Big( \left. 1-\mathbb{K} \right|_{[0,s]} \Big) = &  -a s^{2\rho}+b s^{\rho}+c \ln s + \ln G(1+\alpha) - \frac{\alpha}{2}\ln(2\pi)
	\\ 
& + \int_{1}^{\theta} \left.\Big( I_{1}^{(3)} + X_{1}^{(3)} + X_{2}^{(3)} + X_{3}^{(3)} + Z^{(3)} + I_{3}^{(3)} \Big)\right|_{\theta'}d\theta' + \bigO\big(s^{-\rho} \ln(s^{\rho})\big)
\end{align*}
as $s \to +\infty$ uniformly for $\theta$ in compact subsets of $(0,1]$.
The lemma will follow if we can show that
$$I_{1}^{(3)} + X_{1}^{(3)} + X_{2}^{(3)} + X_{3}^{(3)} + Z^{(3)} + I_{3}^{(3)}
= - \ln G \left( 1+ \frac{1+\alpha}{2\theta} \right) 
+ W + X_{2}^{(3)}.$$
This identity is a consequence of another long computation which also employs the identities
\begin{align*}
\ln(b_{1}) & \; = \frac{1-\theta}{1+\theta} \ln(\theta) + \ln(1+\theta) + i(\pi-\phi), \\
\ln(b_{2}) & \; = \frac{1-\theta}{1+\theta} \ln(\theta) + \ln(1+\theta) + i\phi,
\end{align*}
which are a consequence of (\ref{b1b2def}) and (\ref{def of Re b2}).
\end{proof}

\begin{remark}\upshape
Lemma \ref{int1thetalemma0} provides an alternative proof of the expressions (\ref{coeff rho a b}) and (\ref{little c in thm}) for $a,b$, and $c$ based on the differential identity in $\theta$. Note that this method yields an error term in (\ref{asymp diff identity almost final l}) of order $\bigO\big(s^{-\rho} \ln(s^{\rho})\big)$, which is slightly worse than than the optimal bound $\bigO(s^{-\rho})$ (which was proved via the differential identity in $s$ in \cite{ClaeysGirSti}). 
\end{remark}

To complete the proof of Theorem \ref{thm:main results} it only remains to verify that the sum of the terms of order $1$ on the right-hand side of (\ref{asymp diff identity almost final l}) equals $\ln C$, where $C$ is given by \eqref{big C in thm}. In order to verify this we need to compute the three integrals on the right-hand side of (\ref{asymp diff identity almost final l}). These integrals are computed in the following three lemmas.

\begin{lemma}\label{int1thetalemma1}
For $\alpha > -1$ and $\theta \in (0,1]$, it holds that
\begin{align}\nonumber
& -\int_{1}^{\theta} \ln G \Big(1 + \frac{1+\alpha}{2\theta'} \Big)d\theta' =  -\theta \ln G \left( 1+ \frac{1+\alpha}{2\theta} \right) + \ln G \left( 1+\frac{1+\alpha}{2} \right) 
	\\\label{part1 of final expression for C} 
& \hspace{1cm} - \frac{1+\alpha}{2} \left( \frac{\ln(2\pi)-1}{2}\ln \theta - \frac{1+\alpha}{2} \frac{\theta-1}{\theta} - \ln \Gamma \left(1+ \frac{1+\alpha}{2\theta} \right) + \ln \Gamma \left( \frac{3+\alpha}{2} \right) \right). 
\end{align}
\end{lemma}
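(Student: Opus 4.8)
\textbf{Proof proposal for Lemma \ref{int1thetalemma1}.}

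The plan is to reduce the integral of $\ln G$ over $\theta'$ to a single antiderivative in $\theta'$ via the functional equation $G(z+1) = \Gamma(z)G(z)$ together with the integral identity $\int_1^z \ln\Gamma(x)\,dx = \frac{z-1}{2}\ln(2\pi) - \frac{(z-1)z}{2} + (z-1)\ln\Gamma(z) - \ln G(z)$ quoted earlier in the excerpt (the same formula used in the proof of Proposition \ref{I1prop}). First I would substitute $u = \frac{1+\alpha}{2\theta'}$, so that $\theta' = \frac{1+\alpha}{2u}$ and $d\theta' = -\frac{1+\alpha}{2u^2}\,du$; this turns the left-hand side of \eqref{part1 of final expression for C} into $\frac{1+\alpha}{2}\int_{u_0}^{u_1} \frac{\ln G(1+u)}{u^2}\,du$ with explicit limits $u_0 = \frac{1+\alpha}{2}$ (at $\theta'=1$) and $u_1 = \frac{1+\alpha}{2\theta}$ (at $\theta'=\theta$).

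Next I would integrate by parts, differentiating $\ln G(1+u)$ and integrating $u^{-2}$: using $\frac{d}{du}\ln G(1+u) = \frac{1}{2}\ln(2\pi) + \frac{1}{2} - u + u\,\psi(u)$ (which follows by differentiating $\ln\Gamma(u) + \ln G(u)$ and applying $G(u+1)=\Gamma(u)G(u)$), the boundary term produces $-\frac{1}{u}\ln G(1+u)$ evaluated at the two limits, and the remaining integral becomes $\int \frac{1}{u}\left(\frac{\ln(2\pi)+1}{2} - u + u\psi(u)\right)du = \frac{\ln(2\pi)+1}{2}\ln u - u + \ln\Gamma(u) + C$. Here the term $\int \psi(u)\,du = \ln\Gamma(u)$ is the key simplification that collapses everything to elementary functions plus a single $\ln\Gamma$. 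Assembling the boundary and antiderivative contributions, multiplying by $\frac{1+\alpha}{2}$, and rewriting $u_0, u_1$ back in terms of $\theta$ and $\alpha$ should reproduce the right-hand side of \eqref{part1 of final expression for C}: the $-\frac{1+\alpha}{2}\cdot\frac{1}{u_1}\ln G(1+u_1)$ boundary term gives $-\theta\ln G(1+\frac{1+\alpha}{2\theta})$, the $+\frac{1+\alpha}{2}\cdot\frac{1}{u_0}\ln G(1+u_0)$ term gives $+\ln G(1+\frac{1+\alpha}{2})$, and the antiderivative differences supply the parenthesized expression involving $\ln\theta$, $\frac{\theta-1}{\theta}$, $\ln\Gamma(1+\frac{1+\alpha}{2\theta})$, and $\ln\Gamma(\frac{3+\alpha}{2})$ (noting $\Gamma(1+\frac{1+\alpha}{2}) = \Gamma(\frac{3+\alpha}{2})$).

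The main obstacle I anticipate is purely bookkeeping: correctly tracking the factors of $\frac{1+\alpha}{2}$ and the signs through the change of variables and integration by parts, and verifying that the elementary terms ($-u$, $\frac{\ln(2\pi)+1}{2}\ln u$) combine to match the stated $-\frac{1+\alpha}{2}\left(\frac{\ln(2\pi)-1}{2}\ln\theta - \frac{1+\alpha}{2}\frac{\theta-1}{\theta}\right)$ after substituting $u_1/u_0 = 1/\theta$ and $u_1 - u_0 = \frac{1+\alpha}{2}\cdot\frac{1-\theta}{\theta}$. There is no analytic difficulty beyond the already-cited identities for $\ln\Gamma$ and $\ln G$; the derivative formula for $\ln G(1+u)$ is itself a one-line consequence of the functional equations, so the whole lemma is a deterministic computation once the integration by parts is set up.
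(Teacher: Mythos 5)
Your proposal is correct and is essentially the paper's own argument: a single integration by parts combined with $(\ln G)'(z)=\tfrac{1}{2}(\ln(2\pi)+1)-z+(z-1)\psi(z)$ and $\int\psi\,du=\ln\Gamma(u)$, the substitution $u=\frac{1+\alpha}{2\theta'}$ being only a cosmetic reparametrization of the paper's integration by parts in $\theta'$ (the boundary term $-\frac{1+\alpha}{2u}\ln G(1+u)$ is exactly the paper's $\theta'\ln G(1+\frac{1+\alpha}{2\theta'})$). The only bookkeeping point worth making explicit is that your antiderivative produces $\ln\Gamma\big(\frac{1+\alpha}{2\theta}\big)-\ln\Gamma\big(\frac{1+\alpha}{2}\big)$ together with the coefficient $-\frac{\ln(2\pi)+1}{2}\ln\theta$, which matches the stated right-hand side after applying $\Gamma(1+x)=x\,\Gamma(x)$ at both endpoints (this shifts the $\ln\theta$ coefficient to $-\frac{\ln(2\pi)-1}{2}$), so the computation indeed closes.
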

\begin{proof}
A simple integration by parts shows that
\begin{equation}\label{int log G}
\int_{1}^{\theta} \ln G \Big( 1 + \frac{1+\alpha}{2\theta'} \Big)d\theta' = \bigg[\theta' \ln G\Big(1+\frac{1+\alpha}{2\theta'} \Big) \bigg]_{\theta'=1}^\theta - \int_{1}^{\theta} \theta' \partial_{\theta'} \left[ \ln G\left(1+\frac{1+\alpha}{2 \theta'} \right) \right]d\theta'.
\end{equation}
Using the identities (see \cite[Eq. 5.17.4]{NIST} for the first identity)
\begin{align*}
& (\ln G)^{\prime}(z) = \frac{1}{2}(\ln(2\pi) + 1) - z + (z-1)\psi(z) \quad \mbox{and} \quad \partial_{\theta'}\left(1+ \frac{1+\alpha}{2 \theta'} \right) = - \frac{1+\alpha}{2 \theta'^{2}},
\end{align*}
we obtain
\begin{align}\nonumber
& - \int_{1}^{\theta} \theta' \partial_{\theta'} \left[ \ln G\left(1+\frac{1+\alpha}{2\theta'} \right) \right]d\theta' 
	\\ \nonumber
& \hspace{2cm} = \frac{1+\alpha}{2} \int_{1}^{\theta} \frac{1}{\theta'} \left[ \frac{\ln(2\pi)-1}{2} - \frac{1+\alpha}{2\theta'}  + \frac{1+\alpha}{2\theta'} \psi\left(1+ \frac{1+\alpha}{2\theta'} \right) \right]d\theta' 
	\\\nonumber
& \hspace{2cm} = \frac{1+\alpha}{2} \left( \frac{\ln(2\pi)-1}{2}\ln \theta - \frac{1+\alpha}{2}\frac{\theta-1}{\theta}-\int_{1}^{\theta} \partial_{\theta'}\psi \left(1+ \frac{1+\alpha}{2\theta'} \right) d\theta' \right) 
	\\\label{minusint1thetalnG}
& \hspace{2cm} = \frac{1+\alpha}{2} \left( \frac{\ln(2\pi)-1}{2}\ln \theta - \frac{1+\alpha}{2}\frac{\theta-1}{\theta}-\left[  \psi \left(1+ \frac{1+\alpha}{2\theta'} \right) \right]_{\theta' = 1}^{\theta} \right).
\end{align}
Substituting (\ref{minusint1thetalnG}) into \eqref{int log G} and simplifying, we find (\ref{part1 of final expression for C}).
\end{proof}

\begin{lemma}\label{int1thetalemma2}
For $\alpha > -1$ and $\theta \in (0,1]$, it holds that
\begin{align}
\int_{1}^{\theta} W(\theta', \alpha) d\theta' 
= &\; \zeta'(-1) (\theta-1) + \left( \frac{1+\alpha}{4}\ln(2\pi) + \frac{9+30\alpha+24\alpha^{2}-3\theta-18 \alpha \theta + 4 \theta^{2}}{24(1+\theta)} \right) \ln (\theta) \nonumber \\
& + \frac{-1-6\alpha-6\alpha^{2}+2\theta + 6 \alpha \theta - \theta^{2}}{12\theta}\ln(1+\theta)-\frac{(\theta-1)\big( 3(1+\alpha)^{2}+2\theta \big)}{24\theta} + \frac{\alpha^{2}}{2} \ln 2. \nonumber
\end{align}
\end{lemma}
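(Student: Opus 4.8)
The plan is to verify this identity by straightforward integration, since $W(\theta,\alpha)$ is an explicit elementary function of $\theta'$ (it is a rational function of $\theta'$ plus rational multiples of $\ln\theta'$ and $\ln(1+\theta')$ and a constant involving $\zeta'(-1)$). First I would recall from Lemma~\ref{int1thetalemma0} that
\begin{align*}
W(\theta',\alpha) = &\; \frac{-3-12\alpha-6\alpha^2+2\theta'+\theta'^2}{6(1+\theta')^2}\ln\theta' + \frac{1+\alpha}{4\theta'}\ln(2\pi) \\
& + \frac{-3(1+\alpha)^2+(2+3\alpha)^2\theta'-(1+6\alpha)\theta'^2}{24\theta'^2(1+\theta')} + \frac{1+6\alpha(1+\alpha)-\theta'^2}{12\theta'^2}\ln(1+\theta') + \zeta'(-1),
\end{align*}
and then split the integral $\int_1^\theta W(\theta',\alpha)\,d\theta'$ into the five corresponding pieces. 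The term $\zeta'(-1)$ integrates immediately to $\zeta'(-1)(\theta-1)$, and $\frac{1+\alpha}{4\theta'}\ln(2\pi)$ integrates to $\frac{1+\alpha}{4}\ln(2\pi)\ln\theta$; these already reproduce two of the summands on the right-hand side.

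The remaining three integrals require partial-fraction decompositions in $\theta'$. For the piece $\int_1^\theta \frac{-3(1+\alpha)^2+(2+3\alpha)^2\theta'-(1+6\alpha)\theta'^2}{24\theta'^2(1+\theta')}\,d\theta'$, I would decompose the integrand as $\frac{A}{\theta'^2}+\frac{B}{\theta'}+\frac{C}{1+\theta'}$ with $A,B,C$ depending on $\alpha$; integrating gives a rational term $-A/\theta'$, a $\ln\theta'$ term, and a $\ln(1+\theta')$ term, all evaluated between $1$ and $\theta$. For $\int_1^\theta \frac{1+6\alpha(1+\alpha)-\theta'^2}{12\theta'^2}\ln(1+\theta')\,d\theta'$ and $\int_1^\theta \frac{-3-12\alpha-6\alpha^2+2\theta'+\theta'^2}{6(1+\theta')^2}\ln\theta'\,d\theta'$, I would first write the rational prefactors via partial fractions (in $\theta'$ and in $1+\theta'$ respectively), then integrate by parts to move the logarithm off, producing further elementary integrals of rational functions; crucially, the $\mathrm{dilog}$-type contributions that appear in the intermediate steps of the two log-integrals must cancel against each other (this is guaranteed because the final answer is stated to be dilogarithm-free). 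The main obstacle will be exactly this bookkeeping: organizing the partial fractions and the integration-by-parts boundary terms so that all transcendental pieces beyond $\ln\theta$, $\ln(1+\theta)$, and the rational terms cancel, and then collecting the $\ln\theta$ and $\ln(1+\theta)$ coefficients into the compact forms $\frac{9+30\alpha+24\alpha^2-3\theta-18\alpha\theta+4\theta^2}{24(1+\theta)}$ and $\frac{-1-6\alpha-6\alpha^2+2\theta+6\alpha\theta-\theta^2}{12\theta}$ stated in the lemma.

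Finally, I would assemble all five contributions, use $\ln 1 = 0$ to discard lower boundary logarithmic terms, and carefully combine the rational boundary pieces at $\theta'=1$ and $\theta'=\theta$ to produce the term $-\frac{(\theta-1)(3(1+\alpha)^2+2\theta)}{24\theta}$; the stray constant $\frac{\alpha^2}{2}\ln 2$ on the right-hand side arises from a boundary evaluation at $\theta'=1$ of one of the $\ln$-integrals (noting that at $\theta'=1$, $\ln(1+\theta')=\ln 2$), so I would track that contribution with particular care. The whole argument is a routine but lengthy calculus verification; no conceptual input beyond the explicit formula for $W$ is needed. I would present the proof as: ``Using partial fractions and integration by parts, a direct computation of $\int_1^\theta W(\theta',\alpha)\,d\theta'$ yields the stated expression,'' optionally displaying the partial-fraction decompositions as intermediate equations for the reader's convenience.

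\textbf{Remark.} In the write-up I would be explicit that all the integrals involved are elementary because the integrand of $W$ contains only $\ln\theta'$ and $\ln(1+\theta')$ (no products of the two, and no nested logarithms), so that integration by parts against the rational prefactors never generates a genuinely new transcendental function, and whatever $\mathrm{Li}_2$-type antiderivatives appear transiently occur in a combination that telescopes to zero.
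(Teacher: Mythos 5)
Your proposal takes essentially the same route as the paper, whose entire proof is that the identity ``follows from a long but straightforward computation'': integrating the explicit formula for $W(\theta',\alpha)$ term by term with partial fractions and integration by parts does reproduce the stated right-hand side (the coefficients of $\ln\theta$, $\ln(1+\theta)$, the constant $\tfrac{\alpha^2}{2}\ln 2$, and the rational piece $-\tfrac{(\theta-1)(3(1+\alpha)^2+2\theta)}{24\theta}$ all check out). One small correction to your plan: no dilogarithm-type primitives arise even transiently, because $\ln\theta'$ and $\ln(1+\theta')$ are multiplied only by a constant plus the double poles $(1+\theta')^{-2}$ and $\theta'^{-2}$ respectively (there are no simple poles such as $\tfrac{\ln\theta'}{1+\theta'}$ or $\tfrac{\ln(1+\theta')}{\theta'}$), so integration by parts only produces $\int\tfrac{d\theta'}{\theta'(1+\theta')}$, which is elementary; in particular you should drop the argument that a cancellation of $\mathrm{Li}_2$-terms is ``guaranteed because the final answer is dilogarithm-free,'' since appealing to the statement being proved would be circular --- fortunately no such cancellation is needed.
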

\begin{proof}
This follows from a long but straightforward computation.
\end{proof}

\begin{lemma}\label{int1thetalemma3}
For $\alpha > -1$ and $\theta \in (0,1]$, it holds that
\begin{align}\nonumber
\int_{1}^{\theta} X_{2}^{(3)}(\theta', \alpha) d\theta' 
= &\; \frac{\alpha}{2}\ln \theta - d(\theta,\alpha) + d(1,\alpha) - (\theta - 1)\zeta'(-1) 
 + \theta \ln G \left( 1 + \frac{1+\alpha}{2\theta} \right) 
	\\ \nonumber
&- \ln G \left( \frac{3+\alpha}{2} \right)  
- \frac{1+\alpha}{2} \ln \Gamma \left( 1+\frac{1+\alpha}{2\theta} \right) +\frac{1+\alpha}{2}\ln \Gamma \left( 1+\frac{1+\alpha}{2} \right)
	\\ \label{part3 of final expression for C}
&  - \frac{(\theta-1)\big( 3(1+\alpha)^{2} - 2 \theta \big)}{24 \theta}. 
\end{align}
\end{lemma}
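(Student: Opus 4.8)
The plan is to evaluate the integral $\int_1^\theta X_2^{(3)}(\theta',\alpha)\,d\theta'$ by working directly from the series representation of $X_2^{(3)}$ given in \eqref{X2p3p}, namely
\begin{align*}
X_2^{(3)}(\theta,\alpha) = \frac{\alpha}{2\theta} + \sum_{k=1}^\infty\left[ -k\psi(1+\alpha+k\theta) + k\ln\left(\tfrac{1+\alpha}{2}+k\theta\right) + \frac{\alpha}{2\theta} + \frac{1-6\alpha-9\alpha^2}{24\theta^2}\ln\left(1+\tfrac1k\right)\right].
\end{align*}
First I would integrate the series term by term (justified by the uniform convergence on compact subsets of $(0,1]$, which follows since the series originates from a convergent integral and the partial sums are uniformly controlled). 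The key observation is that $-k\psi(1+\alpha+k\theta)$ integrates to a logarithm of a Gamma function: $\int -k\psi(1+\alpha+k\theta')\,d\theta' = -\ln\Gamma(1+\alpha+k\theta')$, and similarly $\int k\ln(\tfrac{1+\alpha}{2}+k\theta')\,d\theta'$ integrates to an elementary expression. Thus after summing over $k$, the partial sums of the integrated series should telescope, once $N$ terms are collected, against precisely the polynomial-in-$N$ counterterms appearing in the definition \eqref{def of the constant d} of $d(\theta,\alpha)$.

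The central step, then, is to recognize that
\begin{align*}
\int_1^\theta X_2^{(3)}(\theta',\alpha)\,d\theta' = -d(\theta,\alpha) + d(1,\alpha) + \text{(explicit elementary terms)},
\end{align*}
where the elementary terms come from three sources: (i) the integral of the leading $\frac{\alpha}{2\theta}$ and the $\frac{\alpha}{2\theta}$ inside the sum (these contribute the $\frac{\alpha}{2}\ln\theta$ and parts of the Stirling-type counterterms); (ii) the integral of the $\frac{1-6\alpha-9\alpha^2}{24\theta^2}\ln(1+\tfrac1k)$ terms, which after summing over $k$ produce a $\ln N$-type counterterm that must be matched against the corresponding term in \eqref{def of the constant d}; and (iii) the boundary evaluations of $\int k\ln(\tfrac{1+\alpha}{2}+k\theta')\,d\theta'$ and the need to re-express $\sum_{k=1}^N\ln\Gamma(1+\alpha+k)$ at $\theta'=1$ in closed form via the Barnes $G$-function functional equation $\sum_{k=1}^N\ln\Gamma(1+\alpha+k) = \ln G(1+\alpha+N) - \ln G(2+\alpha)$, combined with the large-$z$ expansion \eqref{large z asymp for log Barnes G} of $\ln G$. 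This last manipulation is what produces the $\ln G(\tfrac{3+\alpha}{2})$, $\ln\Gamma(1+\tfrac{1+\alpha}{2})$, $\theta\ln G(1+\tfrac{1+\alpha}{2\theta})$ and $(\theta-1)\zeta'(-1)$ terms on the right-hand side of \eqref{part3 of final expression for C}.

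The main obstacle I anticipate is bookkeeping: matching, term by term in the asymptotic expansion in $N$, the counterterms generated by integrating and resumming the series against the polynomial counterterms in the definition of $d(\theta,\alpha)$. Concretely, after writing $\int_1^\theta X_2^{(3)}\,d\theta' = \lim_{N\to\infty}\big[\sum_{k=1}^N(\cdots)\big]$, one must expand both $\sum_{k=1}^N$-sums (the one at the upper limit $\theta$ and the one at the lower limit $1$) using Stirling's formula for $\ln\Gamma$, the Euler--Maclaurin expansion of $\sum_{k=1}^N k\ln(ak+b)$, and the asymptotics of $\sum_{k=1}^N \ln(1+\tfrac1k) = \ln(N+1)$; then one checks that every power of $N$ ($N^2\ln N$, $N^2$, $N\ln N$, $N$, $\ln N$) cancels, leaving exactly $-d(\theta,\alpha)+d(1,\alpha)$ plus the stated elementary remainder. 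Since each of these expansions is long but mechanical, I would relegate the detailed computation to the appendix (consistent with the statement "This follows from a long but straightforward computation" used for the analogous Lemma \ref{int1thetalemma2}), presenting in the main text only the structural identification of the Gamma-function antiderivative, the use of the $G$-function functional equation at $\theta'=1$, and the matching of counterterms with \eqref{def of the constant d}.
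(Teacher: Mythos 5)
Your outline follows the same route as the paper: integrate the series \eqref{X2p3p} term by term (Fubini), use $\int_1^\theta\big(-k\psi(1+\alpha+k\theta')\big)d\theta'=-\ln\Gamma(1+\alpha+k\theta)+\ln\Gamma(1+\alpha+k)$ together with the elementary antiderivative of $k\ln\big(\tfrac{1+\alpha}{2}+k\theta'\big)$, let the resulting partial sums cancel against the counterterms in \eqref{def of the constant d} to produce $-d(\theta,\alpha)$, and obtain $d(1,\alpha)$ from $\sum_{k=1}^{N}\ln\Gamma(1+\alpha+k)=\ln G(2+\alpha+N)-\ln G(2+\alpha)$ combined with \eqref{large z asymp for log Barnes G}. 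All of this matches the paper's proof, and your treatment of $\sum_{k\leq N}\ln(1+\tfrac1k)=\ln(N+1)$ and of the $\tfrac{\alpha}{2\theta}$ terms is also as in the paper.

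There is, however, one concrete gap in how you propose to reach the closed form \eqref{part3 of final expression for C}. The terms $\theta\ln G\big(1+\tfrac{1+\alpha}{2\theta}\big)$, $-\tfrac{1+\alpha}{2}\ln\Gamma\big(1+\tfrac{1+\alpha}{2\theta}\big)$, their $\theta'=1$ counterparts $-\ln G\big(\tfrac{3+\alpha}{2}\big)$, $+\tfrac{1+\alpha}{2}\ln\Gamma\big(1+\tfrac{1+\alpha}{2}\big)$, and the prefactor $-(\theta-1)\zeta'(-1)$ cannot come from the Barnes $G$ functional equation applied to the $\theta'=1$ sum, as you assert: that trick only telescopes Gamma factors with integer shifts and produces exactly the constant $d(1,\alpha)$ (through $-\ln G(2+\alpha)$, which is absorbed into $d(1,\alpha)$ and does not appear in the final formula); moreover the arguments $1+\tfrac{1+\alpha}{2\theta}$ depend on $\theta$, so they cannot originate from the $\theta'=1$ sum at all. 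They are the $N$-independent constants in the large-$N$ expansion of $\sum_{k=1}^{N}\big(\tfrac{1+\alpha}{2}+k\theta\big)\ln\big(\tfrac{1+\alpha}{2}+k\theta\big)$, and Euler--Maclaurin by itself delivers this constant only as an unevaluated limit. The paper closes this step by the exact identity $\zeta'(-1,z+N)-\zeta'(-1,z)=\sum_{n=0}^{N-1}(n+z)\ln(n+z)$ for the Hurwitz zeta derivative, the known asymptotics of $\zeta'(-1,z)$ as $z\to\infty$, and Adamchik's identity $\ln G(z)=\zeta'(-1)-\zeta'(-1,z)+(z-1)\ln\Gamma(z)$ (see \cite[Eq. (18)]{A1998}), which converts $-\theta\zeta'\big(-1,1+\tfrac{1+\alpha}{2\theta}\big)+\zeta'\big(-1,\tfrac{3+\alpha}{2}\big)$ into precisely the Barnes $G$ and Gamma terms above. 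Without this (or an equivalent) ingredient, your computation terminates at an unidentified constant of summation and the stated right-hand side is not reached; with it added, the remaining cancellation of the $N^2\ln N$, $N^2$, $N\ln N$, $N$, and $\ln N$ terms against \eqref{def of the constant d} goes through exactly as you describe.
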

\begin{proof}
Integrating the definition (\ref{X2p3p}) of $X_{2}^{(3)}$ from $1$ to $\theta$ and appealing to Fubini's theorem to interchange the order of integration and summation, we obtain
\begin{align}\nonumber
\int_{1}^{\theta} X_{2}^{(3)}(\theta', \alpha) d\theta' = &\; \frac{\alpha}{2}\ln \theta + \lim_{N\to + \infty}\sum_{k=1}^{N} \Bigg\{ -\ln \Gamma (1+\alpha + k \theta) + \ln \Gamma(1+\alpha + k)  	
	\\\nonumber
& + k (1-\theta) + \left(\frac{1+\alpha}{2} + k \theta \right) \ln\left(\frac{1+\alpha}{2} + k \theta \right) - \left(\frac{1+\alpha}{2} + k  \right) \ln\left(\frac{1+\alpha}{2} + k  \right) 
 	\\\label{int of X2p3p version1}
 & - \frac{1-6\alpha - 9 \alpha^{2}}{24 }\frac{1-\theta}{\theta} \ln \left( 1+\frac{1}{k} \right) + \frac{\alpha}{2} \ln \theta \Bigg\}. 
\end{align}

To simplify the sum in (\ref{int of X2p3p version1}), we first consider the sum of $\ln \Gamma(1+\alpha + k)$. Using the reproducing formula for Barnes' $G$-function (see \cite[Eq. 5.17.1]{NIST}),
\begin{align*}
G(z+1) = \Gamma(z) G(z),
\end{align*}
we can write
\begin{align*}
\sum_{k=1}^{N} \ln \Gamma (1+\alpha + k) = \ln G(2+\alpha+N)-\ln G (2+\alpha).
\end{align*}
The asymptotics \eqref{large z asymp for log Barnes G} of $\ln G$ then leads to the large $N$ asymptotics
\begin{align}\nonumber
\sum_{k=1}^{N} & \ln \Gamma(1+\alpha + k) =  \frac{N^{2}}{2}\ln N - \frac{3}{4} N^{2} + (1+\alpha) N \ln N + \left( \frac{\ln(2\pi)}{2}-(1+\alpha) \right) N 
	\\ \label{sumlnGamma}
& + \frac{5+12 \alpha + 6 \alpha^{2}}{12} \ln N + \Big( \zeta'(-1) + \frac{\ln(2\pi)}{2}(1+\alpha) - \ln G(2+\alpha) \Big) + \bigO(N^{-1}).
\end{align}

To simplify the terms in (\ref{int of X2p3p version1}) involving $\ln(\frac{1+\alpha}{2} + k \theta)$, we utilize the Hurwitz zeta function $\zeta(u,z)$ which is defined for $\re u > 1$ and $z \neq 0, -1, -2, \dots$ by
$$\zeta(u,z) = \sum_{n=0}^\infty \frac{1}{(n + z)^u}.$$
We recall that this function, which generalizes Riemann's zeta function $\zeta(u)$ in the sense that $\zeta(u,1) = \zeta(u)$, is defined for all $u \in \C \setminus \{1\}$ by analytic continuation.  
A simple shift of the summation index shows that
\begin{align}\label{Hurwitz identity difference}
\zeta(u,z)-\zeta(u,z+N) 
=  \sum_{n=0}^\infty \frac{1}{(n + z)^u} -  \sum_{n=N+1}^\infty \frac{1}{(n + z)^u}
= \sum_{n=0}^{N-1} \frac{1}{(n+z)^{u}},
\end{align}
whenever $\re u > 1$. By analyticity, (\ref{Hurwitz identity difference}) is in fact valid for all $u \in \C \setminus \{1\}$ and $z \in \C \setminus \{0,-1, \dots\}$. 
Differentiating \eqref{Hurwitz identity difference} with respect to $u$ and evaluating the resulting equation at $u=-1$, we obtain
\begin{align}\label{zetaprimezetaprime}
\zeta'(-1,z+N)-\zeta'(-1,z) = \sum_{n=0}^{N-1} (n+z)\ln(n+z),
\end{align}
where $\zeta'(-1,z) := \left.\partial_{u}  \zeta(u,z) \right|_{u=-1}$. It is a simple calculation to deduce from (\ref{zetaprimezetaprime}) that
\begin{align*}
\sum_{k=1}^{N} \left(\frac{1+\alpha}{2} + k \theta \right) \ln\left(\frac{1+\alpha}{2} + k \theta \right) = & \; \theta \bigg[ \zeta'\left( -1,1+N+\frac{1+\alpha}{2\theta}\right) - \zeta'\left( -1,1+\frac{1+\alpha}{2\theta} \right) \bigg] \\
& + \frac{\ln \theta}{2} N (1+\alpha + \theta + N \theta).
\end{align*}
Using the asymptotic formula \cite[Eq. 25.11.44]{NIST}
\begin{align*}
\zeta'(-1,z) = \frac{z^{2}}{2}\ln z - \frac{z^{2}}{4} - \frac{z}{2}\ln z + \frac{1}{12} \ln z + \frac{1}{12} + \bigO(z^{-2}), \qquad z \to \infty,
\end{align*}
which is valid in the sector $|\arg z| < \frac{\pi}{2} - \delta$ for any fixed $\delta > 0$, we obtain, for any $\theta\in (0,1]$,
\begin{align}\nonumber
& \sum_{k=1}^{N}\left(\frac{1+\alpha}{2} + k \theta \right) \ln\left(\frac{1+\alpha}{2} + k \theta \right) = \frac{\theta}{2}N^{2} \ln N + \frac{\theta}{4}(2\ln \theta - 1) N^{2} + \frac{1}{2}(1+\alpha+\theta) N \ln N  
	\\ \nonumber
& + \frac{1}{2}(1+\alpha+\theta) N \ln \theta + \frac{3(1+\alpha^{2})+2\theta(3+\theta) + 6 \alpha (1+\theta)}{24 \theta} \ln N 
	 \\ \label{sumk1Nleft}
& + \frac{3(1+\alpha^{2})+2\theta(3+\theta) + 6 \alpha (1+\theta)}{24 \theta} - \theta \zeta'\left(-1;\frac{1+\alpha + 2 \theta}{2 \theta}\right) + \bigO(N^{-1})
\end{align}
as $N \to + \infty$. 

The asymptotics of the terms in (\ref{int of X2p3p version1}) involving $\ln(\frac{1+\alpha}{2} + k)$ can be obtained by setting $\theta = 1$ in (\ref{sumk1Nleft}).
Moreover, it is easy to check that
\begin{align}\label{sumk1thetaalpha2}
& \sum_{k=1}^{N} \bigg\{k(1-\theta) + \frac{\alpha}{2} \ln \theta\bigg\} = \frac{1-\theta}{2}N^{2} + \frac{1-\theta}{2} N + \frac{\alpha}{2}N \ln \theta
\end{align}
and
\begin{align}\label{sum16alpha}
- \sum_{k=1}^{N} \frac{1-6\alpha - 9 \alpha^{2}}{24 }\frac{1-\theta}{\theta} \ln \left( 1+\frac{1}{k} \right) = - \frac{1-6\alpha - 9 \alpha^{2}}{24 }\frac{1-\theta}{\theta} \ln(N+1).
\end{align}

Substituting (\ref{sumlnGamma}), (\ref{sumk1Nleft}), (\ref{sumk1thetaalpha2}), and (\ref{sum16alpha}) into \eqref{int of X2p3p version1} and using that $\ln(N+1)$ can be replaced by $\ln N$ because $\ln(N+1) - \ln{N} \to 0$ as $N \to +\infty$, we obtain
\begin{align*}
& \int_{1}^{\theta} X_{2}^{\infty}(c) d\theta = \frac{\alpha}{2}\ln \theta + \lim_{N \to + \infty} \Bigg\{ \bigg( -\sum_{k=1}^{N}  \ln \Gamma (1+\alpha + k \theta) \bigg)  + \frac{\theta}{2} N^{2} \ln N +  (2 \ln \theta - 3)\frac{\theta}{4}N^{2} \\
& +\left(1+\alpha + \frac{\theta -1}{2}\right) N \ln N + \left( \frac{\ln(2\pi)}{2}-(1+\alpha)+\frac{1-\theta}{2} + \left( \alpha + \frac{1+\theta}{2} \right) \ln \theta \right) N \\
& +\frac{1+6\alpha^{2} + \theta (3+\theta) + 6\alpha(1+\theta)}{12 \theta} \ln N + \zeta'(-1) - \theta \zeta'\left( -1;\frac{1+\alpha + 2 \theta}{2 \theta} \right) + \zeta' \left( -1;\frac{3+\alpha}{2} \right) \\
& +\frac{\ln(2\pi)}{2}(1+\alpha) - \ln G(2+\alpha) + \frac{3(1+\alpha^{2})+2\theta(3+\theta) + 6 \alpha (1+\theta)}{24 \theta} - \frac{11+12\alpha + 3 \alpha^{2}}{24} \Bigg\},
\end{align*}
which, recalling the definition \eqref{def of the constant d} of the quantity $d(\theta,\alpha)$, 
can be rewritten as
\begin{align}\nonumber
\int_{1}^{\theta} X_{2}^{(3)}(\theta', \alpha) d\theta'  = &\; \frac{\alpha}{2}\ln \theta - d(\theta,\alpha) + d(1,\alpha)  
- \theta \zeta'\left( -1;\frac{1+\alpha + 2 \theta}{2 \theta} \right) + \zeta' \left( -1;\frac{3+\alpha}{2} \right) 
	\\ \label{lol22}
&  - \frac{(\theta-1)\big( 3(1+\alpha)^{2} - 2 \theta \big)}{24 \theta}. 
\end{align}
Using the following identity which relates the Barnes $G$-function to $\zeta'(-1,z)$ (see \cite[Eq. (18)]{A1998}):
\begin{equation*}
\ln G(z) = \zeta'(-1)-\zeta'(-1,z) + (z-1) \ln \Gamma(z),
\end{equation*} 
we can rewrite \eqref{lol22} as (\ref{part3 of final expression for C}).
\end{proof}

\begin{remark}\upshape
Note that the reasoning leading to (\ref{sumlnGamma}) cannot be applied to the sum 
\begin{align*}
\sum_{k=1}^{N} \ln \Gamma(1+\alpha + k \theta)
\end{align*}
for general values of $\theta$. In fact, this is the only finite $N$ sum in \eqref{int of X2p3p version1} which we are not able to evaluate in terms of known special functions. 
\end{remark}

Replacing the three integrals on the right-hand side of \eqref{asymp diff identity almost final l} with the expressions derived in Lemmas \ref{int1thetalemma1}-\ref{int1thetalemma3}, we obtain the following expression for the term of order $1$ in the large $s$ asymptotics of $\ln \det ( \left. 1-\mathbb{K} \right|_{[0,s]} )$:
\begin{align*}
& \ln G ( 1+\alpha ) - \frac{\alpha}{2} \ln(2\pi) + d(1,\alpha) - d(\theta,\alpha) \\
& + \frac{24 \alpha (\alpha +2)+15+3\theta + 4 \theta^{2}}{24(1+\theta)} \ln \theta 
+ \frac{6\alpha \theta - 6 \alpha (1+\alpha)-(\theta-1)^{2}}{12 \theta} \ln(1+\theta),
\end{align*}
which is precisely $\ln C$, where $C$ is defined by \eqref{big C in thm}. This finishes the proof in the case when $\theta \in (0,1]$; as explained in Section \ref{twocasessubsubsec} the result for $\theta \in [1, \infty)$ then follows by symmetry. The proof of Theorem \ref{thm:main results} is therefore complete.

\appendix

\section{Proof of Proposition \ref{prop: constant d for rational theta}}\label{subsection: d rational}
In this appendix, we establish the formula \eqref{d in terms of Barnes G intro} for $d(\theta, \alpha)$ for rational values of $\theta$ stated in Proposition \ref{prop: constant d for rational theta}.

Let $\theta = p/q$ where $p,q \geq 1$ are two (not necessarily relatively prime) integers. Let $N = mq$ where $m \geq 1$ is an integer (later we will take $m \to + \infty$). We have
\begin{align}\label{sum d theta rational def}
\prod_{k=1}^{N} \Gamma(1+\alpha + k \theta) = \prod_{k=1}^{mq} \Gamma\left(1+\alpha + \frac{kp}{q}\right) = \prod_{k=1}^{q} \prod_{j=0}^{m-1} \Gamma \left( 1+\alpha + jp + k\frac{p}{q} \right).
\end{align}
We recall that $\Gamma(z)$ satisfies the duplication formula (see \cite[Eq. 5.5.6]{NIST})
\begin{align} \label{duplication Gamma}
\Gamma(pz) = p^{pz-\frac{1}{2}}(2\pi)^{\frac{1-p}{2}} \prod_{\ell = 0}^{p-1} \Gamma \left( z + \frac{\ell}{p} \right).
\end{align}
Evaluating (\ref{duplication Gamma}) at $z = \frac{1+\alpha}{p}+j + \frac{k}{q}$, we find
\begin{equation}\label{lol28}
\Gamma \left( 1+\alpha + jp + k\frac{p}{q} \right) = p^{\frac{1}{2}+\alpha + jp + k\frac{p}{q}}(2\pi)^{\frac{1-p}{2}} \prod_{\ell = 0}^{p-1} \Gamma \left( \frac{1+\alpha}{p}+j + \frac{k}{q} + \frac{\ell}{p} \right).
\end{equation}
Substituting \eqref{lol28} into \eqref{sum d theta rational def}, we obtain
\begin{align}\label{prodGammaprodprod}
\prod_{k=1}^{N} \Gamma(1+\alpha + k \theta) = \prod_{k=1}^{q} \prod_{j=0}^{m-1} p^{\frac{1}{2}+\alpha + jp + k\frac{p}{q}}(2\pi)^{\frac{1-p}{2}} \times \prod_{k=1}^{q} \prod_{\ell=0}^{p-1} \prod_{j=0}^{m-1} \Gamma \left( \frac{1+\alpha}{p}+j + \frac{k}{q} + \frac{\ell}{p} \right).
\end{align}
The last product can be expressed in terms of Barnes' $G$-function:
\begin{align*}
\prod_{j=0}^{m-1} \Gamma \left( \frac{1+\alpha}{p}+j + \frac{k}{q} + \frac{\ell}{p} \right) = \frac{G \left( \frac{1+\alpha}{p} + m + \frac{k}{q} + \frac{\ell}{p} \right)}{G \left( \frac{1+\alpha}{p} + \frac{k}{q} + \frac{\ell}{p} \right)}.
\end{align*}
On the other hand, we have
\begin{align*}
\prod_{j=0}^{m-1} p^{\frac{1}{2}+\alpha + jp + k\frac{p}{q}}(2\pi)^{\frac{1-p}{2}} = p^{m \left( \frac{1}{2} + \alpha + k \frac{p}{q} \right)}(2\pi)^{\frac{(1-p)m}{2}} p^{\frac{pm(m-1)}{2}}.
\end{align*}
Therefore, taking the logarithm, we can write equation (\ref{prodGammaprodprod}) as
\begin{align*}
\sum_{k=1}^{N} \ln \Gamma(1+\alpha + k \theta) = \sum_{k=1}^{q} \left\{\left( \frac{1}{2} + \alpha + k \frac{p}{q} \right) m \ln p + \frac{(1-p)m}{2} \ln(2\pi) + \frac{pm(m-1)}{2} \ln p \right\} \\
+ \sum_{k=1}^{q} \sum_{\ell = 0}^{p-1} \left\{ \ln G \left( \frac{1+\alpha}{p} + m + \frac{k}{q} + \frac{\ell}{p} \right) - \ln G \left( \frac{1+\alpha}{p} + \frac{k}{q} + \frac{\ell}{p} \right) \right\}
\end{align*}
As $m \to + \infty$, by definition of $d$, the term of order $1$ in the above expression is given by
\begin{align*}
&d\Big(\theta = \frac{p}{q},\alpha\Big) +\frac{1+6\alpha^{2} + \theta (3+\theta) + 6\alpha(1+\theta)}{12 \theta} \ln q = \\
& \sum_{k=1}^{q} \sum_{\ell = 0}^{p-1} \left\{ \zeta'(-1) + \left( \frac{k}{2q} + \frac{1+\alpha + \ell -p}{2p} \right) \ln(2\pi) - \ln G \left( \frac{1+\alpha}{p} + \frac{k}{q} + \frac{\ell}{p} \right) \right\},
\end{align*}
where we have used that the term of order $1$ in the large $m$ expansion of $\ln G \left( \frac{1+\alpha}{p} + m + \frac{k}{q} + \frac{\ell}{p} \right)$ is given by
\begin{equation}
\zeta'(-1) + \left( \frac{k}{2q} + \frac{1+\alpha + \ell -p}{2p} \right) \ln(2\pi).
\end{equation}
Simplifying the double sum, we arrive at the following expression for $d$:
\begin{align*}
d\Big(\theta = \frac{p}{q},\alpha\Big) & = pq\zeta'(-1) + \left(\frac{p(q+1)}{4} + \frac{(1+\alpha)q}{2} -\frac{1}{2} pq + \frac{q(p-1)}{4} \right) \ln(2\pi) \\ 
& \quad - \sum_{k=1}^{q} \sum_{\ell = 0}^{p-1} \ln G \left( \frac{1+\alpha}{p} + \frac{k}{q} + \frac{\ell}{p} \right) - \frac{1+6\alpha^{2} + \theta (3+\theta) + 6\alpha(1+\theta)}{12 \theta} \ln q \\
& = pq\zeta'(-1) + \left( \frac{(1+\alpha)q}{2} + \frac{p-q}{4} \right) \ln(2\pi) - \sum_{k=1}^{q} \sum_{\ell = 0}^{p-1} \ln G \left( \frac{1+\alpha}{p} + \frac{k}{q} + \frac{\ell}{p} \right).
\end{align*}
After some simple cancellations and a simple change of indices, we obtain (\ref{d in terms of Barnes G intro}).

\section{Proof of Proposition \ref{dsymmprop}}\label{subsection: symmetry for d}
In this appendix, we prove the symmetry relation \eqref{symmetry for d} for $d$ given in Proposition \ref{dsymmprop}. We first use (\ref{d in terms of Barnes G intro}) to prove the relation for rational values of $\theta$. We then use continuity to extend it to all $\theta > 0$.

Let $\theta= p/q$ for some $p,q \in \N \setminus \{ 0 \}$. From \eqref{d in terms of Barnes G intro}, we have
\begin{align}\nonumber
d(\theta,\alpha)-d \left( \frac{1}{\theta},\frac{1+\alpha}{\theta}-1 \right) =&\; \frac{p-q}{2} \ln(2\pi)+\frac{\alpha -\theta \alpha+1-\theta }{ \theta} \ln p 
	\\ \label{dminusd}
& + \frac{1+6\alpha^{2} + \theta (3+\theta) + 6\alpha(1+\theta)}{12 \theta} \ln \theta 
+ D(p,q, \alpha),
\end{align}
where the function $D(p,q, \alpha)$ is defined by
\begin{align*}
D(p,q, \alpha) = - \sum_{k=1}^{q} \sum_{\ell = 1}^{p} \ln G \left( \frac{\ell+\alpha}{p} + \frac{k}{q} \right) + \sum_{k=1}^{q} \sum_{\ell = 1}^{p} \ln G \left( \frac{k-1 }{q} + \frac{\ell+1+\alpha}{p} \right).
\end{align*}
Simplification gives
\begin{align*}
D(p,q, \alpha) =& - \sum_{k=1}^{q} \sum_{\ell = 1}^{p} \ln G \left( \frac{\ell+\alpha}{p} + \frac{k}{q} \right) + \sum_{k=0}^{q-1} \sum_{\ell = 2}^{p+1} \ln G \left( \frac{k}{q} + \frac{\ell+\alpha}{p} \right)
\\
=& -\sum_{\ell = 1}^{p} \ln G \left( \frac{\ell+\alpha}{p} + 1 \right) - \sum_{k=1}^{q} \ln G \left( \frac{1+\alpha}{p} + \frac{k}{q} \right) 
	\\
&+\sum_{\ell = 2}^{p+1} \ln G \left(  \frac{\ell+\alpha}{p} \right)+\sum_{k=0}^{q-1}\ln G \left( \frac{k}{q} + \frac{1+\alpha}{p}+1 \right)
\\
=&\; \sum_{k=1}^{q-1} \bigg\{  \ln G \left( \frac{k}{q} + \frac{1+\alpha}{p}+1 \right)-\ln G \left( \frac{1+\alpha}{p} + \frac{k}{q} \right) \bigg\}
	\\
& +\sum_{\ell = 1}^{p-1}\bigg\{ \ln G \left(  \frac{\ell+1+\alpha}{p} \right)-\ln G \left( \frac{\ell+1+\alpha}{p} + 1 \right) \bigg\}.
\end{align*}
Using the identity $G(z+1)=\Gamma(z)G(z)$ and the duplication formula for $\Gamma$ (see \eqref{duplication Gamma}), we obtain 
\begin{align*}
D(p,q, \alpha) =&\;\sum_{k=1}^{q-1} \ln \Gamma \left( \frac{1+\alpha}{p} + \frac{k}{q} \right) -\sum_{\ell = 1}^{p-1} \ln \Gamma \left(  \frac{1+\alpha}{p} +\frac{\ell}{p} \right) 
	\\
= &\; \sum_{k=0}^{q-1} \ln \Gamma \left( \frac{1+\alpha}{p} + \frac{k}{q} \right) -\sum_{\ell = 0}^{p-1} \ln \Gamma \left(  \frac{1+\alpha}{p}+\frac{\ell}{p} \right) 
\\
=&\; \ln \Gamma \left( \frac{1+\alpha}{\theta} \right) - \left( \frac{1+\alpha}{\theta}-\frac{1}{2} \right)  \ln q -\frac{1-q}{2}\ln(2\pi)- \ln \Gamma \left(1+\alpha \right) 
	\\
& + \left(\frac{1}{2}+\alpha \right) \ln p+ \frac{1-p}{2} \ln (2\pi).
\end{align*}
Substituting this expression for $D$ into (\ref{dminusd}), we arrive at
\begin{align*}
d(\theta,\alpha)-d \left( \frac{1}{\theta},\frac{1+\alpha}{\theta}-1 \right) =&\; \ln \Gamma \left( \frac{1+\alpha}{\theta} \right) - \ln \Gamma \left(1+\alpha \right) +\left( \frac{1+\alpha}{\theta}-\frac{1}{2} \right)  \ln \theta
\\
&+ \frac{1+6\alpha^{2} + \theta (3+\theta) + 6\alpha(1+\theta)}{12 \theta} \ln \theta 
\\
=&\; \ln \Gamma \left( \frac{1+\alpha}{\theta} \right) - \ln \Gamma \left( 1+\alpha \right) + \frac{13 + 6 \alpha^{2} + \theta(\theta-3) + 6 \alpha (\theta + 3)}{12 \theta}\ln \theta,
\end{align*}
which proves (\ref{symmetry for d}) for rational values of $\theta$.

The definition (\ref{def of the constant d}) of $d(\theta, \alpha)$ can be written as
\begin{align}\label{ddN}
d(\theta,\alpha) = \lim_{N\to + \infty} d_N(\theta,\alpha), 
\end{align}
where the functions $d_N$ are defined by
\begin{align*}
d_N(\theta,\alpha) =&\; \sum_{k=1}^{N} \ln \Gamma (1+\alpha + k \theta) - \Bigg\{\frac{\theta}{2} N^{2} \ln N +  \frac{\theta(2 \ln \theta - 3)}{4}N^{2} \\
& +\left(1+\alpha + \frac{\theta -1}{2}\right) N \ln N + \left( \frac{\ln(2\pi)}{2}-(1+\alpha)+\frac{1-\theta}{2} + \left( \alpha + \frac{1+\theta}{2} \right) \ln \theta \right) N \nonumber \\
& +\frac{1+6\alpha^{2} + \theta (3+\theta) + 6\alpha(1+\theta)}{12 \theta} \ln N \Bigg\}.
\end{align*}
The proof of Proposition \ref{dsymmprop} will be complete if we can show that the convergence in (\ref{ddN}) is uniform for $(\theta,\alpha)$ in compact subsets of $\mathcal{U}$, where
$$\mathcal{U} = (\C \setminus [0,-\infty)) \times \{ \alpha \in \C: \re \alpha >-1\}.$$
Indeed, if this is the case, then since each function $d_N$ is holomorphic $\mathcal{U} \to \C$, so is $d$; thus (\ref{symmetry for d}) must hold also for irrational values of $\theta > 0$ by continuity.

Let $K \subset \mathcal{U}$ be compact. By \eqref{stirlingwithremainder}, we have
\begin{align*}
\sum_{k=1}^{N} \ln \Gamma (1+\alpha + k \theta) =&\; \sum_{k=1}^{N}\bigg[ (1+\alpha + k \theta)  \ln(1+\alpha + k \theta)-  (1+\alpha + k \theta) -\frac{1}{2}   \ln(1+\alpha + k \theta)
\\
&+\frac{1}{2} \ln(2\pi)+\frac{1}{12(1+\alpha + k \theta)}\bigg] + \sum_{k=1}^{N} \mathcal{D}_1(1+\alpha + k \theta),
\end{align*}
where $\mathcal{D}_1$ is the remainder defined in (\ref{calDNdef}).
Using the relation \eqref{Hurwitz identity difference}, we obtain
\begin{align}\nonumber
&\sum_{k=1}^{N}\bigg[ (1+\alpha + k \theta)  \ln(1+\alpha + k \theta)-  (1+\alpha + k \theta) -\frac{1}{2}   \ln(1+\alpha + k \theta)+\frac{1}{2} \ln(2\pi)+\frac{1}{12(1+\alpha + k \theta)}\bigg]
	\\\nonumber
= &\; \theta \bigg( \zeta'\bigg( -1,1+N+\frac{1+\alpha}{\theta} \bigg)-  \zeta'\bigg( -1,1+\frac{1+\alpha}{\theta} \bigg)\bigg) + \frac{\ln \theta}{2} N(2+2\alpha +\theta + N\theta) 
	\\\nonumber
&- N(1+\alpha)+\theta \frac{N^2 +N-2}{2}- \frac{1}{2} N \ln \theta -\frac{1}{2} \bigg( \ln \Gamma \bigg( 1+\frac{1+\alpha}{\theta}+N \bigg) - \ln \Gamma \bigg( 1+\frac{1+\alpha}{\theta} \bigg)\bigg)
\\ \label{sumk1N1alpha}
&+\frac{N}{2} \ln (2\pi)+ \frac{1}{12 \theta} \bigg( \psi  \bigg( 1+\frac{1+\alpha}{\theta}+N \bigg) - \psi \bigg( 1+\frac{1+\alpha}{\theta} \bigg)\bigg).
\end{align}
All the special functions on the right-hand side of (\ref{sumk1N1alpha}) have uniform expansions for large $N$ whenever the argument of
\[
1+\frac{1+\alpha}{\theta}+N 
\]
is bounded away from $\pm \pi$; in particular, this is the case for $(\theta,\alpha)\in K$. Furthermore, by \eqref{remainderO}, there are constants $C',C''>0$ (that only depend on $K$) such that
\begin{align*}
|\mathcal{D}_1(1+\alpha +k\theta)| \le \frac{C'}{(1+\alpha + k\theta)^3} \le \frac{C''}{k^3}
\end{align*}
and thus the series $\sum_{k=1}^{\infty} \mathcal{D}_1(1+\alpha +k\theta)$ converges uniformly for $(\theta,\alpha)\in K$. We conclude that the sequence of functions $d_N$ converges to $d$ uniformly for $(\theta,\alpha)$ in compact subsets of $\mathcal{U}$ and thus the proof of Proposition \ref{dsymmprop} is complete.

\section{Proof of Lemma \ref{lemma: integrals for I1}}\label{I1lemmaapp}
Assume $\alpha > -1$ and $\theta \in (0,1]$, so that $b_1$ and $b_2$ lie in the second and first quadrants, respectively.
For any integer $j$, a contour deformation shows that
\begin{align}\label{lol6}
2\int_{\gamma_{b_2b_1}} \frac{\zeta^j}{r(\zeta)} \frac{d\zeta}{2\pi i} 
= 2\int_{\gamma_{b_2b_1}} \frac{\zeta^j}{\tilde{r}(\zeta)} \frac{d\zeta}{2\pi i} =
\int_{\sigma} \frac{\zeta^j}{\tilde{r}(\zeta)} \frac{d\zeta}{2\pi i},
\end{align}
where $\sigma$ is a closed loop surrounding once $\Sigma_{5}$ in the positive direction but not surrounding $0$, and where we recall that the square roots $r$ and $\tilde{r}$ defined in (\ref{rdef}) and (\ref{rtildedef}) have branch cuts along $\Sigma_{5}$ and $[b_{1},b_{2}]$, respectively. If $j \geq 0$, then $\frac{\zeta^j}{\tilde{r}(\zeta)}$ is analytic in $\mathbb{C}\setminus [b_{1},b_{2}]$, and then deforming the contour $\sigma$ to infinity, we see that the right-hand side of (\ref{lol6}) equals the coefficient of $\zeta^{-1}$ in the large $\zeta$ expansion of $\frac{\zeta^j}{r(\zeta)}$. Since
$$\frac{1}{r(\zeta)} = \frac{1}{\zeta} + \frac{b_1 + b_2}{2\zeta^2} 
+ \frac{3 b_1^2+2 b_1 b_2+3   b_2^2}{8 \zeta^3 }
+ \bigO(\zeta^{-4}), \qquad \zeta \to \infty,$$
this proves the first three identities \eqref{int1overr}--\eqref{intzeta2overr} of the lemma. 

On the other hand, if $j \leq -1$ in \eqref{lol6}, then $\frac{\zeta^{j}}{\tilde{r}(\zeta)}$ has no residue at $\infty$ but has a pole of order $|j|$ at $0$. By deforming the contour $\sigma$ through $\infty$, we obtain
\begin{equation*}
2 \int_{\gamma_{b_{2}b_{1}}} \frac{\zeta^{j}}{ r(\zeta)}\frac{d\zeta}{2\pi i} = - \int_{C_{\epsilon}} \frac{\zeta^{j}}{ \tilde{r}(\zeta)}\frac{d\zeta}{2\pi i},
\end{equation*}
where $C_{\epsilon}$ denotes a small circle of radius $\epsilon$ centered at $0$ oriented positively. Therefore the right-hand side of (\ref{lol6}) is equal to the coefficient of $\zeta^{-1}$ in the expansion of $\frac{\zeta^{j}}{\tilde{r}(\zeta)}$ as $\zeta \to 0$. Since
\begin{equation*}
\frac{1}{\tilde{r}(\zeta)} = \frac{i}{|b_{2}|} + \bigO(\zeta) \qquad \mbox{as } \zeta \to 0,
\end{equation*}
this proves \eqref{int1overzetar}. 

To prove the remaining four identities, we note that the same kind of argument that gave \eqref{lol6}, shows that, for any $j \in \mathbb{Z}$, 
\begin{align}
2\int_{\gamma_{b_2b_1}} \frac{\zeta^j \ln{\zeta}}{r(\zeta)} \frac{d\zeta}{2\pi i} 
= &\; 2\int_{\gamma_{b_2b_1}} \frac{\zeta^j \ln{\zeta}}{\tilde{r}(\zeta)} \frac{d\zeta}{2\pi i} = \int_{\sigma} \frac{\zeta^j \ln{\zeta}}{\tilde{r}(\zeta)} \frac{d\zeta}{2\pi i}. \label{lol7}
\end{align}
If $j \geq 0$, then by deforming $\sigma$ to $C_{R} \cup \big((-R,0) + i 0^{+}\big) \cup \big((0,-R) - i 0^{+}\big)$ where $R > 0$ is any large radius, and noting that $r = \tilde{r}$ over the range of integration, we get
\begin{align*}
2\int_{\gamma_{b_2b_1}} \frac{\zeta^j \ln{\zeta}}{r(\zeta)} \frac{d\zeta}{2\pi i} = &\; \int_{C_{R}} \frac{\zeta^j \ln{\zeta}}{r(\zeta)} \frac{d\zeta}{2\pi i} 
+ \int_{-R}^0 \frac{\zeta^j 2\pi i}{r(\zeta)} \frac{d\zeta}{2\pi i}.
\end{align*}
Since the left-hand side is independent of $R$, by taking the limit $R \to + \infty$, we obtain
\begin{align}\nonumber
&2\int_{\gamma_{b_2b_1}} \frac{\zeta^j \ln{\zeta}}{r(\zeta)} \frac{d\zeta}{2\pi i}  
	\\  \label{lol8}
& =\lim_{R\to + \infty} \bigg\{\int_{-\pi}^{\pi} R^je^{i j \varphi} \ln(Re^{i\varphi})\bigg(\frac{1}{Re^{i\varphi}} + \frac{b_1 + b_2}{2 R^2 e^{2i \varphi}} + \bigO(R^{-3})\bigg) \frac{iRe^{i\varphi} d\varphi}{2\pi i} 
+ \int_{-R}^0 \frac{\zeta^j 2\pi i}{r(\zeta)} \frac{d\zeta}{2\pi i} \bigg\}. 
\end{align}

Taking $j = 0$ in (\ref{lol8}), we find
\begin{align*}
2\int_{\gamma_{b_2b_1}} \frac{\ln{\zeta}}{r(\zeta)} \frac{d\zeta}{2\pi i} 
=& \lim_{R \to \infty}
\bigg\{\int_{-\pi}^{\pi} \ln(Re^{i\varphi})\bigg(\frac{1}{Re^{i\varphi}} + \frac{b_1 + b_2}{2 R^2 e^{2i \varphi}} \bigg) \frac{iRe^{i\varphi} d\varphi}{2\pi i} 
 + \int_{-R}^0 \frac{1}{r(\zeta)} d\zeta \bigg\}
	\\
=& \lim_{R \to \infty}
\bigg\{\ln{R} + \int_{-R}^0 \frac{1}{r(\zeta)} d\zeta \bigg\}.
\end{align*}
Using that
$$\frac{d}{d\zeta} \ln\big(-2 r(\zeta) (1 + r'(\zeta))\big) = \frac{1}{r(\zeta)},$$
we can compute the large $R$ asymptotics of the integral from $-R$ to $0$:
\begin{align*}
& \int_{-R}^0 \frac{1}{r(\zeta)} d\zeta
= \bigg[\ln\big(-2 r(\zeta) (1 + r'(\zeta))\big)\bigg]_{\zeta = -R}^0  = -\ln(R) + \ln(i(|b_2| + \im{b_2})) - \ln{2} + \bigO(R^{-1}).
\end{align*}
This yields \eqref{logzetaoverr}. 

Taking $j = 1$ in \eqref{lol8}, we find
\begin{align*}
& 2\int_{\gamma_{b_2b_1}} \frac{\zeta \ln{\zeta}}{r(\zeta)} \frac{d\zeta}{2\pi i} 
= \lim_{R \to \infty}
\bigg\{\int_{-\pi}^{\pi} Re^{i \varphi} \ln(Re^{i\varphi})\bigg(\frac{1}{Re^{i\varphi}} + \frac{b_1 + b_2}{2 R^2 e^{2i \varphi}}  \bigg) \frac{iRe^{i\varphi} d\varphi}{2\pi i} + \int_{-R}^0  \frac{\zeta}{r(\zeta)} d\zeta
 \bigg\}.
\end{align*}
Using that
$$\frac{d}{d\zeta} \bigg\{r(\zeta) + \frac{b_1 + b_2}{2} \ln(-2 r(\zeta) (1 + r'(\zeta)))\bigg\} = \frac{\zeta}{r(\zeta)},$$
we obtain the following large $R$ asymptotics:
\begin{align*}\nonumber
& \int_{-R}^0 \frac{\zeta}{r(\zeta)} d\zeta
= \bigg[r(\zeta) + \frac{b_1 + b_2}{2} \ln[-2 r(\zeta) (1 + r'(\zeta))]\bigg]_{\zeta = -R}^0
	\\ \nonumber
& = r_{-}(0) + \frac{b_1 + b_2}{2} \ln[-2 r_{-}(0) (1 + r_{-}'(0))]
- \bigg(r(-R) + \frac{b_1 + b_2}{2} \ln[-2 r(-R) (1 + r'(-R))]\bigg)
	\\ \nonumber
& = -i|b_2| + i \im(b_2)\ln(2i(|b_2| + \im b_2))
	\\ \nonumber
&\quad - \bigg(-\sqrt{(R+b_1)(R+b_2)} + \frac{b_1 + b_2}{2} \ln[2 \sqrt{(R+b_1)(R+b_2)} + b_1 + b_2 + 2R]\bigg)
	\\
& = R - i\im(b_2) \ln R
-i |b_2| + i\Big(1 + \ln(2i(|b_2| + \im b_2)) - \ln 4\Big)\im{b_2} + \bigO(R^{-1}),
\end{align*}
where we have fixed the branch of $\sqrt{(R+b_1)(R+b_2)}$ so that $\sqrt{(R+b_1)(R+b_2)} \sim R$ as $R \to +\infty$. Since
\begin{align*}
& \int_{-\pi}^{\pi} \ln (Re^{i\varphi}) \frac{d \varphi}{2\pi} = \ln R, \\
& \int_{-\pi}^{\pi} \ln (Re^{i\varphi})Re^{i\varphi} \frac{d \varphi}{2\pi} = - R,
\end{align*}
this proves \eqref{zetalogzetaoverr}. 

Taking $j = 2$ in \eqref{lol8} and utilizing the fact that
\begin{align*}
& \int_{-\pi}^{\pi} \ln (Re^{i\varphi})R^{2}e^{2i\varphi} \frac{d \varphi}{2\pi} = \frac{R^{2}}{2},
\end{align*}
we find
\begin{align}
2\int_{\gamma_{b_2b_1}} \frac{\zeta^2 \ln{\zeta}}{r(\zeta)} \frac{d\zeta}{2\pi i} 
=&\; \lim_{R \to \infty}
\bigg\{\int_{-\pi}^{\pi} R^2e^{2i \varphi} \ln(Re^{i\varphi})\bigg(\frac{1}{Re^{i\varphi}}  + \frac{b_1 + b_2}{2 R^2 e^{2i \varphi}} \nonumber
	\\
&  + \frac{3 b_1^2+2 b_1 b_2+3   b_2^2}{8 R^3 e^{3i\varphi}}  \bigg) \frac{iRe^{i\varphi} d\varphi}{2\pi i} 
 + \int_{-R}^0 \frac{\zeta^2 }{r(\zeta)} d\zeta \bigg\} \nonumber
	\\
=&\; \lim_{R \to \infty}
\bigg\{\frac{R^2}{2} - \frac{b_1 + b_2}{2} R + \frac{(\re{b_2})^2 - 2(\im{b_2})^2}{2}\ln{R} + \int_{-R}^0 \frac{\zeta^2 }{r(\zeta)} d\zeta \bigg\}. \label{lol9}
\end{align}
With the help of the identity
$$\frac{d}{d\zeta} \bigg\{\frac{3b_1 + 2\zeta+ 3 b_2}{4}r(\zeta) + \frac{3b_1^2 + 2b_1 b_2 + 3b_2^2}{8} \ln(-2 r(\zeta) (1 + r'(\zeta)))\bigg\} = \frac{\zeta^2}{r(\zeta)},$$
we infer the following large $R$ asymptotics:
\begin{align*}
 \int_{-R}^0 \frac{\zeta^2}{r(\zeta)} d\zeta
= &\; \bigg[\frac{3(b_1 + b_2) + 2\zeta}{4}r(\zeta) + \frac{3b_1^2 + 2b_1 b_2 + 3b_2^2}{8} \ln(-2 r(\zeta) (1 + r'(\zeta)))\bigg]_{\zeta = -R}^0
	\\
 =& -\frac{R^2}{2} + i\im(b_2) R  + \frac{2(\im{b_2})^2 - (\re{b_2})^2}{2} \ln{R}
	\\
& +\frac{1}{4} \bigg(2 \left((\re{b_1})^2-2 (\im{b_2})^2\right) \ln(2 i (|b_2|+\im(b_2)))+6 |b_2| \im(b_2)
	\\
& +(\im{b_2})^2 (8\ln(2)-6)+(\re{b_1})^2 (1-4 \ln(2))\bigg)
 +  \bigO(R^{-1}).
\end{align*}
Substituting the above expansion into \eqref{lol9}, we obtain \eqref{zeta2logzetaoverr}. 

If $j \leq -1$ in \eqref{lol7}, we have
\begin{align*}
 2\int_{\gamma_{b_2b_1}} \frac{\zeta^{j}\ln{\zeta}}{ r(\zeta)} \frac{d\zeta}{2\pi i} 
= &\; \int_{\sigma} \frac{\ln{\zeta}}{\zeta^{|j|} r(\zeta)} \frac{d\zeta}{2\pi i} 
	\\
= &\; \int_{C_{R}} \frac{\ln{\zeta}}{\zeta^{|j|} r(\zeta)} \frac{d\zeta}{2\pi i} 
+ \int_{-R}^{-\epsilon} \frac{2\pi i}{\zeta^{|j|} r(\zeta)} \frac{d\zeta}{2\pi i} 
- \int_{C_{\epsilon}} \frac{\ln{\zeta}}{\zeta^{|j|} r(\zeta)} \frac{d\zeta}{2\pi i} 
	\\
=&\; \int_{-\pi}^{\pi} \frac{\ln(Re^{i\varphi})}{R^{|j|}e^{i |j| \varphi}} \bigg(\frac{1}{Re^{i\varphi}} + \frac{b_1 + b_2}{2 R^2 e^{2i \varphi}} + \bigO(R^{-3})\bigg) \frac{iRe^{i\varphi} d\varphi}{2\pi i} 
	\\
& + \int_{-R}^{-\epsilon} \frac{d\zeta }{\zeta^{|j|} r(\zeta)} 
- \int_{-\pi}^\pi \frac{\ln(\epsilon e^{i\varphi})}{\epsilon^{|j|}e^{i |j| \varphi}}\bigg(\frac{i}{|b_2|} + \frac{\im{b_2}}{|b_2|^3} \epsilon e^{i\varphi} + \bigO(\epsilon^2)\bigg) \frac{i\epsilon e^{i\varphi} d\varphi}{2\pi i},
\end{align*}
as $R \to + \infty$ and $\epsilon \to 0^{+}$. Taking the limit $R \to + \infty$, and then the limit $\epsilon \to 0^{+}$ gives
\begin{align*}
& 2\int_{\gamma_{b_2b_1}} \frac{\zeta^{j}\ln{\zeta}}{ r(\zeta)} \frac{d\zeta}{2\pi i} 
= \lim_{\epsilon \to 0^{+}}  \bigg\{ \int_{-\infty}^{-\epsilon} \frac{d\zeta }{\zeta^{|j|} r(\zeta)} 
- \int_{-\pi}^\pi \frac{\ln(\epsilon e^{i\varphi})}{\epsilon^{|j|}e^{i |j| \varphi}}\bigg(\frac{i}{|b_2|} + \frac{\im{b_2}}{|b_2|^3} \epsilon e^{i\varphi} + \bigO(\epsilon^2)\bigg) \frac{i\epsilon e^{i\varphi} d\varphi}{2\pi i} \bigg\}.
\end{align*}
Suppose now that $j = -1$. Using that
$$\frac{d}{d\zeta} \frac{\ln\bigg(\frac{i(b_2 - \zeta) - \frac{b_2}{|b_2|} r(\zeta)}{i(b_2 - \zeta) + \frac{b_2}{|b_2|} r(\zeta)}\bigg)}{i|b_2|} = \frac{1}{\zeta r(\zeta)},$$
we find that 
\begin{align*}
\int_{-R}^{-\epsilon} \frac{d\zeta}{\zeta r(\zeta)} & 
= \frac{\ln\bigg(\frac{i(b_2 - \zeta) - \frac{b_2}{|b_2|} r(\zeta)}{i(b_2 - \zeta) + \frac{b_2}{|b_2|} r(\zeta)}\bigg)}{i|b_2|}\bigg|_{-\infty}^{-\epsilon}
	\\
& = \frac{i \ln{\epsilon}}{|b_2|}  + \frac{\ln(\frac{2|b_2|^2}{\re{b_2}})}{i|b_2|} 
- \frac{\ln(-\frac{i(|b_2| + \im{b_2})}{\re{b_2}})}{i|b_2|} + \bigO(\epsilon)
\end{align*}
as $\epsilon \to 0^{+}$. Also, by a straightforward computation,
\begin{align*}
& \int_{-\pi}^\pi \frac{\ln(\epsilon e^{i\varphi})}{\epsilon e^{i \varphi}}\bigg(\frac{1}{-i|b_2|} + \frac{\im{b_2}}{|b_2|^3} \epsilon e^{i\varphi} + \cdots\bigg) \frac{i\epsilon e^{i\varphi} d\varphi}{2\pi i}
= \frac{i \ln{\epsilon}}{|b_2|} + \bigO(\epsilon)
\end{align*}
as $\epsilon \to 0^{+}$.
Hence
\begin{align*}\nonumber
 2\int_{\gamma_{b_2b_1}} \frac{\ln{\zeta}}{\zeta r(\zeta)} \frac{d\zeta}{2\pi i} 
= &\; \frac{\ln(\frac{2|b_2|^2}{\re{b_2}})}{i|b_2|} 
- \frac{\ln(-\frac{i(|b_2| + \im{b_2})}{\re{b_2}})}{i|b_2|}
=  \frac{\ln(\frac{2i|b_2|^2}{|b_2| + \im{b_2}})}{i|b_2|},
\end{align*}
which proves the eighth and last identity (\ref{lnzetaoverzetarintegral}). The proof of Lemma \ref{lemma: integrals for I1} is complete.

\section{Proof of Lemma \ref{Zintlemma}}\label{Zintlemmaapp}
Suppose $\alpha > -1$ and $\theta \in (0,1]$. Defining $f_+(\zeta)$ and $f_-(\zeta)$ by
\begin{align*}
& f_\pm(\zeta) = \ln(\pm i\zeta) \pm \ln\bigg(\frac{|b_2|^2 + i\zeta \im{b_2} - i|b_2| r(\zeta)}{(\pm r(\zeta) + \zeta - i\im{b_2})\zeta}\bigg), 
\end{align*}	
the definition (\ref{def of mathcal B}) of $\mathcal{B}(\zeta)$ can be written as
\begin{align}\label{calBff}
  \mathcal{B}(\zeta) = -c_5 f_+(\zeta) - c_6 f_-(\zeta) - c_{7}.
\end{align}
Thus,
\begin{align}\label{intBzetadzeta}
 \int_{\gamma_{b_2b_1}} \mathcal{B}(\zeta) d\zeta
= - c_5 \int_{\gamma_{b_2b_1}} f_+(\zeta) d\zeta
 - c_6  \int_{\gamma_{b_2b_1}} f_-(\zeta) d\zeta  - c_7 (b_1 - b_2).
\end{align}
Integrating by parts and using that
\begin{align}\label{fpmprime}
f_\pm'(\zeta)  = \frac{1}{\zeta} - \frac{1}{r(\zeta)} \mp \frac{i|b_2|}{\zeta r(\zeta)},
\end{align}
we find
\begin{align*}
\int_{\gamma_{b_2b_1}} f_\pm(\zeta) d\zeta
= b_1 f_\pm(b_1) - b_2 f_\pm(b_2) - \int_{\gamma_{b_2b_1}} \bigg(1 - \frac{\zeta}{r(\zeta)} \mp \frac{i|b_2|}{r(\zeta)}\bigg) d\zeta,
\end{align*}
Substituting these expressions into (\ref{intBzetadzeta}) and using (\ref{int1overr}) and (\ref{intzetaoverr}), the first assertion (\ref{calBintegral}) of the lemma follows after simplification. 

To prove (\ref{lnzetacalBintegral}), we use (\ref{calBff}) to write
\begin{align}\label{intlogBdzeta}
 \int_{\gamma_{b_2b_1}}  \ln(\zeta) \mathcal{B}(\zeta) d\zeta
= - c_5 \int_{\gamma_{b_2b_1}}  \ln(\zeta) f_+(\zeta) d\zeta
 - c_6  \int_{\gamma_{b_2b_1}}  \ln(\zeta) f_-(\zeta) d\zeta  - c_7 \int_{\gamma_{b_2b_1}} \ln( \zeta) d\zeta.
\end{align}
Employing (\ref{fpmprime}) and the fact that
$$\int \ln{\zeta} d\zeta = \zeta (\ln(\zeta) - 1),$$
partial integration gives
\begin{align*}
 \int_{\gamma_{b_2b_1}} \ln (\zeta) \mathcal{B}(\zeta) d\zeta
= &- c_5 \bigg\{\Big[ \zeta (\ln(\zeta) - 1) f_+(\zeta)\Big]_{\zeta=b_2}^{b_1}
-  \int_{\gamma_{b_2b_1}} (\ln(\zeta) - 1) \bigg(1 - \frac{\zeta}{r(\zeta)} - \frac{i|b_2|}{r(\zeta)}\bigg)d\zeta\bigg\}
 	\\
&  - c_6\bigg\{\Big[\zeta (\ln(\zeta) - 1) f_-(\zeta)\Big]_{\zeta=b_2}^{b_1}
 -   \int_{\gamma_{b_2b_1}} (\ln(\zeta) - 1) \bigg(1 - \frac{\zeta}{r(\zeta)} + \frac{i|b_2|}{r(\zeta)}\bigg)d\zeta\bigg\}
  	\\
& - c_7 \int_{\gamma_{b_2b_1}} \ln (\zeta) d\zeta,
\end{align*}
that is,
\begin{align}\nonumber
 \int_{\gamma_{b_2b_1}} \ln (\zeta) \mathcal{B}(\zeta) d\zeta
= &- c_5 \Big[ \zeta (\ln(\zeta) - 1) f_+(\zeta)\Big]_{\zeta=b_2}^{b_1}
- c_6\Big[\zeta (\ln(\zeta) - 1) f_-(\zeta)\Big]_{\zeta=b_2}^{b_1}
	\\\nonumber
&  + (c_5 + c_6)   \int_{\gamma_{b_2b_1}} (\ln(\zeta) - 1) d\zeta
 -(c_5 + c_6)  \int_{\gamma_{b_2b_1}}  \frac{\zeta \ln(\zeta)}{r(\zeta)} d\zeta
 	\\\nonumber
& + (c_5 + c_6) \int_{\gamma_{b_2b_1}}  \frac{\zeta}{r(\zeta)} d\zeta
 - (c_5 - c_6) i|b_2| \int_{\gamma_{b_2b_1}} \frac{\ln(\zeta)}{r(\zeta)}d\zeta
  	\\\label{lnBzetadzeta}
&+ (c_5 - c_6)  i|b_2| \int_{\gamma_{b_2b_1}} \frac{1}{r(\zeta)}d\zeta
  - c_7 \int_{\gamma_{b_2b_1}} \ln (\zeta) d\zeta.
\end{align}
An easy computation shows that
$$\int_{\gamma_{b_2b_1}} \ln(\zeta) d\zeta = b_2 - b_1 + b_1 \ln(b_1) - b_2 \ln(b_2)$$
and explicit expressions for the other integrals on the right-hand side of (\ref{lnBzetadzeta}) have already been obtained in (\ref{zetalogzetaoverr}), (\ref{intzetaoverr}), (\ref{logzetaoverr}), and (\ref{int1overr}).
Substituting these expressions into (\ref{lnBzetadzeta}), a long but straightforward simplification gives (\ref{lnzetacalBintegral}).

We next prove (\ref{calAintegral}). According to the definition \eqref{def of mathcal A} of $\mathcal{A}$, we have
\begin{align}\label{intgammacalA}
\int_{\gamma_{b_2b_1}} \mathcal{A}(\zeta) d\zeta 
= \frac{ic_8}{2}\int_{\gamma_{b_2b_1}} \frac{d\zeta}{\zeta} 
+  \frac{c_8-\frac{3\alpha^2-1}{12}}{2|b_2|} \int_{\gamma_{b_2b_1}} \frac{r(\zeta)}{\zeta}d\zeta.
\end{align} 
Clearly,
\begin{align}\label{int1overzeta}
\int_{\gamma_{b_{2}b_{1}}} \frac{d\zeta}{\zeta} = \ln b_{1} - \ln b_{2} = i(\arg b_{1} - \arg b_{2}),
\end{align}
so it only remains to compute $ \int_{\gamma_{b_2b_1}} \frac{r(\zeta)}{\zeta}d\zeta$. 
Let $\tilde{r}(\zeta)$ denote the analytic continuation of $r(\zeta)$ defined in (\ref{rtildedef}). 
For $r > 0$, let $C_r$ denote the positively oriented circle of radius $r$ centered at the origin. 
A contour deformation shows that
\begin{equation}\label{lol11}
\int_{\gamma_{b_{2}b_{1}}}\frac{r(\zeta)}{ \zeta}d\zeta = \int_{\gamma_{b_{2}b_{1}}}\frac{\tilde{r}(\zeta)}{ \zeta}d\zeta = \frac{1}{2} \int_{C_{R}}\frac{\tilde{r}(\zeta)}{ \zeta}d\zeta - \frac{1}{2}\int_{C_{\epsilon}}\frac{\tilde{r}(\zeta)}{ \zeta}d\zeta,
\end{equation}
where $R>|b_{1}|$ and $0 < \epsilon < \im b_2$.  
Using the expansions
\begin{align*}
& \frac{\tilde{r}(\zeta)}{\zeta} = 1 - \frac{b_{1}+b_{2}}{2\zeta} + \bigO(\zeta^{-2}), \qquad \zeta \to \infty, \\
& \frac{\tilde{r}(\zeta)}{\zeta} = - \frac{i |b_{2}|}{\zeta} + \bigO(1), \qquad \zeta \to 0,
\end{align*}
and letting $R \to + \infty$ and $\epsilon \to 0^{+}$ in \eqref{lol11}, we infer that
\begin{align}\label{introverzeta}
\int_{\gamma_{b_{2}b_{1}}}\frac{r(\zeta)}{ \zeta}d\zeta = - \pi \big( |b_{2}|-\im b_{2} \big).
\end{align}
Substituting (\ref{int1overzeta}) and (\ref{introverzeta}) into (\ref{intgammacalA}) and simplifying, we find \eqref{calAintegral}.

To prove (\ref{lnzetacalAintegral}), we use the definition \eqref{def of mathcal A} of $\mathcal{A}$ to write
\begin{equation}\label{lol14}
\int_{\gamma_{b_{2}b_{1}}} \ln ( \zeta) \mathcal{A}(\zeta)d\zeta = \frac{ic_{8}}{2}\int_{\gamma_{b_{2}b_{1}}} \frac{\ln (\zeta)}{\zeta}d\zeta + \frac{c_8-\frac{3\alpha^2-1}{12}}{2|b_2|}\int_{\gamma_{b_{2}b_{1}}} \frac{\ln (\zeta) r(\zeta)}{\zeta}d\zeta.
\end{equation}
The first integral on the right-hand side is easily computed:
\begin{align}\label{lol13}
 \int_{\gamma_{b_2b_1}} \frac{\ln{\zeta}}{\zeta } d\zeta 
= \frac{(\ln{b_1})^2 - (\ln{b_2})^2}{2}.
\end{align}
To compute the second integral, we use a contour deformation to obtain
\begin{align}\nonumber
\int_{\gamma_{b_2b_1}} r(\zeta)\frac{\ln \zeta}{\zeta}d\zeta & = \frac{1}{2}\int_{C_{R}}\frac{\tilde{r}(\zeta) \ln \zeta}{ \zeta}d\zeta - \frac{1}{2}\int_{C_{\epsilon}}\frac{\tilde{r}(\zeta)\ln \zeta}{ \zeta}d\zeta + \pi i \int_{-R}^{-\epsilon} \frac{\tilde{r}(\zeta)}{\zeta}d\zeta,
	\\ \label{lol12} 
& = \frac{i}{2}\int_{-\pi}^{\pi} \tilde{r}(Re^{i\varphi})\ln(Re^{i\varphi})d\varphi - \frac{i}{2}\int_{-\pi}^{\pi} \tilde{r}(\epsilon e^{i\varphi})\ln(\epsilon e^{i\varphi})d\varphi + \pi i \int_{-R}^{-\epsilon} \frac{\tilde{r}(\zeta)}{\zeta}d\zeta. 
\end{align}
where $R$ and $\epsilon$ are as in \eqref{lol11}. Since
\begin{align*}
\partial_{\zeta} \Bigg[ r(\zeta) + i |b_{2}| \ln \left( \frac{b_{2}r(\zeta)+i|b_{2}|(\zeta-b_{2})}{-b_{2} r(\zeta)+i|b_{2}|(\zeta-b_{2})} \right) - \frac{b_{1}+b_{2}}{2}\ln \big( 2r(\zeta) (  1+r'(\zeta)) \big) \Bigg] = \frac{r(\zeta)}{\zeta},
\end{align*}
we have
\begin{align*}
\int_{-R}^{-\epsilon} \frac{\tilde{r}(\zeta)}{\zeta}d\zeta = & - i \sqrt{- b_{1}-\epsilon}\sqrt{b_{2} + \epsilon}  + i |b_{2}| \ln \left( \frac{|b_{2}| \sqrt{b_{2}+\epsilon} + b_{2} \sqrt{-b_{1}-\epsilon}}{|b_{2}| \sqrt{b_{2}+\epsilon} - b_{2} \sqrt{-b_{1}-\epsilon}} \right) 
	\\
& - i \im (b_{2}) \left[ \ln \left( 2\epsilon + 2 i \im (b_{2}) + 2i \sqrt{- b_{1}-\epsilon}\sqrt{b_{2} + \epsilon}  \right) -\pi i \right] \\
& + \sqrt{b_{1}+R}\sqrt{b_{2}+R} - i \im(b_{2}) \ln \left( \frac{i |b_{2}| \sqrt{b_{2}+R} + b_{2} \sqrt{b_{1}+R}}{i |b_{2}| \sqrt{b_{2}+R} - b_{2} \sqrt{b_{1}+R}} \right) 
	\\
& +i \im(b_{2}) \left[ \ln\left( 2i \im(b_{2})+2R+2\sqrt{b_{1}+R}\sqrt{b_{2}+R} \right) -\pi i \right].
\end{align*}
Substituting this into \eqref{lol12} and letting $R \to + \infty$ and $\epsilon \to 0^{+}$ in the resulting equation, we find
\begin{align}\nonumber
\int_{\gamma_{b_2b_1}} r(\zeta)\frac{\ln \zeta}{\zeta}d\zeta = & \; \frac{i}{2}\int_{-\pi}^{\pi} \Big( Re^{i\varphi}- \frac{b_{1}+b_{2}}{2} + \bigO(R^{-1}) \Big)\ln(Re^{i\varphi})d\varphi 
	\\\nonumber
& -\frac{i}{2} \int_{-\pi}^{\pi} \Big( -i|b_{2}|  + \bigO(\epsilon) \Big) \ln(\epsilon e^{i\varphi}) d\varphi  +\pi i  \int_{-R}^{-\epsilon} \frac{\tilde{r}(\zeta)}{\zeta}d\zeta 
	\\ \label{intrlogzetaoverzeta}
= & \; \pi\bigg\{|b_2| - |b_2| \ln\bigg(\frac{2i|b_2|^2}{|b_2| + \im{b_2}}\bigg)
- \im{b_2} + \im(b_2)\ln\bigg(\frac{i(|b_2| + \im{b_2})}{2}\bigg)\bigg\}
\end{align}
Substituting of (\ref{lol13}) and (\ref{intrlogzetaoverzeta}) into (\ref{lol14}), we obtain (\ref{lnzetacalAintegral}).

We finally prove (\ref{calBoverzetaintegral}). By (\ref{calBff}), we have
\begin{align}\label{intBoverzeta}
 \int_{\gamma_{b_2b_1}}  \frac{\mathcal{B}(\zeta)}{\zeta} d\zeta
= - c_5 \int_{\gamma_{b_2b_1}}  \frac{f_+(\zeta)}{\zeta} d\zeta
 - c_6  \int_{\gamma_{b_2b_1}}  \frac{f_-(\zeta)}{\zeta} d\zeta  - c_7 \int_{\gamma_{b_2b_1}} \frac{d\zeta}{\zeta}.
\end{align}
Integration by parts using (\ref{fpmprime}) gives
\begin{align*}
 \int_{\gamma_{b_2b_1}}  \frac{\mathcal{B}(\zeta)}{\zeta} d\zeta
 = &- c_5 \bigg\{\Big[ \ln(\zeta) f_+(\zeta)\Big]_{\zeta=b_2}^{b_1}
-  \int_{\gamma_{b_2b_1}} \ln(\zeta) \bigg(\frac{1}{\zeta} - \frac{1}{r(\zeta)} - \frac{i|b_2|}{\zeta r(\zeta)}\bigg)d\zeta\bigg\}
 	\\
&  - c_6\bigg\{\Big[\ln(\zeta) f_-(\zeta)\Big]_{\zeta=b_2}^{b_1}
 -   \int_{\gamma_{b_2b_1}} \ln(\zeta) \bigg(\frac{1}{\zeta} - \frac{1}{r(\zeta)} + \frac{i|b_2|}{\zeta r(\zeta)}\bigg)d\zeta\bigg\}
 - c_7 \int_{\gamma_{b_2b_1}} \frac{d\zeta}{\zeta},
\end{align*}
i.e.,
\begin{align}\nonumber
 \int_{\gamma_{b_2b_1}}  \frac{\mathcal{B}(\zeta)}{\zeta} d\zeta
 = &- c_5\Big[ \ln(\zeta) f_+(\zeta)\Big]_{\zeta=b_2}^{b_1}
  - c_6\Big[\ln(\zeta) f_-(\zeta)\Big]_{\zeta=b_2}^{b_1}
+(c_5 + c_6) \int_{\gamma_{b_2b_1}} \frac{\ln(\zeta)}{\zeta} d\zeta
	\\ \label{intcalBoverzeta}
& - (c_5 + c_6) \int_{\gamma_{b_2b_1}} \frac{\ln(\zeta)}{r(\zeta)} d\zeta
+ i|b_2| (c_6- c_5) \int_{\gamma_{b_2b_1}} \frac{\ln(\zeta)}{\zeta r(\zeta)} d\zeta
  - c_7 \int_{\gamma_{b_2b_1}} \frac{d\zeta}{\zeta},
\end{align}
The four integrals on the right-hand side of (\ref{intcalBoverzeta}) have been computed in (\ref{lol13}), (\ref{logzetaoverr}), (\ref{lnzetaoverzetarintegral}), and (\ref{int1overzeta}), respectively. 
Substituting the expressions from these equations into (\ref{intcalBoverzeta}) and simplifying, (\ref{calBoverzetaintegral}) follows. This completes the proof of Lemma \ref{Zintlemma}.


\begin{thebibliography}{99}

\bibitem{AS83}M. Abramowitz, I. A. Stegun, {\it Handbook of Mathematical Functions with Formulas, Graphs, and Mathematical Tables. Applied Mathematics Series.} 55 (Ninth reprint with additional corrections of tenth original printing with corrections (December 1972); first ed.). Washington D.C.; New York: United States Department of Commerce, National Bureau of Standards; Dover Publications. (1983) p. 446


\bibitem{A1998}
V. S. Adamchik, Polygamma functions of negative order, {\it J. Comp. Appl. Math.} {\bf 100} (1998), 191--199.

\bibitem{Bertola} M. Bertola, The dependence on the monodromy data of the isomonodromic tau function, \textit{Comm. Math. Phys.} \textbf{294} (2010), 539--579.

\bibitem{BertolaCafasso} M. Bertola and M. Cafasso, The transition between the gap probabilities from the Pearcey to the airy process--a Riemann--Hilbert approach, \textit{Int. Math. Res. Not.} \textbf{2012} (2012), 1519--1568.

\bibitem{BLVW} T. Bloom, N. Levenberg, V. Totik and F. Wielonsky, Modified logarithmic potential theory and applications, \textit{Int. Math. Res. Not. IMRN} \textbf{2017} (2017), 1116--1154. 

\bibitem{Borodin} A. Borodin, Biorthogonal ensembles, \textit{Nuclear Phys. B} \textbf{536} (1999), 704--732.

\bibitem{BorodinPoint} A. Borodin, \textit{Determinantal point processes}, The Oxford handbook of random matrix theory, 231--249, Oxford Univ. Press, Oxford, 2011.

\bibitem{Butez} R. Butez, Large deviations for biorthogonal ensembles and variational formulation for the Dykema-Haagerup distribution, \textit{Electron. Commun. Probab.} \textbf{22} (2017), 11 pp.

\bibitem{CharlierGharakhloo} C. Charlier and R. Gharakhloo, Asymptotics of Hankel determinants with a Laguerre-type or Jacobi-type potential and Fisher-Hartwig singularities, arXiv:1902.08162.

\bibitem{Cheliotis} D. Cheliotis, Triangular random matrices and biorthogonal ensembles, \textit{Statist. Probab. Lett.} \textbf{134} (2018), 36--44.

\bibitem{ClaeysRomano} T. Claeys and S. Romano, Biorthogonal ensembles with two-particle interactions, \textit{Nonlinearity} \textbf{27} (2014), 2419--2444. 

\bibitem{ClaeysGirSti} T. Claeys, M. Girotti and D. Stivigny, Large Gap Asymptotics at the Hard Edge for Product Random Matrices and Muttalib--Borodin Ensembles, \textit{Int. Math. Res. Not.} \textbf{2017} (2017), doi: 10.1093/imrn/rnx202, 48 pages

\bibitem{DIK}
P. Deift, A. Its, and I. Krasovsky, Asymptotics for the Airy-kernel determinant,
{\em Comm. Math. Phys.} {\bf 278} (2008), 643--678.

\bibitem{DIK-FH} P. Deift, A. Its and I. Krasovky,
Asymptotics of Toeplitz, Hankel, and Toeplitz+Hankel determinants with Fisher-Hartwig singularities, \textit{Ann. Math.} \textbf{174} (2011), 1243--1299.

\bibitem{DeiftKrasVasi} P. Deift, I. Krasovsky and J. Vasilevska, Asymptotics for a determinant with a confluent hypergeometric kernel, \textit{Int. Math. Res. Not.} \textbf{9} (2011), 2117--2160.

\bibitem{Deiftetal}
P. Deift, T. Kriecherbauer, K. T-R. McLaughlin, S. Venakides, and X. Zhou, Strong asymptotics of orthogonal polynomials with respect to exponential weights, {\em Comm. Pure Appl. Math.} \textbf{52} (1999), 1491--1552

\bibitem{DeiftZhou} P. Deift and X. Zhou, A steepest descent method for oscillatory Riemann--Hilbert problems. Asymptotics for the MKdV equation, {\em Ann. Math.} {\bf 137} (1993), 295--368.

\bibitem{ForWang} P.J. Forrester and D. Wang, Muttalib--Borodin ensembles in random matrix theory--realisations and correlation functions, \textit{Electron. J. Probab.} \textbf{22} (2017), 43 pp.

\bibitem{IIKS} A. Its, A.G. Izergin, V.E. Korepin and N.A. Slavnov, Differential equations for quantum correlation functions, \emph{In proceedings of the Conference on Yang-Baxter Equations, Conformal Invariance and Integrability in Statistical Mechanics and Field Theory}, Volume \textbf{4}, (1990) 1003--1037.

\bibitem{yasov}M.N. Il'yasov, An analog of the Christoffel--Darboux formula for biorthogonal polynomials, zvestiya Akademii Nauk Kazakhskoi SSR, Seriya Fiziko-Matematicheskaya (1983),
no. 5, 61--64. (Russian)

\bibitem{Johansson} K. Johansson. {\em Random matrices and determinantal processes}, Mathematical statistical physics, 1--55, Elsevier B.V., Amsterdam, 2006.

\bibitem{KuijUniversality} A.B.J. Kuijlaars, \textit{Universality}, The Oxford handbook of random matrix theory (2011), 103--134, Oxford Univ. Press, Oxford. 

\bibitem{Kuij} A.B.J. Kuijlaars, A vector equilibrium problem for Muttalib--Borodin biorthogonal ensembles, \textit{SIGMA Symmetry Integrability Geom. Methods Appl.} \textbf{12} (2016), 15 pp.

\bibitem{KuijMolag} A.B.J. Kuijlaars and L.D. Molag, The local universality of Muttalib--Borodin biorthogonal ensembles with parameter $\theta = \frac{1}{2}$, preprint arXiv:1810.00741.

\bibitem{KuijSti2014} A.B.J. Kuijlaars and D. Stivigny, Singular values of products of random matrices and polynomial ensembles, \textit{Random Matrices Theory Appl.} \textbf{3} (2014), 22 pp.

\bibitem{Muttalib}K.A. Muttalib, Random matrix models with additional interactions, \textit{J. Phys. A} \textbf{28} (1995), L159--L164.

\bibitem{NIST} F.W.J. Olver, D.W. Lozier, R.F. Boisvert and C.W. Clark, {\em NIST handbook of mathematical functions} (2010), Cambridge University Press.

\bibitem{Soshnikov} A. Soshnikov, Determinantal random point fields, \textit{Russian Math. Surveys} \textbf{55} (2000), no. 5, 923--975.

\bibitem{TraWidLUE} C.A. Tracy and H. Widom, Level spacing distributions and the Bessel kernel.  \textit{Comm. Math. Phys.} \textbf{161} (1994), no. 2, 289--309.

\bibitem{V2016} C. Viola, {\it An Introduction to Special Functions}, Springer (2016)

\bibitem{Zhang} L. Zhang, On Wright's generalized Bessel kernel, \textit{Phys. D} \textbf{340} (2017), 27--39.
\end{thebibliography}
\end{document}